\newcolumntype{L}[1]{>{\raggedright\let\newline\\\arraybackslash\hspace{0pt}}m{#1}}
\newcolumntype{C}[1]{>{\centering\let\newline\\\arraybackslash\hspace{0pt}}m{#1}}
\newcolumntype{R}[1]{>{\raggedleft\let\newline\\\arraybackslash\hspace{0pt}}m{#1}}
\definecolor{mygreen}{RGB}{28,172,0} % color values Red, Green, Blue
\definecolor{mylilas}{RGB}{170,55,241}
\newcounter{Problem}[chapter]
\renewcommand*{\p@Problem}{%
  \expandafter\p@@Problem
}
\newcommand*{\p@@Problem}[1]{%
 #1%
}
\newenvironment{Problem}[1][]{%
	\setcounter{Problem}{\value{equation}}
  \refstepcounter{Problem}
  \ifx\\#1\\%
  \else
    \label{#1}%
  \fi
  \begin{equation}%
}{
  \end{equation}}
\newtheorem{Definition}{Definition}
\newtheorem{Conjecture}{Conjecture}
\newtheorem{Theorem}{Theorem}
\newtheorem{Corollary}{Corollary}
\newtheorem{Lemma}{Lemma}
\newtheorem{Proposition}{Proposition}
\numberwithin{Theorem}{chapter}
\numberwithin{Corollary}{chapter}
\numberwithin{Definition}{chapter}
\numberwithin{Conjecture}{chapter}
\numberwithin{Lemma}{chapter}
\numberwithin{Algorithm}{chapter}
\numberwithin{Proposition}{chapter}
\numberwithin{equation}{chapter}
\patchcmd{\hyper@makecurrent}{%
    \ifx\Hy@param\Hy@chapterstring
        \let\Hy@param\Hy@chapapp
    \fi
}{%
    \iftoggle{inappendix}{%true-branch
        % list the names of all sectioning counters here
        \@checkappendixparam{chapter}%
        \@checkappendixparam{section}%
        \@checkappendixparam{subsection}%
        \@checkappendixparam{subsubsection}%
        \@checkappendixparam{paragraph}%
        \@checkappendixparam{subparagraph}%
    }{}%
}{}{\errmessage{failed to patch}}
\newcommand*{\@checkappendixparam}[1]{%
    \def\@checkappendixparamtmp{#1}%
    \ifx\Hy@param\@checkappendixparamtmp
        \let\Hy@param\Hy@appendixstring
    \fi
}
\apptocmd{\appendix}{\toggletrue{inappendix}}{}{\errmessage{failed to patch}}
\apptocmd{\subappendices}{\toggletrue{inappendix}}{}{\errmessage{failed to patch}}
\newcommand{\CVaRnormX}[2][]{\llangle \mathbf{x} \rrangle^{#1}_{#2}}
\newcommand{\E}{\mathbb{E}}
\newcommand{\xinRn}{\mathbf{x} \in \mathbb{R}^n}
\newcommand{\xinR}[1]{\mathbf{x} \in \mathbb{R}^{#1}}
\newcommand{\LnormX}[2][]{\vert \vert \mathbf{x} \vert \vert^{#1}_{#2}}
\newcommand{\LnormDot}[2][]{\vert \vert \cdot \vert \vert^{#1}_{#2}}
\newcommand{\DnormX}[1]{\vert \vert \vert \mathbf{x} \vert \vert \vert_{#1}}
\newcommand{\defeq}{\vcentcolon=}
\DeclareMathOperator{\Ker}{Ker}
\newcommand{\al}[1]{\alpha_{#1}}
\setlist{nosep}
\begin{document}

\begin{titlepage}
\setstretch{1.3}
\vspace*{.5em}
\center
\textbf{The School of Mathematics\\}
\vspace*{1em}
\begin{figure}[H]
\centering
\includegraphics[width=150pt]{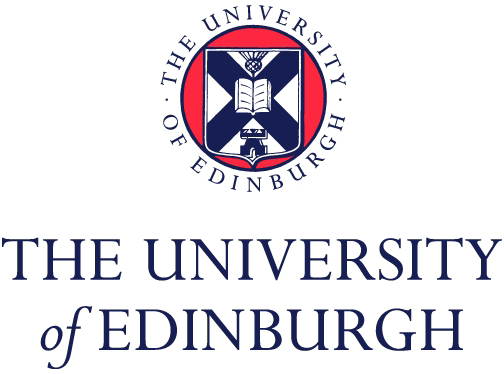}
\end{figure}
\vspace{2em}
\textbf{\Huge{Conditional Value-at-Risk: Theory and Applications}}\\[2em]
\textbf{\LARGE{by}}\\
\vspace{2em}
\textbf{\LARGE{Jakob Kisiala}}\\
\textbf{s1301096}\\
\vspace{6.5em}
\Large{Dissertation Presented for the Degree of\\
MSc in Operational Research}\\
\vspace{6.5em}
\Large{August 2015}\\
\vspace{3em}
\Large{Supervised by\\Dr Peter Richt\'{a}rik}
\end{titlepage}

%%%%%%%%%%%%%%%%%%%
% Abstract
%%%%%%%%%%%%%%%%%%%

\begin{titlepage}
\setstretch{1}
\thispagestyle{empty}
\vspace*{1em}
\begin{abstract}
This thesis presents the Conditional Value-at-Risk concept and combines an analysis that covers its application as a risk measure and as a vector norm. For both areas of application the theory is revised in detail and examples are given to show how to apply the concept in practice.

In the first part, CVaR as a risk measure is introduced and the analysis covers the mathematical definition of CVaR and different methods to calculate it. Then, CVaR optimization is analysed in the context of portfolio selection and how to apply CVaR optimization for hedging a portfolio consisting of options. The original contributions in this part are an alternative proof of Acerbi's Integral Formula in the continuous case and an explicit programme formulation for portfolio hedging.

The second part first analyses the Scaled and Non-Scaled CVaR norm as new family of norms in $\mathbb{R}^n$ and compares this new norm family to the more widely known $L_p$ norms. Then, model (or signal) recovery problems are discussed and it is described how appropriate norms can be used to recover a signal with less observations than the dimension of the signal. The last chapter of this dissertation then shows how the Non-Scaled CVaR norm can be used in this model recovery context. The original contributions in this part are an alternative proof of the equivalence of two different characterizations of the Scaled CVaR norm, a new proposition that the Scaled CVaR norm is piecewise convex, and the entire \autoref{chapter:Recovery_using_CVaR}. Since the CVaR norm is a rather novel concept, its applications in a model recovery context have not been researched yet. Therefore, the final chapter of this thesis might lay the basis for further research in this area.
\end{abstract}

%%%%%%%%%%%%%%%%%%%
% Acknowledgements
%%%%%%%%%%%%%%%%%%%

\renewcommand{\abstractname}{Acknowledgements}
\begin{abstract}
First of all, I would like to thank my supervisor Peter Richt\'{a}rik, whose valuable feedback and ideas improved the quality of this thesis considerably. He inspired me to broaden my horizon and study topics which went beyond the syllabus. Furthermore, I would like to thank all the teaching staff who enabled me to learn a lot during my master studies.

I would also like to mention my classmates who made this year a memorable experience beyond the class room. Especially Wendy, who was always a beam of sunshine in this often cloudy and rainy city.
\end{abstract}
\vfill
\end{titlepage}

%%%%%%%%%%%%%%%%%%%
% Own Work Declaration
%%%%%%%%%%%%%%%%%%%

\begin{titlepage}
\setstretch{1.5}
\thispagestyle{empty}
\vspace*{1em}
\begin{center}
\textbf{\LARGE{Own Work Declaration}}\\
\end{center}
\vspace*{1em}
I declare that this thesis was composed by myself and
that the work contained therein is my own, except where
explicitly stated otherwise in the text.
\center
\vspace*{12em}
\parbox{5cm}{\centering Edinburgh, 21 August 2015\hrule
\strut \centering\footnotesize Place, Date} \hfill\parbox{4cm}{\hrule
\strut \centering\footnotesize Jakob Kisiala}
\vfill
\end{titlepage}

\begin{titlepage}
~
\end{titlepage}

\pagenumbering{gobble}
\tableofcontents
\clearpage
\listoffigures
\listoftables
\clearpage
\setstretch{1}
\bibliographystyle{abbrv}
\pagenumbering{arabic}
\setcounter{page}{1}

%%%%%%%%%%%%%%%%%%%
%
%  Chapter 1: Introduction
%
%%%%%%%%%%%%%%%%%%%

\chapter{Introduction}\label{chapter:intro}

This chapter presents the motivation for this thesis, gives the outline of the following chapters, and states the original contributions of the thesis.

Note that are no dedicated chapters covering a literature review or to establish notation. Rather, the literature is reviewed and notation is established in each chapter and section where it is appropriate.

%%%%%%%%%%%%%%%%%%%
%  Section .1: Motivation
%%%%%%%%%%%%%%%%%%%

\section{Motivation of the Thesis} \label{sec:intro-Motivation}

In financial risk management, especially with practitioners, Value-at-Risk (VaR) is a widely used risk measure because its concept is easily understandable and it focusses on the down-side, i.e. tail risk. A possible definition is given by Choudhry: ``VaR is a measure of market risk. It is the maximum loss which can occur with [$(\alpha \times 100)$] \% confidence [...]'' \cite[p. 30]{2006Choudhry_Intoduction_to_VaR}.

However, despite its wide use, VaR is \emph{not} a coherent risk measure. The concept of a \emph{coherent risk measure} was introduced by Artzner et al. in \cite{Artzner1999_Coherent_Measures_of_Risk}. They formulated that a risk measure $\rho$ is coherent if it satisfies the following axioms (see \autoref{sec:CVaR_RM-Coherent_Risk_Measures} for details):
\begin{itemize}
	\item Monotonicity
	\item Translation equivariance
	\item Subadditivity
	\item Positive Homogeneity
\end{itemize}
VaR is only coherent when the underlying loss distribution is normal, otherwise it lacks subadditivity. Other disadvantages of the VaR measure are that it does not give any information about potential losses in the $1 - \alpha$ worst cases and that calculating VaR optimal portfolios can be difficult, if not impossible \cite[p. 1444]{Rockafellar2002_CVaR_for_general_loss_distributions}.

The Conditional Value-at-Risk (CVaR) is closely linked to VaR, but provides several distinct advantages. In fact, in settings where the loss is normally distributed, CVaR, VaR, and Minimum Variance (Markowitz) optimization give the same optimal portfolios \cite[p. 29]{Rockafellar2000_Optimization_of_CVaR}. The advantages of CVaR become apparent when the loss distribution is not normal or when the optimization problem is high-dimensional:  CVaR is a coherent risk measure for any type of loss distribution.

Furthermore, in settings where an investor wants to form a portfolio of different assets, the portfolio CVaR can be optimized by a computationally efficient, linear minimization problem, which simultaneously gives the VaR at the same confidence level as a by-product. On the other hand, it is difficult to form VaR optimal portfolios, as is these settings VaR is difficult to calculate. This computationally efficient way to optimize the portfolio CVaR can also be transferred to hedging problems, in which an investment decision has been taken, but adjustments are possible so that the downside risk of the investment can be reduced. For example, \cite{Albrecht2015_Tail_Risk_Hedging}, \cite{Bardou2013_CVaR_Hedging_Using_Quatization_Algorithm}, \cite{Topaloglou2002_CVaR_Models_with_selective_hedging}, and \cite{Xue2015_Optimal_Inventory_and_Hedging_Decisions} used CVaR optimization to hedge risk, each one in a different setting. \\

What is more remarkable, is that the CVaR concept (which was developed as a financial risk measure) can be abstracted to form a new family of norms in $\mathbb{R}^n$. The Scaled and (Non-Scaled) CVaR norm can then be used as alternatives to the widely established family of $L_p$ norms. Moreover, by choosing suitable $\alpha$, the CVaR norm is equivalent to the $L_1$ and $L_{\infty}$ norm.

Having this new CVaR norm also opens up new opportunities in Big Data optimization, particularly in model or signal recovery problems. In these problems, it is the goal to reconstruct a model or signal of dimension $p$ when less than $p$ observations are available. This can be achieved by exploiting the structure of particular signals and solving a norm minimization problem using an appropriate norm. Particularly the $L_1$ and $L_{\infty}$ norm are used for two different types of models, and having the CVaR norm as another norm in $\mathbb{R}^n$ could recover further types of signals and models. To the best knowledge of the author, no research has been undertaken so far to use the CVaR norm in model recovery problems, so this might be another area of research to consider in the future.

%%%%%%%%%%%%%%%%%%%
%  Section .2: Thesis outline
%%%%%%%%%%%%%%%%%%%

\section{Outline of the Thesis} \label{sec:intro-Outline}

This thesis consists of 7 main chapters (not counting the introduction and conclusion), which concentrate on two main areas: First, the use of CVaR as a risk measure and second, the characteristics of the CVaR norm with an outlook on possible future applications. For both areas, an extensive analysis on the theory of CVaR and the CVaR norm is given, before showing how this theory can be applied in practice. \\

\autoref{chapter:CVaR_as_risk_measure} introduces the concept of CVaR as a risk measure for a univariate loss distribution. It starts by showing how VaR and CVaR are related to each other. Then, the notion of a coherent risk measure is introduced and it is shown why VaR is not coherent. \autoref{sec:CVaR_RM-Closer_Analysis_of_CVaR} then examines the mathematical definition of CVaR and shows how the CVaR can be calculated using the Convex Combination Formula. The chapter finishes by showing an alternative way to calculate CVaR, namely using Acerbi's Integral Formula.

\autoref{chapter:CVaR_Portfolio_Optimization} moves from univariate to multivariate loss distributions. These loss distributions arise in portfolio optimization problems, where there are different assets, each with their own loss distribution and the investor's loss depends on his investment decision into each asset. \autoref{sec:CVaR_PO_Markowitz} discusses the first model that was introduced to optimize a portfolio with regards to risk (the Markowitz Model, which aims to reduce the portfolio variance). Identifying the shortcomings of the Markowitz Model gives the motivation for the next model that is considered, i.e. the Rockafellar and Uryasev Model, which optimizes the portfolio CVaR. The analysis extends the results of the CVaR analysis in the univariate case to the multivariate case and gives a linear optimization programme that minimizes the CVaR of a portfolio. This section also shows that the Markowitz Model and Rockafellar and Uryasev Model lead to the same optimal portfolio if the loss of all assets in the portfolio is normally distributed. \autoref{sec:CVaR_PO_Examples} then gives two numerical examples to demonstrate the results that were established in this chapter. First, it is shown that in certain cases CVaR and Mean-Variance optimization indeed give the same portfolio, before demonstrating that for non-normal loss distributions CVaR optimization gives a less risky portfolio that Mean-Variance optimization.

Next, \autoref{chapter:Portfolio_Hedging} shows how the CVaR optimization problem can be used to hedge tail losses from a previous investment decision. In this particular example, a scenario based on real world data is created. Simplifying assumptions are made to focus on the hedging procedure instead of the technical implementation of the hedge. For the scenario, a trader's portfolio is to be adjusted, so that the CVaR of the portfolio is minimized. Since it is an option portfolio (for which the risk manager needs a daily estimate on the portfolio variance) \autoref{sec:Portfolio_Hedging-Background_on_Options} and \autoref{sec:Portfolio_Hedging-Background_on_Risk_Management} give the necessary finance and risk management background. \autoref{sec:Portfolio_Hedging-Forming_Strangle} briefly describes how the portfolio is formed before \autoref{sec:Portfolio_Hedging-Hedging_Strangle} explains the hedging procedure, including an explicit formulation of the hedging problem. The portfolio risk before and after hedging are compared and it is shown how the hedging procedure can improve the risk profile of the portfolio. \\

Moving away from the financial context, \autoref{chapter:CVaR_Norms} introduces two norms that are based on CVaR: the Scaled CVaR norm $C_{\alpha}^S$, and the (Non-Scaled) CVaR norm $C_{\alpha}$. For both norms, two different yet equivalent characterizations are given. \autoref{sec:CVaR_Norms-Properties} then describes the properties of each norm and especially shows how their properties with regards to the parameter $\alpha$ are fundamentally different. Since these norms are fairly novel and standard algorithms to calculate them are not yet implemented in MATLAB, \autoref{sec:CVaR_Norms-Computational_Efficiency} examines the computational efficiency of calculating the two norms, $C_{\alpha}^S$ and $C_{\alpha}$, using the two different characterizations for each.

To give a better understanding of  $C_{\alpha}^S$ and $C_{\alpha}$, they are both compared to the more familiar family of $L_p$ norms in \autoref{chapter:Comparison_to_other_vector_norms}. First $C_{\alpha}^S$ is compared to $L_p^S$ norms before the $C_{\alpha}$ is analysed with regards to the parameter $\alpha$ and its proximity to $L_p$ norms.

\autoref{chapter:Model_Recovery_using_Atomic_Norms} then gives a possible application of the CVaR norm in an optimization context: model recovery using atomic norms. In model (or signal) recovery the goal is to reconstruct a $p$-dimensional model (or signal) with $n$ random measurements, such that $n < p$. For a recovery to be successful, the model must have a certain structure that can be exploited by a corresponding \emph{atomic norm}.  \autoref{sec:MRuAN-Background} provides the background on atomic norms and convex geometry (e.g. the notions of tangent and normal cones) that is needed to explore the usefulness of the CVaR norm in this setting. \autoref{sec:MRuAN-Recovery_Conditions} states the necessary recovery conditions, more precisely the number of random measurements needed to ensure that a $p$-dimensional model can be recovered from $n$ measurements. The number of measurements $n$ is derived by using \emph{Gaussian Widths}, which are quite difficult to compute directly. Therefore, \autoref{sec:MRuAN-Gaussian_Widths} states some properties of Gaussian Widths that might prove useful when establishing a bound on $n$.

The final chapter, \autoref{chapter:Recovery_using_CVaR}, is completely original in the sense that it explores how the CVaR norm can be used in the context of model recovery problems. To the best knowledge of the author, no research in this particular area has been carried out before. Unfortunately, due to the limited scope of this thesis, the analysis could not be completed. Rather, this chapter should show areas of further research, with pointers towards what could be analysed in more detail. \autoref{section:RuCVaR-Atomic_CVaR_Norm} contains a conjecture about the set of atoms of the CVaR norm for a certain $\alpha$. A proposition based on the conjecture is proven, but due to the limited scope of this dissertation, the conjecture could not be proven in full. Still, a numerical experiment was carried out to identify the atoms of the CVaR norm in $\mathbb{R}^4$ and this experiment provides further evidence that the conjecture is true. \autoref{section:RuCVaR-Gaussian_Width} is rather short, showing how a bound on the number of measurements $n$ can be derived if expressions are available for the tangent or normal cone with respect to the atoms of CVaR norm. Some numerical experiments were performed to recover simple signals using the CVaR norm in \autoref{section:RuCVaR-Experiments}. The results are not impressive, as the experiments were limited to a certain $\alpha$ and only few special cases of signals. Analysing model recovery using the CVaR norm further could lead to different set ups, for which the results could be better.

%%%%%%%%%%%%%%%%%%%
%  Section .3: Original Contributions
%%%%%%%%%%%%%%%%%%%

\section{Original Contributions of the Thesis} \label{sec:intro-Contributions}

First of all, to the best knowledge of the author, this thesis is the first piece of work that analyses CVaR as a risk measure and the CVaR norm (including possible applications) in a unified way. There is an abundance of papers on CVaR, CVaR portfolio optimization, and further applications of CVaR as a risk measure. However, there is little research on the CVaR norm and no research on the application of the CVaR norm in the context of model recovery.

A large part of this thesis presents results of other papers. Even with established concepts, the author aims to present them in such a way that the concepts are easily understandable. Also, most plots in this paper were reproduced independently to confirm the results of other authors. But throughout the paper several original contributions are made, either by presenting new proofs to existing propositions, or by stating new propositions / conjectures. In detail, the original contributions are:
\begin{itemize}
	\item \autoref{sec:CVaR_RM-Acerbi_new_proof}: A new proof of Acerbi's Integral Formula (first proposed in \cite{Acerbi2002_Coherence_of_ES}) to calculate CVaR is given.
	\item \autoref{sec:CVaR_PO_Markowitz}: Although this is a standard result, the author proves independently why portfolio diversification reduces risk (when measured by standard deviation). The reason to give an independent proof is that the standard introductory financial literature only shows this result for $N=2$ assets, while this thesis shows this result for $N \geq 2$ assets.
	\item \autoref{sec:Portfolio_Hedging-Hedging_Strangle}: Although hedging using CVaR optimization was discussed by Rockafellar and Uryasev in \cite{Rockafellar2000_Optimization_of_CVaR}, they never explicitly formulated the optimization programme. This thesis clearly defines the variables and states the problem for a CVaR optimal hedge of a portfolio of options.
	\item \autoref{subsec:CVaR_Norms-Scaled-Alternative}: This subsection introduces a second, equivalent characterization of the Scaled CVaR norm, which was proposed by Pavlikov and Uryasev in \cite{pavlikov2014_CVaR_Norm_and_applications}. The original contribution of this thesis is an alternative proof of the equivalence of the two different characterizations.
	\item \autoref{prop:Scaled_CVaR_Norm_piecewise_convex}: The piecewise convexity of the Scaled CVaR norm is a new and original proposition of this thesis, to the best knowledge of the author.
	\item \autoref{sec:CVaR_Norms-Computational_Efficiency}: To the best knowledge of the author, the computational efficiency of different algorithms to calculate the Scaled and Non-Scaled CVaR norm has not been investigated before.
	\item \autoref{section:RuCVaR-Atomic_CVaR_Norm}: To the best knowledge of the author, the atoms (i.e. the extreme points of the unit ball) of the CVaR norm have never been explicitly stated before. This section conjectures the set of atoms of the CVaR norm for a specific $\alpha$. It shows that for different $\alpha$ the unit ball of the CVaR norm looks different, and finally a numerical experiment is performed to provide evidence for the conjecture in $\mathbb{R}^4$.
	\item \autoref{section:RuCVaR-Experiments}: To the best knowledge of the author, the CVaR norm has never been analysed in the context of model recovery problems. This section performs some numerical recovery experiments to see how suitable $C_{\alpha}$ would be recover a special type of signal. Because of the close link between the CVaR norm and the $L_1$ and $L_{\infty}$ norms, it is also investigated how well the CVaR norm performs in signal recovery problems when compared to these two $L_p$ norms.
\end{itemize}

\clearpage

%%%%%%%%%%%%%%%%%%%
%
%  Chapter 2: CVaR as a Risk measure
%
%%%%%%%%%%%%%%%%%%%

\chapter{Conditional Value-at-Risk as a Risk Measure}\label{chapter:CVaR_as_risk_measure}

This chapter introduces the concept of CVaR (building on the VaR concept) in the way that it was first introduced - a financial risk measure. In \autoref{sec:CVaR_RM-Basic_Notions} the mathematical definitions of VaR and CVaR are given, followed by an intuitive description of their properties and interactions. \autoref{sec:CVaR_RM-Coherent_Risk_Measures} presents the axioms that must be satisfied for a risk measure to be considered \emph{coherent}. Specifically, an example is shown to prove that VaR is not subadditive - whereas for the same example, CVaR is subadditive. Finally, \autoref{sec:CVaR_RM-Closer_Analysis_of_CVaR} explores the CVaR concept in more detail, giving different algorithms and optimization programmes to calculate the CVaR of a given loss distribution in a variety of settings. \autoref{sec:CVaR_RM-Acerbi} states Acerbi's Integral Formula to calculate CVaR and gives an alternative proof of the formula.

%%%%%%%%%%%%%%%%%%%
%  Section .1: Basic Notions in the VaR / CVaR framework
%%%%%%%%%%%%%%%%%%%

\section{Basic Notions in the VaR / CVaR Framework} \label{sec:CVaR_RM-Basic_Notions}

Since losses are random variables, some statistical measures need to be introduced to cover the basics for latter sections and chapters, especially the ones concerning portfolio optimization (\autoref{chapter:CVaR_Portfolio_Optimization} and \autoref{chapter:Portfolio_Hedging}).

\begin{Definition}[{\cite[p. 17]{2012Kaltenbach_Statistics} Expectation}] \label{def:expectation}
	The \emph{expectation}, sometimes called \emph{expected value} or \emph{mean}, of a random variable $X$ is defined as
	\begin{align}
		\E[X] & \defeq \int \limits_{- \infty}^{\infty} x f(x) dx & \text{in the continous case} \label{eqn:expectation_continuous}
	\end{align}
	or
	\begin{align}
		\E[X] & \defeq \sum \limits_{k = - \infty}^{\infty} k P(X = k) & \text{in the discrete case,} \label{eqn:expectation_discrete}
	\end{align}
	where $f(x)$ is the probability density function of $X$ and $P(X = k)$ is the probability mass function $X$.
\end{Definition}
The expectation is often denoted by the letter $\mu$, such that $\mu = \E[X]$.\footnote{Many texts apply the distinction to use $\mu$ for the population mean and $\hat{\mu}$ for the sample mean. Although the expectation of the loss variable $X$ is actually a sample mean, this dissertation will use the notation $\mu$ when talking about the expectation of losses.} $\E[X]$ provides information about the distribution of $X$; informally it can be described as the centre value around which possible values of $X$ disperse \cite[p. 17]{2012Kaltenbach_Statistics}.

\begin{Definition}[{\cite[p. 18]{2012Kaltenbach_Statistics} Variance}] \label{def:variance}
	The \emph{variance} of a random variable $X$ is defined as
	\begin{align}
		\text{\emph{Var}~}(X) & \defeq \E \left[ \left(X - \E [X] \right)^2 \right] . \label{eqn:variance}
	\end{align}
\end{Definition}
The variance is often denoted as $\sigma^2$.\footnote{Again, many texts apply a distinction between the population variance $\sigma^2$ and the sample variance $s^2$. As in the case with the expectation, this dissertation will use the notation $\sigma^2$ when talking about the variance of losses.} Since the variance is hard to interpret as it is given in square units, the standard deviation (denoted $\sigma = \sqrt{\text{Var}(X)}$) is often used. It does not contain additional information, but is easier to interpret as $\sigma$ is given in the same units as $\mu$ \cite[p. 18]{2012Kaltenbach_Statistics}.

The standard deviation $\sigma$ (or variance $\sigma^2$) measures how strongly $X$ is dispersed around $\mu$. Small values of $\sigma$ indicate that $X$ is concentrated strongly around $\mu$, while large values of $\sigma$ mean that values of $X$ further away from $\mu$ (in either direction) are more likely. \\

Another important concept throughout this dissertation is \emph{Covariance}.
\begin{Definition}[{\cite[p. 21]{2012Kaltenbach_Statistics} Covariance}] \label{def:covariance}
	The \emph{covariance} of two random variables $X_1$ and $X_2$ is defined as
	\begin{align}
		\text{\emph{Cov}~}(X_1, X_2) & \defeq \E \left[ \left(X_1 - \E [X_1] \right) \left( X_2 - \E [X_2] \right)  \right] . \label{eqn:covariance}
	\end{align}
\end{Definition}
 Covariance measures how strongly the variable $X_1$ varies together with $X_2$ (and vice versa). As a special case, $\text{Cov}(X,X) = \text{Var} (X)$. Also, if $X_1$ and $X_2$ are independent, their covariance is 0 \cite[p. 21]{2012Kaltenbach_Statistics}. As in the case with variance, the covariance is hard to interpret, as its unit is the product of the respective units of $X_1$ and $X_2$. Therefore, another measure for dependency that is derived from the covariance and variance is commonly used to express how strongly $X_1$ and $X_2$ vary together - it is called the \emph{correlation coefficient}:
\begin{Definition}[{\cite[p. 22]{2012Kaltenbach_Statistics} Correlation Coefficient}] \label{def:correlation}
	The \emph{correlation coefficient} of two random variables $X_1$ and $X_2$ is defined as
	\begin{align}
		\rho_{12} & \defeq \frac{\text{\emph{Cov}~}(X_1, X_2)}{\sqrt{\text{\emph{Var}~}(X_1)} \sqrt{\text{\emph{Var}~}(X_2)}}. \label{eqn:correlation}
	\end{align}
\end{Definition}

$\rho$ always takes values between -1 and 1 and is therefore easier to interpret than covariance. If $\vert \rho_{12} \vert$ is close to 1, then there is a strong dependence between $X_1$ and $X_2$ \cite[p. 22]{2012Kaltenbach_Statistics}. \\

As pointed out in the introduction, Value-at-Risk (VaR) is the maximum loss that will not be exceeded at a given confidence level. This gives the following mathematical definition of VaR:
\begin{Definition} [{\cite[week 8, p. 5]{Richtarik2015_OMF_Lecture} Value-at-Risk (VaR)}]\label{def:VaR}
	Let $X$ be a random variable representing loss. Given a parameter $0 < \alpha < 1$, the $\alpha$-VaR of $X$ is
	\begin{equation} \label{eqn:VaR}
		\text{\emph{VaR}}_{\alpha} (X) \defeq \min \{ c : P (X \leq c) \geq \alpha \} \,.
	\end{equation}
\end{Definition}
\noindent Given \autoref{def:VaR}, VaR can have several equivalent interpretations \cite[week 8, p. 5]{Richtarik2015_OMF_Lecture}:
\begin{itemize}
	\item $\text{VaR}_{\alpha} (X)$ is the \emph{minimum} loss that will not be exceeded with probability $\alpha$.
	\item $\text{VaR}_{\alpha} (X)$ is the $\alpha$-quantile of the distribution of $X$.
	\item $\text{VaR}_{\alpha} (X)$ is the \emph{smallest} loss in the $(1 - \alpha) \times 100$\% worst cases.
	\item $\text{VaR}_{\alpha} (X)$ is the \emph{highest} loss in the $\alpha \times 100$\% best cases.
\end{itemize}
~

The general definition of CVaR is given in \autoref{sec:CVaR_RM-Closer_Analysis_of_CVaR}. At this point, only the CVaR definition for continuous random variables will be given to create a more intuitive introduction into the topic. For continuous $X$, the Conditional Value-at-Risk is the expected loss, \emph{conditional} on the fact that the loss exceeds the VaR at the given confidence level:
\begin{Definition} [{\cite[week 8, p. 13]{Richtarik2015_OMF_Lecture} Conditional Value-at-Risk (CVaR) in the continuous case}]\label{def:CVaR}
	Let $X$ be a continuous random variable representing loss. Given a parameter $0 < \alpha < 1$, the $\alpha$-CVaR of $X$ is
	\begin{equation} \label{eqn:CVaR}
		\text{\emph{CVaR}}_{\alpha} (X) \defeq \E [ X \mid X \geq \text{\emph{VaR}}_{\alpha} (X) ] .
	\end{equation}
\end{Definition}

Alternative names for CVaR found in the literature are \emph{Average Value-at-Risk}, \emph{Expected Shortfall}, or \emph{Tail Conditional Expectation}, although some authors make subtle distinctions between their definitions \cite[week 8, p. 13]{Richtarik2015_OMF_Lecture}.

\autoref{fig:VaR_CVaR_Explanation} shows the VaR and CVaR for a specific continuous random variable $X$. The cumulative distribution function of $X$ can be used to find $\text{VaR}_{\alpha} (X)$, and $\text{VaR}_{\alpha} (X)$ can be used in turn to calculate $\text{CVaR}_{\alpha} (X)$. \footnote{An alternative approach to find VaR and CVaR is shown in \autoref{theorem:CVaR_Optimization}}

\begin{figure}[H]
	\centering
	\includegraphics[width = 0.9 \textwidth]{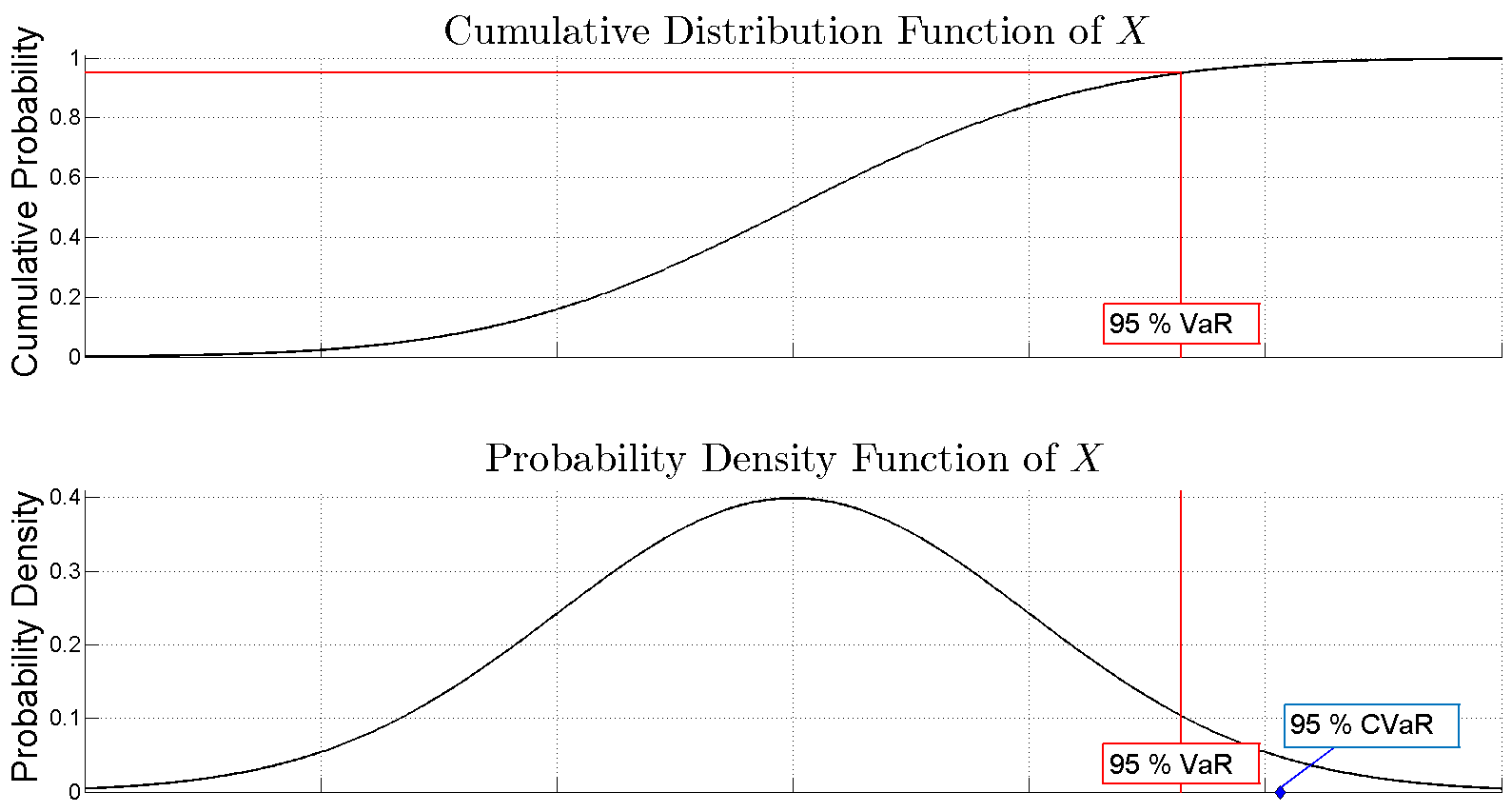}
	\caption{$\text{VaR}_{\alpha}$ and $\text{CVaR}_{\alpha}$ of a random variable $X$ representing loss.}
	\label{fig:VaR_CVaR_Explanation}
\end{figure}

%%%%%%%%%%%%%%%%%%%
%  Section .2: Coherent Risk Measures
%%%%%%%%%%%%%%%%%%%

\section{Coherent Risk Measures} \label{sec:CVaR_RM-Coherent_Risk_Measures}

Artzner et al. analysed risk measures in \cite{Artzner1999_Coherent_Measures_of_Risk} and stated a set of properties / axioms that should be desirable for any risk measure. Any risk measure which satisfies these axioms is said to be \emph{coherent}. The four axioms they stated are \emph{Monotonicity}, \emph{Translation equivariance}, \emph{Subadditivity}, and \emph{Positive Homogeneity}. For the definitions of all axioms, $X$ and $Y$ are random variables representing loss, $c \in \mathbb{R}$ is a scalar representing loss, and $\rho$ is a risk function, i.e. it maps the random variable $X$ (or $Y$) to $\mathbb{R}$, according to the risk associated with $X$ (or $Y$).

\begin{Definition}[{\cite[p. 210]{Artzner1999_Coherent_Measures_of_Risk} Monotonicity}] \label{def:Monotonicity}
	A risk measure $\rho$ is \emph{monotone}, if for all $X$, $Y$:
	\begin{equation} \label{eqn:Monotonicity}
		X \leq Y \Rightarrow \rho (X) \leq \rho(Y) .
	\end{equation}
\end{Definition}

\begin{Definition}[{\cite[p. 209]{Artzner1999_Coherent_Measures_of_Risk} Translation Equivariance}] \label{def:Translation_Equivariance}
	A risk measure $\rho$ is \emph{translation equivariant}, if for all $X$, $c$:
	\begin{equation} \label{eqn:Translation_Equivariance}
		\rho (X + c) = \rho(X) + c .
	\end{equation}
\end{Definition}

\begin{Definition}[{\cite[p. 209]{Artzner1999_Coherent_Measures_of_Risk} Subadditivity}] \label{def:Subadditivity}
	A risk measure $\rho$ is \emph{subadditive}, if for all $X$, $Y$:
	\begin{equation} \label{eqn:Subadditivity}
		\rho (X + Y) \leq \rho(X) + \rho(Y) .
	\end{equation}
\end{Definition}

\begin{Definition}[{\cite[p. 209]{Artzner1999_Coherent_Measures_of_Risk} Positive Homogeneity}] \label{def:Positive_Homogeneity}
	A risk measure $\rho$ is \emph{positively homogeneous}, if for all $X$, $\lambda \geq 0$:
	\begin{equation} \label{eqn:Positive_Homogeneity}
		\rho (\lambda X) = \lambda \rho(X) .
	\end{equation}
\end{Definition}

Speaking in a more intuitive way, the above axioms (\autoref{def:Monotonicity} - \autoref{def:Positive_Homogeneity}) can be interpreted as follows \cite[week 8, p. 10 f.]{Richtarik2015_OMF_Lecture}:
\begin{itemize}
	\item \textbf{Monotonicity}: Higher losses mean higher risk.
	\item \textbf{Translation Equivariance}: Increasing (or decreasing) the loss increases (decreases) the risk by the same amount.
	\item \textbf{Subadditivity}: Diversification decreases risk.
	\item \textbf{Monotonicity}: Doubling the portfolio size doubles the risk.			
\end{itemize}~

VaR fails to meet the subadditivity axiom (\autoref{def:Subadditivity}) and is therefore criticized for not being a coherent risk measure. A simple example shows this \cite[week 8, p. 19]{Richtarik2015_OMF_Lecture}:

Consider two possible investments, $A$ and $B$, which have the loss profile shown in \autoref{table:example_loss_profile1}. There are three different scenarios $\xi_1, \xi_2, \xi_3$, each with associated probability $p(\xi_i)$.
\begin{table}[H]
	\centering
	$\begin{tabu}{| c | r r r |}
		\hline
		 & \xi_1 & \xi_2 & \xi_3 \\
		 p(\xi_i) & 0.04 & 0.04 & 0.92 \\
		 \hline
		 A & 1000 & 0 & 0 \\
		 B & 0 & 1000 & 0 \\
		 \hline
	\end{tabu}$
	\caption{Losses for investments $A$ and $B$ under three scenarios.}
	\label{table:example_loss_profile1}
\end{table}
Using \autoref{eqn:VaR} to calculate the VaR at the 95 \% confidence level for investments in $A$, $B$, and $A + B$ gives
\begin{align*}
	\text{VaR}_{0.95} (A) = \min \{ c : P (A \leq c) \geq 0.95 \} & = 0 && ( P(A \leq 0) = 0.96 ) \,,\\
	\text{VaR}_{0.95} (B) = \min \{ c : P (B \leq c) \geq 0.95 \} & = 0 & &( P(B \leq 0) = 0.96 ) \,, \text{~and} \\
	\text{VaR}_{0.95} (A + B) = \min \{ c : P (A + B \leq c) \geq 0.95 \} & = 1000 \,.
\end{align*}
In this example, $\text{VaR}_{0.95} (A + B) \not \leq \text{VaR}_{0.95} (A) + \text{VaR}_{0.95} (B)$, hence VaR is not subadditive according to \autoref{def:Subadditivity}. Therefore, it is not a coherent risk measure in the sense of Artzner et al.

Acerbi and Tasche proved in \cite{Acerbi2002_Coherence_of_ES} that CVaR in satisfies the above axioms and is therefore a coherent risk measure.\footnote{To be precise: In \cite{Acerbi2002_Coherence_of_ES} Acerbi and Tasche defined \emph{Expected Shortfall} (ES) and \emph{CVaR} slightly differently. In the paper, they first proved that ES is a coherent risk measure and later proved that ES is identical to CVaR.} Using the previous example together with \autoref{eqn:CVaR_Convex_Combination_Formula} of \autoref{prop:CVaR_Convex_Combination_Formula} gives
\begin{align*}
	\text{CVaR}_{0.95} (A) & = 800 && (\lambda = 0.2,  \text{CVaR}_{0.95}^+ (A) = 1000) \,, \\
	\text{CVaR}_{0.95} (B) & = 800 && (\lambda = 0.2,  \text{CVaR}_{0.95}^+ (B) = 1000)  \,, \text{~and}\\
	\text{CVaR}_{0.95} (A + B) & = 1000 && (\lambda = 1,  \text{CVaR}_{0.95}^+ (A + B) = 0) \,.
\end{align*}
which shows that subadditivity holds for CVaR, as $\text{CVaR}_{0.95} (A + B) = 1000 \leq \text{CVaR}_{0.95} (A) + \text{CVaR}_{0.95} (B) = 1600$.

%%%%%%%%%%%%%%%%%%%
%  Section .3: Closer Analysis of CVaR
%%%%%%%%%%%%%%%%%%%

\section{Closer Analysis of CVaR} \label{sec:CVaR_RM-Closer_Analysis_of_CVaR}

Analysing CVaR in a wider context, one can derive CVaR from the \emph{generalized $\alpha$-tail distribution} of a random variable $X$ (which represents loss). This is what Rockafellar and Uryasev did in \cite{Rockafellar2002_CVaR_for_general_loss_distributions}. While \cite{Rockafellar2002_CVaR_for_general_loss_distributions} focused on general distributions, their previous work in \cite{Rockafellar2000_Optimization_of_CVaR} concerned the CVaR of continuous loss distributions. This section will present the results of both papers in a unified way, for discrete as well as for continuous loss distributions.\\

Suppose that $X$ is the loss distribution, and that $F_X (z)$ is the cumulative distribution function of $X$, i.e. $F_X (z) = P (X \leq z)$. Then the generalized $\alpha$-tail distribution of is defined as \cite[week 8, p. 15]{Richtarik2015_OMF_Lecture}
\begin{equation} \label{eqn:generalized_alpha_tail}
F^{\alpha}_X (z) \defeq \left\{
\begin{array}{l r}
	0, & \text{when~} z < \text{VaR}_{\alpha} (X) \\
	\frac{F_X (z) - \alpha}{1 - \alpha}, & \text{when~} z \geq \text{VaR}_{\alpha} (X)
\end{array}
\right. .
\end{equation}
Now, if $X^{\alpha}$ is the random variable whose cumulative distribution function is $F^{\alpha}_X$ (\autoref{eqn:generalized_alpha_tail}), then the CVaR is defined as
\begin{align}
	\text{CVaR}_{\alpha} (X) &\defeq \E [X^{\alpha}] , \label{def:CVaR_general_definition}
\end{align}
which leads to \autoref{def:CVaR} in the continuous case ($\text{CVaR}_{\alpha} (X) = \E [ X \mid X \geq \text{VaR}_{\alpha} (X) ]$), but is different for the discrete case \cite[week 8, p. 15]{Richtarik2015_OMF_Lecture}.

For discrete or non-continuous loss distributions, Rockafellar and Uryasev proposed to calculate CVaR as a weighted average, also called the \emph{Convex Combination Formula}. To apply the Convex Combination Formula, one needs the $\text{VaR}_{\alpha}$ and $\text{CVaR}^+_{\alpha}$ of $X$, where $\text{CVaR}^+_{\alpha} (X)$ is the expected loss \emph{strictly} greater than the $\text{VaR}_{\alpha} (X)$, i.e.,
\begin{equation} \label{eqn:CVaR_plus}
	\text{CVaR}_{\alpha}^+ (X) \defeq \E [X \mid X > \text{VaR}_{\alpha} (X)] . 
\end{equation}

\begin{Proposition} [{\cite[p. 1452]{Rockafellar2002_CVaR_for_general_loss_distributions} CVaR as a weighted average / Convex Combination Formula}] \label{prop:CVaR_Convex_Combination_Formula}
	Let	$\Psi$ be cumulative probability of $\text{VaR}_{\alpha} (X)$, i.e. $ \Psi = F_X (\text{\emph{VaR}}_{\alpha} (X) )$ and define $\lambda$ as 
	\begin{equation*}
		\lambda \defeq \frac{\Psi - \alpha}{1 - \alpha} \,,
	\end{equation*}
	for $0 \leq \alpha < 1$. We then have:
	\begin{equation} \label{eqn:CVaR_Convex_Combination_Formula}
		\text{\emph{CVaR}}_{\alpha} (X) = \lambda \text{\emph{VaR}}_{\alpha} (X) + (1 - \lambda) \text{\emph{CVaR}}_{\alpha}^+ (X) .
	\end{equation}
\end{Proposition}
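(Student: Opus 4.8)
The plan is to compute $\E[X^\alpha]$ directly by unpacking the distributional structure encoded in the generalized $\alpha$-tail CDF $F^\alpha_X$ from \eqref{eqn:generalized_alpha_tail} and recognizing it as a two-component mixture. Writing $v \defeq \text{VaR}_\alpha(X)$ for brevity, I would first observe that $F^\alpha_X$ vanishes on $(-\infty, v)$ and takes the value $F^\alpha_X(v) = \frac{F_X(v) - \alpha}{1-\alpha} = \frac{\Psi - \alpha}{1-\alpha} = \lambda$ at $z = v$. Since $F^\alpha_X$ is right-continuous and its left limit at $v$ is $0$, this means $X^\alpha$ places an atom of mass exactly $\lambda$ at the point $v$: the CDF jumps from $0$ up to $\lambda$ there.

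Next, I would isolate the remaining mass $1 - \lambda$ supported on $(v, \infty)$ and show that, once renormalized, it coincides with the law of $X$ conditioned on $\{X > v\}$. The key identity to verify is that for every $z > v$,
\begin{equation*}
F^\alpha_X(z) - \lambda = (1-\lambda)\, P(X \leq z \mid X > v),
\end{equation*}
which follows by substituting the definitions: the left side equals $\frac{F_X(z) - \Psi}{1-\alpha}$, and using $1 - \lambda = \frac{1 - \Psi}{1-\alpha}$ together with $P(X \leq z \mid X > v) = \frac{F_X(z) - \Psi}{1 - \Psi}$ (recall $\Psi = F_X(v)$), the right side collapses to the same expression. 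This confirms that $X^\alpha$ is distributed as the mixture that, with probability $\lambda$, equals the constant $v$, and with probability $1 - \lambda$, follows the conditional distribution of $X$ given $X > v$.

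With the mixture identified, the conclusion follows by linearity of expectation over the two components:
\begin{equation*}
\text{CVaR}_\alpha(X) = \E[X^\alpha] = \lambda \cdot v + (1-\lambda)\, \E[X \mid X > v] = \lambda\, \text{VaR}_\alpha(X) + (1-\lambda)\, \text{CVaR}^+_\alpha(X),
\end{equation*}
where the last equality invokes the definition \eqref{eqn:CVaR_plus} of $\text{CVaR}^+_\alpha$. To keep the argument uniform across discrete and continuous losses, I would express the expectation as the Riemann--Stieltjes integral $\int z\, dF^\alpha_X(z)$ and split it at $z = v$, so that the atom and the tail are handled simultaneously rather than in separate cases.

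The main obstacle I anticipate is not the algebra but the careful bookkeeping around the point $v$: one must track that the jump of $F_X$ at $v$ (the possible atom $P(X = v)$) is partly absorbed into the mass $\lambda$ sitting at $v$ and is correctly excluded from the strict conditioning in $\text{CVaR}^+$, which is precisely why the formula uses $\text{CVaR}^+$ (strict inequality) rather than $\text{CVaR}$. I would also flag the degenerate case $\Psi = 1$ (equivalently $\lambda = 1$), where $\{X > v\}$ is a null event and $\text{CVaR}^+_\alpha(X)$ is undefined; there the coefficient $1 - \lambda$ is zero, so the second term drops out under the convention that $(1-\lambda)\,\text{CVaR}^+_\alpha(X) = 0$, exactly as is already implicit in the worked $A+B$ example.
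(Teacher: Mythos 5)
Your proof is correct, and the first thing to note is that the thesis itself never proves \autoref{prop:CVaR_Convex_Combination_Formula}: the statement is quoted from Rockafellar and Uryasev \cite{Rockafellar2002_CVaR_for_general_loss_distributions} without an in-text proof, so there is no paper argument to diverge from. Your route --- reading \autoref{eqn:generalized_alpha_tail} as saying that $X^{\alpha}$ is the mixture placing mass $\lambda$ at $v = \text{VaR}_{\alpha}(X)$ and mass $1-\lambda$ on the law of $X$ conditioned on $\{X > v\}$, then taking expectations via \autoref{def:CVaR_general_definition} --- is the natural derivation from the paper's own definitions and is essentially the argument of the cited source. The computations check out: the jump of $F^{\alpha}_X$ at $v$ equals $\frac{\Psi - \alpha}{1-\alpha} = \lambda$ because the left limit vanishes by definition, and for $z > v$,
\begin{equation*}
	F^{\alpha}_X(z) - \lambda \;=\; \frac{F_X(z) - \Psi}{1-\alpha} \;=\; \frac{1-\Psi}{1-\alpha}\cdot\frac{F_X(z) - \Psi}{1-\Psi} \;=\; (1-\lambda)\, P\left( X \leq z \mid X > v \right) ,
\end{equation*}
exactly as you claim, after which linearity of expectation over the two mixture components gives \autoref{eqn:CVaR_Convex_Combination_Formula} with $\text{CVaR}^+_{\alpha}$ as in \autoref{eqn:CVaR_plus}. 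One step you use silently and should state explicitly: $\lambda \in [0,1]$, equivalently $\alpha \leq \Psi \leq 1$, which is needed for your mixture to be a genuine probability law (and for $F^{\alpha}_X$ to be non-decreasing at $v$); this follows from \autoref{def:VaR}, since $v = \min\{c : F_X(c) \geq \alpha\}$ forces $\Psi = F_X(v) \geq \alpha$. Your bookkeeping remark about the atom $P(X = v)$ being only partially retained in the mass $\lambda$ is accurate, and your convention for the degenerate case $\Psi = 1$, where $\text{CVaR}^+_{\alpha}(X)$ is undefined but carries coefficient $1-\lambda = 0$, is the right one and matches how the paper treats the $A+B$ example in \autoref{sec:CVaR_RM-Coherent_Risk_Measures}.
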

Note that \autoref{prop:CVaR_Convex_Combination_Formula} is valid for all loss distributions, including continuous ones. From \autoref{prop:CVaR_Convex_Combination_Formula} it follows that $\text{CVaR}_{\alpha}$ dominates $\text{VaR}_{\alpha}$, i.e. $\text{CVaR}_{\alpha} \geq \text{VaR}_{\alpha}$. In fact, $\text{CVaR}_{\alpha} > \text{VaR}_{\alpha}$, unless $\text{VaR}_{\alpha}$ is the maximum loss possible \cite[p. 1452]{Rockafellar2002_CVaR_for_general_loss_distributions}. Another result to emphasize is that the representation of CVaR by \autoref{eqn:CVaR_Convex_Combination_Formula} is rather surprising. As shown earlier, VaR is not a coherent risk measure (see \autoref{sec:CVaR_RM-Coherent_Risk_Measures}) and, in fact, neither is $\text{CVaR}^+$ \cite[week 8, p. 16]{Richtarik2015_OMF_Lecture}. However, both these incoherent risk measures are combined in the Convex Combination Formula to yield CVaR, which is coherent and therefore has many advantageous properties \cite[p. 1452]{Rockafellar2002_CVaR_for_general_loss_distributions}.

To provide a better understanding of the Convex Combination Formula (\autoref{eqn:CVaR_Convex_Combination_Formula}), an example of a discrete loss distribution will be presented. The losses $y_i$ with associated probabilities are given in \autoref{table:CVaR_Convex_Combination_Formula_Example}.
\begin{table}[H]
	\centering
	\begin{tabu}{| *{7}{c |} }
		\hline
		 i & 1 & 2 & 3 & 4 & 5 & 6\\
		 \hline
		 $y_i$ & 100 & 200 & 400 & 800 & 900 & 1000 \\
		 $P ( Y = y_i )$ & 0.1 & 0.2 & 0.5 & 0.18 & 0.01 & 0.01 \\
		 \hline
	\end{tabu}
	\caption{Discrete loss distribution of a random variable $Y$.}
	\label{table:CVaR_Convex_Combination_Formula_Example}
\end{table}

Now assume the 95 \% CVaR is to be determined. Since $F_Y(400) = P( Y \leq 400) = 0.8$ and $F_Y(800) = P( Y \leq 800) = 0.98$, it follows that $\text{VaR}_{0.95} (Y) = \min \{ c: P(Y \leq c) \geq 0.95\} = 800$ and $\lambda = \frac{0.98 - 0.95}{1 - 0.95} = \frac{3}{5}$. Also, $\text{CVaR}_{0.95}^+ (Y)$ can be calculated as $\frac{1}{2} \times 900 + \frac{1}{2} \times 1000 = 950$. Hence, applying \autoref{eqn:CVaR_Convex_Combination_Formula} gives
$$ \text{CVaR}_{0.95} (Y) = \frac{3}{5} \times 800 + \frac{2}{5} \times 950 = 860 .$$ \\

%%%%%%%%%%%%%%%%%%%
%  Section .4: Acerbi's Integral Formula
%%%%%%%%%%%%%%%%%%%

\section{Acerbi's Integral Formula} \label{sec:CVaR_RM-Acerbi}

Another way to express CVaR is to use Acerbi's integral formula.
\begin{Proposition}[{\cite[p. 329]{2014Chatterjee_Practical_Methods_FERM} Acerbi's Integral Formula for CVaR}] \label{prop:Acerbis_integral_formula}
	The CVaR of a random variable $X$, which represents loss, at the confidence level $\alpha$ can be expressed as
	\begin{equation} \label{eqn:Acerbis_integral_formula}
		\text{\emph{CVaR}}_{\alpha} (X) = \frac{1}{1 - \alpha} \int \limits_{\alpha}^{1} \text{\emph{VaR}}_{\beta}(X) \, d \beta .
	\end{equation}
\end{Proposition}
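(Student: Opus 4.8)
The plan is to prove the formula in the continuous case, as promised, where $F_X$ is continuous and strictly increasing on the support of $X$ with density $f_X$. Under these hypotheses the quantile interpretation of VaR from \autoref{def:VaR} gives $\text{VaR}_\beta(X) = F_X^{-1}(\beta)$, and moreover $F_X(\text{VaR}_\alpha(X)) = \alpha$, so that $P(X \geq \text{VaR}_\alpha(X)) = 1 - \alpha$. These two facts are the only structural inputs the argument needs.

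First I would rewrite the right-hand side of \autoref{eqn:Acerbis_integral_formula} purely in terms of the quantile function, namely as $\frac{1}{1-\alpha}\int_\alpha^1 F_X^{-1}(\beta)\,d\beta$. The heart of the proof is then a single change of variables: setting $\beta = F_X(x)$ gives $d\beta = f_X(x)\,dx$, and the limits transform as $\beta = \alpha \mapsto x = \text{VaR}_\alpha(X)$ and $\beta = 1 \mapsto x = x_{\max}$, the upper end of the support (possibly $+\infty$). This converts the integral into $\frac{1}{1-\alpha}\int_{\text{VaR}_\alpha(X)}^{x_{\max}} x\, f_X(x)\,dx$.

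Next I would recognise this last integral as the tail expectation $\frac{1}{1-\alpha}\,\E\!\left[X \cdot \mathbf{1}_{\{X \geq \text{VaR}_\alpha(X)\}}\right]$. Dividing and multiplying by $P(X \geq \text{VaR}_\alpha(X)) = 1 - \alpha$ and invoking the elementary definition of conditional expectation collapses the prefactor and the indicator into a single conditional expectation, yielding $\E[X \mid X \geq \text{VaR}_\alpha(X)]$. By \autoref{def:CVaR} this is exactly $\text{CVaR}_\alpha(X)$, completing the chain of equalities from the right-hand side back to the left.

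I expect the main obstacle to be the rigorous justification of the change of variables rather than the computation itself. One must verify that $F_X^{-1}$ is well-defined, monotone and absolutely continuous so that the substitution is valid, handle the endpoint $\beta \to 1$ carefully when the support is unbounded (via an improper-integral limiting argument), and confirm the identity $F_X(\text{VaR}_\alpha(X)) = \alpha$ that supplies the crucial normalising factor $1-\alpha$. It is precisely at these points that continuity of $F_X$ is indispensable: in the presence of atoms or flat stretches the substitution breaks down, which is why the formula in this clean form is asserted only in the continuous setting.
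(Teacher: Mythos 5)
Your proposal is correct and is essentially the paper's own proof: both arguments hinge on the tail-integral identity $\text{CVaR}_{\alpha}(X) = \frac{1}{1-\alpha}\int_{\text{VaR}_{\alpha}(X)}^{\infty} z\, f_X(z)\, dz$ combined with the same change of variables $\beta = F_X(z)$, $d\beta = f_X(z)\,dz$, $z = \text{VaR}_{\beta}(X)$. The only cosmetic differences are that you run the substitution from the right-hand side of \autoref{eqn:Acerbis_integral_formula} back to the conditional expectation of \autoref{def:CVaR}, whereas the paper starts from $\E[X^{\alpha}]$ with the generalized $\alpha$-tail density of \autoref{eqn:generalized_alpha_tail_pdf} and works forward; in the continuous case these are the same chain of equalities read in opposite directions.
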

Hence, $\text{CVaR}_{\alpha}$ can also be interpreted as the average $\text{VaR}_{\beta}$ for $\beta \in [\alpha , 1]$ \cite[week 8, p. 33]{Richtarik2015_OMF_Lecture}. To demonstrate how \autoref{eqn:Acerbis_integral_formula} is applied, an example with a uniform loss distribution will be given. For this example, assume that the loss is distributed continuously and uniformly between 0 and 100, i.e., $X \sim U(0,100)$. Thus, $f_X (z) = \frac{1}{100}$ for $0 \leq z \leq 100$ and 0 elsewhere. The VaR at confidence level $\beta$ is given as $\text{VaR}_{\beta} (X) = 100 \times \beta$. Then the CVaR at confidence level $\alpha$ can be calculated as
\begin{align*}
	\text{CVaR}_{\alpha} (X) =& \frac{1}{1 - \alpha} \int \limits_{\alpha}^{1} \text{VaR}_{\beta} (X) \, d \beta = \frac{1}{1 - \alpha} \int \limits_{\alpha}^{1} 100 \times \beta \, d \beta \\
	=& \frac{100}{1 - \alpha} \left[ \frac{1}{2} \beta^2 \right]_{\alpha}^{1} = 50 \times (1 + \alpha) .
\end{align*}
So in this example, the 90 \% CVaR would be $\text{CVaR}_{0.9} (X) = 50 \times (1 + 0.9) = 95$.

%%%%%%%%%%%%%
%%%%      Subsection x.x.1
\subsection{A New Proof of Acerbi's Integral Formula} \label{sec:CVaR_RM-Acerbi_new_proof}

Although Acerbi and Tasche proved \autoref{prop:Acerbis_integral_formula} in \cite[p. 1492]{Acerbi2002_Coherence_of_ES}, another proof will be given here. Two reasons for this alternative proof are, first, that Acerbi used different definitions in his paper, and second, to show how the result can be derived in another way. To the best knowledge of the author, this alternative proof has not been published before. However, the proof given here only holds for \emph{continuous} random variables and therefore lacks the generality of Acerbi's proof.

For this alternative proof, the probability density function of the generalized $\alpha$-tail distribution is needed, which can be derived from \autoref{eqn:generalized_alpha_tail} as $f^{\alpha}_X (z) = \frac{d}{dz} F^{\alpha}_X (z)$, i.e.,
\begin{equation} \label{eqn:generalized_alpha_tail_pdf}
f^{\alpha}_X (z) = \left\{
\begin{array}{l r}
	0, & \text{when~} z < \text{VaR}_{\alpha} (X) \\
	\frac{f_X (z)}{1 - \alpha}, & \text{when~} z \geq \text{VaR}_{\alpha} (X)
\end{array}
\right. .
\end{equation}

\begin{proof}
	(Continuous case only) Starting from the very basic definition of CVaR given in \autoref{def:CVaR_general_definition}, one can use integration by substitution to arrive at \autoref{eqn:Acerbis_integral_formula}:
	\begin{align*}
		\text{CVaR}_{\alpha} (X) =& \E [X^{\alpha}] \\
		=& \int \limits_{- \infty}^{\infty} z f_X^{\alpha}(z) dz \\
		=& \int \limits_{- \infty}^{\text{VaR}_{\alpha}(X)} z f_X^{\alpha}(z) dz + \int \limits_{\text{VaR}_{\alpha}(X)}^{\infty} z f_X^{\alpha}(z) dz . \\
	\end{align*}
	Using the definition of $f_X^{\alpha}(z)$ given in \autoref{eqn:generalized_alpha_tail_pdf}, the above equality simplifies to
	\begin{align*}
		\text{CVaR}_{\alpha} (X) =& \int \limits_{\text{VaR}_{\alpha}(X)}^{\infty} z \frac{f_X (z)}{1 - \alpha} dz . \\
	\end{align*}
	Now, one can define a new variable $\beta$, such that $\beta = F_X (z)$. Differentiating $\beta$ with respect to $z$ gives
	\begin{equation}
		\frac{d}{dz} \beta = f_X (z) \Longleftrightarrow f_X (z) dz = d \beta . \notag
	\end{equation}
	Furthermore, since $X$ is continuous, there is a one-to-one relationship between $\beta$ and $z$ and by \autoref{eqn:VaR}, $z$ can be expressed as $z = \text{VaR}_{\beta} (X)$. So substituting $\beta = F_X (z)$, $z = \text{VaR}_{\beta} (X)$, and adjusting the limits of the integral ($F_X ( \text{VaR}_{\alpha} (X) ) = \alpha$ and $F_X ( \infty ) = 1$) yields
	\begin{align*}
		\text{CVaR}_{\alpha} (X) =& \frac{1}{1 - \alpha} \int \limits_{\alpha}^{1} \text{VaR}_{\beta}(X) \, d \beta \,,
	\end{align*}	
	which completes the proof.
\end{proof}

%%%%%%%%%%%%%%%%%%%
%
%  Chapter 3: CVaR Portfolio Optimization
%
%%%%%%%%%%%%%%%%%%%

\chapter{Portfolio Optimization Using CVaR} \label{chapter:CVaR_Portfolio_Optimization}

While \autoref{chapter:CVaR_as_risk_measure} introduced the CVaR concept for univariate random distributions, the concept can be extended to multivariate random distributions or random vectors as well. This will be done here with a focus on portfolio optimization, i.e. investment decisions where the investor is able to invest his funds in more than one asset. First, \autoref{sec:CVaR_PO_Markowitz} gives an introduction into portfolio optimization by presenting the first model that has been developed to improve decision making for portfolio investments \cite{Markowitz1952_Portfolio_Selection}, namely the \emph{Markowitz} or \emph{Mean Variance Model}. Then, \autoref{sec:CVaR_PO_Min_CVaR} introduces the \emph{CVaR Model} that has been developed by Rockafellar and Uryasev in \cite{Rockafellar2000_Optimization_of_CVaR}. It will also be explained why the CVaR Model is preferable to the Markowitz Model with regards to risk management. And finally, numerical examples will be given in \autoref{sec:CVaR_PO_Examples} to show how the two models can be applied in practice.\\

Before beginning with the first section, some notation will be established for the concepts that are used throughout this chapter and the rest of the dissertation.

First of all, the investor can invest in $N$ different assets. His investment decision can be represented mathematically by a \emph{decision vector} $\mathbf{x} \in S \subseteq \mathbb{R}^N$. Here, $S$  represents the feasible set for investment decisions.\footnote{For example, $S$ could have the unit budget constraint $\sum_i x_i = 1$, or a concentration risk constraint $x_j \leq 0.3 \sum_i x_i \, \forall j \leq N$. In the case of the budget unit constraint, $x_3 = 0.3$ means that 30 \% of available funds should be invested in asset number 3.}

To define the set of admissible portfolios $S$ for this chapter, the investor only has two constraints: He cannot short sell any assets and his decision needs to satisfy the unit budget constraint. With these considerations, the set of admissible portfolios $S$ which consists of $N$ assets can be as
\begin{equation} \label{eqn:admissible_portfolio_S}
	S = \left\{ \xinR{N} : x_i \geq 0  \,\, \forall \,\, i \in \{1, 2, \dots, N\} \,, \sum \limits_{i = 1}^N x_i = 1 \right\} .
\end{equation} ~\\

Also, the returns of each asset are random. Therefore, the losses can be expressed by a random loss vector $\mathbf{r} \in \mathbb{R}^N$,\footnote{Here, the losses are the negative values of returns. Hence, a negative $r_i$ means that asset $i$ is giving the investor a profit.} so that $r_i$ is a random variable that is distributed according to the loss distribution of the $i$th asset. Note that $r_i$ and $r_j$ for $i \not = j$ do not need to have the same distribution. Furthermore, $r_i$ and $r_j$ can be correlated (and in most cases are), which is why portfolio optimization is concerned with multivariate loss distributions.

So the loss $X$  that an investor can experience is a random variable that depends on the (random) losses of each asset and also on the investment in each asset, so that $X = X(\mathbf{x}, \mathbf{r})$.

For the following considerations, the investor demands a minimum expected return. Taking $\mathbf{r}$ as the vector of random losses, $\mathbf{x}$ the vector of investment decisions, and labelling the minimum required return $R$, the minimum expected return constraint can be formulated as
\begin{align} 
	\mathbf{x}^T \widehat{\mathbf{r}} & \leq - R \,, \label{eqn:minimum_expected_return}
\end{align}
where $\widehat{\mathbf{r}} = \E [ \mathbf{r} ]$.

%%%%%%%%%%%%%%%%%%%
%  Section .1: Minimum Variance Optimization
%%%%%%%%%%%%%%%%%%%
\section{Mean Variance Optimization (Markowitz Model)} \label{sec:CVaR_PO_Markowitz}

Before modern portfolio theory was introduced by Markowitz in 1952 (\cite{Markowitz1952_Portfolio_Selection}), investment decisions were mostly made by an investor's belief.\footnote{Even after Markowitz's paper was published it took several decades to be adapted by the financial industry because computers did not have the necessary power to perform the calculations.} Although the expected return and variance of a single asset could be calculated, investors were not able to form optimal portfolios, i.e. assign their funds in such a way that the whole portfolio had preferable characteristics \cite{Wolf2015_FRM_Lecture}.

The most important contribution of \cite{Markowitz1952_Portfolio_Selection} is that it is favourable to diversify a portfolio because this will reduce the portfolio's standard deviation (risk) as long as the correlation between assets is less than 1. This result can be shown by a portfolio of $N$ assets \cite[p. 32]{Wolf2015_FRM_Lecture}.

Assume that an investor can buy $N$ assets, with expected returns $\widehat{r}_1 \,, \dots \,, \widehat{r}_N$ and variance $\sigma^2_1 \,, \dots \,, \sigma^2_N$. Assigning $x_i$ of his funds to the $i$th asset, the investor can expect a return of
\begin{align*}
	\E [\mathbf{x}^T \mathbf{r}] =& \sum \limits_{i=1}^N x_i \times \widehat{r}_i \,,
\end{align*}
which is the weighted average of expected asset returns. However, the risk for the investor can be lower than the weighted average of asset risks. To show this, the \emph{covariance matrix} $\mathbf{\Sigma} \in \mathbb{R}^{N \times N}$ of the random loss vector $\mathbf{r}$ will be introduced. $\mathbf{\Sigma}$ is defined as \cite[week 3, p. 11]{Richtarik2015_OMF_Lecture}
\begin{equation}
	\mathbf{\Sigma} \defeq \begin{bmatrix}
		\text{Var}(r_1) & \text{Cov}(r_1, r_2) & \cdots & \text{Cov}(r_1, r_N)  \\
		\text{Cov}(r_2, r_1)  & \text{Var}(r_2) & \cdots & \text{Cov}(r_2, r_N) \\
		\vdots & \vdots & \ddots & \vdots \\
		\text{Cov}(r_N, r_1) & \text{Cov}(r_N, r_2) & \cdots & \text{Var}(r_N) \\
		\end{bmatrix} \,,
		\notag
\end{equation}
where $\text{Var}(r_i) = \sigma^2_i$ was defined in \autoref{eqn:variance}. Using \autoref{eqn:correlation}, $\text{Cov}(r_i, r_j)$ can be expressed as 
\begin{equation}
	\text{Cov}(r_i, r_j) = \rho_{ij} \sigma_i \sigma_j \,, \notag
\end{equation}
which leads to the expression below. This expression is a standard result in financial literature but has been derived independently by the author:\footnote{In the standard financial literature, e.g. \cite{2010Bodie_Investments}, this result is usually derived for $N = 2$ assets but not for $N > 2$.}
\begin{align*}
	\sigma (\mathbf{x}^T \mathbf{r}) = \sqrt{\text{Var}(\mathbf{x}^T \mathbf{r})} =& \sqrt{\mathbf{x}^T \mathbf{\Sigma} \mathbf{x} } \\
	=& \sqrt{\sum \limits_{i = 1}^N x^2_i \sigma^2_i + \sum \limits_{i = 1}^{N-1} \sum \limits_{j = i+1}^N 2 \rho_{ij}x_i x_j \sigma_i \sigma_j} \\
	=& \sqrt{\sum \limits_{i = 1}^N x^2_i \sigma^2_i + \sum \limits_{i = 1}^{N-1} \sum \limits_{j = i+1}^N 2 x_i x_j \sigma_i \sigma_j - \sum \limits_{i = 1}^{N-1} \sum \limits_{j = i+1}^N 2 (1 - \rho_{ij}) x_i x_j \sigma_i \sigma_j  }\\
	=& \sqrt{ \left( \sum \limits_{i = 1}^N x_i \sigma_i \right)^2 - \sum \limits_{i = 1}^{N-1} \sum \limits_{j = i+1}^N 2 (1 - \rho_{ij}) x_i x_j \sigma_i \sigma_j } \\
	\leq& \sqrt{ \left( \sum \limits_{i = 1}^N x_i \sigma_i \right)^2} = \sum \limits_{i = 1}^N x_i \sigma_i \,,
\end{align*}
for $\mathbf{x} \in S$. The above inequality is strict whenever $\rho_{ij} < 1 \text{~for~} i \not = j$, meaning that the portfolio risk (given by the standard deviation) is less than the weighted average of asset risks whenever the assets are not perfectly correlated (which is usually the case).

Using Markowitz's findings, a quadratic programme can be formulated to find a minimum variance portfolio. Including the constraint given by \autoref{eqn:minimum_expected_return}, the programme can give the investor a portfolio which offers the required minimum return at the lowest possible risk. The inputs for the model are $\widehat{\mathbf{r}}$, the expected returns of assets $1, \dots, N$ and $\mathbf{\Sigma}$, the covariance matrix. Usually these inputs have to be estimated and one possibility of estimating the entries of the covariance matrix is given in \autoref{sec:Portfolio_Hedging-Background_on_Risk_Management} but a further discussion on parameter estimation is beyond the scope of this dissertation.

\begin{Definition}[{\cite[week 3, p. 15]{Richtarik2015_OMF_Lecture} Minimum Variance Portfolio}] \label{def:minimum_variance_portfolio}
	A minimum variance portfolio in the sense of \cite{Markowitz1952_Portfolio_Selection} is a portfolio which can be formed by solving
	\begin{Problem}[problem:MVO_optimization]
		\left.
		\begin{array}{rll}
			\min \limits_{\mathbf{x}} & \mathbf{x}^T \mathbf{\Sigma} \mathbf{x} \\
			\text{s.t.} & \mathbf{x}^T \widehat{\mathbf{r}} \leq - R\\
			& \mathbf{x} \in S
		\end{array}
		\right\} \,,
	\end{Problem}
	where $\mathbf{\Sigma}$ is the covariance matrix of the random loss vector $\mathbf{r}$, $\widehat{\mathbf{r}} = \E [ \mathbf{r} ]$, and $S$ is the set of admissible portfolios.
\end{Definition}

Since a covariance matrix $\mathbf{\Sigma}$ is always positive definite \cite[week 3, p. 13]{Richtarik2015_OMF_Lecture}, \autoref{problem:MVO_optimization} is a convex optimization problem. It has therefore either a unique solution or is infeasible. The only situation under which  \autoref{problem:MVO_optimization} becomes infeasible is when the required expected return is higher than any single expected return of the $N$ assets under consideration. \\

To see how the portfolio risk changes for different expected returns, one can solve \autoref{problem:MVO_optimization} for different values of $R$ (expected minimum return) and calculate the resulting portfolio risk (standard deviation). These risk/return pairs can be used to draw the \emph{efficient frontier}, which is ``a graph of the lowest possible [risk] that can be attained for a given portfolio expected return'' \cite[p. 220]{2010Bodie_Investments}.

For a sample portfolio of three assets with expected returns and covariance matrix
\begin{align*}
	\widehat{\mathbf{r}} =& \begin{bmatrix}
		-0.1073 \\
		-0.0737 \\
		-0.0627
		\end{bmatrix} & \text{and} &&
	\mathbf{\Sigma} =& \begin{bmatrix} 
				0.02778 & 0.00387 & 0.00021 \\
				0.00387 & 0.01112 & -0.00020 \\
				0.00021 & -0.00020 & 0.00115
				\end{bmatrix} \,,
\end{align*}
the efficient frontier is shown in \autoref{fig:Efficient_Frontier}.

\begin{figure}[H]
	\centering
	\includegraphics[width = 0.9 \textwidth]{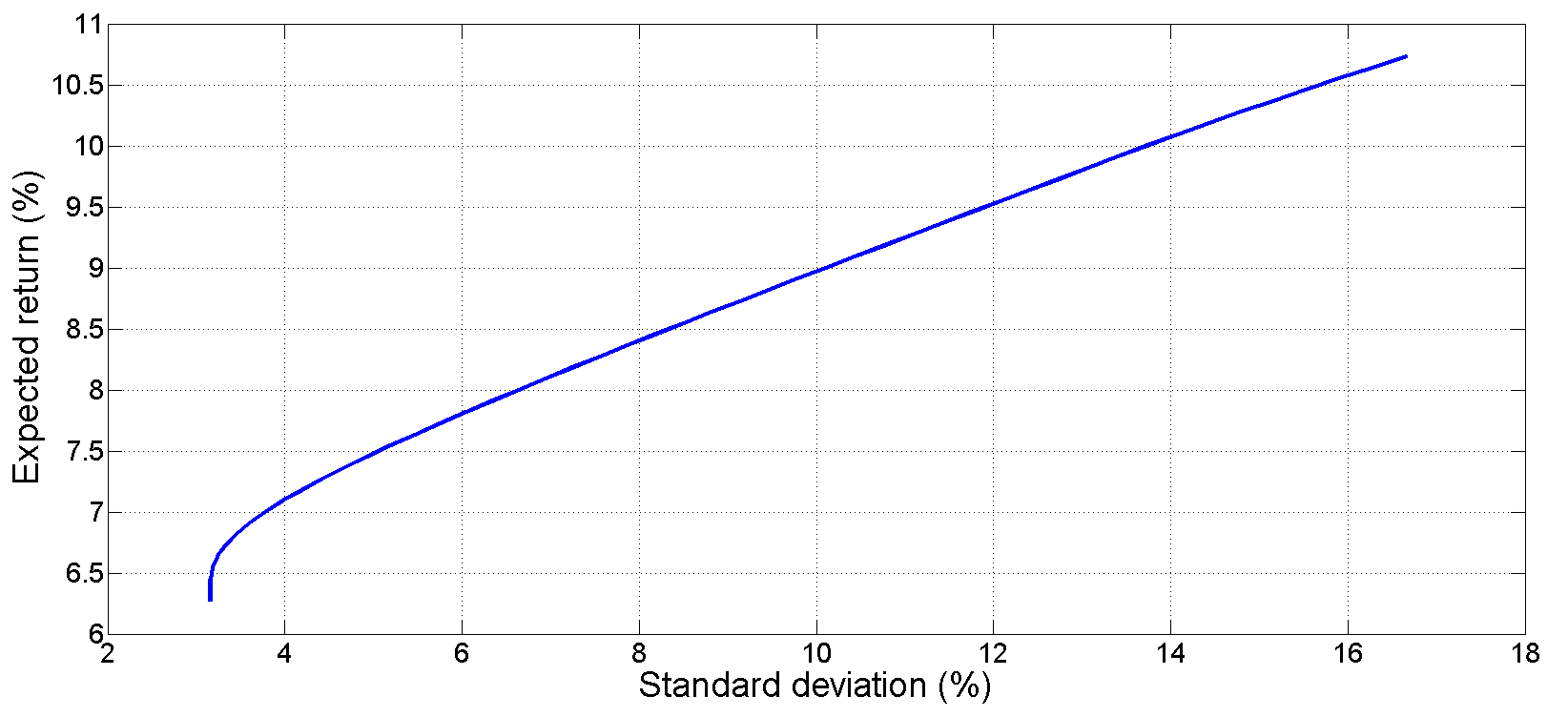}
	\caption{Efficient frontier for a sample portfolio.}
	\label{fig:Efficient_Frontier}
\end{figure}

Because of the quadratic term in the objective function of \autoref{problem:MVO_optimization}, an investor can increase his expected portfolio return with little additional risk if the portfolio has a low standard deviation to begin with. For example, increasing the expected return from 6.5 to 7 \% only increases the standard deviation by 0.6 \%. However, the more expected return an investor demands, the higher the increase in risk. Increasing the expected return from 9.5 to 10 \% requires an additional risk of 1.7 \%.

It is possible to form a portfolio with a risk/return profile that lies below the efficient frontier. However, it is not possible to form a portfolio whose risk/return profile is above or to the left of the efficient frontier in \autoref{fig:Efficient_Frontier} \cite[p. 220]{2010Bodie_Investments}.

%%%%%%%%%%%%%%%%%%%
%  Section .2: Minimum CVaR Optimization
%%%%%%%%%%%%%%%%%%%
\section{CVaR Optimization (Rockafellar and Uryasev Model)} \label{sec:CVaR_PO_Min_CVaR}

Despite revolutionizing risk management at its time, the Markowitz Model has some drawbacks regarding risk management. Two important disadvantages arise because it measures the risk in terms of variance of the portfolio:
\begin{enumerate}
	\item Variance is only a useful risk measure for normally (or symmetrically) distributed losses. Since variance is measured in either direction, tail losses arising from skewed loss distributions are not taken in account.
	\item Variance is not a coherent risk measure as it is not monotone.
\end{enumerate} ~

The first argument is illustrated in the second scenario of \autoref{sec:CVaR_PO_Examples}, while the second argument can easily be shown by an example: Consider two random variables (both representing loss) which are normally distributed, but with different $\mu$ and $\sigma^2$: $X \sim N(\mu_X = 0, \sigma_X^2 = 2)$ and $Y \sim N(\mu_Y = 10, \sigma_Y^2 = 1)$. The probability that $X$ is bigger than $Y$ is insignificantly small. To be precise, $P(Y \leq X) = 3.9 \times 10^{-9}$. Hence, it is nearly impossible that the loss of $X$ will exceed the loss of $Y$. However, $X$ has a higher variance than $Y$, i.e. $\text{Var}(X) = 2 \geq \text{Var}(Y) = 1$, and would therefore be considered riskier if the risk were measured by the variance. \\

Because of this, it is preferable for a risk manager to optimize the portfolio with regards to CVaR than with regards to variance. Rockafellar and Uryasev proposed a linear programme in \cite{Rockafellar2000_Optimization_of_CVaR} to optimize the CVaR of a portfolio. They also proved that under certain conditions the CVaR optimization will give the same optimal portfolio as the minimum variance optimization. The rest of this section introduces their notation and presents their results.\footnote{Although this section follows the outline of \cite{Rockafellar2000_Optimization_of_CVaR}, the expressions are closer aligned with \cite[week 8]{Richtarik2015_OMF_Lecture}.}

To derive later results, Rockafellar and Uryasev labelled the cumulative distribution function of losses $\Psi (\mathbf{x},c)$, so that for any given decision $\mathbf{x} \in S$, random asset losses $\mathbf{r} \in \mathbb{R}^n$, and loss distribution $X(\mathbf{x}, \mathbf{r})$,
	\begin{align}
		\Psi (\mathbf{x},c) &= F_X (c) = P ( X(\mathbf{x}, \mathbf{r}) \leq c) & \text{in the general case, and} \label{eqn:cdf_Losses_general} \\
	\Psi (\mathbf{x},c) &= F_X (c) = \int \limits_{\mathbf{r}: X(\mathbf{x},\mathbf{r}) \leq c} p(\mathbf{r}) d \mathbf{r} & \text{in the continuous case,} \label{eqn:cdf_Losses_continuous}
	\end{align}
where $p(\mathbf{r})$ in \autoref{eqn:cdf_Losses_continuous} is the pdf for a continuous $\mathbf{r}$. The function $\Psi (\mathbf{x},c)$ can be interpreted as the probability that the losses do not exceed threshold $c$.

Continuing with the notation of $\Psi (\mathbf{x},c)$ as the threshold of losses, $\text{VaR}_{\alpha}$ and $\text{CVaR}_{\alpha}$ of an investment decision $\mathbf{x}$ can be then written as
\begin{align}
	\text{VaR}_{\alpha} ( \mathbf{x} ) &= \text{VaR}_{\alpha} (  X(\mathbf{x}, \mathbf{r}) ) = \min \{ c : \Psi (\mathbf{x},c) \geq \alpha \} \label{eqn:VaR_in_terms_of_Psi} \text{, and}\\
	\text{CVaR}_{\alpha} ( \mathbf{x} ) &= \text{CVaR}_{\alpha} (  X(\mathbf{x}, \mathbf{r}) ) = \E_{\mathbf{r}} [  X(\mathbf{x}, \mathbf{r}) \mid  X(\mathbf{x}, \mathbf{r}) \geq \text{\emph{VaR}}_{\alpha} ( \mathbf{x} )] \label{eqn:CVaR_in_terms_of_Psi}.
\end{align}
Rockafellar and Uryasev characterized \autoref{eqn:VaR_in_terms_of_Psi} and \autoref{eqn:CVaR_in_terms_of_Psi} in terms of a function
\begin{equation} \label{eqn:phi_alpha}
	\phi_{\alpha} ( \mathbf{x}, c) \defeq c + \frac{1}{1 - \alpha} \E \left[ ( X( \mathbf{x}, \mathbf{r} ) - c)^+ \right],
\end{equation}
where $\E \left[ \cdot \right]$ is the expectation and $(t)^+ = \max \{ 0, t \}$. Based on \autoref{eqn:phi_alpha}, they formulated \autoref{theorem:CVaR}, the most important result of \cite{Rockafellar2000_Optimization_of_CVaR}.
\begin{Theorem}[{\cite[p. 24]{Rockafellar2000_Optimization_of_CVaR}}] \label{theorem:CVaR}
	As a function of $c$, $\phi_{\alpha} ( \mathbf{x}, c)$ is convex and continuously differentiable. The $\text{\emph{CVaR}}_{\alpha}$ of the loss associated with any $\mathbf{x} \in S$ can be determined from the formula
	\begin{equation} \label{eqn:CVaR_Theorem-CVaR_from_Psi}
		\text{\emph{CVaR}}_{\alpha} (\mathbf{x}) = \min_{c \in \mathbb{R}} \phi_{\alpha} ( \mathbf{x}, c) .
	\end{equation}
	Furthermore, let $\Phi_{\alpha}^* (\mathbf{x}) \defeq \arg \min_c  \phi_{\alpha} ( \mathbf{x}, c)$, i.e. $\Phi_{\alpha}^* (\mathbf{x})$ is the set of minimizers of $\phi_{\alpha} ( \mathbf{x}, c)$. Then
	\begin{equation} \label{eqn:CVaR_Theorem-VaR_from_Psi}
		\text{\emph{VaR}}_{\alpha} (\mathbf{x}) = \min \{ c: c \in \Phi_{\alpha}^* (\mathbf{x}) \} .
	\end{equation}	
	And following from \autoref{eqn:CVaR_Theorem-CVaR_from_Psi} and \autoref{eqn:CVaR_Theorem-VaR_from_Psi}, the following equation always holds:
	\begin{equation} \label{eqn:CVaR_Theorem-CVaR_Final_Expression}
		\text{\emph{CVaR}}_{\alpha} (\mathbf{x}) = \phi_{\alpha} ( \mathbf{x}, \text{\emph{VaR}}_{\alpha} (\mathbf{x}) ) .
	\end{equation}
\end{Theorem}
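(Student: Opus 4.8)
The plan is to establish the four assertions in their natural order: first convexity of $c \mapsto \phi_{\alpha}(\mathbf{x},c)$, then continuous differentiability together with an explicit formula for the derivative, and finally to read off \autoref{eqn:CVaR_Theorem-CVaR_from_Psi}--\autoref{eqn:CVaR_Theorem-CVaR_Final_Expression} from the first-order optimality condition. Throughout I would fix $\mathbf{x} \in S$ and treat $X = X(\mathbf{x},\mathbf{r})$ as a continuous loss variable, writing $\Psi(\mathbf{x},c) = F_X(c)$ as in \autoref{eqn:cdf_Losses_continuous}. For \textbf{convexity}, note that for each fixed realization the map $c \mapsto (X - c)^+ = \max\{0, X - c\}$ is a maximum of two functions affine in $c$, hence convex in $c$; taking the expectation preserves convexity, and adding the linear term $c$ leaves $\phi_{\alpha}(\mathbf{x},\cdot)$ convex. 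Combined with the derivative formula below, this already forces the minimizer set $\Phi_{\alpha}^*(\mathbf{x})$ to be a closed interval.

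For the \textbf{derivative}, I would write $\E[(X-c)^+] = \int_c^{\infty}(z-c)p(z)\,dz$ using the density $p$ of $X$ and differentiate under the integral by the Leibniz rule. The boundary term vanishes since the integrand $(z-c)$ is zero at $z = c$, leaving
\[
\frac{\partial}{\partial c}\E[(X-c)^+] = -\int_c^{\infty}p(z)\,dz = F_X(c) - 1 .
\]
Therefore
\[
\frac{\partial}{\partial c}\phi_{\alpha}(\mathbf{x},c) = 1 + \frac{F_X(c)-1}{1-\alpha} = \frac{\Psi(\mathbf{x},c)-\alpha}{1-\alpha} ,
\]
which is continuous precisely because $X$ is continuous, establishing continuous differentiability.

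Since $\Psi(\mathbf{x},\cdot)$ is nondecreasing, this derivative is $\leq 0$ where $\Psi(\mathbf{x},c) \leq \alpha$ and $\geq 0$ where $\Psi(\mathbf{x},c) \geq \alpha$, so by convexity $\phi_{\alpha}(\mathbf{x},\cdot)$ attains its minimum exactly on $\{c : \Psi(\mathbf{x},c) = \alpha\}$, giving \autoref{eqn:CVaR_Theorem-CVaR_from_Psi}; the smallest element of this set equals $\min\{c : \Psi(\mathbf{x},c) \geq \alpha\} = \text{VaR}_{\alpha}(\mathbf{x})$ by \autoref{eqn:VaR_in_terms_of_Psi}, which is \autoref{eqn:CVaR_Theorem-VaR_from_Psi}. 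To identify the minimum value, set $v = \text{VaR}_{\alpha}(\mathbf{x})$, use $(X-v)^+ = (X-v)\mathbf{1}_{\{X \geq v\}}$ and the continuity fact $P(X \geq v) = 1 - F_X(v) = 1-\alpha$ to get
\[
\phi_{\alpha}(\mathbf{x},v) = v + \frac{1}{1-\alpha}\left(\E[X\mathbf{1}_{\{X\geq v\}}] - v(1-\alpha)\right) = \frac{\E[X\mathbf{1}_{\{X\geq v\}}]}{1-\alpha} = \E[X \mid X \geq v] ,
\]
which is $\text{CVaR}_{\alpha}(\mathbf{x})$ by \autoref{def:CVaR}; this yields both \autoref{eqn:CVaR_Theorem-CVaR_from_Psi} and \autoref{eqn:CVaR_Theorem-CVaR_Final_Expression}.

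The delicate step, and the main obstacle, is the differentiation under the expectation and the \emph{continuously differentiable} claim. For a well-behaved density the Leibniz computation above suffices, but to reach the generality Rockafellar and Uryasev assert one must justify interchanging $\partial_c$ with $\E$ by a dominated-convergence argument; and when $X$ has atoms the clean differentiable statement fails, so one would instead work with the subdifferential of the convex function $\phi_{\alpha}(\mathbf{x},\cdot)$ and recover \autoref{eqn:CVaR_Theorem-VaR_from_Psi} from the condition $0 \in \partial_c \phi_{\alpha}(\mathbf{x},c)$. As in the new proof of Acerbi's formula, I would restrict attention to continuous $X$, where the argument above is complete.
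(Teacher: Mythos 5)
Your proposal is correct in the continuous-distribution setting in which the theorem is asserted, and it takes essentially the same approach as the proof the paper relies on: the thesis gives no proof of its own but defers to the appendix of \cite{Rockafellar2000_Optimization_of_CVaR}, whose argument is precisely your route --- convexity of $c \mapsto \phi_{\alpha}(\mathbf{x},c)$ as an expectation of maxima of affine functions, the derivative $\frac{\Psi(\mathbf{x},c)-\alpha}{1-\alpha}$, identification of $\Phi_{\alpha}^*(\mathbf{x})$ with the closed interval $\{c : \Psi(\mathbf{x},c) = \alpha\}$ whose left endpoint is $\text{VaR}_{\alpha}(\mathbf{x})$, and evaluation at that point to recover the conditional expectation $\E[X \mid X \geq \text{VaR}_{\alpha}(\mathbf{x})]$. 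Your closing caveats (justifying the interchange of $\partial_c$ and $\E$, and the loss of differentiability for atomic distributions, where one passes to subdifferentials as in \cite{Rockafellar2002_CVaR_for_general_loss_distributions}) accurately reflect both the density assumption of the original theorem and its later generalization.
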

The proof of \autoref{theorem:CVaR} is given in the appendix of \cite{Rockafellar2000_Optimization_of_CVaR}. Based on \autoref{theorem:CVaR}, Rockafellar and Uryasev stated another theorem, which is useful for the computational calculation to find a CVaR optimal portfolio $\mathbf{x}^* \in S$.
\begin{Theorem}[{\cite[p. 25 f.]{Rockafellar2000_Optimization_of_CVaR}}] \label{theorem:CVaR_Optimization}
	Let $S$ be a convex set of feasible decisions $\mathbf{x}$ and assume that $X(\mathbf{x}, \mathbf{r})$ is convex in $\mathbf{x}$. Then minimizing the $\text{\emph{CVaR}}_{\alpha}$ of the loss associated with decision $\mathbf{x} \in S$ is equivalent to minimizing $\phi_{\alpha} ( \mathbf{x}, c)$ over all $(\mathbf{x}, c) \in S \times \mathbb{R}$, in the sense that
	\begin{Problem}[eqn:CVaR_Optimization_1]
		\min \limits_{\mathbf{x} \in S} \text{\emph{CVaR}}_{\alpha} (\mathbf{x}) = \min \limits_{(\mathbf{x}, c) \in S \times \mathbb{R}} \phi_{\alpha} ( \mathbf{x}, c) \,,
	\end{Problem}
	where, moreover, a pair $(\mathbf{x}^*, c^*)$ achieves the right hand side minimum if and only if $\mathbf{x}^*$ achieves the left hand side minimum and $c^* \in \Phi_{\alpha}^* (\mathbf{x})$. Therefore, in circumstances where the interval $\Phi_{\alpha}^* (\mathbf{x})$ reduces to a single point (as is typical), the minimization of $\phi_{\alpha} ( \mathbf{x}, c)$ produces a pair $(\mathbf{x}^*, c^*)$ such that $\mathbf{x}^*$ minimizes the $\text{\emph{CVaR}}_{\alpha}$ and $c^*$ gives the corresponding $\text{\emph{VaR}}_{\alpha}$.
\end{Theorem}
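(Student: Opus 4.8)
The plan is to reduce the entire statement to the single-variable identity $\text{CVaR}_{\alpha}(\mathbf{x}) = \min_{c \in \mathbb{R}} \phi_{\alpha}(\mathbf{x}, c)$ already established in \autoref{theorem:CVaR} (see \autoref{eqn:CVaR_Theorem-CVaR_from_Psi}), and then to invoke the elementary fact that a nested minimization coincides with a joint minimization over the product domain. Concretely, I would write
\begin{equation*}
	\min_{\mathbf{x} \in S} \text{CVaR}_{\alpha}(\mathbf{x}) = \min_{\mathbf{x} \in S} \left( \min_{c \in \mathbb{R}} \phi_{\alpha}(\mathbf{x}, c) \right) = \min_{(\mathbf{x}, c) \in S \times \mathbb{R}} \phi_{\alpha}(\mathbf{x}, c),
\end{equation*}
where the first equality is exactly \autoref{theorem:CVaR} and the second is the standard interchange $\min_{\mathbf{x}} \min_c = \min_{(\mathbf{x},c)}$. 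This immediately yields the equality of optimal values asserted in \autoref{eqn:CVaR_Optimization_1}.

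For the characterization of the minimizers, I would set $m \defeq \min_{(\mathbf{x},c) \in S \times \mathbb{R}} \phi_{\alpha}(\mathbf{x}, c)$ and, for an arbitrary feasible pair $(\mathbf{x}^*, c^*)$, record the chain
\begin{equation*}
	\phi_{\alpha}(\mathbf{x}^*, c^*) \geq \min_{c \in \mathbb{R}} \phi_{\alpha}(\mathbf{x}^*, c) = \text{CVaR}_{\alpha}(\mathbf{x}^*) \geq \min_{\mathbf{x} \in S} \text{CVaR}_{\alpha}(\mathbf{x}) = m.
\end{equation*}
The pair $(\mathbf{x}^*, c^*)$ attains the right-hand minimum precisely when both inequalities are equalities. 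The first is tight exactly when $c^*$ minimizes $\phi_{\alpha}(\mathbf{x}^*, \cdot)$, i.e. $c^* \in \Phi_{\alpha}^*(\mathbf{x}^*)$, and the second is tight exactly when $\mathbf{x}^*$ attains $\min_{\mathbf{x} \in S} \text{CVaR}_{\alpha}(\mathbf{x})$, i.e. $\mathbf{x}^*$ solves the left-hand problem. Conjoining these two conditions gives the stated ``if and only if''. The closing remark about $\Phi_{\alpha}^*(\mathbf{x})$ reducing to a single point then follows because, when the inner minimizer is unique, that single point must equal $\text{VaR}_{\alpha}(\mathbf{x}^*)$ by \autoref{eqn:CVaR_Theorem-VaR_from_Psi}.

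Before any of this, I would verify that the hypotheses make the right-hand problem a genuine convex program, since that is what renders the equivalence computationally meaningful: with $S$ convex and $X(\cdot, \mathbf{r})$ convex in $\mathbf{x}$, the map $(\mathbf{x}, c) \mapsto (X(\mathbf{x}, \mathbf{r}) - c)^+$ is jointly convex, being the composition of the convex nondecreasing function $t \mapsto t^+$ with the map $(\mathbf{x}, c) \mapsto X(\mathbf{x}, \mathbf{r}) - c$, which is convex in $\mathbf{x}$ and affine in $c$; taking expectations preserves convexity, so $\phi_{\alpha}$ is jointly convex on $S \times \mathbb{R}$. I expect no serious obstacle once \autoref{theorem:CVaR} is granted, as the argument is essentially bookkeeping. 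The only point demanding care is the interchange of the two minimizations, which is valid precisely because the inner minimum is \emph{attained} for each $\mathbf{x}$ --- guaranteed by \autoref{theorem:CVaR} --- so I would flag that attainment explicitly rather than treating the swap as automatic.
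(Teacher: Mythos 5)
Your proof is correct and follows essentially the same route as the argument in Rockafellar and Uryasev's paper, which the thesis cites rather than reproducing: reduce to \autoref{theorem:CVaR} via the interchange $\min_{\mathbf{x}}\min_{c}=\min_{(\mathbf{x},c)}$, then read off the minimizer characterization from when the two inequalities in your chain are tight. Your explicit remarks on attainment of the inner minimum and on the joint convexity of $\phi_{\alpha}$ are careful touches, but they do not change the substance of the argument.
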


\autoref{theorem:CVaR_Optimization} not only gives a way to express the CVaR minimization problem in a tractable form, but also allows to calculate $\text{CVaR}_{\alpha}$ without having to calculate $\text{VaR}_{\alpha}$ first, as would have been the case with \autoref{def:CVaR}. More remarkably, finding the CVaR by using \autoref{theorem:CVaR_Optimization}, gives the corresponding VaR as a by-product \cite[p. 25 f.]{Rockafellar2000_Optimization_of_CVaR}.

Applying \autoref{theorem:CVaR_Optimization}  with \autoref{eqn:phi_alpha}, the investment decision $\mathbf{x}$ that minimizes the Conditional Value-at-Risk of a portfolio at the confidence level $\alpha$ can be expressed as \cite[week 8, p. 21]{Richtarik2015_OMF_Lecture}
\begin{Problem}[eqn:CVaR_Definition_Tractable]
	\min \limits_{\mathbf{x} \in S} \text{CVaR}_{\alpha} (\mathbf{x}) = \min \limits_{\mathbf{x} \, \in S, c \, \in \mathbb{R}} \left( c + \frac{1}{1 - \alpha} \E \left[ ( X(\mathbf{x}, \mathbf{r}) - c )^+ \right] \right) .
\end{Problem}

To provide a better understanding of how to solve \autoref{eqn:CVaR_Definition_Tractable}, a one-dimensional example will be given, i.e. there is only asset with a univariate, discrete loss distribution. Since there is only one asset to consider, $\mathbf{x} = [1]$. Because of this, it is not the goal in this example to find the optimal portfolio composition, but rather to find the VaR and CVaR using \autoref{theorem:CVaR_Optimization}. The asset has the loss distribution of $Y$ given in \autoref{table:CVaR_Convex_Combination_Formula_Example}. The table is reproduced below for convenience.
\begin{table}[H]
	\centering
	\begin{tabu}{| *{7}{c |} }
		\hline
		 i & 1 & 2 & 3 & 4 & 5 & 6\\
		 \hline
		 $y_i$ & 100 & 200 & 400 & 800 & 900 & 1000 \\
		 $P ( Y = y_i )$ & 0.1 & 0.2 & 0.5 & 0.18 & 0.01 & 0.01 \\
		 \hline
	\end{tabu}
\end{table}

For this asset, the function $\phi_{\alpha} ( \mathbf{x}, c) = c + \frac{1}{1 - \alpha} \E \left[ ( X( \mathbf{x}, \mathbf{r} ) - c)^+ \right]$ will be drawn against $c$ to find $\text{CVaR}_{\alpha} (\mathbf{x}) = \min \limits_{c \in \mathbb{R}} \phi_{\alpha} ( \mathbf{x}, c)$ graphically. The graph of $\phi_{\alpha} ( \mathbf{x}, c)$ for $\alpha = 0.95$ is shown in \autoref{fig:phic_vs_c}.

\begin{figure}[H]
	\centering
	\includegraphics[width = 0.9 \textwidth]{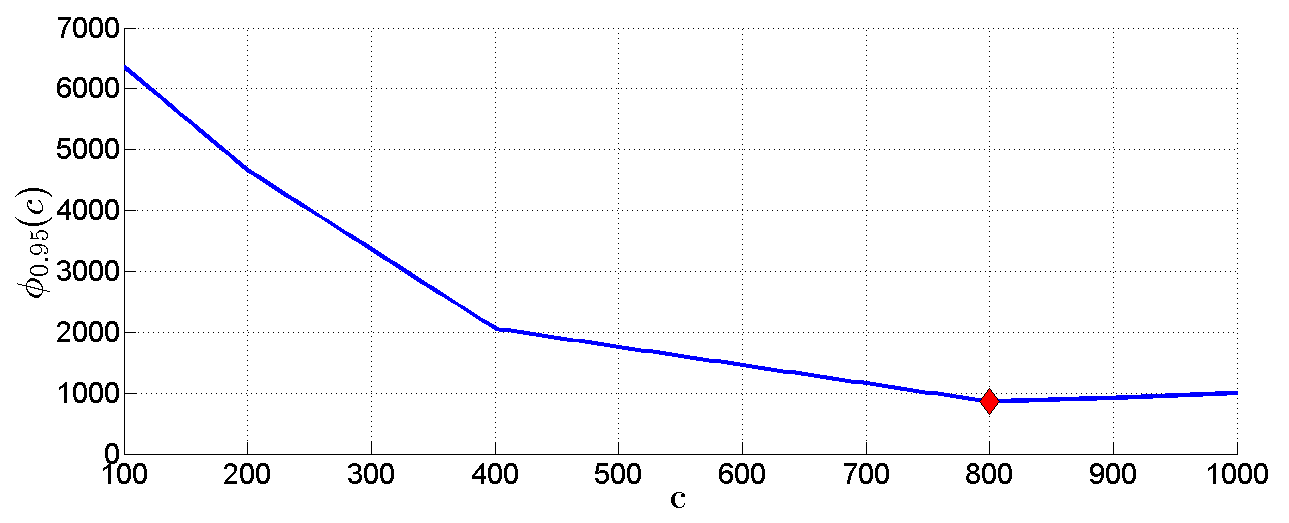}
	\caption{Function value $\phi_{0.95} (c)$ of $Y$ for different values of $c$.}
	\label{fig:phic_vs_c} 
\end{figure}

The graph shows that the minimum of $\phi_{\alpha} ( \mathbf{x}, c)$ occurs at $c^* = 800$. Thus, $ \min \limits_{c \in \mathbb{R}} \phi_{\alpha} ( \mathbf{x}, c) = \phi_{\alpha} ( \mathbf{x}, 800) = 860$. Hence, by \autoref{theorem:CVaR_Optimization}, it follows that $\text{VaR}_{0.95} = 800$ and $\text{CVaR}_{0.95} = 860$, which agrees with the results of the Convex Combination Formula in \autoref{sec:CVaR_RM-Closer_Analysis_of_CVaR} as expected. Another characteristic to point out is that $\phi_{\alpha} ( \mathbf{x}, c)$ has ``kinks'' at points $y_i \,, i = 1, \dots, 6$ \cite[week 8, p. 22]{Richtarik2015_OMF_Lecture}.

\autoref{eqn:CVaR_Definition_Tractable} is still difficult to evaluate if the loss distribution $X$ is continuous. One remedy is to use Monte Carlo Sampling to draw $K$ i.i.d. samples of the loss vector $\mathbf{r}$ ($\mathbf{r}_k \,, k \in \{1, 2, \dots, K\}$) from the distribution of $\mathbf{r}$, so that  \autoref{eqn:CVaR_Definition_Tractable} can be written in a tractable LP form \cite[week 8, p. 29]{Richtarik2015_OMF_Lecture}. Adding constraint \ref{eqn:minimum_expected_return} to ensure a minimum expected return for the investor, the tractable LP form of the optimization problem is given as
\begin{Problem}[problem:CVaR_Definition_in_LP_including_min_return]
\left.
\begin{array}{rll}
	\min \limits_{c, \mathbf{z}} & c + \frac{1}{K (1 - \alpha)} \sum \limits_{k = 1}^K z_k \\
	\text{s.t.} & z_k \geq \mathbf{x}^T \mathbf{r}_k - c & \text{for~} k \in \{1, \dots , K\} \\
	& z_k \geq 0 & \text{for~} k \in \{1, \dots , K\} \\
	& \mathbf{x}^T \widehat{\mathbf{r}} \leq - R\\
	& \mathbf{x} \in S
\end{array}
\right\} .
\end{Problem}

Another interesting link between mean variance and CVaR optimization was established in \cite{Rockafellar2000_Optimization_of_CVaR} as well. Rockafellar and Uryasev proposed that under certain conditions, \autoref{problem:MVO_optimization} and \autoref{eqn:CVaR_Definition_Tractable} give the same optimal portfolio.
\begin{Proposition}[{\cite[p. 29]{Rockafellar2000_Optimization_of_CVaR}}] \label{prop:MVO_CVaR_common_optimal_portfolio}
	Suppose that the loss associated  with each $\mathbf{x}$ is normally distributed as holds when $\mathbf{r}$ is normally distributed. If $\alpha \geq 0.5$ and the constraint \ref{eqn:minimum_expected_return} is active at solutions to \autoref{problem:MVO_optimization} and \autoref{eqn:CVaR_Optimization_1}, then the solutions to those problems are the same; a common portfolio $\mathbf{x}^*$ is optimal by both criteria.
\end{Proposition}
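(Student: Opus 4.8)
The plan is to exploit the fact that, when $\mathbf{r}$ is multivariate normal, the portfolio loss $X(\mathbf{x},\mathbf{r}) = \mathbf{x}^T \mathbf{r}$ is a \emph{univariate} normal random variable, for which $\text{CVaR}_{\alpha}$ admits a closed form that is an increasing affine function of the portfolio standard deviation. First I would record that if $\mathbf{r} \sim N(\widehat{\mathbf{r}}, \mathbf{\Sigma})$, then $X(\mathbf{x},\mathbf{r})$ is normal with mean $\mu(\mathbf{x}) = \mathbf{x}^T \widehat{\mathbf{r}}$ and variance $\sigma^2(\mathbf{x}) = \mathbf{x}^T \mathbf{\Sigma} \mathbf{x}$. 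Writing $Z \sim N(0,1)$ with density $f_Z$ and cumulative distribution function $F_Z$, the quantile is $\text{VaR}_{\beta} (\mathbf{x}) = \mu(\mathbf{x}) + \sigma(\mathbf{x})\, F_Z^{-1}(\beta)$ by \autoref{eqn:VaR_in_terms_of_Psi}.

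The crux is then to compute $\text{CVaR}_{\alpha}(\mathbf{x})$ in closed form. I would apply \autoref{prop:Acerbis_integral_formula} (Acerbi's Integral Formula) to obtain
\begin{align*}
\text{CVaR}_{\alpha}(\mathbf{x}) = \frac{1}{1-\alpha} \int_{\alpha}^{1} \text{VaR}_{\beta}(\mathbf{x}) \, d\beta = \mu(\mathbf{x}) + \frac{\sigma(\mathbf{x})}{1-\alpha} \int_{\alpha}^{1} F_Z^{-1}(\beta) \, d\beta .
\end{align*}
The substitution $\beta = F_Z(z)$, $d\beta = f_Z(z)\,dz$, together with the identity $f_Z'(z) = -z\, f_Z(z)$, converts the remaining integral into $\int_{F_Z^{-1}(\alpha)}^{\infty} z\, f_Z(z)\,dz = f_Z\!\left(F_Z^{-1}(\alpha)\right)$. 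Hence $\text{CVaR}_{\alpha}(\mathbf{x}) = \mu(\mathbf{x}) + k_{\alpha}\, \sigma(\mathbf{x})$, where $k_{\alpha} := f_Z\!\left(F_Z^{-1}(\alpha)\right)/(1-\alpha) > 0$ is a strictly positive constant depending only on $\alpha$; the hypothesis $\alpha \geq 0.5$ guarantees $F_Z^{-1}(\alpha) \geq 0$, i.e.\ that $\text{CVaR}_{\alpha}$ genuinely sits in the upper tail.

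Finally I would invoke the active-constraint hypothesis. Let $S_R := \{\mathbf{x} \in S : \mathbf{x}^T \widehat{\mathbf{r}} = -R\}$, a convex set. Both the Markowitz optimizer $\mathbf{x}^*_{\mathrm{MV}}$ of \autoref{problem:MVO_optimization} and the CVaR optimizer $\mathbf{x}^*_{\mathrm{C}}$ of \autoref{eqn:CVaR_Optimization_1} lie in $S_R$ by assumption. On $S_R$ the mean term is frozen at $\mu(\mathbf{x}) = -R$, so $\text{CVaR}_{\alpha}(\mathbf{x}) = -R + k_{\alpha}\sqrt{\mathbf{x}^T \mathbf{\Sigma}\mathbf{x}}$ is a \emph{strictly increasing} function of $\mathbf{x}^T \mathbf{\Sigma}\mathbf{x}$. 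Since $\mathbf{x}^*_{\mathrm{C}}$ minimizes $\text{CVaR}_{\alpha}$ over the whole feasible region and in particular over $S_R$, strict monotonicity of $t \mapsto -R + k_{\alpha}\sqrt{t}$ forces $\mathbf{x}^*_{\mathrm{C}}$ to minimize $\mathbf{x}^T \mathbf{\Sigma}\mathbf{x}$ over $S_R$; the same holds trivially for $\mathbf{x}^*_{\mathrm{MV}}$. As $\mathbf{\Sigma}$ is positive definite, the quadratic $\mathbf{x}^T \mathbf{\Sigma}\mathbf{x}$ is strictly convex on the convex set $S_R$, so its minimizer is unique, whence $\mathbf{x}^*_{\mathrm{MV}} = \mathbf{x}^*_{\mathrm{C}}$.

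I expect the main obstacle to be twofold: the closed-form evaluation of the CVaR integral for the normal law (the substitution above is the key computational step), and, more delicately, the bookkeeping that transfers optimality from the full inequality-constrained region to the hyperplane section $S_R$. The latter relies essentially on the assumption that each optimizer already lies in $S_R$, which legitimizes comparing the two objectives on $S_R$ alone, and on the strict monotonicity in $\sigma$ that lets CVaR-optimality and variance-optimality coincide without loss.
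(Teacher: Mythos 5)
Your proposal is correct and takes essentially the same route as the proof behind this result: the thesis itself states the proposition without proof (deferring to Rockafellar and Uryasev), and their argument is precisely your closed form $\text{CVaR}_{\alpha}(\mathbf{x}) = \mu(\mathbf{x}) + k_{\alpha}\,\sigma(\mathbf{x})$ for normal losses, followed by the observation that the active return constraint freezes $\mu(\mathbf{x}) = -R$, so that CVaR minimization on $S_R$ reduces to the strictly convex variance minimization, whose unique minimizer is shared by both problems. One small remark: your argument never actually uses $\alpha \geq 0.5$, since $k_{\alpha} = f_Z\!\left(F_Z^{-1}(\alpha)\right)/(1-\alpha) > 0$ for every $\alpha \in (0,1)$; that hypothesis is inherited from the original statement, where it is needed for the companion claim about VaR (as $\text{VaR}_{\alpha}(\mathbf{x}) = \mu(\mathbf{x}) + \sigma(\mathbf{x})\,F_Z^{-1}(\alpha)$ is increasing in $\sigma$ only when $F_Z^{-1}(\alpha) \geq 0$), not for CVaR, so your proof in fact establishes the slightly stronger statement for all $\alpha \in (0,1)$.
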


This means that under the conditions stated in the proposition, it is possible to find the minimum variance portfolio by finding the minimum CVaR portfolio. \autoref{prop:MVO_CVaR_common_optimal_portfolio} will be explored in the first scenario of \autoref{sec:CVaR_PO_Examples}.

%%%%%%%%%%%%%%%%%%%
%  Section .3: Numerical Examples
%%%%%%%%%%%%%%%%%%%
\section{Numerical Examples}  \label{sec:CVaR_PO_Examples}

This section gives numerical examples for finding minimum CVaR portfolios. More precisely, the CVaR criterion will be compared to the minimum variance criterion (as formulated by Markowitz in \cite{Markowitz1952_Portfolio_Selection}, see \autoref{def:minimum_variance_portfolio}) and two scenarios will be given to show the effect of the criterion on the portfolio composition. The first scenario is adapted from \cite{Rockafellar2000_Optimization_of_CVaR} and concerns normally distributed losses. The second scenario is a theoretical construct with a positively skewed loss distribution. \\

\subsection*{First Scenario: Normally Distributed Losses}

This scenario serves to display the proposition by Rockafellar and Uryasev that for certain conditions the minimum variance optimization and CVaR optimization give the same optimal portfolio $\mathbf{x}^*$:

In the example from \cite[p. 29 ff.]{Rockafellar2000_Optimization_of_CVaR}, three assets ($N = 3$) are available: The S\&P 500 index ($x_1$), long-term US government bonds ($x_2$), and a portfolio of small cap stocks ($x_3$). The expected return of each asset and their covariance matrix is given in \autoref{table:CVaR_Risk_measure-mean_asset_returns} and \autoref{table:CVaR_Risk_measure-portfolio_covariance}, respectively.

\begin{table}[H]
	\centering
	\begin{tabular}{| >{$}c<{$} l | r |}
	\hline
	& \textbf{Asset} & \textbf{Mean Loss} \\
	\hline
	x_1 & S\&P 500 & - 0.0101110 \\
	x_2 & Gov. bond & - 0.0043532 \\
	x_3 & Small Cap & - 0.0137058 \\
	\hline
	\end {tabular}
	\caption{Mean Asset Losses of S\&P, Government Bonds, and Small Cap.}
	\label{table:CVaR_Risk_measure-mean_asset_returns}
\end{table}

\begin{table}[H]
	\centering
	\begin{tabular}{| >{$}c<{$} l | c c c |}
	\hline
	\multicolumn{2}{|c |}{\textbf{Covariance}} & $x_1$ & $x_2$ & $x_3$ \\
	\multicolumn{2}{|c |}{\textbf{Matrix}} & S\&P 500 & Gov. bond & Small Cap \\
	\hline
	x_1 & S\&P 500 & 0.00324625 & 0.00022983 & 0.00420395 \\
	x_2 & Gov. bond & 0.00022983 & 0.00049937 & 0.00019247 \\
	x_3 & Small Cap & 0.00420395 & 0.00019247 & 0.00764097 \\
	\hline
	\end {tabular}
	\caption{Covariance Matrix of S\&P, Government Bonds, and Small Cap.}
	\label{table:CVaR_Risk_measure-portfolio_covariance}
\end{table}

Using the CVX package in MATLAB, the minimum variance portfolios (MV opt) and minimum CVaR portfolios (CVaR opt) are calculated for expected minimum returns of 0.6\%, 0.9\%, and 1.1\%. To calculate the minimum CVaR portfolio for $\alpha = 0.95$, 100,000 Monte Carlo simulations were run to estimate the loss distribution. The results are given in \autoref{table:MVO_CVaR_Optimal_Portfolios_different_R}.
\begin{table}[H]
	\centering
	\begin{tabular}{| c *{3}{| r r} |}
	\hline
	\parbox{2.5cm}{\vspace{3pt} \centering \textbf{Required} \\ \textbf{return} \vspace{3pt}}&  \multicolumn{2}{|c }{\textbf{0.6 \%}} &  \multicolumn{2}{|c }{\textbf{0.9 \%}} &  \multicolumn{2}{|c |}{\textbf{1.1 \%}}\\
	\hline
	\hline
Portfolio: & MV opt & $\text{CVaR}_{0.95}$ opt & MV opt & $\text{CVaR}_{0.95}$ opt & MV opt & $\text{CVaR}_{0.95}$ opt \\
	\hline
S \& P	&	17.54	\%	&	17.28	\%	&	34.19	\%	&	34.82	\%	&	45.15	\%	&	46.20	\%	\\
Gov. Bonds	&	75.65	\%	&	75.75	\%	&	37.18	\%	&	36.93	\%	&	11.58	\%	&	11.52	\%	\\
Small Cap	&	6.81	\%	&	6.97	\%	&	28.64	\%	&	28.25	\%	&	43.27	\%	&	43.18	\%	\\
	\hline
	\end {tabular}
	\caption{Minimum Variance and Minimum CVaR portfolios for different required returns.}
	\label{table:MVO_CVaR_Optimal_Portfolios_different_R}
\end{table}

Comparing the two portfolios for different levels of required return, one can see that their compositions only vary slightly (although they should be identical). The reason they are not completely identical is because the minimum variance portfolio was computed analytically, while Monte Carlo simulations were used to calculate the CVaR optimal portfolio. Otherwise, they can be considered identical, as was stated in \autoref{prop:MVO_CVaR_common_optimal_portfolio}.

\subsection*{Second Scenario: Positively Skewed Loss Distribution}

In this subsection, the effect of the portfolio selection criterion is analysed when the loss distributions are not normal. Therefore, two further characteristics are needed to describe their distribution They are named \emph{skewness} and \emph{kurtosis}, respectively:
\begin{Definition}[{\cite[p. 22]{2012Kaltenbach_Statistics} Skewness}] \label{def:skewness}
	The \emph{skewness} of a random variable $X$ is defined as
	\begin{align}
		\text{\emph{skew}~}(X) & \defeq \E \left[ \left( \frac{X - \mu}{\sigma} \right)^3 \right] \,. \label{eqn:skewness}
	\end{align}
\end{Definition}
\begin{Definition}[{\cite[p. 22]{2012Kaltenbach_Statistics} Kurtosis}] \label{def:kurtosis}
	The \emph{kurtosis}\footnote{Some texts subtract 3 from the fourth central (normalized) moment when they define the kurtosis - so that the normal distribution has a kurtosis of 0. This convention is \emph{not} followed in this dissertation.} of a random variable $X$ is defined as
	\begin{align}
		\text{\emph{kurt}~}(X) & \defeq \E \left[ \left( \frac{X - \mu}{\sigma} \right)^4 \right] \,. \label{eqn:kurtosis}
	\end{align}
\end{Definition}

A skewness of 0 means that the distribution of $X$ is symmetrical about its mean $\mu$, while a negative skewness indicates that values of X below $\mu$ are more likely and a positive skewness means that values of X greater than $\mu$ are more probable. Kurtosis measure how the variance is affected by extreme deviations from the mean. A high kurtosis shows that a high variance is caused by few extreme deviations from the mean $\mu$ \cite[p. 22 f.]{2012Kaltenbach_Statistics}.\\

In this scenario, four assets will be considered (called Index, Bonds, Mid Cap, Emerging Markets Stocks) and the following assumptions will be made:
\begin{itemize}
	\item The loss distributions of the four assets are independent of each other, i.e. their correlations are 0.
	\item The loss distributions of the first three assets have the same mean and variance as in the previous scenario. The fourth assets has higher mean and variance than the previous three.
	\item The minimum variance and minimum CVaR portfolios are formed the same way as in the previous scenario.
	\item Two cases will be considered: In the first case, all single loss distributions are normal, i.e. they have skewness 0. In the second case, all loss distributions are positively skewed, i.e. high losses are more likely than high profits.
\end{itemize}

The first assumption is highly theoretical, as in any real world setting there exists at least some correlation. However, uncorrelated assets are very favourable in portfolio diversification as this reduces the combined variance significantly. The second and third assumption create a link between this scenario and the previous one. Hence, the effects can be better compared. Finally, the fourth assumption should show the dangers of using minimum variance optimization in the cases where losses are not normally distributed. The first case (in which losses are normally distributed) serves as a benchmark portfolio for the second case with skewed loss distributions.

The loss distributions will be characterized by their mean, variance, skewness, and kurtosis (see \autoref{table:Loss_Distributions_Scenario2}). The implementation of these random losses in MATLAB will be done with the function \href{http://uk.mathworks.com/help/stats/pearsrnd.html}{\emph{pearsrnd}} and the loss distributions for the single assets in both cases are shown in \autoref{app_diagrams:Loss_distributions_single_assets_scenario2}. 

\begin{table}[H]
	\centering
	\begin{tabular}{| >{$}c<{$} l | c c c c c |}
	\hline
	\multicolumn{2}{|c |}{\textbf{Distribution}} & \multicolumn{2}{|c }{~} & \multicolumn{2}{c }{skewness} & \\
	\multicolumn{2}{|c |}{\textbf{Parameters}} & $\mu$ & $\sigma^2$ & case 1 & case 2 & kurtosis \\
	\hline
	x_1 & Index & - 0.0101110 & 0.00324625 & 0 & 0.7 & 3 \\
	x_2 & Bonds &  - 0.0043532 & 0.00049937 & 0 & 0.7 & 3\\
	x_3 & Mid Cap & - 0.0137058 & 0.00764097 & 0 & 0.7 & 3\\
	x_4 & EMS & -0.018 & 0.01 & 0 & 0.7 & 3 \\
	\hline
	\end {tabular}
	\caption{Characterization of loss distributions used in second scenario.}
	\label{table:Loss_Distributions_Scenario2}
\end{table}

For all simulations and both cases, a minimum return of $- 0.006$ was required. For both cases (no skewness and skewness = 0.7), the minimum variance optimal portfolio is the same, while the minimum CVaR portfolio differs: In both cases, even with normally distributed losses, it is different from the minimum variance portfolio. In the first case the portfolio is different because the minimum return constraint is not active. It differs more strongly in the case of skewed distributions, as the CVaR optimization programme (\autoref{problem:CVaR_Definition_in_LP_including_min_return}) takes the skewness of the losses into account when forming the optimal portfolio, while the minimum variance programme (\autoref{problem:MVO_optimization}) does not. The respective optimal portfolios are shown in \autoref{table:MVO_CVaR_Optimal_Portfolios_scenario2} below.

\begin{table}[H]
	\centering
	\begin{tabular}{| c *{2}{| r r} |}
	\hline
	\textbf{Case} &  \multicolumn{2}{|c }{\textbf{1, skewness = 0}} &  \multicolumn{2}{|c |}{\textbf{2, skewness = 0.7}} \\
	\hline
	\hline
	Portfolio: & MV opt & $\text{CVaR}_{0.95}$ opt & MV opt & $\text{CVaR}_{0.95}$ opt  \\
	\hline
	Index	& 12.12 \% & 13.34 \% & 12.12 \% & 14.36	\% \\
	Bonds	& 78.80 \% & 75.36 \% & 78.80 \% & 72.87	\% \\
	Mid Cap & 5.15 \%	& 6.15 \% & 5.15 \% & 6.95	\% \\
	EMS & 3.93 \% & 5.15 \% & 3.93 \% & 5.82 \% \\
	\hline
	\end {tabular}
	\caption{Minimum Variance and Minimum CVaR portfolios for scenario 2.}
	\label{table:MVO_CVaR_Optimal_Portfolios_scenario2}
\end{table}

Although the loss distributions for both optimal portfolios are very similar in both cases (see \autoref{app_diagrams:Loss_distributions_portfolios_scenario2}), the CVaR optimal portfolio shows a better performance for the 100,000 simulations. Among other performance and risk measures, \emph{Expected Loss (EL)} will also be considered. The definition of EL is given below.
\begin{Definition}[{\cite[p. 23]{Fragniere2015_FRM_Lecture} Expected Loss (EL)}] \label{def:expected_loss}
	Let $X$ be a random variable representing loss. The expected loss of $X$ is defined as
	\begin{equation} \label{eqn:expected_loss}
		\text{\emph{EL}} (X) = \E [ X \mid X \geq 0 ] .
	\end{equation}
\end{Definition}
Hence, the expected loss is the average loss, given that there is a loss. In this sense EL is similar to CVaR but with the difference that the condition for the expectation is different. A summary of several performance and risk indicators for both optimal portfolios is given in \autoref{table:Performance_of_Optimal_Portfolios_scenario2}.

\begin{table}[H]
	\centering
	\begin{tabular}{| c *{2}{| r r} |}
	\hline
	\textbf{Case} &  \multicolumn{2}{|c }{\textbf{1, skewness = 0}} &  \multicolumn{2}{|c |}{\textbf{2, skewness = 0.7}} \\
	\hline
	\hline
	Portfolio: & MV opt & $\text{CVaR}_{0.95}$ opt & MV opt & $\text{CVaR}_{0.95}$ opt  \\
	\hline
	Expected Return $\mu$ & -0.0061 & -0.0064 &-0.0061 & -0.0064 \\
	Standard Deviation $\sigma$ & 0.0198 & 0.0199 & 0.0198 & 0.0199 \\
	Expected Loss & 0.0137 & 0.0137 & 0.0148 & 0.0147 \\
	$\text{VaR}_{0.95}$ & 0.0265 & 0.0263 & 0.0265 & 0.0263 \\
	$\text{CVaR}_{0.95}$ & 0.0347 & 0.0345 & 0.0398 & 0.0393 \\
	\hline
	\end {tabular}
	\caption{Performance and risk indicators of optimal portfolios for scenario 2.}
	\label{table:Performance_of_Optimal_Portfolios_scenario2}
\end{table}

\autoref{table:Performance_of_Optimal_Portfolios_scenario2} shows that the performance and risk measures for each optimal portfolio and each different case. In both cases, the investor can expect a higher profit when using a CVaR optimal portfolio. The standard deviation of returns is slightly higher for the CVaR optimal portfolio than for the minimum variance portfolio (0.0199 vs. 0.0198). However, for all other risk measures that were considered, the CVaR optimal portfolio has lower or equal risk than the minimum variance portfolio (to 4 decimal places). Hence, in this setting it would be favourable for the investor to use the CVaR optimal portfolio, as he can achieve a higher return with the same or less risk if he uses either of EL, VaR, or CVaR as the risk measure.

\clearpage

%%%%%%%%%%%%%%%%%%%
%
%  Chapter 4: Portfolio Hedging using CVaR
%
%%%%%%%%%%%%%%%%%%%

\chapter{Portfolio Hedging using CVaR} \label{chapter:Portfolio_Hedging}

\autoref{chapter:CVaR_as_risk_measure} stated the definition of CVaR, explained its properties and \autoref{sec:CVaR_PO_Min_CVaR} gave a computationally tractable optimization programme to calculate CVaR optimal investment portfolios, for which corresponding examples were given in \autoref{sec:CVaR_PO_Examples}. In \cite[p. 32 ff.]{Rockafellar2000_Optimization_of_CVaR}, Rockafellar and Uryasev (later followed by other authors, e.g. \cite{Albrecht2015_Tail_Risk_Hedging}, \cite{Bardou2013_CVaR_Hedging_Using_Quatization_Algorithm}, \cite{Topaloglou2002_CVaR_Models_with_selective_hedging}, and \cite{Xue2015_Optimal_Inventory_and_Hedging_Decisions}) expanded the use of CVaR to hedge against potential losses that arise from a previous investment decision. A possible scenario for this application is when a trader entered a position only looking at potential gains but disregarding possible losses. The risk manager might then intervene to hedge against the potential losses, i.e. minimizing the trader's risk while still maintaining acceptable potential gains.

This chapter will start by introducing the basic notions of options and financial risk management methods in \autoref{sec:Portfolio_Hedging-Background_on_Options} and \autoref{sec:Portfolio_Hedging-Background_on_Risk_Management}, followed by applying the hedging procedure that Rockafellar and Uryasev used\footnote{The example used was taken from \cite[p. 172 ff.]{Mauser1999_Beyond_VaR}.} to call and put options on Google and Yahoo traded on 21 July 2015.\footnote{The ticker symbols for the underlying equity are NASDAQ:GOOGL and NASDAQ:YHOO.} Based on the available data as of 21 July 2015, two \emph{strangles} are formed and described in \autoref{sec:Portfolio_Hedging-Forming_Strangle}, while the subsequent hedging procedure is described and applied in \autoref{sec:Portfolio_Hedging-Hedging_Strangle}.

%%%%%%%%%%%%%%%%%%%
%  Section .1: Background on Call and Put Options
%%%%%%%%%%%%%%%%%%%
\section{Background on Options} \label{sec:Portfolio_Hedging-Background_on_Options}

In \autoref{chapter:CVaR_Portfolio_Optimization}, investments in an index fund, bonds and equity were considered when forming the portfolio. These securities are basic investment possibilities, which are easy to understand as their payoff is directly linked to their market value. This means that if the price of a common share of Google rises (or falls) by 1 \%, an investor who invested all his funds into Google shares makes a profit (or loss) of 1 \% as well. 

Derivatives, such as call and put options,\footnote{Other derivative securities are for example futures or swaps. For more information on those and other derivatives please refer to \cite{2012Hull_Options_Futures}.} are ``securities whose prices are determined by, or 'derive [sic] from,' the prices of other securities'' \cite[p. 678]{2010Bodie_Investments}. Since these prices do not need to depend linearly on the price of the underlying, their payoff profile can be more complicated than the payoff of bonds or equity.

\begin{Definition}[{\cite[p. 679]{2010Bodie_Investments} Call Option}] \label{def:call_option}
	A \emph{call option} gives its holder the right to \emph{purchase} an asset for a specified price, called \emph{strike price}, on the specified expiration date.\footnote{This is known as a \emph{European} option. American options can be exercised at any time before the expiration date.}
\end{Definition}
\begin{Definition}[{\cite[p. 690]{2010Bodie_Investments} Put Option}] \label{def:put_option}
	A \emph{put option} gives its holder the right to \emph{sell} an asset for a specified price, called \emph{strike price}, on the specified expiration date.
\end{Definition}

For stock options, one option contact gives the holder to the right to buy (call option) or sell (put option) 100 shares at the specified priced \cite[p. 199]{2012Hull_Options_Futures}.\footnote{In the following example, only stock options will be considered} For any type of option, four basic positions can be taken (these positions can be combined to give more complex option strategies, e.g. a spread or a strangle) \cite[p. 197]{2012Hull_Options_Futures}:
\begin{enumerate}
	\item A long position in a call option (i.e. buying a call option)
	\item A short position in a call option (i.e. selling a call option)
	\item A long position in a put option (i.e. buying a put option)
	\item A short position in a put option (i.e. selling a put option)
\end{enumerate} ~

The payoff and profit profiles for each of the four basic option positions are given in \autoref{fig:payoff_profile_call} and \autoref{fig:payoff_profile_put} below.

\begin{figure}[H]
	\centering
	\includegraphics[width = \textwidth]{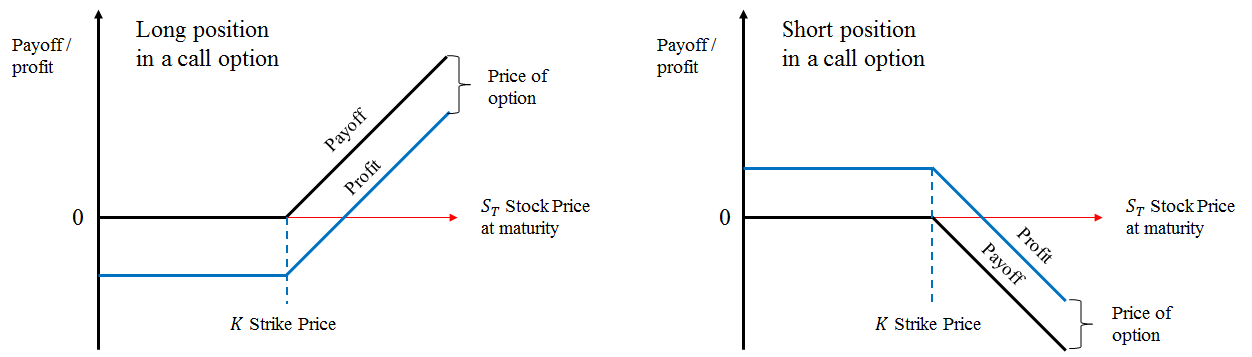}
	\caption{Reproduced from \cite[p. 198]{2012Hull_Options_Futures}, payoff and profit profile for a call option.}
	\label{fig:payoff_profile_call}
\end{figure}

Denoting $K$ the strike price, $S_T$ the price of the underlying stock at maturity, and $p_C$ the price of the call, the payoff and profit of a long position in a call option can be expressed as \cite[p. 198]{2012Hull_Options_Futures}
\begin{align}
	\text{Payoff}_{\text{Long Call}} &= \max \{ S_T - K, 0\} \label{eqn:payoff_long_call} \\
	\text{Profit}_{\text{Long Call}} &= \max \{ S_T - K, 0\} - p_C \label{eqn:profit_long_call} 	
\end{align}
The payoff and profit for a short position are the negatives of \autoref{eqn:payoff_long_call} and \autoref{eqn:profit_long_call} and can be expressed as \cite[p. 198]{2012Hull_Options_Futures}
\begin{align}
	\text{Payoff}_{\text{Short Call}} &= \min \{K - S_T, 0\} \label{eqn:payoff_short_call} \\
	\text{Profit}_{\text{Short Call}} &= \min \{K - S_T, 0\} + p_C \label{eqn:profit_short_call} 	
\end{align}

\begin{figure}[H]
	\centering
	\includegraphics[width = \textwidth]{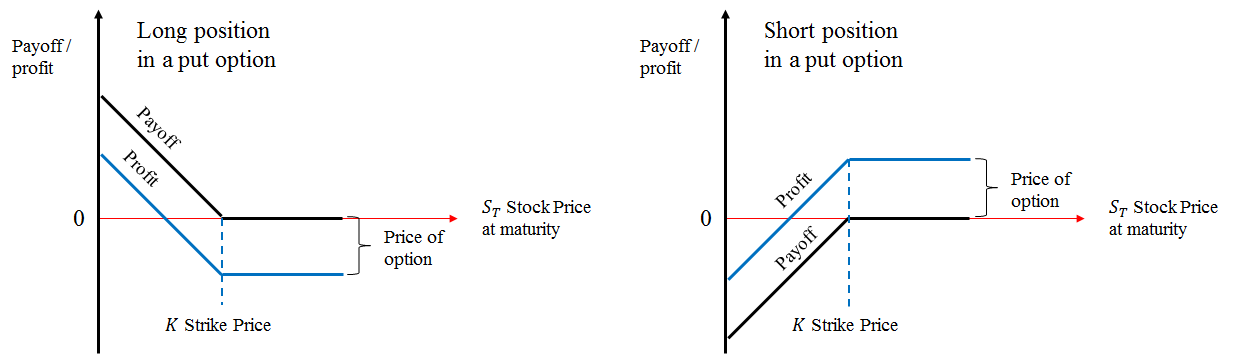}
	\caption{Reproduced from \cite[p. 198]{2012Hull_Options_Futures}, payoff and profit profile for a put option.}
	\label{fig:payoff_profile_put}
\end{figure}

Using the same expressions as before and denoting the price of the put as $p_P$, the payoff and profit for a long put position can be expressed as \cite[p. 198]{2012Hull_Options_Futures}
\begin{align}
	\text{Payoff}_{\text{Long Put}} &= \max \{ K - S_T, 0\} \label{eqn:payoff_long_put} \\
	\text{Profit}_{\text{Long Put}} &= \max \{ K - S_T, 0\} - p_P \label{eqn:profit_long_put} 	
\end{align}
while the payoff and profit for a short put are \cite[p. 198]{2012Hull_Options_Futures}
\begin{align}
	\text{Payoff}_{\text{Short Put}} &= \min \{S_T - K, 0\} \label{eqn:payoff_short_put} \\
	\text{Profit}_{\text{Short Put}} &= \min \{S_T - K, 0\}+ p_P \label{eqn:profit_short_put}
\end{align}

Hence, the bounds for profits and losses are quite different between call and put options. While a trader has no upper bound on possible profits from a long call, the losses for a short call are unbounded as well. On the hand, profits and losses are bounded for both positions, long and short, in put options. \\

As mentioned previously, the four basic positions can be combined in a variety of ways to create many different payoff profiles.\footnote{For a more detailed description of option trading strategy, please refer to \cite[p. 234 ff.]{2012Hull_Options_Futures}.} In this dissertation, only a \emph{strangle} will be considered.
\begin{Definition}[{\cite[p. 248]{2012Hull_Options_Futures} Sale of a Strangle}] \label{def:strangle}
	In the sale of a \emph{strangle}, sometimes called a \emph{top vertical combination}, the investors sells a European put and a European call option with the same expiration date, but different strike prices ($K_{\text{Put}} < K_{\text{Call}}$).
\end{Definition}
The payoff and profit profile from the sale of a strangle is shown in \autoref{fig:payoff_profile_strangle}. It is an easy to construct strategy and suitable for investors who feel that large stock price movements are unlikely. The profit from the sale of strangle is constant if the stock price at maturity is between the two strike prices, i.e. $K_{\text{Put}} \leq S_T \leq K_{\text{Call}}$. However potential losses are unlimited if the stock price rises above $K_{\text{Call}}$ because of the short call position \cite[p. 248]{2012Hull_Options_Futures}.

\begin{figure}[H]
	\centering
	\includegraphics[width = 0.7 \textwidth]{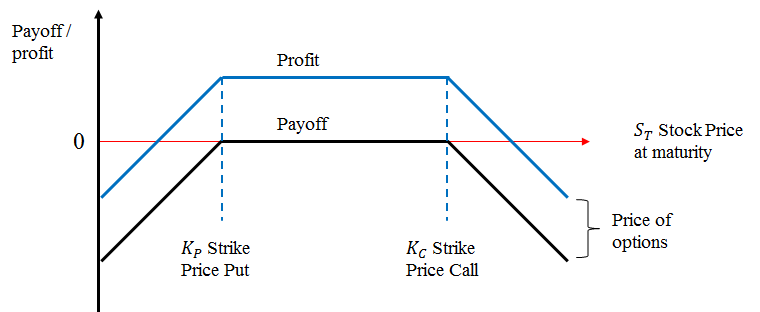}
	\caption{Reproduced from \cite[p. 249]{2012Hull_Options_Futures}, payoff and profit profile for the sale of a strangle.}
	\label{fig:payoff_profile_strangle}
\end{figure}

%%%%%%%%%%%%%%%%%%%
%  Section .2: Background Financial Risk Management
%%%%%%%%%%%%%%%%%%%
\section{Background on Financial Risk Management} \label{sec:Portfolio_Hedging-Background_on_Risk_Management}

When managing the risk of an option trader's portfolio, it is crucial to have the most up to date estimates for the variance (or standard deviation / volatility\footnote{Volatility is just another term for standard deviation that is commonly used in finance.}) and covariance of the underlying stock's price movements. Just prices constantly change, so does the volatility of the price changes. In periods of economic stability, huge price fluctuations are unlikely so the volatility is low - while in times of uncertainty price fluctuations are more common.

Hence, it might be unsuitable to estimate the variance and covariance using \autoref{def:variance} and \autoref{def:covariance} with the entire historic data. To estimate the market risk\footnote{Market risk is the risk that is caused by the uncertainty of price changes.}, practitioners tend to use running averages or exponentially weighted moving averages to estimate the current volatility of an asset because this places more importance on recent observations of price fluctuations \cite[p. 16]{Wolf2015_FRM_Lecture}.

This section describes how to calculated the daily EWMA estimates for the variance and covariance and how to scale the variance if the holding period of a portfolio is longer than one day. The following variables will be used in the definitions:
\begin{table}[H]
	\centering
	\begin{tabular}{r l}
		$t$: & the day of the estimation \\
		$r_{x,t}$: & the natural log of the daily return of an asset $x$ \\
		& from $t-1$ to $t$, i.e. $\ln \left( \frac{\text{Price}_{x,t} - \text{Price}_{x,t-1}}{\text{Price}_{x,t-1}}\right)$
	\end{tabular}	
\end {table}
The natural log of returns is used instead of the regular returns, because the distribution of log returns is better fitted by the normal distribution than the regular return. And at the same time, log returns usually have a correlation with regular returns of close to 1 \cite[p. 12]{Wolf2015_FRM_Lecture}.

\begin{Definition}[{\cite[p. 16]{Wolf2015_FRM_Lecture} EWMA of Variance}] \label{def:EWMA_variance}
	The daily variance of the returns of an asset $x$ using an exponentially weighted moving average with parameter $\lambda$ is estimated by the formula
	\begin{equation}
		\text{\emph{Var}}_{t} (x) \defeq \lambda \text{\emph{Var}}_{t-1} (x) + (1 - \lambda) r_{x,t-1}^2 . \label{eqn:EWMA_variance}
	\end{equation}
\end{Definition}
Hence, the variance of any given day is estimated by using the variance estimate of the previous day and the natural log of observed returns of the previous day. To apply \autoref{eqn:EWMA_variance}, two parameters must be set: the variance estimate of day 0 and $\lambda$. If the estimates have been calculated for a long enough horizon, $\text{Var}_{0} (x)$ is of little importance so it can be set equal to 0. In practice, risk managers usually set $\lambda = 0.94$, as this provides a good balance between the volatility estimates of recent and historic data \cite[p. 16 ff.]{Wolf2015_FRM_Lecture}.

\begin{Definition}[{\cite[p. 25]{Wolf2015_FRM_Lecture} EWMA of Covariance}] \label{def:EWMA_covariance}
	The daily covariance between the returns of an asset $x$ and an asset $y$ using an exponentially weighted moving average with parameter $\lambda$ is estimated by the formula
	\begin{equation}
		\text{\emph{Cov}}_{t} (x,y) \defeq \lambda \text{\emph{Cov}}_{t-1} (x,y) + (1 - \lambda) r_{x,t-1} r_{y,t-1}. \label{eqn:EWMA_covariance}
	\end{equation}
\end{Definition}
Again, two parameters must be set to apply \autoref{eqn:EWMA_covariance}: $\text{Cov}_{0} (x,y)$ and $\lambda$. Using the same arguments as before, they should be set to $\text{Cov}_{0} (x,y) = 0$ and $\lambda = 0.94$ \cite[p. 25]{Wolf2015_FRM_Lecture}. \\

If the portfolio is held for longer than one day, the variance and covariance estimates need to be scaled to estimate the risk over the entire holding period. Assuming that returns follow a random walk, the variance and covariance over a $n$ day holding period (denoted $\text{Var}_{t}^n (x)$ and $\text{Cov}_{t}^n (x)$, respectively) are given as \cite[p. 13]{Wolf2015_FRM_Lecture}
\begin{align}
	\text{Var}_{t}^n (x) &= n \times \text{Var}_{t} (x) \label{eqn:EWMA_variance_ndays} \text{, and} \\
	\text{Cov}_{t}^n (x,y) &= n \times \text{Cov}_{t} (x,y) \label{eqn:EWMA_covariance_ndays}.
\end{align}

%%%%%%%%%%%%%%%%%%%
%  Section .3: Forming a Strangle
%%%%%%%%%%%%%%%%%%%
\section{Forming a Strangle} \label{sec:Portfolio_Hedging-Forming_Strangle}

As described in the introduction, one scenario where CVaR hedging can be used is the adjustment of a trader's portfolio to protect the trading firm against unlikely, but very high losses. For this scenario the following set-up is given and the following assumptions are made:
\begin{itemize}
	\item The date and time is 22 July 2015, 9 PM New York time (before US markets open).
	\item The trader only trades in call and put options on Google (NASDAQ:GOOGL) and Yahoo (NASDAQ:YHOO) which are expiring on 24 July 2015.
	\item The trader builds his position and does not change until the option contract expire, i.e. the holding time is 3 trading days.
	\item Only options with strike prices for which the open interest is greater than 200 will be considered.
	\item There is no bid-ask spread, i.e. options can be bought and sold at the same price. \footnote{Usually, the price to buy (ask) is higher than the price to sell (bid). Here, the price of an option is the average between ask and bid price.}
	\item There are no transaction costs.
	\item All data is taken from \href{https://www.google.co.uk/finance}{Google Finance UK}.
	\item The trader believes that high price movements are unlikely, he will build a pure strangle with Google options and a strangle with additional positions with Yahoo options. The additional positions on Yahoo are because the trader believes that an upward movement of Yahoo's share price is more likely than a downward movement.
\end{itemize}~

To be more precise, the trader believes that at the market closing on 24 July 2014, the share price of Yahoo will be between USD 37.5 and 42.5, while the share price of Google will be between USD 665 and 730. Based on the trader's positions, the payoff and profit profile for different prices of Yahoo and Google at maturity is shown in \autoref{fig:strangle_googl_yhoo_payoff_profile_before}. More detailed information about option prices is given in \autoref{app_table:option_prices_yhoo} and \autoref{app_table:option_prices_googl}, while the trader's positions are given in \autoref{app_table:trader_positions_before}.

\begin{figure}[H]
	\centering
	\includegraphics[width = 0.9 \textwidth]{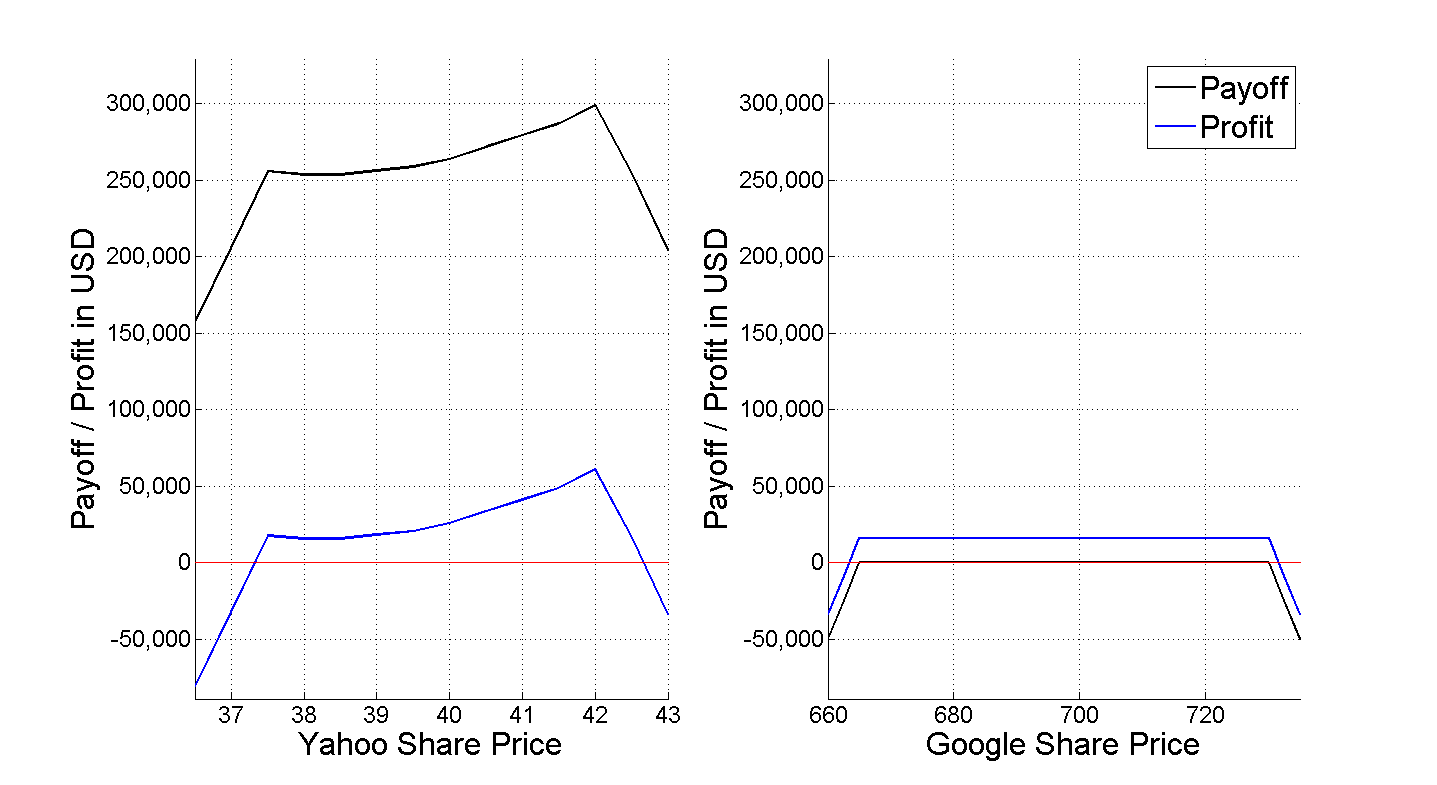}
	\caption{Profit profiles for (unhedged) Google and Yahoo strangles at maturity.}
	\label{fig:strangle_googl_yhoo_payoff_profile_before}
\end{figure}

Hence, if Google's share price closes within the trader's expectations on 24 July, the trader will make a constant profit. If Yahoo's share price closes within the expectations, the trader will also make a profit, but the profit will be highest if the share price closes at USD 42. However, the trader will suffer severe losses if the share prices close outside of his expectation, as can be seen at the left and right edges of the profit profiles in \autoref{fig:strangle_googl_yhoo_payoff_profile_before}.

%%%%%%%%%%%%%%%%%%%
%  Section .4: Hedging Against a Strangle
%%%%%%%%%%%%%%%%%%%
\section{Hedging Against a Strangle} \label{sec:Portfolio_Hedging-Hedging_Strangle}

To perform the risk assessment of the trader's positions, the variance and covariance of Yahoo's and Google's share price movements need to be estimated. Using the daily share price movements over the last year, together with \autoref{eqn:EWMA_variance} and \autoref{eqn:EWMA_covariance} gives the following covariance matrix \footnote{As noted before, $\lambda$ is chosen to be 0.94 and the initial estimates for the variance and covariance are 0} for daily price movements:
\begin{align*}
	\Sigma^{1} &= \begin{bmatrix}
		0.00021176 & 0.00010049 \\
		0.00010049 & 0.00017589
		\end{bmatrix} \,,
\end{align*}
where $\Sigma^{1}_{1,1}$ is the variance for Yahoo's and $\Sigma^{1}_{2,2}$ is the variance for Google's share price movements.

Since the trader will hold the portfolio for 3 days, $\Sigma^{1}$ needs to be multiplied by 3 to give the variance and covariance estimates for the whole holding period (see \autoref{eqn:EWMA_variance_ndays} and \autoref{eqn:EWMA_covariance_ndays}). This gives the following covariance matrix for all subsequent risk assessments:
\begin{align}
	\Sigma &= \begin{bmatrix}
		0.00063528	& 0.00030147 \\
		0.00030147	& 0.00052767
		\end{bmatrix} \,. \label{eqn:covariance_mat_est_trader}
\end{align}

The remainder of this section mostly follows the hedging procedure used by Rockafellar and Uryasev in \cite{Rockafellar2000_Optimization_of_CVaR}. However, the optimization programme used to determine the CVaR optimal hedge was never stated in \cite{Rockafellar2000_Optimization_of_CVaR}, so the explicit formulation of \autoref{problem:hedging_using_CVaR} (together with \autoref{table:CVaR_Hedging_Variables_in_LP})  is an original contribution of this thesis.\\

With the initial prices of Yahoo and Google at USD 39.73 and 695.35, respectively, on the morning of July 22 and the variance estimates given in $\Sigma$, one can calculate the probability that the share prices will be outside the trader's beliefs. Denoting the share prices at maturity of the options as $S_{T,y}$ and $S_{T,g}$, these probabilities can be expressed as
\begin{align*}
	P (S_{T,y} < 37.5) + P (S_{T,y} > 42.5) &= 0.016 \text{~, and}\\
	P (S_{T,g} < 665) + P (S_{T,g} > 730) &= 0.044 \,.
\end{align*}

Hence, there is a high probability that the trader will be correct in his assumption. Taking the risk analysis a little further, 20,000 simulations\footnote{A higher number of simulations could not be performed as the PC ran out of memory for a CVX programme with more than 20,000 simulations.} of share price developments were run (taking into account the correlation between Yahoo and Google share price movements). For each of the 20,000 scenarios the trader's loss was calculated. The loss distribution of the simulations is shown in \autoref{fig:Trader_Loss_Distribution_before} and several risk metrics are given in \autoref{table:Risk_Metrics_unhedged_hedged}.
\begin{figure}[H]
	\centering
	\includegraphics[width = 0.9 \textwidth]{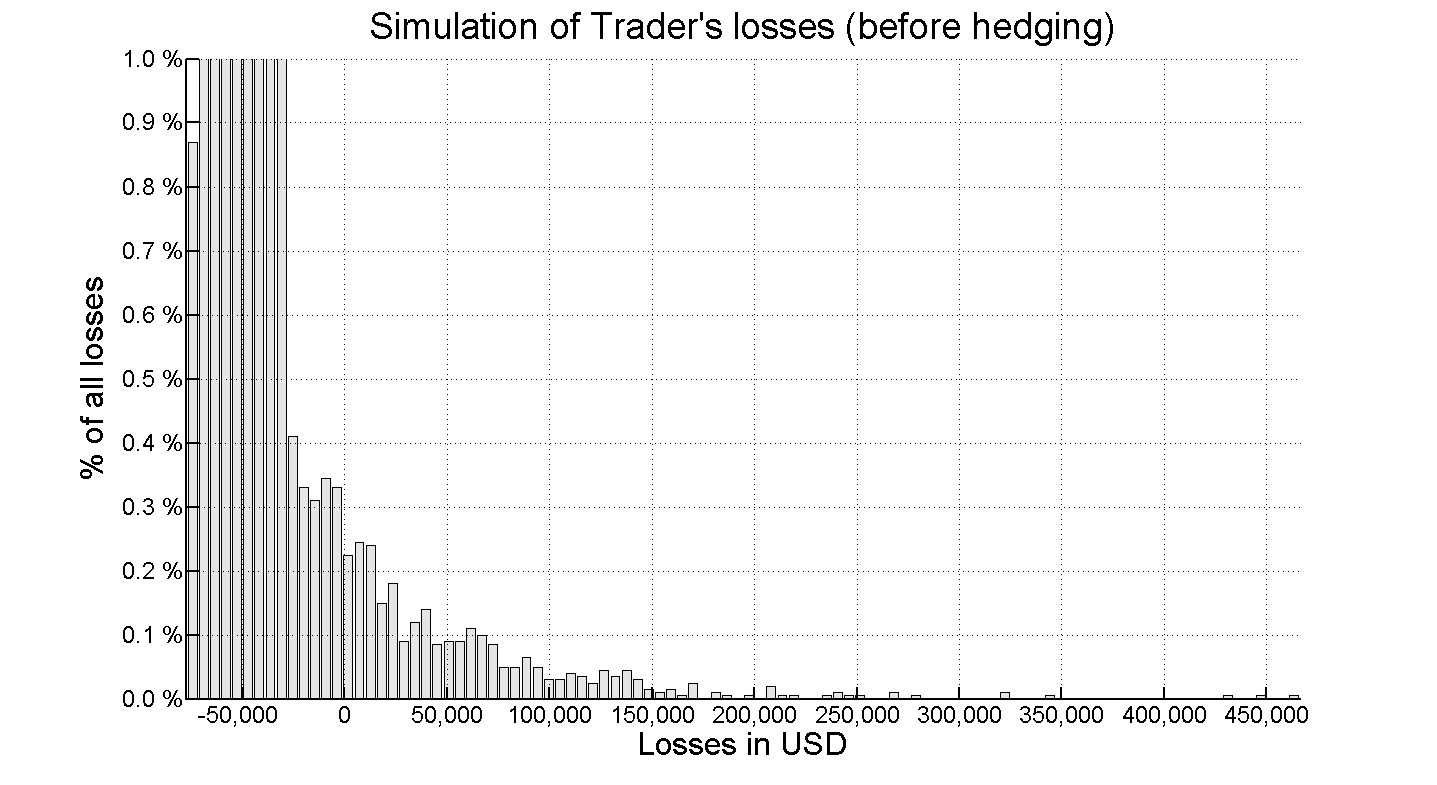}
	\caption{Histogram of trader's (unhedged) portfolio losses from 20,000 simulations.}
	\label{fig:Trader_Loss_Distribution_before}
\end{figure}

Only in very few simulations (2.6 \%) the trader actually makes a loss. Quantifying the Value-at-Risk also gives a positive assessment of the positions, as $\text{VaR}_{0.95} = -31,441$, meaning that with 95 \% probability, the trader makes at least a profit of USD 31,440. However, the tail risk is not taken into account. Since the profits are bounded, but losses are unlimited (see profit profile in \autoref{fig:strangle_googl_yhoo_payoff_profile_before}), it is impossible to say how much the trader can expect to lose using VaR alone. Actually, the the 95 \% CVaR over all simulations is USD 22,458. This means that in the 5 \% worst cases, the trader can expect to lose this much.

To hedge against the tail losses, one can modify \autoref{problem:CVaR_Definition_in_LP_including_min_return} and define a linear programme that computes a CVaR optimal portfolio, starting from the trader's positions (given in \autoref{app_table:trader_positions_before}). The variables used in the programme are shown in \autoref{table:CVaR_Hedging_Variables_in_LP}.\footnote{Note that the trader's positions (denoted $\mathbf{x}$) are now given in number of contracts instead of percentages (which was done in \autoref*{chapter:CVaR_Portfolio_Optimization}).}
\begin{table}[H]
	\centering
	\begin{tabular}{| >{$} c <{$} | >{$} c <{$} | p{10cm} |}
		\hline
		\textbf{Variable} & \textbf{Dimension} & \textbf{Description} \\
		\hline
		N_y, N_g & 1 & Number of strike prices for Yahoo / Google options \\
		\mathbf{k}^{y} & N_y \times 1 & Strike Prices for Yahoo call / put options \\
		\mathbf{k}^{g} & N_g \times 1 & Strike Prices for Google call / put options \\
		\mathbf{p}^{C,y}, \mathbf{p}^{P,y}& N_y \times 1 & Prices to buy / sell Yahoo call / put options \\
		\mathbf{p}^{C,g}, \mathbf{p}^{P,g} & N_g \times 1 & Prices to buy / sell Google call / put options \\
		\mathbf{x}^{C,y}, \mathbf{x}^{P,y} & N_y \times 1 & Trader's positions in Yahoo call / put options \\
		\mathbf{x}^{C,g}, \mathbf{x}^{P,g} & N_g \times 1 & Trader's positions in Google call / put options \\		
		\mathbf{y}^{C,y}, \mathbf{y}^{P,y} & N_y \times 1 & Hedging adjustments for Yahoo call / put options \\
		\mathbf{y}^{C,g}, \mathbf{y}^{P,g} & N_g \times 1 & Hedging adjustments for Google call / put options \\
		\mathbf{a}^{C,y}, \mathbf{a}^{P,y} & N_y \times 1 & Maximum position adjustments in the hedge using Yahoo call / put options \\
		\mathbf{a}^{C,g}, \mathbf{a}^{P,g} & N_y \times 1 & Maximum position adjustments in the hedge using Google call / put options \\
		M & 1 & Number of price simulations \\
		\mathbf{S} & M \times 2 & Simulated share prices at maturity for Yahoo and Google \\
		\mathbf{PO}^{C,y}, \mathbf{PO}^{P,y} & M \times N_y & The payoff for call / put options in Yahoo, by simulated share price and strike price of the option \\
		\mathbf{PO}^{C,g}, \mathbf{PO}^{P,g} & M \times N_g & The payoff for call / put options in Google, by simulated share price and strike price of the option \\
		\text{cost}^y, \text{cost}^g & 1 & Cost for building the trader's position \\
		spc & 1 & $spc = 100$; The number of shares covered by 1 option contract \\
		\hline
	\end{tabular}
	\caption{Variables used in LP to calculate CVaR optimal hedge.}
	\label{table:CVaR_Hedging_Variables_in_LP}
\end{table}

The advantage of using CVaR optimization for hedging is that all positions can be adjusted simultaneously with relatively little computing power as the problem formulation is a linear programme (compared to pure VaR optimization methods). However, in hedging the general profile of the trader's positions should be maintained and only the risk reduced. Therefore, the changes (denoted by $\mathbf{y}$) cannot be arbitrarily large, and the maximum possible adjustment for each position is given by the $\mathbf{a}$ vectors. \cite[p. 33 f.]{Rockafellar2000_Optimization_of_CVaR}

Also, the payoffs $\mathbf{PO}$ can be calculated before running the optimization programme (but after the scenarios were simulated). Their entries are
\begin{align*}
	PO^{C,y}_{i,j} &= \max \{ S_{i,1} - k^{y}_j ,0 \}  && \text{~for~} i \in \{1, \dots, M\}, j \in \{1, \dots, N_y\}  \,,\\
	PO^{P,y}_{i,j} &= \max \{ k^{y}_j - S_{i,1},0 \}  && \text{~for~} i \in \{1, \dots, M\}, j \in \{1, \dots, N_y\} \,, \\
	PO^{C,g}_{i,j} &= \max \{ S_{i,2} - k^{g}_j ,0 \}  && \text{~for~} i \in \{1, \dots, M\}, j \in \{1, \dots, N_g\} \,, \text{~and} \\
	PO^{P,g}_{i,j} &= \max \{ k^{g}_j - S_{i,2},0 \}  && \text{~for~} i \in \{1, \dots, M\}, j \in \{1, \dots, N_g\} \,.
\end{align*}

Hence, the hedging problem using CVaR optimization can be formulated as
\begin{Problem}[problem:hedging_using_CVaR]
\left.
\begin{array}{ r *{3}{l} }
	&\min \limits_{c, \mathbf{z}} & c + \frac{1}{M (1 - \alpha)} \sum \limits_{m = 1}^M z_m \\ [9pt]
	\text{s.t.} & - a^{C,y}_i \leq& y^{C,y}_i \leq a^{C,y}_i & \text{for~} i \in \{1, \dots , N_y\} \\ [3pt]
	& - a^{P,y}_i \leq& y^{P,y}_i \leq a^{P,y}_i & \text{for~} i \in \{1, \dots , N_y\} \\ [3pt]	
	& - a^{C,g}_i \leq& y^{C,g}_i \leq a^{C,g}_i & \text{for~} i \in \{1, \dots , N_g\} \\ [3pt]
	& - a^{P,g}_i \leq& y^{P,g}_i \leq a^{P,g}_i & \text{for~} i \in \{1, \dots , N_g\} \\ [3pt]
	& \mathbf{PO}^y =& \left[ \mathbf{PO}^{C,y} \left( \mathbf{x}^{C,y} + \mathbf{y}^{C,y} \right) \right.\\
	& & \left. + \mathbf{PO}^{P,y} \left( \mathbf{x}^{P,y} + \mathbf{y}^{P,y} \right) \right] \times spc \\ [3pt]
	& \mathbf{PO}^g =& \left[ \mathbf{PO}^{C,g} \left( \mathbf{x}^{C,g} + \mathbf{y}^{C,g} \right) \right. \\
	& & \left. + \mathbf{PO}^{P,g} \left( \mathbf{x}^{P,g} + \mathbf{y}^{P,g} \right) \right] \times spc \\ [3pt]
	& \text{adjCost}^y =& \left[ \sum \limits_{i = 1}^{N_y} p^{C,y}_i \times y^{C,y}_i  + \sum \limits_{i = 1}^{N_y} p^{P,y}_i \times y^{P,y}_i \right] \times spc \\ [15pt]
	& \text{adjCost}^g =& \left[ \sum \limits_{i = 1}^{N_g} p^{C,g}_i \times y^{C,g}_i  + \sum \limits_{i = 1}^{N_g} p^{P,g}_i \times y^{P,g}_i \right] \times spc \\ [15pt]
	& z_m \geq& \text{adjCost}^y + \text{adjCost}^g + \text{cost}^y \\ [3pt]
	& &+ \text{cost}^g - \left[ PO^y_m + PO^g_m \right]& \text{for~} m \in \{1, . , M\} \\ [3pt]
	& z_m \geq& 0 & \text{for~} m \in \{1, . , M\} \\ [3pt]
\end{array}
\right\} \,.
\end{Problem}

Hedging the trader's portfolio using \autoref{problem:hedging_using_CVaR} with $a^{P,y}_i = a^{C,y}_i = 50 \text{~for~} i \in \{1, \dots , N_y\}$ and $a^{P,g}_i = a^{C,g}_i = 5 \text{~for~} i \in \{1, \dots , N_g\}$ yields the payoff / profit profile shown in \autoref{fig:strangle_googl_yhoo_payoff_profile_after} and the loss distribution \autoref{fig:Trader_Loss_Distribution_after}. The exact composition of the hedged portfolio is shown in \autoref{app_table:trader_positions_after_yhoo} and \autoref{app_table:trader_positions_after_googl}.

\begin{figure}[H]
	\centering
	\includegraphics[width = 0.9 \textwidth]{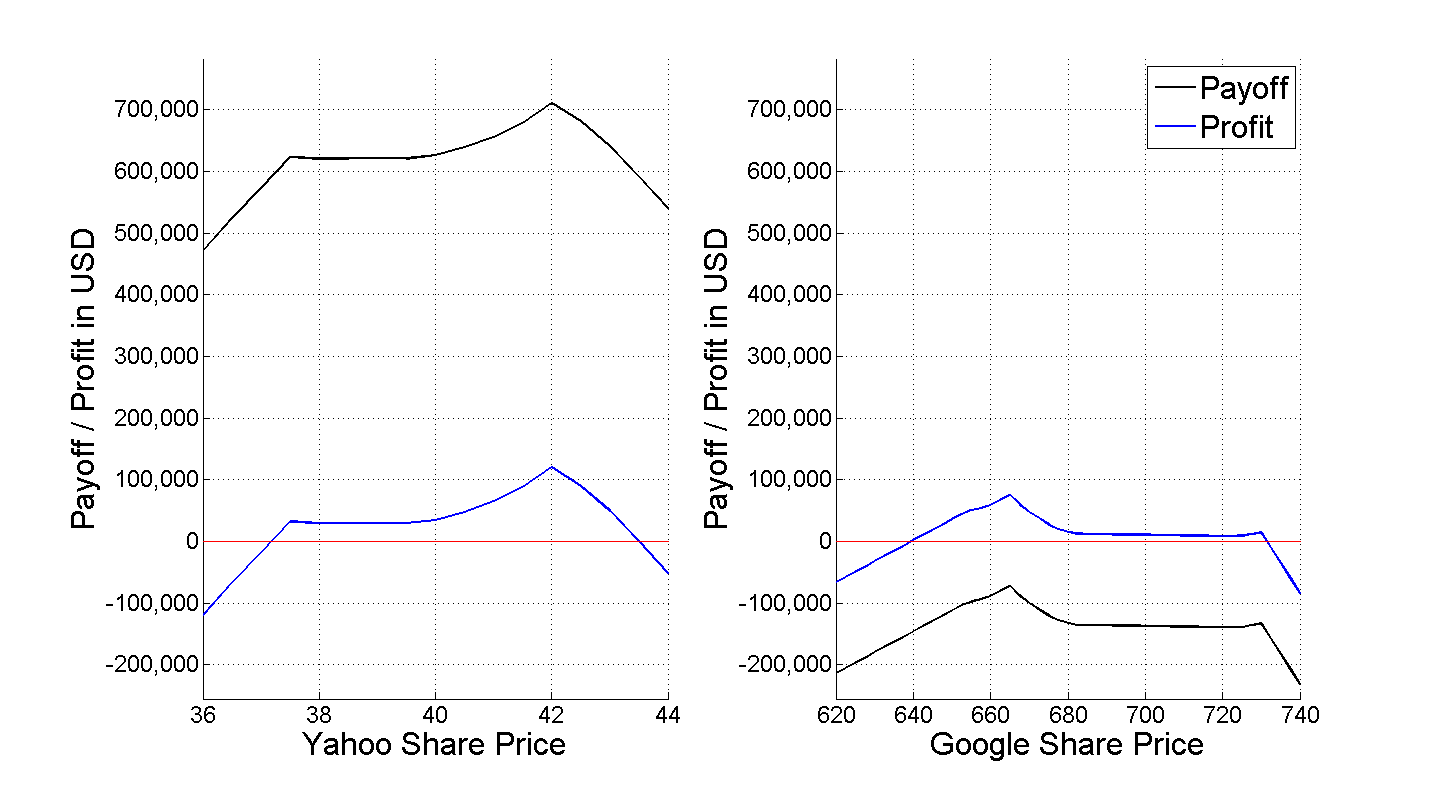}
	\caption{Profit profiles for hedged Google and Yahoo strangles at maturity.}
	\label{fig:strangle_googl_yhoo_payoff_profile_after}
\end{figure}

After hedging, the profit profile for Yahoo options only changed slightly. The most noticeable change is that the graph is mostly scaled, that is, the profit for any given share price is about twice as high as for the unhedged portfolio. Still, the highest profit will be achieved when the share price of Yahoo is at USD 42.

The pure strangle that was formed by options on Google changed its shape more noticeably. While the profit was mostly constant in the unhedged portfolio, there is now a clear peak at $S_{T,g} = 665$. While USD 665 was the trader's assumed lower bound for the final share price, it is now the share price at which the maximum profit will be achieved. Also, the trader will make a profit as long as Google's share price closes above USD 640. This adjustment can be explained by the correlation between Yahoo's and Google's share price movements. As they are positively correlated, a drop in Yahoo's share price will be compensated in the trader's portfolio by the positions in Google options and vice versa.

\begin{figure}[H]
	\centering
	\includegraphics[width = 0.9 \textwidth]{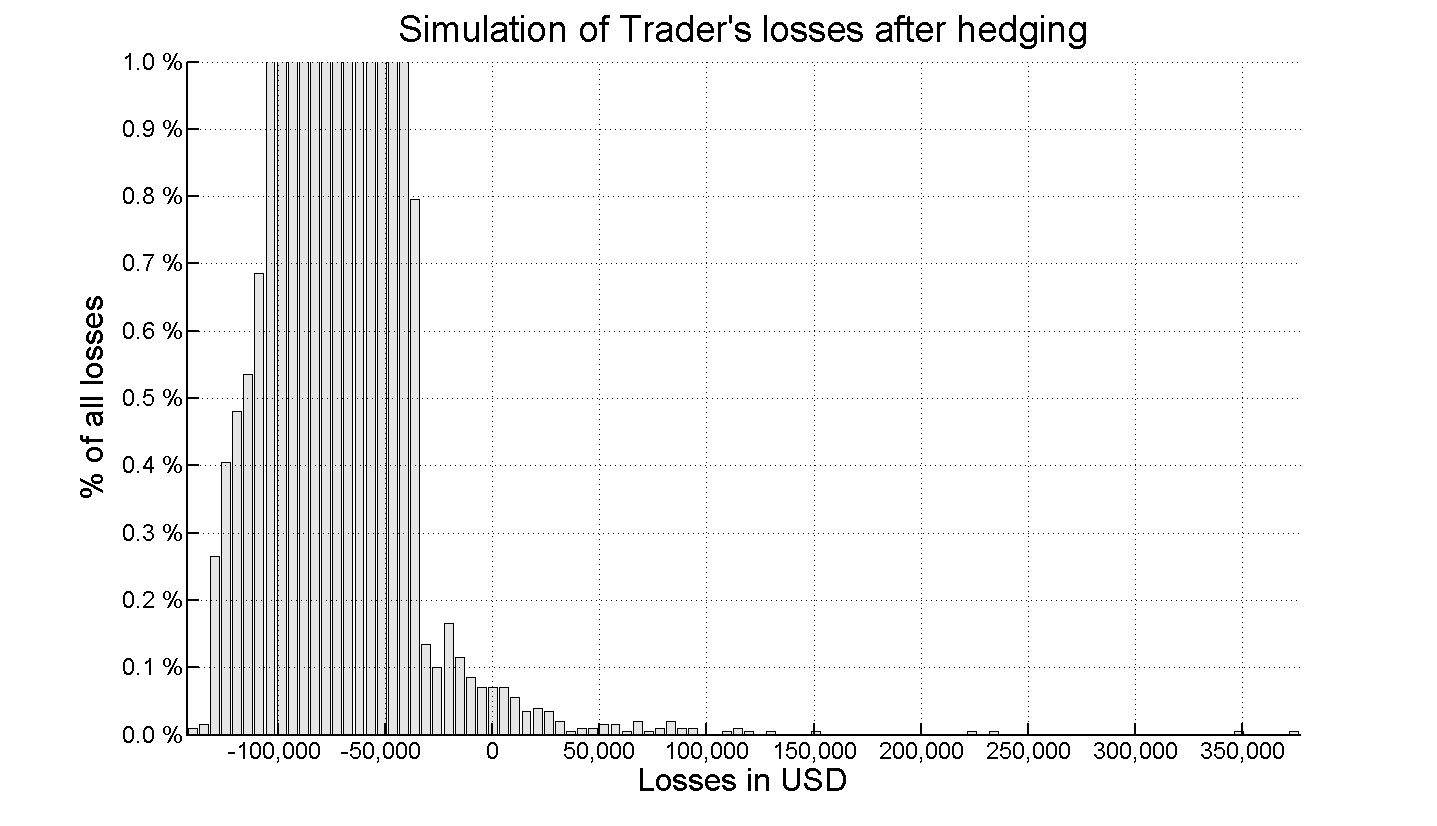}
	\caption{Histogram of trader's hedged portfolio losses from 20,000 simulations.}
	\label{fig:Trader_Loss_Distribution_after}
\end{figure}

The loss distribution is also much more favourable, as there much less losses and also higher profits can be realized than with the unhedged portfolio. A summary of main risk metrics is given in \autoref{table:Risk_Metrics_unhedged_hedged} below.

\begin{table}[H]
	\centering
	\begin{tabular}{| l | r | r | }
		\hline
		\textbf{Metric} & \textbf{Original Portfolio} & \textbf{Hedged Portfolio} \\
		\hline
		Mean Loss	&	-38,882 	&	-54,910 	\\
		Min Loss	&	-77,072 	&	-142,556 	\\
		Max Loss	&	 466,221 	&	 376,638 	\\
		Probability of Loss	&	2.62 \%	&	0.48 \%	\\
		95 \% VaR	&	-31,441 	&	-39,648 	\\
		95 \% CVaR	&	 22,458 	&	-27,911 	\\
		\hline
	\end{tabular}
	\caption{Risk metrics for the original and hedged option portfolio.}
	\label{table:Risk_Metrics_unhedged_hedged}
\end{table}

As table \autoref{table:Risk_Metrics_unhedged_hedged} demonstrates, the hedged portfolio performs better than the original in any of the 6 metrics under consideration. The portfolio has a higher expected profit and lower probability of generating a loss. Also, the 95 \% VaR is lower (meaning that the minimum profit in the 95 \% best cases is higher than for the original portfolio). Most notably however, is the fact that the hedged portfolio has a negative 95 \% CVaR. The means that even in the 5 \% worst cases, the trader can expect a profit of USD 27,911. Still, losses are possible as can be seen in \autoref{fig:Trader_Loss_Distribution_after}, but they are far less likely and less severe than for the original portfolio. \\

To conclude this chapter, it needs to be emphasized that the given example (although relying on real world data) is only demonstrating how to apply CVaR optimization when trying to hedge a portfolio. The hedging effect shown here is astonishing, but can barely be reproduced in an actual trading environment for several reasons. First, the original portfolio was just an example, it has not been optimized with regards to profit maximization. For a more balanced portfolio, the effects of hedging would be less extreme. Also, the prices were simplified, enabling to buy and sell at the same price, without any transaction costs. Introducing ask and bid prices, as well as transaction costs would decrease the profit and hence increase possible losses. Third, the trader and risk manager could buy and sell unlimited quantities of any option. In reality the offer and demand for any given option is limited. Finally, all other simplifying assumption would make it hard to reproduce the same results in a real world setting, e.g. that the assumption that the trader holds the portfolio until the maturity of the options or that the volatility would remain constant over the holding period.

\clearpage

%%%%%%%%%%%%%%%%%%%
%
%  Chapter 5: CVaR Norms
%
%%%%%%%%%%%%%%%%%%%

\chapter{Conditional Value-at-Risk as a Norm} \label{chapter:CVaR_Norms}

In the previous chapters, CVaR was introduced as a risk measure, which was the original intention of CVaR. Applications to portfolio optimization and hedging were also explored. In more recent research, Pavlikov and Uryasev (\cite{pavlikov2014_CVaR_Norm_and_applications}) abstracted the concept of CVaR to a more general interpretation, so that it can also be used to define a family of norms in $\mathbb{R}^n$. Pavlikov and Uryasev proposed two norms: a scaled CVaR norm (denoted $C^S_{\alpha}$), and a non-scaled CVaR norm (denoted $C_{\alpha}$, later simply referred as \emph{CVaR Norm}), which only differ by a factor.

This chapter first presents the two different and equivalent definitions that Pavlikov and Uryasev used to define the $C^S_{\alpha}$ norm, and how the $C^S_{\alpha}$ and $C_{\alpha}$ norms are related to one another by a multiplying factor. \autoref{sec:CVaR_Norms-Properties} presents some of the norm properties that were identified by Pavlikov and Uryasev in \cite{pavlikov2014_CVaR_Norm_and_applications}, enriched by some original ideas of the author. \autoref{sec:CVaR_Norms-Computational_Efficiency} introduces algorithms to computationally evaluate the different CVaR norms ($C^S_{\alpha}$ and $C_{\alpha}$). Algorithms are derived for both equivalent definition of each CVaR norm and the computational efficiency of each algorithm is evaluated.

%%%%%%%%%%%%%%%%%%%
%  Section .1: Scaled CVaR Norm
%%%%%%%%%%%%%%%%%%%

\section{Scaled CVaR Norm} \label{sec:CVaR_Norms-Scaled}

The scaled CVaR norm of the vector $\xinRn$ is denoted by $\CVaRnormX[S]{\alpha}$, where $\alpha$ is a parameter in the range $0 \leq \alpha \leq 1$. The first way to define $\CVaRnormX[S]{\alpha}$ is given in \autoref{subsec:CVaR_Norms-Scaled-Definition} below, while an alternative characterization is given in \autoref{subsec:CVaR_Norms-Scaled-Alternative}.

%%%%%%%%%%%%%
%%%%      Subsection x.x.1
\subsection{Definition} \label{subsec:CVaR_Norms-Scaled-Definition}

\begin{Definition}[{\cite[p. 3f.]{pavlikov2014_CVaR_Norm_and_applications} Component-wise Scaled CVaR Norm}]
	\label{def:Scaled_CVaR_Component_Wise}
	Let the absolute values of the components of vector $\mathbf{x} \in \mathbb{R}^n$ be ordered in \emph{ascending} order, i.e., $\mid x_{(1)} \mid \, \leq \, \mid x_{(2)} \mid \, \leq \, \dots \, \leq \, \mid x_{(n)} \mid$.\\
	For $\alpha_j = \frac{j}{n}, j = 0, \dots , n-1$, the scaled CVaR norm $\CVaRnormX[S]{\alpha}$  of vector $\mathbf{x}$ with parameter $\alpha_j$ is defined as
	\begin{equation}\label{eqn:Scaled_CVaR_Component_Wise_1}
		\llangle \mathbf{x} \rrangle^S_{\alpha_j} \defeq \frac{1}{n-j} \sum \limits_{i = j+1}^{n} \mid x_{(i)} \mid .
	\end{equation}
	For $\alpha$ such that $\alpha_j < \alpha < \alpha_{j+1}$, $j = 0, \dots, n-2$, the scaled CVaR norm $\CVaRnormX[S]{\alpha}$ equals the weighted average of $\llangle \mathbf{x} \rrangle^S_{\alpha_j}$ and $\llangle \mathbf{x} \rrangle^S_{\alpha_{j+1}}$, i.e.,
	\begin{equation} \label{eqn:Scaled_CVaR_Component_Wise_2}
		\CVaRnormX[S]{\alpha} \defeq \mu \CVaRnormX[S]{\alpha_j} + (1 - \mu) \CVaRnormX[S]{\alpha_{j+1}},
	\end{equation}
	where
	\begin{align*}
		\mu &= \frac{\left( \alpha_{j+1} - \alpha \right) \left( 1 - \alpha_j \right)}{\left( \alpha_{j+1} - \alpha_j \right) \left( 1 - \alpha \right) }.
	\end{align*}
	And finally, for $\alpha$ such that $\frac{n-1}{n} < \alpha \leq 1$,
	\begin{equation}\label{eqn:Scaled_CVaR_Component_Wise_3}
		\CVaRnormX[S]{\alpha} \defeq \max_i \mid x_i \mid .
	\end{equation}
\end{Definition}

To illustrate the scaled CVaR norm, $\CVaRnormX[S]{\alpha}$ will be calculated for a vector $\xinR{4}$ and the unit ball of $\xinR{2}$ will be drawn, both for different values of $\alpha$. For $\mathbf{x} = [ 10, -14, 2, -9]^T$,
\begin{equation}
	\begin{array}{r l l}
		\CVaRnormX[S]{0} &= \frac{1}{4} \left( \vert 2 \vert + \vert -9 \vert + \vert 10 \vert + \vert -14 \vert \right) &= 8.75 \,, \\
		\\
		\CVaRnormX[S]{0.25} &= \frac{1}{3} \left( \vert -9 \vert + \vert 10 \vert + \vert -14 \vert \right) &= 11 \,, \\
		\\
		\CVaRnormX[S]{0.5} &= \frac{1}{2} \left( \vert 10 \vert + \vert -14 \vert \right) &= 12 \,, \text{~and} \\
		\\
		\CVaRnormX[S]{0.75} &= \vert -14 \vert &= 14 \,.
	\end{array} \notag
\end{equation}

Note that by \autoref{eqn:Scaled_CVaR_Component_Wise_3}, $\CVaRnormX[S]{\alpha} = 14$ for all $\alpha > 0.75$ as well. To calculate $\CVaRnormX[S]{\frac{1}{3}}$, $\mu$ must be calculated first to use \autoref{eqn:Scaled_CVaR_Component_Wise_2}. Since $0.25 < \mu < 0.5$,
\begin{align*}
	\mu =& \frac{\left( \frac{1}{2} - \frac{1}{3} \right) \left( 1 - \frac{1}{4} \right)}{\left( \frac{1}{2} - \frac{1}{4} \right) \left( 1 - \frac{1}{3} \right)} = \frac{3}{4} .
\end{align*}
Hence, $\CVaRnormX[S]{\frac{1}{3}} = \mu \CVaRnormX[S]{0.25} + \left( 1 - \mu \right) \CVaRnormX[S]{0.5} = \frac{3}{4} 11 + \frac{1}{4} 12$, so that $\CVaRnormX[S]{\frac{1}{3}} = 11.25$.

For $\xinR{2}$, the unit balls of $\CVaRnormX[S]{\alpha}$ for $\alpha \in \left\{ 0, 0.1, 0.25, 0.4 , 0.5 \right\}$ are shown below in \autoref{fig:CSalpha_unit_balls_own_example}.
\begin{figure}[H]
	\centering
	\includegraphics[width = 0.9 \textwidth]{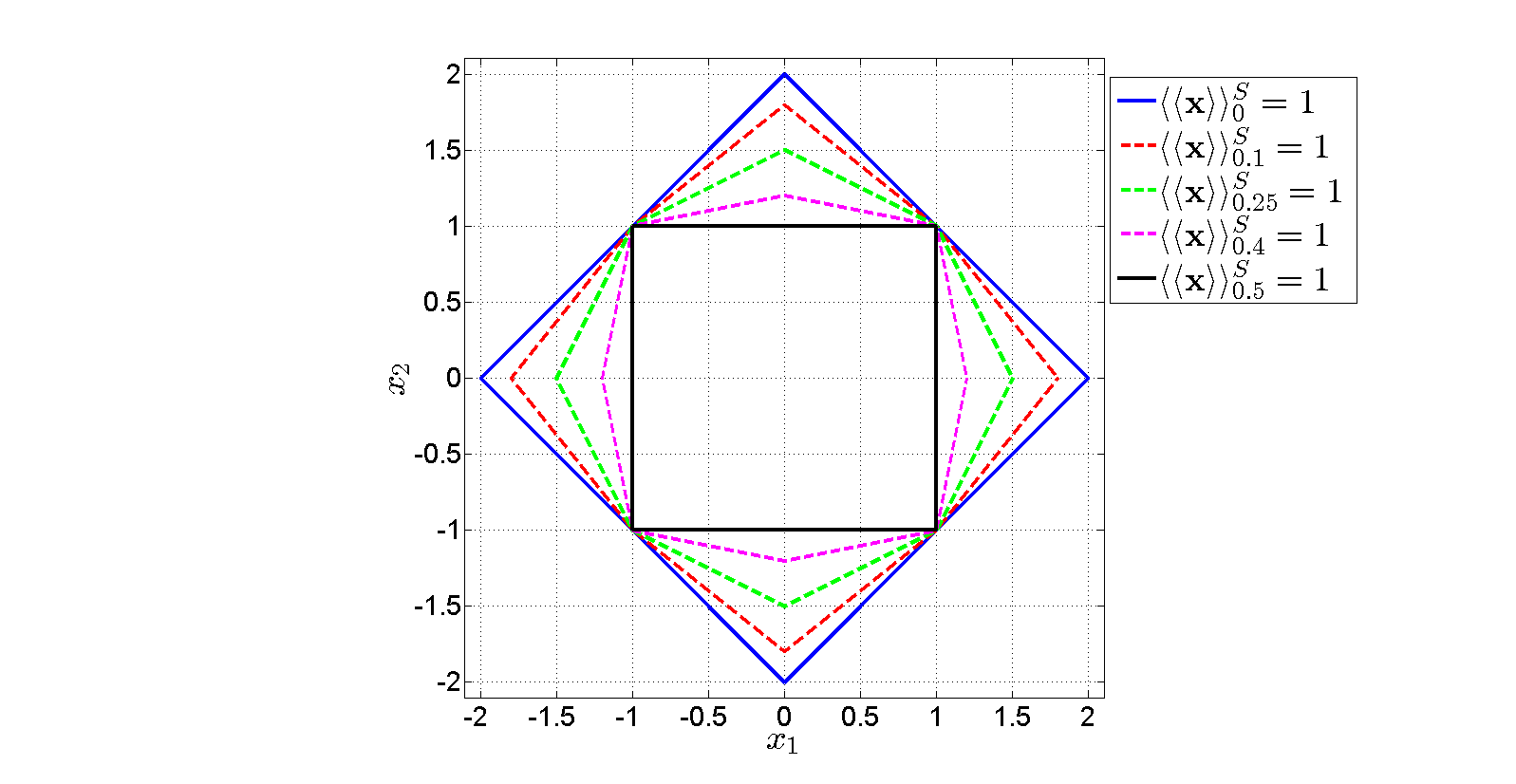}
	\caption{Unit balls of $\CVaRnormX[S]{\alpha}$ for $\xinR{2}$ and different values of $\alpha$.}
	\label{fig:CSalpha_unit_balls_own_example}
\end{figure}

%%%%%%%%%%%%%
%%%%      Subsection x.x.2
\subsection{Alternative Characterization (Including a New Proof)} \label{subsec:CVaR_Norms-Scaled-Alternative}

Alternatively, the vector $\xinRn$ can be associated with a random variable $X$ with the set of possible outcomes $\{ \vert x_1 \vert \,, \vert x_2 \vert \,, \dots \,, \vert x_n \vert \}$, each of which is equally likely. Then the scaled CVaR norm can be derived from the CVaR definition itself (see \autoref{eqn:CVaR_Definition_Tractable}). That is, the scaled CVaR norm $\CVaRnormX[S]{\alpha}$ is equal to $\text{CVaR}_{\alpha} (X)$ as defined in \autoref{eqn:CVaR_Theorem-CVaR_from_Psi}.
\begin{Proposition}[{\cite[p. 6f.]{pavlikov2014_CVaR_Norm_and_applications} Alternative Characterization of the Scaled CVaR Norm}]
	\label{prop:Scaled_CVaR_based_on_Definition}
	For every $\xinRn$, $0 \leq \alpha < 1$, and $c \in \mathbb{R}^n$,
	\begin{align}
		\CVaRnormX[S]{\alpha} &= \min \limits_{c \in \mathbb{R}} \left( c + \frac{1}{n(1 - \alpha)} \sum \limits_{i=1}^{n} \left( \mid x_i \mid - c \right)^+ \right) \text{~, and} \label{eqn:Scaled_CVaR_based_on_Definition1} \\
		\CVaRnormX[S]{1} &= \max_i \mid x_i \mid \label{eqn:Scaled_CVaR_based_on_Definition2} \,.
	\end{align}
\end{Proposition}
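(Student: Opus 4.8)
The plan is to recognize the right-hand side of \autoref{eqn:Scaled_CVaR_based_on_Definition1} as the CVaR of a suitable discrete random variable and then to evaluate that CVaR explicitly, matching it against the three regimes of \autoref{def:Scaled_CVaR_Component_Wise}. Concretely, I would associate with $\mathbf{x}$ the random variable $X$ taking each value $\vert x_i\vert$ with probability $1/n$. For this $X$ one has $\E[(X-c)^+] = \frac{1}{n}\sum_{i=1}^n(\vert x_i\vert - c)^+$, so the bracketed expression being minimized over $c\in\mathbb{R}$ in \autoref{eqn:Scaled_CVaR_based_on_Definition1} is exactly $\phi_\alpha(\mathbf{x},c)$ from \autoref{eqn:phi_alpha}. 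By \autoref{theorem:CVaR}, and in particular \autoref{eqn:CVaR_Theorem-CVaR_from_Psi}, $\min_{c\in\mathbb{R}}\phi_\alpha(\mathbf{x},c) = \text{CVaR}_\alpha(X)$. It therefore remains only to show that $\text{CVaR}_\alpha(X)$ coincides with the component-wise definition.

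First I would compute $\text{VaR}_\beta(X)$. Ordering the absolute values ascending as in \autoref{def:Scaled_CVaR_Component_Wise}, the distribution function of $X$ is a step function increasing by $1/n$ at each ordered value, whence $\text{VaR}_\beta(X) = \vert x_{(k)}\vert$ for $\beta\in\big(\tfrac{k-1}{n},\tfrac{k}{n}\big]$. This description remains valid when several $\vert x_i\vert$ coincide, which is precisely why routing the argument through the (integrated) quantile function, rather than through conditional expectations, avoids any case distinction for ties. Feeding this step function into Acerbi's Integral Formula (\autoref{prop:Acerbis_integral_formula}, \autoref{eqn:Acerbis_integral_formula}) gives $\text{CVaR}_\alpha(X) = \frac{1}{1-\alpha}\int_\alpha^1 \text{VaR}_\beta(X)\,d\beta$, a single expression from which all three regimes follow.

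At the breakpoints $\alpha = \alpha_j = j/n$ the integral collapses to $\frac{1}{n}\sum_{k=j+1}^n \vert x_{(k)}\vert$, and dividing by $1-\alpha_j = (n-j)/n$ reproduces \autoref{eqn:Scaled_CVaR_Component_Wise_1} exactly. For the intermediate range $\alpha_j < \alpha < \alpha_{j+1}$ the key observation is that $g(\alpha) := (1-\alpha)\,\text{CVaR}_\alpha(X) = \int_\alpha^1\text{VaR}_\beta(X)\,d\beta$ is affine in $\alpha$ on $[\alpha_j,\alpha_{j+1}]$, since its integrand is constant there. Hence $g$ equals its own linear interpolation between the two endpoints, and substituting $g(\alpha_j) = (1-\alpha_j)\CVaRnormX[S]{\alpha_j}$ together with $g(\alpha_{j+1}) = (1-\alpha_{j+1})\CVaRnormX[S]{\alpha_{j+1}}$ and then dividing by $1-\alpha$ produces a convex combination whose weight on $\CVaRnormX[S]{\alpha_j}$ is precisely $\mu = \frac{(\alpha_{j+1}-\alpha)(1-\alpha_j)}{(\alpha_{j+1}-\alpha_j)(1-\alpha)}$; a one-line algebraic check confirms the complementary weight equals $1-\mu$, yielding \autoref{eqn:Scaled_CVaR_Component_Wise_2}.

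Finally, when $\alpha > (n-1)/n$ the whole integration interval $(\alpha,1)$ lies inside the top step, so $\text{VaR}_\beta(X) \equiv \vert x_{(n)}\vert = \max_i\vert x_i\vert$ and hence $\text{CVaR}_\alpha(X) = \max_i\vert x_i\vert$, matching \autoref{eqn:Scaled_CVaR_Component_Wise_3}; letting $\alpha\to 1$, where the minimization formula degenerates through division by zero and must be replaced by this limit, gives \autoref{eqn:Scaled_CVaR_based_on_Definition2}. I expect the only genuine obstacle to be the intermediate case: a direct minimization in $c$, or a direct application of the Convex Combination Formula (\autoref{prop:CVaR_Convex_Combination_Formula}), would force one to manipulate $\lambda$, $\text{VaR}_\alpha$, and $\text{CVaR}^+_\alpha$ simultaneously and then reconcile the outcome with the unusual weight $\mu$. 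Routing the argument through the piecewise-affine function $g(\alpha)$ sidesteps this entirely, reducing the matching of $\mu$ to the linear-interpolation identity above and leaving only routine algebra.
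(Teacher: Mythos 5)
Your proof is correct, and its first half coincides exactly with the paper's: both arguments identify the right-hand side of \autoref{eqn:Scaled_CVaR_based_on_Definition1} as $\min_{c}\phi_{\alpha}(\mathbf{x},c)=\text{CVaR}_{\alpha}(X)$ for the discrete uniform random variable $X$ on $\{\vert x_1\vert,\dots,\vert x_n\vert\}$, invoking the Rockafellar--Uryasev minimization theorem. The second half diverges genuinely. The paper evaluates $\text{CVaR}_{\alpha}(X)$ through the Convex Combination Formula (\autoref{prop:CVaR_Convex_Combination_Formula}), which in the intermediate case $\alpha_j<\alpha<\alpha_{j+1}$ forces a page of term-by-term manipulation to reconcile the weight $\lambda=\frac{\alpha_{j+1}-\alpha}{1-\alpha}$ with the weight $\mu$ of \autoref{def:Scaled_CVaR_Component_Wise}. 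You instead route through Acerbi's Integral Formula and observe that $g(\alpha)=(1-\alpha)\,\text{CVaR}_{\alpha}(X)=\int_{\alpha}^{1}\text{VaR}_{\beta}(X)\,d\beta$ is affine on each $[\alpha_j,\alpha_{j+1}]$, so $\mu$ drops out of the linear-interpolation identity; your ``one-line check'' does close, since $(\alpha_{j+1}-\alpha_j)(1-\alpha)-(\alpha_{j+1}-\alpha)(1-\alpha_j)=(\alpha-\alpha_j)(1-\alpha_{j+1})$, which is precisely the claimed complementary weight. Your route buys two things: all three regimes of the component-wise definition fall out of one integral expression, and the argument is genuinely tie-proof --- your step-function description $\text{VaR}_{\beta}(X)=\vert x_{(k)}\vert$ for $\beta\in\bigl(\tfrac{k-1}{n},\tfrac{k}{n}\bigr]$ holds verbatim under repeated values, whereas the paper's stated intermediate quantities in the case $\alpha=\alpha_j$ (namely $\lambda=0$ and its $\text{CVaR}^{+}$ formula) are literally valid only for distinct $\vert x_i\vert$: for $\mathbf{x}=(1,1,5)$ and $\alpha=\tfrac{1}{3}$ one has $\Psi=\tfrac{2}{3}$ and $\lambda=\tfrac{1}{2}$, though the final value coincides. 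One dependency you should flag explicitly: your argument needs Acerbi's formula for \emph{discrete} distributions, and while \autoref{prop:Acerbis_integral_formula} is indeed stated in that generality (citing Acerbi--Tasche), the only proof of it contained in this thesis (\autoref{sec:CVaR_RM-Acerbi_new_proof}) covers the continuous case only, so you must lean on the cited external result rather than on anything proved here.
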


Although \autoref{prop:Scaled_CVaR_based_on_Definition} has been proven by Pavlikov and Uryasev in \cite[p. 9ff.]{pavlikov2014_CVaR_Norm_and_applications}, a novel proof will be presented here to show how the proof of \autoref{prop:Scaled_CVaR_based_on_Definition} can be derived in a different way. To the best knowledge of the author this novel proof has not been published before.

In their proof, Pavlikov and Uryasev showed that for the function $f(c) \defeq c + \frac{1}{n (1 - \alpha)} \sum_{i=1}^n \left[ \vert x_i \vert - c \right]^+$ it follows that $\vert x_{(j+1)} \vert \in \arg \min_c f(c)$. They used this result together with \autoref{eqn:Scaled_CVaR_based_on_Definition1} to manipulate the alternative characterization of the scaled CVaR norm so that it was equal to \autoref{def:Scaled_CVaR_Component_Wise}. The novel proof has two steps. First, it will be shown that when interpreting $\xinRn$ as the distribution of a discrete random variable $X$, the right hand side of both, \autoref{eqn:Scaled_CVaR_based_on_Definition1} and \autoref{eqn:Scaled_CVaR_based_on_Definition2}, are an expression for $\text{CVaR}_{\alpha} (X)$. In the second step, it will be shown that $\text{CVaR}_{\alpha}(X)$ can be expressed by the Convex Combination Formula (\autoref{eqn:CVaR_Convex_Combination_Formula}) so that it is equivalent to $\CVaRnormX[S]{\alpha}$ in \autoref{def:Scaled_CVaR_Component_Wise}.

\begin{proof}
	Let $\xinRn$ describe the distribution of a discrete random variable $X$, so that the possible values of $X$ are $\vert x_i \vert$ for $i \in \{1, \dots, n\}$, with $P(X = \vert x_i \vert) = \frac{1}{n}$. Then for $0 \leq \alpha < 1$, the right hand side of \autoref{eqn:Scaled_CVaR_based_on_Definition1} is equivalent to
	\begin{align*}
		 &\min \limits_{c \, \in \mathbb{R}} \left( c + \frac{1}{n(1 - \alpha)} \sum \limits_{i=1}^n ( \vert x_i \vert - c )^+ \right) \\
		 =& \min \limits_{c \, \in \mathbb{R}} \left( c + \frac{1}{1 - \alpha} \E \left[ ( X - c )^+ \right] \right) \\
		 =& \text{CVaR}_{\alpha} (X) \,,
	\end{align*}
	where the last line follows from \autoref{eqn:CVaR_Definition_Tractable}. And by \autoref{eqn:CVaR}, $\max_i \vert x_i \vert = \text{CVaR}_{1} (X)$. \\
	
	To determine the $\alpha$ CVaR of $X$ by the Convex Combination Formula (\autoref{eqn:CVaR_Convex_Combination_Formula}), three cases need to be considered. The first case is $\alpha = \alpha_j = \frac{j}{n} \,, j \in \{0, 1, \dots , n-1\}$, the second case is $\al{j} < \alpha < \al{j+1} \,, j \in \{0, 1, \dots , n-2 \}$, and the third and last case is $\frac{n-1}{n} < \alpha \leq 1$. For all three cases the absolute values of the components of $\mathbf{x}$ should be ordered in ascending order, such that $\vert x_{(1)} \vert \leq \vert x_{(2)} \vert \leq \dots \leq \vert x_{(n)} \vert$. Also, for the special case $\alpha = 0$, $\vert x_{(0)} \vert \defeq 0$ is introduced.
	
	In the first case, i.e., $\alpha = \alpha_j = \frac{j}{n} \,, j \in \{0, 1, \dots , n-1\}$, $\text{VaR}_{\alpha} (X)$, $\text{CVaR}_{\alpha}^+ (X)$, and $\lambda$ are
	\begin{align*}
		\text{VaR}_{\al{j}} (X) =& \vert x_{(j)} \vert \,, & \text{CVaR}_{\al{j}}^+ (X) =& \frac{1}{n - j} \sum \limits_{i = j+1}^n \vert x_{(i)} \vert \,, & \text{and~} \lambda=& \frac{\al{j} - \al{j}}{1 - \alpha}=0 \,,
	\end{align*}
	so that the CVaR can be expressed as
	\begin{equation} \label{eqn:Scaled_CVaR_norm_proof1}
		\text{CVaR}_{\al{j}} (X) = \frac{1}{n - j} \sum \limits_{i = j+1}^n \vert x_{(i)} \vert \,,
	\end{equation}
	which equals $\CVaRnormX[S]{\al{j}}$ by \autoref{eqn:Scaled_CVaR_Component_Wise_1}.
	
	In the second case, i.e., $\al{j} < \alpha < \al{j+1} \,, j \in \{0, 1, \dots , n-2 \}$, $\text{VaR}_{\alpha} (X)$, $\text{CVaR}_{\alpha}^+ (X)$, and $\lambda$ are
	\begin{align*}
		\text{VaR}_{\alpha} (X) =& \vert x_{(j+1)} \vert \,, & \text{CVaR}_{\alpha}^+ (X) =& \frac{1}{n - (j+1)} \sum \limits_{i = j+2}^n \vert x_{(i)} \vert \,, & \text{and~} \lambda=& \frac{\al{j+1} - \alpha}{1 - \alpha} \,,
	\end{align*}
	so that the CVaR can be expressed as
	\begin{equation} \label{eqn:Scaled_CVaR_norm_proof2}
		\text{CVaR}_{\alpha} (X) = \frac{\al{j+1} - \alpha}{1 - \alpha} \vert x_{(j+1)} \vert + \left( 1 -  \frac{\al{j+1} - \alpha}{1 - \alpha} \right) \frac{1}{n - (j+1)} \sum \limits_{i = j+2}^n \vert x_{(i)} \vert \,.
	\end{equation}	
	To show that \autoref{eqn:Scaled_CVaR_norm_proof2} equals \autoref{eqn:Scaled_CVaR_Component_Wise_2}, \autoref{eqn:Scaled_CVaR_Component_Wise_2} needs to be manipulated, so that
	\begin{align}
		\CVaRnormX[S]{\alpha} =& \mu \CVaRnormX[S]{\alpha_j} + (1 - \mu) \CVaRnormX[S]{\alpha_{j+1}} \notag \\
		\notag \\
		=& \mu \frac{1}{n-j} \sum \limits_{i = j+1}^{n} \vert x_{(i)} \vert + \left(1 - \mu \right) \frac{1}{n-(j+1)} \sum \limits_{i = j+2}^{n} \vert x_{(i)} \vert \notag \\
		\notag \\
		=& \mu \frac{1}{n-j} \vert x_{(j+1)} \vert + \mu \frac{1}{n-j} \sum \limits_{i = j+2}^{n} \vert x_{(i)} \vert + \frac{1}{n-(j+1)} \sum \limits_{i = j+2}^{n} \vert x_{(i)} \vert - \mu \frac{1}{n-(j+1)} \sum \limits_{i = j+2}^{n} \vert x_{(i)} \vert \notag \\
		\notag \\
		& - \frac{\al{j+1} - \alpha}{1 - \alpha} \frac{1}{n - (j+1)} \sum \limits_{i = j+2}^{n} \vert x_{(i)} \vert + \frac{\al{j+1} - \alpha}{1 - \alpha} \frac{1}{n - (j+1)} \sum \limits_{i = j+2}^{n} \vert x_{(i)} \vert \notag \\
		\notag \\
		=& \mu \frac{1}{n-j} \vert x_{(j+1)} \vert + \left( 1 - \frac{\al{j+1} - \alpha}{1 - \alpha} \right) \frac{1}{n - (j+1)} \sum \limits_{i = j+2}^{n} \vert x_{(i)} \vert \notag \\
		\notag \\
		& + \left(  \mu \frac{1}{n-j} - \mu \frac{1}{n-(j+1)} +\frac{\al{j+1} - \alpha}{1 - \alpha} \frac{1}{n - (j+1)} \right) \sum \limits_{i = j+2}^{n} \vert x_{(i)} \vert \notag \\
		\notag \\
		=& \frac{\al{j+1} - \alpha}{1 - \alpha} \vert x_{(j+1)} \vert + \left( 1 -  \frac{\al{j+1} - \alpha}{1 - \alpha} \right) \frac{1}{n - (j+1)} \sum \limits_{i = j+2}^n \vert x_{(i)} \vert \,. \label{eqn:Scaled_CVaR_norm_proof2b}
	\end{align}
	The last step follows because
	\begin{align*}
		\mu \frac{1}{n-j} =& \frac{\left( \alpha_{j+1} - \alpha \right) \left( 1 - \alpha_j \right)}{\left( \alpha_{j+1} - \alpha_j \right) \left( 1 - \alpha \right) } \frac{1}{n-j} \\
		=& \frac{\left( \alpha_{j+1} - \alpha \right) \left( 1 - \frac{j}{n} \right)}{\left( \frac{j+1}{n} - \frac{j}{n} \right) \left( 1 - \alpha \right) (n-j)} = \frac{\al{j+1} - \alpha}{1 - \alpha} \,,
	\end{align*}
	and
	\begin{align*}
		\mu \frac{1}{n-j} - \mu \frac{1}{n-(j+1)} +\frac{\al{j+1} - \alpha}{1 - \alpha} \frac{1}{n - (j+1)} =& 0 \,.
	\end{align*}
	Comparing \autoref{eqn:Scaled_CVaR_norm_proof2b} and \autoref{eqn:Scaled_CVaR_norm_proof2} shows that $\text{CVaR}_{\alpha} (X) = \CVaRnormX[S]{\alpha}$ for $\al{j} < \alpha < \al{j+1} \,, j \in \{0, 1, \dots , n-2 \}$.
	
	The last step is to show that $\text{CVaR}_{\alpha} (X) = \CVaRnormX[S]{\alpha}$ for $\frac{n-1}{n} < \alpha \leq 1$, which is trivial, as $\text{CVaR}_{\alpha} (X) = \max_i \vert x_i \vert = \CVaRnormX[S]{\alpha}$ in this case. This follows from \autoref{eqn:Scaled_CVaR_Component_Wise_3} and because $\text{CVaR}_{\alpha} (X) = \text{VaR}_{\alpha} (X)$, when $\text{VaR}_{\alpha} (X)$ is the maximum loss possible \cite[p. 1452]{Rockafellar2002_CVaR_for_general_loss_distributions}, which is the case for $\frac{n-1}{n} < \alpha \leq 1$.
	
	So both, \autoref{def:Scaled_CVaR_Component_Wise} and the right hand side of \autoref{eqn:Scaled_CVaR_based_on_Definition1} and \autoref{eqn:Scaled_CVaR_based_on_Definition2} in \autoref{prop:Scaled_CVaR_based_on_Definition} are equal to $\text{CVaR}_{\alpha} (X)$, and hence must be equivalent.
\end{proof}

%%%%%%%%%%%%%%%%%%%
%  Section .2
%%%%%%%%%%%%%%%%%%%

\section{Non-Scaled CVaR Norm} \label{sec:CVaR_Norms-Non_scaled_CVaR_Norm}

The non-scaled CVaR norm (also called \emph{CVaR norm}) is obtained by multiplying the scaled CVaR norm by a factor. This norm will have more significance in the following chapters.

%%%%%%%%%%%%%
%%%%      Subsection x.x.1
\subsection{Definition} \label{subsec:CVaR_Norms-NonScaled-Definition}

The non-scaled CVaR norm is obtained by multiplying the scaled CVaR norm by the factor $n (1 - \alpha)$, i.e.,
\begin{equation} \label{eqn:Non_Scaled-Scaled_CVaR_norm}
	\CVaRnormX{\alpha} \defeq n(1 - \alpha) \cdot \CVaRnormX[S]{\alpha} \,.
\end{equation}
The non-scaled CVaR norm will be called \emph{CVaR norm} from here on for simplicity. \\

Algorithms for calculating the scaled CVaR norm and CVaR norm will be implemented computationally and their efficiency will be compared in \autoref{sec:CVaR_Norms-Computational_Efficiency}. Since the algorithms will be based on the definitions of the norms, it is computationally more efficient to calculate the CVaR norm from an algorithm based on \autoref{def:CVaR_Component_Wise} than based on \autoref{eqn:Non_Scaled-Scaled_CVaR_norm} as this eliminates two calculation steps: first scaling by $n - j$ and then multiplying by $n(1 - \alpha)$. Hence, the following definition of the CVaR norm will be used.

\begin{Definition}[{\cite[p. 14f.]{pavlikov2014_CVaR_Norm_and_applications} Component-wise CVaR Norm}]
	\label{def:CVaR_Component_Wise}
	Let the absolute values of the components of vector $\mathbf{x} \in \mathbb{R}^n$ be ordered in \emph{ascending} order, i.e. $\mid x_{(1)} \mid \, \leq \, \mid x_{(2)} \mid \, \leq \, \dots \, \leq \, \mid x_{(n)} \mid$.\\
	For $\alpha_j = \frac{j}{n}, j = 0, \dots , n-1$, the CVaR norm $\CVaRnormX{\alpha}$  of vector $\mathbf{x}$ with parameter $\alpha_j$ is defined as
	\begin{equation}\label{eqn:CVaR_Component_Wise_1}
		\CVaRnormX{\alpha} \defeq \sum \limits_{i = j+1}^{n} \mid x_{(i)} \mid .
	\end{equation}
	For $\alpha$ such that $\alpha_j < \alpha < \alpha_{j+1}$, $j = 0, \dots, n-2$, the CVaR norm $\CVaRnormX{\alpha}$ equals the weighted average of $\CVaRnormX{\alpha_j}$ and $\CVaRnormX{\alpha_{j+1}}$, i.e.
	\begin{equation} \label{eqn:CVaR_Component_Wise_2}
		\CVaRnormX{\alpha} \defeq \lambda \CVaRnormX{\alpha_j} + (1 - \lambda) \CVaRnormX{\alpha_{j+1}},
	\end{equation}
	where
	\begin{align*}
		\lambda &= \frac{\alpha_{j+1} - \alpha}{\alpha_{j+1} - \alpha_j}.
	\end{align*}
	And finally, for $\alpha$ such that $\frac{n-1}{n} < \alpha < 1$,
	\begin{equation}\label{eqn:CVaR_Component_Wise_3}
		\CVaRnormX{\alpha} \defeq n(1 - \alpha) \CVaRnormX{\alpha_{n-1}} = n(1 - \alpha) \max_i \mid x_i \mid .
	\end{equation}
\end{Definition}

Again, some examples will be given to gain a better familiarity with the CVaR norm. The examples are the same as in \autoref{subsec:CVaR_Norms-Scaled-Definition}. For $\mathbf{x} = [ 10, -14, 2, -9]^T$,
\begin{equation}
	\begin{array}{r l l}
		\CVaRnormX{0} &= \vert 2 \vert + \vert -9 \vert + \vert 10 \vert + \vert -14 \vert &= 35 \,, \\
		\\
		\CVaRnormX{0.25} &= \vert -9 \vert + \vert 10 \vert + \vert -14 \vert &= 33 \,, \\
		\\
		\CVaRnormX{0.5} &= \vert 10 \vert + \vert -14 \vert &= 24 \,, \text{~and} \\
		\\
		\CVaRnormX{0.75} &= \vert -14 \vert &= 14 \,.
	\end{array} \notag
\end{equation}

In contrast to $\CVaRnormX[S]{\alpha}$, $\CVaRnormX{\alpha} \not = \CVaRnormX{0.75}$ for $\alpha > 0.75$, as, for example, $\CVaRnormX{0.9} = 4 (1 - 0.9) \cdot 14 = 5.6$. And to calculate $\CVaRnormX{\frac{1}{3}}$, $\lambda$ must be calculated first to use \autoref{eqn:CVaR_Component_Wise_2}. Since $0.25 < \lambda < 0.5$,
\begin{align*}
	\lambda =& \frac{\frac{1}{2} - \frac{1}{3}}{\frac{1}{2} - \frac{1}{4}} = \frac{2}{3} .
\end{align*}
Hence, $\CVaRnormX{\frac{1}{3}} = \lambda \CVaRnormX{0.25} + \left( 1 - \lambda \right) \CVaRnormX{0.5} = \frac{2}{3} 33 + \frac{1}{3} 24$, so that $\CVaRnormX{\frac{1}{3}} = 30$.

For $\xinR{2}$, the unit balls of $\CVaRnormX{\alpha}$ for $\alpha \in \left\{ 0, 0.1, 0.25, 0.4 , 0.5 \right\}$ are shown below in \autoref{fig:Calpha_unit_balls_own_example}.
\begin{figure}[H]
	\centering
	\includegraphics[width = 0.9 \textwidth]{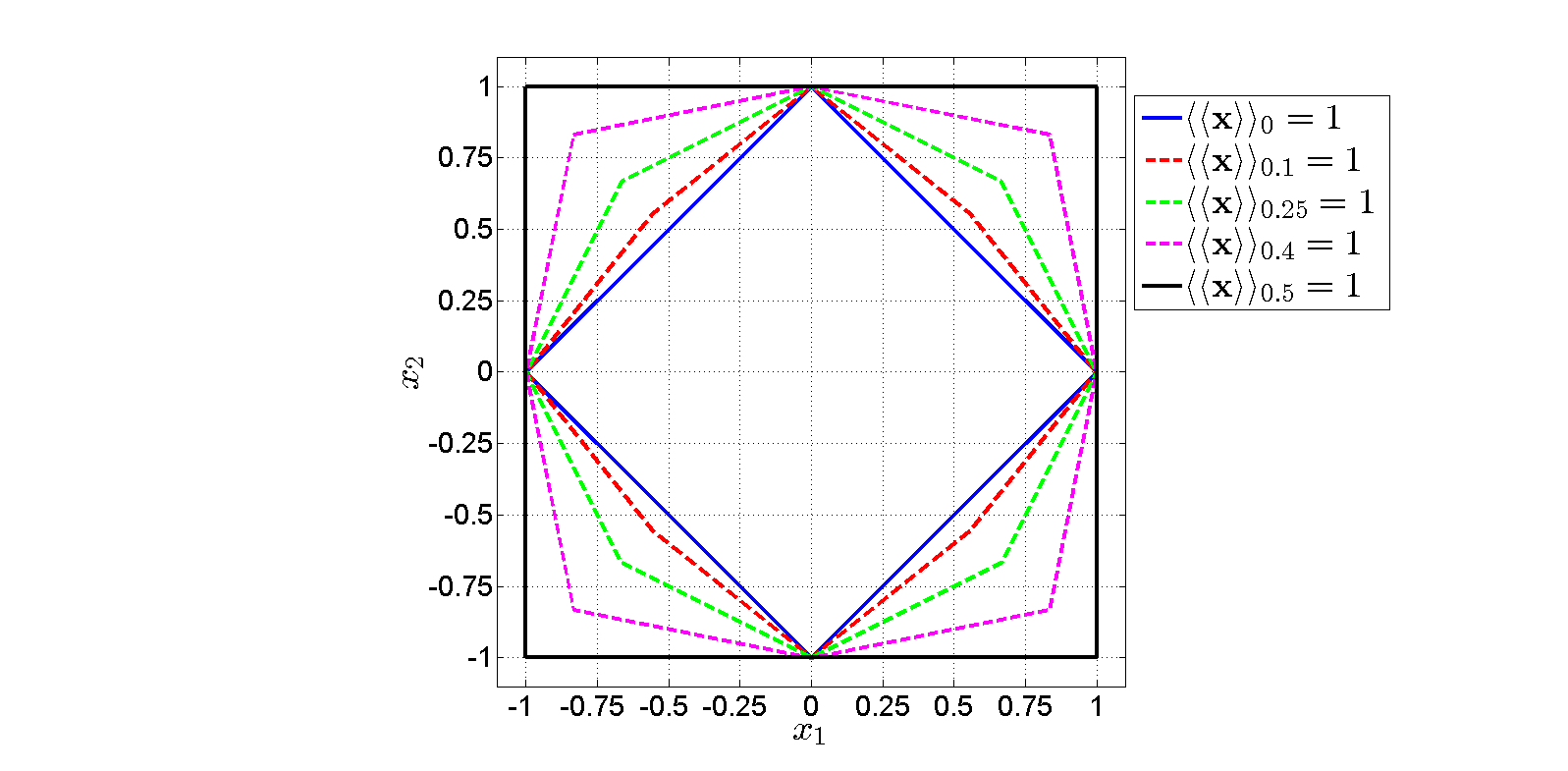}
	\caption{Unit balls of $\CVaRnormX{\alpha}$ for $\xinR{2}$ and different values of $\alpha$.}
	\label{fig:Calpha_unit_balls_own_example}
\end{figure}

%%%%%%%%%%%%%
%%%%      Subsection x.x.2
\subsection{Alternative Characterization} \label{subsec:CVaR_Norms-NonScaled-Alternative}

Alternatively, the CVaR norm can be obtained by solving the following minimization (using \autoref{eqn:Non_Scaled-Scaled_CVaR_norm} and \autoref{prop:Scaled_CVaR_based_on_Definition}).
\begin{Proposition}[{\cite[p. 16]{pavlikov2014_CVaR_Norm_and_applications} CVaR Norm based on CVaR Definition}]
	\label{prop:CVaR_based_on_Definition}
	For $0 \leq \alpha < 1$,
	\begin{equation} \label{eqn:CVaR_Norm_based_on_Definition}
		\CVaRnormX{\alpha} = \min_c \left( n (1 - \alpha) c + \sum \limits_{i=1}^{n} \left( \mid x_i \mid - c \right)^+ \right) \,.
	\end{equation}
\end{Proposition}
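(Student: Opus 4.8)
The plan is to read the statement off directly from the two results the proposition explicitly invokes, so the argument is essentially a one-line substitution followed by a scaling manipulation. First I would recall that by the definition of the non-scaled norm, \autoref{eqn:Non_Scaled-Scaled_CVaR_norm}, we have $\CVaRnormX{\alpha} = n(1-\alpha)\,\CVaRnormX[S]{\alpha}$, and that by \autoref{prop:Scaled_CVaR_based_on_Definition} (specifically \autoref{eqn:Scaled_CVaR_based_on_Definition1}) the scaled norm admits, for $0 \leq \alpha < 1$, the variational representation
\[
	\CVaRnormX[S]{\alpha} = \min_{c \in \mathbb{R}} \left( c + \frac{1}{n(1-\alpha)} \sum_{i=1}^{n} \left( \mid x_i \mid - c \right)^+ \right).
\]
Substituting this into the scaling relation immediately gives $\CVaRnormX{\alpha} = n(1-\alpha)\min_{c}\bigl( c + \tfrac{1}{n(1-\alpha)} \sum_{i=1}^{n} (\mid x_i \mid - c)^+ \bigr)$.

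The key observation is that the prefactor $n(1-\alpha)$ is strictly positive for every $\alpha \in [0,1)$. Multiplying a function of $c$ by a positive constant leaves its set of minimizers unchanged and simply scales its minimum value by that constant, so I may pull $n(1-\alpha)$ inside the minimization:
\[
	\CVaRnormX{\alpha} = \min_{c} \left[ n(1-\alpha)\left( c + \frac{1}{n(1-\alpha)} \sum_{i=1}^{n} \left( \mid x_i \mid - c \right)^+ \right) \right].
\]
Distributing the constant across the two summands — the first term becomes $n(1-\alpha)c$, while in the second term the factor $\frac{1}{n(1-\alpha)}$ cancels exactly against $n(1-\alpha)$ — yields $\min_{c}\bigl( n(1-\alpha)c + \sum_{i=1}^{n}(\mid x_i \mid - c)^+ \bigr)$, which is precisely \autoref{eqn:CVaR_Norm_based_on_Definition}, completing the proof.

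There is no substantive obstacle here beyond bookkeeping; the only point that genuinely requires care is the strict positivity of $n(1-\alpha)$. This is what both legitimizes interchanging the constant with the $\min$ (a negative or zero factor would flip or collapse the optimization) and explains why the statement is restricted to $0 \leq \alpha < 1$: at $\alpha = 1$ the factor $n(1-\alpha)$ vanishes, the scaling relation \autoref{eqn:Non_Scaled-Scaled_CVaR_norm} degenerates, and the endpoint must be handled separately, consistent with how \autoref{def:CVaR_Component_Wise} and \autoref{prop:Scaled_CVaR_based_on_Definition} treat $\alpha = 1$ on its own. I would therefore state this positivity explicitly at the point where the constant is moved inside the minimization.
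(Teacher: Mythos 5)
Your proposal is correct and is exactly the argument the paper intends: the text derives \autoref{prop:CVaR_based_on_Definition} directly from the scaling relation \autoref{eqn:Non_Scaled-Scaled_CVaR_norm} together with \autoref{prop:Scaled_CVaR_based_on_Definition}, pulling the positive factor $n(1-\alpha)$ inside the minimization just as you do. Your explicit remark on the strict positivity of $n(1-\alpha)$ for $\alpha \in [0,1)$, and why $\alpha = 1$ must be excluded, is a welcome point of care that the paper leaves implicit.
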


Writing \autoref{prop:CVaR_based_on_Definition} as an LP, i.e.,
\begin{Problem}[problem:CVaR_Norm_LP]
\left.
\begin{array}{rrll}
	\CVaRnormX{\alpha} =& \min \limits_{c} & n (1 - \alpha) c + \sum \limits_{i = 1}^n z_i \\
	&\text{s.t.} & z_i \geq \vert x_i \vert - c & \text{for~} i \in \{1, \dots , n\} \\
	&& z_i \geq 0 & \text{for~} i \in \{1, \dots , n\}
\end{array}
\right\} ,
\end{Problem}
one can use the strong duality theory of LP to obtain an equivalent definition of the CVaR norm \cite[p. 5]{Gotoh2015_Two_Pairs_Of_Polyhedral_Norms_Versus_Lp_Norms}. This alternative definition can be expressed as
\begin{Problem}[eqn:CVaR_Dual_Definition_in_LP]
\left.
\begin{array}{rll}
	\max & \sum \limits_{i = 1}^n \vert x_i \vert q_i \\
	\text{s.t.} & \sum \limits_{i = 1}^n q_i = n (1 - \alpha) & \text{for~} i \in \{1, \dots , n\} \\
	& 0 \leq q_i \leq 1 & \text{for~} i \in \{1, \dots , n\}	
\end{array}
\right\} ,
\end{Problem} ~\\
which is the \emph{continuous knapsack problem}.

The knapsack problem is a standard integer programming problem. Suppose that there is a decision to make on whether to use any of $n$ items, each of which has a benefit $b_i$ and a cost $c_i$ for $i \in \{ 1, 2, \dots , n\}$. The goal is to maximize total benefit with a constraint on the total costs, $C$. The only additional constraint of the knapsack problem is that  the decision variables $q_i$ must be 0 or 1, i.e., an item is used completely or not at all - which makes it an integer programming problem \cite[p. 524]{2004_Winston_OR}. Hence, the knapsack problem can be formulated as
\begin{Problem}[problem:knapsack_original]
\left.
\begin{array}{rll}
	\max \limits_{\mathbf{q}} & \sum \limits_{i = 1}^n b_i q_i \\
	\text{s.t.} & \sum \limits_{i = 1}^n c_i q_i \leq C \\
	& q_i \in \{0 ,1 \} & \text{for~} i \in \{1, \dots , n\}
\end{array}
\right\} .
\end{Problem} ~\\
Changing the integer constraint ($q_i \in \{0 ,1 \}$) to a linear constraint ($0 \leq q_i \leq 1$)  and changing the inequality of the first constraint to an equality transforms the knapsack problem into the \emph{continuous} knapsack problem, which is a linear programming problem. In the continuous knapsack problem it is possible to use fractions of any item, making it easier and more straightforward to solve (see \autoref{def:CVaR_based_on_Dual_CVaR_Problem}). The parameters between \autoref{problem:knapsack_original} and \autoref{eqn:CVaR_Dual_Definition_in_LP} are linked in such a way that $b_i = \vert x_i \vert$, $c_i = 1$ for $i \in \{ 1, \dots , n \}$, and $C = n (1 - \alpha)$.

The optimal objective value of \autoref{eqn:CVaR_Dual_Definition_in_LP} is another equivalent definition of the CVaR norm (since strong duality holds). The optimal objective value of \autoref{eqn:CVaR_Dual_Definition_in_LP} can be found by a greedy algorithm, the result of which is stated below. \footnote{The greedy algorithm (stated in \autoref{def:CVaR_based_on_Dual_CVaR_Problem}) can be interpreted as follows: The knapsack has a limit of $n(1 - \alpha)$ and each vector component $ \vert x_i \vert$ has the same weight. Pack as much of $\vert x_{(1)} \vert$ (the component with highest magnitude) into the knapsack. If the component completely fits into the knapsack (i.e. $q_i = 1$), start packing the component of next highest magnitude. As soon as the knapsack is full, stop. Fractional values for $q_i$ are allowed.} 
\begin{Proposition}[{\cite[p. 6]{Gotoh2015_Two_Pairs_Of_Polyhedral_Norms_Versus_Lp_Norms} CVaR Norm based on dual formulation of CVaR definition}]
	\label{def:CVaR_based_on_Dual_CVaR_Problem}
	Let the absolute values of the components of vector $\mathbf{x} \in \mathbb{R}^n$ be ordered in \emph{descending} order, i.e. $\mid x_{(1)} \mid \, \geq \, \mid x_{(2)} \mid \, \geq \, \dots \, \geq \, \mid x_{(n)} \mid$. Then
	\begin{equation} \label{eqn:CVaR_based_on_Dual_CVaR_Problem}
		\CVaRnormX{\alpha} = \sum \limits_{i = 1}^{ \lfloor n ( 1 - \alpha) \rfloor} \vert x_{(i)} \vert + \left( n (1 - \alpha) - \lfloor n ( 1 - \alpha) \rfloor \right) \vert x_{(\lfloor n ( 1 - \alpha) \rfloor + 1)} \vert .
	\end{equation}
\end{Proposition}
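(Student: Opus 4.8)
The plan is to start from the continuous knapsack characterization of the CVaR norm in \autoref{eqn:CVaR_Dual_Definition_in_LP}, whose optimal value equals $\CVaRnormX{\alpha}$ by strong duality, and to prove that the greedy assignment described above is an optimal solution; evaluating its objective then yields \autoref{eqn:CVaR_based_on_Dual_CVaR_Problem}. Write $C \defeq n(1-\alpha)$ for the knapsack capacity and $k \defeq \lfloor C \rfloor$. With the components (and the matching decision variables) reindexed so that $\vert x_{(1)} \vert \geq \vert x_{(2)} \vert \geq \dots \geq \vert x_{(n)} \vert$, define the candidate solution $\mathbf{q}^*$ by $q^*_{(i)} = 1$ for $i \leq k$, $q^*_{(k+1)} = C - k$, and $q^*_{(i)} = 0$ for $i > k+1$. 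This $\mathbf{q}^*$ is clearly feasible: each coordinate lies in $[0,1]$ because $0 \leq C - k < 1$, and $\sum_i q^*_{(i)} = k + (C-k) = C$.

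The main work is to show $\mathbf{q}^*$ is optimal, and I would do this by an exchange argument against an arbitrary feasible $\mathbf{q}$. Put $\delta_i \defeq q^*_{(i)} - q_{(i)}$, so that $\sum_i \delta_i = C - C = 0$, and set the threshold value $t \defeq \vert x_{(k+1)} \vert$. The key observation is that the sign of $\delta_i$ is controlled by the position $i$: if $\delta_i > 0$ then $q^*_{(i)} > q_{(i)} \geq 0$, forcing $q^*_{(i)} > 0$ and hence $i \leq k+1$, so that $\vert x_{(i)} \vert \geq t$; symmetrically, if $\delta_i < 0$ then $q^*_{(i)} < q_{(i)} \leq 1$, forcing $q^*_{(i)} < 1$ and hence $i \geq k+1$, so that $\vert x_{(i)} \vert \leq t$. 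In either case, and trivially when $\delta_i = 0$, one has $\vert x_{(i)} \vert \, \delta_i \geq t \, \delta_i$. Summing over $i$ gives $\sum_i \vert x_{(i)} \vert \, \delta_i \geq t \sum_i \delta_i = 0$, i.e. $\sum_i \vert x_{(i)} \vert \, q^*_{(i)} \geq \sum_i \vert x_{(i)} \vert \, q_{(i)}$. Since $\mathbf{q}$ was arbitrary, $\mathbf{q}^*$ attains the maximum.

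It then remains to evaluate the objective at $\mathbf{q}^*$, which gives $\sum_{i=1}^{k} \vert x_{(i)} \vert + (C - k) \vert x_{(k+1)} \vert$; substituting $C = n(1-\alpha)$ and $k = \lfloor n(1-\alpha) \rfloor$ reproduces \autoref{eqn:CVaR_based_on_Dual_CVaR_Problem}. I expect the exchange argument to be the only genuinely delicate step, the feasibility check and the objective evaluation being routine. The one edge case to flag is when $C$ is an integer equal to $n$ (namely $\alpha = 0$, where $k = n$): there the constraints force $\mathbf{q}^* = \mathbf{1}$ as the unique feasible point, so optimality is immediate, and the possibly undefined symbol $\vert x_{(n+1)} \vert$ appears only with the coefficient $C - k = 0$; adopting the convention $\vert x_{(n+1)} \vert \defeq 0$ (or simply omitting the term) keeps the formula well defined and consistent with $\CVaRnormX{0} = \sum_{i=1}^n \vert x_{(i)} \vert$ from \autoref{eqn:CVaR_Component_Wise_1}. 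For every $0 < \alpha < 1$ one has $k \leq n-1$, so $\vert x_{(k+1)} \vert$ is genuinely defined and no such convention is needed.
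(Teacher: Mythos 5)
Your proof is correct and follows the same route the paper takes: strong LP duality reduces $\CVaRnormX{\alpha}$ to the continuous knapsack problem \autoref{eqn:CVaR_Dual_Definition_in_LP}, and the greedy solution $\mathbf{q}^*$ evaluated at the objective yields \autoref{eqn:CVaR_based_on_Dual_CVaR_Problem}. The paper itself only cites Gotoh and Uryasev and sketches the greedy rule informally in a footnote, whereas your exchange argument (with the sign of $\delta_i$ tied to the threshold $t = \vert x_{(k+1)} \vert$) supplies the optimality step the paper leaves to the reference, and your treatment of the $\alpha = 0$ edge case is sound.
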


In \autoref{def:CVaR_based_on_Dual_CVaR_Problem}, the absolute values of the components of $\mathbf{x}$ are ordered in descending order, which contrasts the original definition of the CVaR norm in \autoref{def:CVaR_Component_Wise}. This is done so that the equivalence between \autoref{eqn:CVaR_based_on_Dual_CVaR_Problem} and the D-norm given in \autoref{def:D_norm} will become apparent (see \autoref{subsec:sec:CVaR_Norms-Properties-NonScaled}).

%%%%%%%%%%%%%%%%%%%
%  Section .3
%%%%%%%%%%%%%%%%%%%

\section{CVaR Norm Properties} \label{sec:CVaR_Norms-Properties}

Any function $\rho : \mathbb{R}^n \rightarrow \mathbb{R}$ satisfies the following properties is a norm on $\mathbb{R}^n$ \cite[p. 20]{prugovecki1981_Hilbert_Space_Theory}:
\begin{enumerate}[label= \roman*)]
	\item $\rho ( \mathbf{x} ) \geq 0 \, \forall \mathbf{x} \in \mathbb{R}^n$
	\item $\rho ( \lambda \mathbf{x} ) = \, \mid \lambda \mid \rho ( \mathbf{x} ), \forall \mathbf{x} \in \mathbb{R}^n, \forall \lambda \in \mathbb{R}$
	\item $\rho ( \mathbf{x} + \mathbf{y} ) \leq \rho ( \mathbf{x} ) + \rho ( \mathbf{y} ), \forall \mathbf{x}, \mathbf{y} \in \mathbb{R}^n$
	\item $\rho ( \mathbf{x} ) = 0 \Rightarrow \mathbf{x} = 0 $
\end{enumerate} ~

The scaled CVaR norm and CVaR norm both satisfy these properties. The proof is given in \cite{pavlikov2014_CVaR_Norm_and_applications}. Hence, it is justified to call these objects \emph{norms}.

%%%%%%%%%%%%%
%%%%      Subsection x.x.1
\subsection{Properties of the Scaled CVaR Norm}

Pavlikov and Uryasev showed that the scaled CVaR norm $C^S_{\alpha}$ is a non-decreasing function of the parameter $\alpha$.
\begin{Proposition}[{\cite[p. 7]{pavlikov2014_CVaR_Norm_and_applications}}] \label{prop:Scaled_CVaR_Norm_non-decreasing}
	For a vector $\mathbf{x} \in \mathbb{R}^n$ and $0 \leq \alpha_1 \leq \alpha_2 \leq 1$,
	\begin{align*}
		\CVaRnormX[S]{\alpha_1} &\leq \CVaRnormX[S]{\alpha_2}.
	\end{align*}
\end{Proposition}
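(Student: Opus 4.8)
The plan is to reduce the claim to a monotonicity property of Conditional Value-at-Risk through the alternative characterization already established in \autoref{prop:Scaled_CVaR_based_on_Definition}. Associating with $\mathbf{x}$ the discrete random variable $X$ taking each value $\vert x_i \vert$ with probability $\frac{1}{n}$, that proposition gives $\CVaRnormX[S]{\alpha} = \text{CVaR}_{\alpha}(X)$ for $0 \le \alpha < 1$, together with $\CVaRnormX[S]{1} = \max_i \vert x_i \vert = \text{CVaR}_{1}(X)$. Hence it suffices to show that $\alpha \mapsto \text{CVaR}_{\alpha}(X)$ is non-decreasing on $[0,1]$ for a fixed loss variable $X$, and then specialise to this $X$.

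First I would invoke Acerbi's Integral Formula (\autoref{prop:Acerbis_integral_formula}) to write, for $\alpha < 1$,
\[
	\text{CVaR}_{\alpha}(X) = \frac{1}{1-\alpha} \int \limits_{\alpha}^{1} \text{VaR}_{\beta}(X) \, d\beta,
\]
which exhibits $\text{CVaR}_{\alpha}$ as the average of $\beta \mapsto \text{VaR}_{\beta}(X)$ over the tail interval $[\alpha,1]$. The crucial elementary input is that $\text{VaR}_{\beta}(X)$ is itself non-decreasing in $\beta$, which is immediate from \autoref{eqn:VaR}: enlarging $\beta$ tightens the constraint $P(X \le c) \ge \beta$, so the threshold $\min\{c : P(X \le c) \ge \beta\}$ can only grow.

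Then I would establish the self-contained lemma that the tail average of any bounded non-decreasing function $\phi$ is non-decreasing in its lower limit. Given $\alpha_1 \le \alpha_2 < 1$ and writing $A(\alpha) = \frac{1}{1-\alpha}\int_{\alpha}^{1}\phi$, I split $\int_{\alpha_1}^{1} = \int_{\alpha_1}^{\alpha_2} + \int_{\alpha_2}^{1}$. Since every $\phi(\beta)$ with $\beta \in [\alpha_1,\alpha_2]$ is bounded above by $\phi(\beta')$ for all $\beta' \in [\alpha_2,1]$, it is bounded by the average $A(\alpha_2)$, giving $\int_{\alpha_1}^{\alpha_2}\phi \le (\alpha_2-\alpha_1)A(\alpha_2)$. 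Substituting $\int_{\alpha_2}^{1}\phi = (1-\alpha_2)A(\alpha_2)$ yields
\[
	A(\alpha_1) \le \frac{(\alpha_2-\alpha_1)A(\alpha_2) + (1-\alpha_2)A(\alpha_2)}{1-\alpha_1} = A(\alpha_2),
\]
which with $\phi = \text{VaR}_{\cdot}(X)$ gives $\text{CVaR}_{\alpha_1}(X) \le \text{CVaR}_{\alpha_2}(X)$, as required.

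The main obstacles are two loose ends. First, Acerbi's formula was proved in the excerpt only for continuous $X$, whereas here $X$ is discrete; I would therefore rely on the general statement of \autoref{prop:Acerbis_integral_formula}, or, to stay fully self-contained, bypass Acerbi and argue monotonicity directly from the Convex Combination Formula (\autoref{prop:CVaR_Convex_Combination_Formula}) within each of the three regimes of \autoref{def:Scaled_CVaR_Component_Wise}. Second, the endpoint $\alpha_2 = 1$ must be handled separately, but it is immediate since $\CVaRnormX[S]{\alpha_1}$ is an average of magnitudes each at most $\max_i \vert x_i \vert = \CVaRnormX[S]{1}$. I expect the routine-but-delicate part to be making the tail-average lemma rigorous for the step-function VaR of a discrete variable, where a derivative argument is unavailable; the integral-splitting estimate above is precisely what avoids that difficulty.
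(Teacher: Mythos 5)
Your proof is correct, but note first that the thesis itself contains no proof of this proposition at all: it is imported verbatim from Pavlikov and Uryasev with a citation, so your argument has to stand on its own, and it does. The reduction $\CVaRnormX[S]{\alpha} = \text{CVaR}_{\alpha}(X)$ for the uniform discrete variable $X$ is legitimately available inside the thesis, since that identity is precisely what the alternative-characterization proof in \autoref{subsec:CVaR_Norms-Scaled-Alternative} establishes. Your tail-average lemma is sound as written: the splitting $\int_{\alpha_1}^{1} = \int_{\alpha_1}^{\alpha_2} + \int_{\alpha_2}^{1}$ combined with $\phi(\beta) \leq A(\alpha_2)$ on $[\alpha_1,\alpha_2]$ uses only that $\phi$ is monotone (hence Riemann integrable), so the step-function VaR of a discrete variable causes no difficulty, and you correctly flag the one genuine loose end — the thesis proves Acerbi's formula only for continuous $X$, so the discrete instance must lean on the general statement of \autoref{prop:Acerbis_integral_formula} (which is stated without a continuity hypothesis and is valid by the original Acerbi--Tasche proof). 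That said, there is a considerably shorter route already contained in \autoref{prop:Scaled_CVaR_based_on_Definition}: for every fixed $c$,
\begin{equation*}
	c + \frac{1}{n(1-\alpha_1)} \sum \limits_{i=1}^{n} \left( \vert x_i \vert - c \right)^+ \; \leq \; c + \frac{1}{n(1-\alpha_2)} \sum \limits_{i=1}^{n} \left( \vert x_i \vert - c \right)^+ ,
\end{equation*}
because the sum is non-negative and the coefficient $\frac{1}{n(1-\alpha)}$ is increasing in $\alpha$; a pointwise-ordered family of objectives has ordered minima, giving $\CVaRnormX[S]{\alpha_1} \leq \CVaRnormX[S]{\alpha_2}$ for $\alpha_1 \leq \alpha_2 < 1$ in one line, with the endpoint $\alpha_2 = 1$ handled exactly by your own observation that every value of the norm is bounded by $\max_i \vert x_i \vert$. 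That route avoids Acerbi's formula entirely and also the three-regime case analysis of your convex-combination fallback. What your approach buys in exchange is the structural insight that CVaR is the tail average of a non-decreasing VaR, which makes the monotonicity conceptually transparent; the minimization route buys brevity and complete self-containment within results the thesis has already proven.
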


Another property, which to the best knowledge of the author has not been published or proven before, is that the scaled CVaR norm is piecewise convex in $\al{}$ within each interval $[\al{j}, \al{j+1}]$.
\begin{Proposition} \label{prop:Scaled_CVaR_Norm_piecewise_convex}
	For any vector $\xinRn$, and $\alpha \in [\frac{j}{n}, \frac{j+1}{n}] \,, j = 0, 1, \dots n-1$ the scaled CVaR norm $\CVaRnormX[S]{\alpha}$ is convex in $\alpha$, i.e.,
	\begin{align*}
		\CVaRnormX[S]{\lambda \al{1} + (1 - \lambda) \al{2}} \leq& \lambda \CVaRnormX[S]{\al{1}} + (1 - \lambda) \CVaRnormX[S]{\al{2}}
	\end{align*}
for all $\al{1}, \al{2} \in \left[ \frac{j}{n}, \frac{j+1}{n} \right], j = 0, 1, \dots, n-1 $ and $\lambda \in [0,1]$.
\end{Proposition}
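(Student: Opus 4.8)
The plan is to fix the index $j$ and argue entirely within the single interval $[\al{j}, \al{j+1}] = [\tfrac{j}{n}, \tfrac{j+1}{n}]$, handling the top interval separately. On $[\tfrac{n-1}{n}, 1]$, \autoref{eqn:Scaled_CVaR_Component_Wise_3} gives $\CVaRnormX[S]{\alpha} = \max_i |x_i|$, a constant, so convexity there is trivial; only the generic range $j \in \{0, \dots, n-2\}$ requires work. For such a fixed $j$, the quantities $A \defeq \CVaRnormX[S]{\al{j}}$ and $B \defeq \CVaRnormX[S]{\al{j+1}}$ are constants determined by $\mathbf{x}$ and $j$ alone, and by \autoref{eqn:Scaled_CVaR_Component_Wise_2} I would write the norm, for every $\alpha$ in the interval, as $\CVaRnormX[S]{\alpha} = \mu(\alpha) A + (1 - \mu(\alpha)) B = B + \mu(\alpha)(A - B)$, where $\mu(\alpha) = \frac{(\al{j+1} - \alpha)(1 - \al{j})}{(\al{j+1} - \al{j})(1 - \alpha)}$ is the sole $\alpha$-dependent factor. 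One checks $\mu(\al{j}) = 1$ and $\mu(\al{j+1}) = 0$, so this expression matches the endpoint values and extends the formula continuously to the closed interval, which is what convexity on the interval requires.

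Two ingredients then finish the argument. First I would fix the sign of $A - B$: since $\al{j} \le \al{j+1}$, \autoref{prop:Scaled_CVaR_Norm_non-decreasing} gives $A \le B$, hence $A - B \le 0$. Second, I would show $\mu$ is concave. Substituting $1 - \al{j} = \tfrac{n-j}{n}$ and $\al{j+1} - \al{j} = \tfrac{1}{n}$ collapses the weight to $\mu(\alpha) = (n-j)\frac{\al{j+1} - \alpha}{1 - \alpha} = (n-j)\left(1 - (1 - \al{j+1})\frac{1}{1 - \alpha}\right)$. Because $\frac{1}{1-\alpha}$ is convex on $(-\infty, 1)$ and its coefficient $-(n-j)(1 - \al{j+1})$ is non-positive (as $\al{j+1} \le 1$ and $j \le n-1$), $\mu$ is a constant minus a non-negative multiple of a convex function, hence concave. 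Combining the two ingredients, $\CVaRnormX[S]{\alpha} = B + (A - B)\mu(\alpha)$ is a non-positive multiple of a concave function plus a constant, therefore convex in $\alpha$ on $[\al{j}, \al{j+1}]$, which is the claim.

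Equivalently, and perhaps more transparently for the write-up, I would simply differentiate twice, obtaining $\frac{d^2}{d\alpha^2}\CVaRnormX[S]{\alpha} = -2(A - B)(n-j)(1 - \al{j+1})(1 - \alpha)^{-3}$, which is non-negative precisely because $A - B \le 0$ while the remaining factors $(n-j)$, $(1 - \al{j+1})$, and $(1-\alpha)^{-3}$ are all non-negative on the interval. The hard part is not computational but conceptual: the sign of $A - B$ is exactly what decides convexity versus concavity of the interpolant, so the crux of the proof is recognizing that the monotonicity result \autoref{prop:Scaled_CVaR_Norm_non-decreasing} must be invoked to pin that sign down; once that is in hand, the concavity of $\mu$ (or the second-derivative computation) is routine.
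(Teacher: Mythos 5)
Your proof is correct, and it reaches the result by a genuinely cleaner route than the paper's, although the two arguments share the same skeleton: both treat the top interval $[\tfrac{n-1}{n},1]$ as trivially constant, and both pivot on exactly two facts — the sign of $\CVaRnormX[S]{\al{j+1}} - \CVaRnormX[S]{\al{j}}$, pinned down via \autoref{prop:Scaled_CVaR_Norm_non-decreasing}, and a concavity property of the interpolation weight $\mu(\alpha)$. Where you differ is in how that concavity is established. The paper never names it as concavity: it takes two points $b,c$ in the interval and their combination $t = \lambda b + (1-\lambda)c$, writes all three norm values as $\mu$-weighted combinations of the endpoint values with weights $\mu_0, \mu_1, \mu_2$, reduces the claim to $\left(\mu_0 + \lambda\mu_2 - \lambda\mu_1 - \mu_2\right)\left(\CVaRnormX[S]{d} - \CVaRnormX[S]{a}\right) \geq 0$ — the first factor being precisely the statement $\mu(t) \geq \lambda\mu(b) + (1-\lambda)\mu(c)$ — and then verifies that inequality by expanding every bracket until it collapses to $\lambda(1-\lambda)(1-d)(c-b)^2 \geq 0$. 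You instead exploit the structure of $\mu$ directly: after substituting $\al{j+1}-\al{j} = \tfrac{1}{n}$ and $1-\al{j} = \tfrac{n-j}{n}$ you obtain $\mu(\alpha) = (n-j)\bigl(1 - (1-\al{j+1})\tfrac{1}{1-\alpha}\bigr)$, a constant minus a non-negative multiple of the convex function $\tfrac{1}{1-\alpha}$, hence concave — or equivalently your one-line second-derivative computation, whose sign analysis is right. This composition argument replaces the paper's half-page of bracket expansion, and your explicit check that $\mu(\al{j})=1$ and $\mu(\al{j+1})=0$ — so the interpolation formula extends continuously to the closed interval, where \autoref{def:Scaled_CVaR_Component_Wise} only defines it on the open one — tidies up a point the paper glosses over. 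What the paper's brute-force route buys is that it stays entirely within elementary algebra and needs no differentiability; your route is shorter and makes visible \emph{why} the result holds: a concave fractional-linear weight multiplied by the non-positive coefficient $A-B$, whose sign is exactly where the monotonicity proposition must enter, just as you identified.
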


\begin{proof}
For $\al{} \in (\frac{n - 1}{n}, 1]$ the proof of \autoref{prop:Scaled_CVaR_Norm_piecewise_convex} is obvious, as $\CVaRnormX[S]{\alpha}$ is constant for these values of $\alpha$.

To show that $\CVaRnormX[S]{\alpha}$ is piecewise convex in $\alpha$ within each interval $[\frac{j}{n}, \frac{j+1}{n}]\,, j = 0, 1, \dots n-2$, \autoref{def:Scaled_CVaR_Component_Wise} can be used, together with the following notation:\\
Suppose that $\al{1}, \al{2} \in [\al{j}, \al{j+1}]$, $t = \lambda \al{1} + (1 - \lambda) \al{2}\,, \lambda \in [0,1]$, and $\al{1}, \al{2}, \al{j}$ and $\al{j+1}$ are labelled $a,b,c,d$ in such a way that 
\begin{equation} \notag
	0 \leq a = \al{j} \leq b \leq t \leq c \leq d = \al{j+1} \leq \frac{n-1}{n} .
\end{equation}
Then $\CVaRnormX[S]{\lambda \al{1} + (1 - \lambda) \al{2}} = \CVaRnormX[S]{t}\,,  \CVaRnormX[S]{\al{1}}$ and $ \CVaRnormX[S]{\al{2}}$ can be written as
\begin{align}
	\CVaRnormX[S]{t} =& \mu_0 \CVaRnormX[S]{a} + \left( 1 -\mu_0 \right) \CVaRnormX[S]{d} & \text{with~} \mu_0 =& \frac{(d - t) (1 - a)}{(d-a) (1 - t)} , \label{eqn:proof1_mu0} \\
	\notag \\
	\CVaRnormX[S]{\al{1}} =& \mu_1 \CVaRnormX[S]{a} + \left( 1 - \mu_1 \right) \CVaRnormX[S]{d} & \text{with~} \mu_1 =& \frac{(d - b) (1 - a)}{(d-a) (1 - b)} \text{, and} \label{eqn:proof1_mu1}\\
	\notag \\
	\CVaRnormX[S]{\al{2}} =& \mu_2 \CVaRnormX[S]{a} + \left( 1 - \mu_2 \right) \CVaRnormX[S]{d} &  \text{with~} \mu_2 =&  \frac{(d - c) (1 - a)}{(d - a) (1 - c)} . \label{eqn:proof1_mu2}
\end{align}
Hence, it needs to be shown that $\CVaRnormX[S]{t} \leq \lambda \CVaRnormX[S]{\al{1}} + (1 - \lambda) \CVaRnormX[S]{\al{2}}$, i.e.
\begin{align*}
	 \mu_0 \CVaRnormX[S]{a} + \left( 1 -\mu_0 \right) \CVaRnormX[S]{d} \leq & \lambda \left[ \mu_1 \CVaRnormX[S]{a} + \left( 1 - \mu_1 \right) \CVaRnormX[S]{d} \right] \\ 
	 & + (1-\lambda) \left[ \mu_2 \CVaRnormX[S]{a} + \left( 1 - \mu_2 \right) \CVaRnormX[S]{d} \right] .
\end{align*}
Rearranging $\CVaRnormX[S]{a}$ and $\CVaRnormX[S]{d}$ leaves to prove that
\begin{align*}
	0 \leq& \left( \lambda \mu_1 + (1 - \lambda) \mu_2 - \mu_0 \right) \CVaRnormX[S]{a} \\
	&+ \left( \lambda (1 - \mu_1) + (1 - \lambda) (1 - \mu_2) - (1 - \mu_0) \right) \CVaRnormX[S]{d} \\
	\Longleftrightarrow 0 \leq & \left( \mu_2 + \lambda \mu_1 - \lambda \mu_2 - \mu_0 \right) \CVaRnormX[S]{a} \\
	& + \left(\mu_0 + \lambda \mu_2 - \lambda \mu_1 - \mu_2 \right) \CVaRnormX[S]{d} \\
	\Longleftrightarrow 0 \leq & \left(\mu_0 + \lambda \mu_2 - \lambda \mu_1 - \mu_2 \right) \left( \CVaRnormX[S]{d} - \CVaRnormX[S]{a} \right) .
\end{align*}

By \autoref{prop:Scaled_CVaR_Norm_non-decreasing}, since $d \geq a \Rightarrow \CVaRnormX[S]{d} - \CVaRnormX[S]{a} \geq 0$. Hence, to complete the proof, it must be shown that $\mu_0 + \lambda \mu_2 - \lambda \mu_1 - \mu_2 \geq 0$ for all $0 \leq a = \al{j} \leq b \leq t \leq c \leq d = \al{j+1} \leq \frac{n-1}{n}$ and $\lambda \in [0,1]$. Using expressions \ref{eqn:proof1_mu0}, \ref{eqn:proof1_mu1} and \ref{eqn:proof1_mu2} and eliminating the common $\frac{1-a}{d-a}$ term yields:
\begin{equation}
	0 \leq \mu_0 + \lambda \mu_2 - \lambda \mu_1 - \mu_2 = \frac{d - t}{1 - t} + \lambda \frac{d - c}{1 - c} -\lambda \frac{d - b}{1 - b} - \frac{d - c}{1 - c} \Leftrightarrow \notag
\end{equation}
\begin{equation} \label{eqn:proof1_0_leq_expression}
	\begin{array}{rl}	
	0 \leq &(d - t)(1 - b)(1 - c) + \lambda (d - c) (1 - b) (1 - t) \\
	&- \lambda (d - b)(1 - c)(1 - t) - (d - c)(1 - b)(1 - t) .
	\end{array}
\end{equation}

Substituting $t = \lambda b + (1 - \lambda) c$ into \autoref{eqn:proof1_0_leq_expression}, expanding all brackets and summarizing the terms gives
\begin{equation} \notag
	\begin{array}{rl}	
	0 \leq & \lambda \left( b^2 - b^2 d + c^2 - c^2 d + 2bcd - 2bc \right) \\
	& + \lambda^2 \left( b^2 d - b^2 + c^2 d - c^2 + 2bc - 2bcd \right) ,
	\end{array}
\end{equation}
which simplifies to
\begin{equation} \label{eqn:proof1_final_expression}
	0 \leq \lambda \left( 1 - \lambda \right) \left( 1 - d \right) \left( c - b \right)^2 .
\end{equation}
\autoref{eqn:proof1_final_expression} holds for all  $0 \leq a = \al{j} \leq b \leq t \leq c \leq d = \al{j+1} \leq \frac{n-1}{n}$ and $\lambda \in [0,1]$, which completes the proof.
\end{proof}

To illustrate \autoref{prop:Scaled_CVaR_Norm_piecewise_convex}, $\CVaRnormX[S]{\alpha}$ is drawn against $\alpha$ for four different $\mathbf{x}$ in \autoref{fig:piecewise_convexity_example}. Depending on the components of $\mathbf{x}$, the convexity is more or less pronounced in the graphs.

\begin{figure}[H]
	\centering
	\includegraphics[width = 0.95 \textwidth]{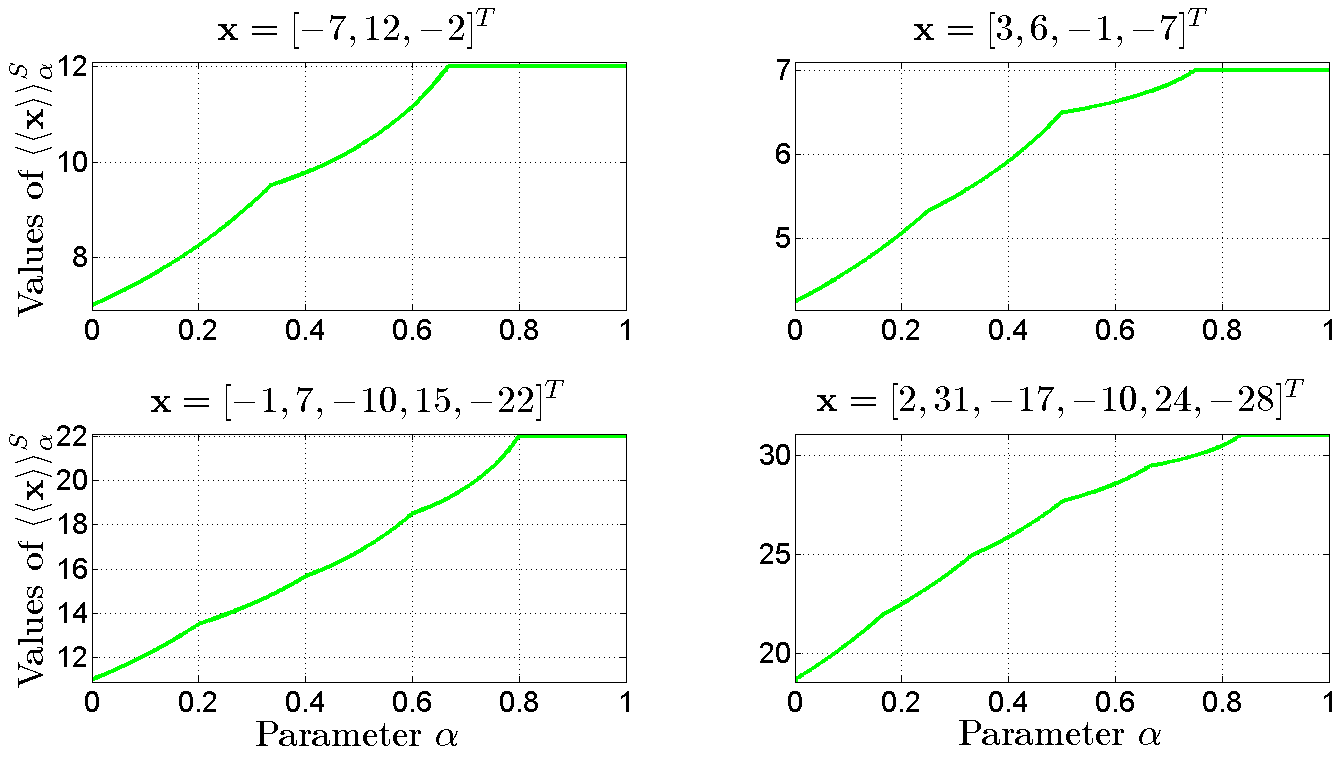}
	\caption{Scaled CVaR norm $C^S_{\alpha}$ against $\alpha$ for different $\mathbf{x}$.}
	\label{fig:piecewise_convexity_example}
\end{figure}

To show that $\CVaRnormX[S]{\alpha}$ is not convex over the whole interval $[0,1]$ consider $\mathbf{x} = [-7, 12, -2]$, whose scaled CVaR norm is shown in the top left graph of \autoref{fig:piecewise_convexity_example}. Taking $\al{1} = 0.2$, $\al{2} = 0.4$, and $\lambda = \frac{1}{3}$ gives $\al{t} = \lambda \al{1} + (1-\lambda) \al{2} = \frac{1}{3}$ and
\begin{align*}
	\CVaRnormX[S]{0.2} =& \frac{33}{4} = 8.25 \,,\\
	\CVaRnormX[S]{0.4} =& \frac{88}{9} \approx 9.78 \,, \text{~and} \\
	\CVaRnormX[S]{\frac{1}{3}} =& \frac{19}{2} = 9.5 \,.
\end{align*}
Hence, $\CVaRnormX[S]{\al{t}} = \CVaRnormX[S]{\frac{1}{3}} = 9.5 \not \leq \lambda \CVaRnormX[S]{0.2} + (1 - \lambda) \CVaRnormX[S]{0.4} = \frac{1}{3} \frac{33}{4} + \frac{2}{3} \frac{88}{9} \approx 9.27$. Therefore, $\CVaRnormX[S]{\alpha}$ is only piecewise convex, but not over the whole interval $[0,1]$. This is also apparent from the plots themselves.

%%%%%%%%%%%%%
%%%%      Subsection x.x.2
\subsection{Properties of the CVaR Norm} \label{subsec:sec:CVaR_Norms-Properties-NonScaled}

While the scaled CVaR norm is a non-decreasing function of the parameter $\alpha$ (see \autoref{prop:Scaled_CVaR_Norm_non-decreasing}), the CVaR norm shows different properties:
\begin{Proposition}[{\cite[p. 15]{pavlikov2014_CVaR_Norm_and_applications}}] \label{prop:CVaR_Norm_non-increasing_concave_pwl}
	For $\mathbf{x} \in \mathbb{R}^n$, the CVaR norm $\CVaRnormX{\alpha}$ is a non-increasing, concave, piecewise-linear function of the parameter $\alpha$. 
\end{Proposition}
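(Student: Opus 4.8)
The plan is to read off all three claimed properties directly from the component-wise description in \autoref{def:CVaR_Component_Wise}, since that definition already exhibits $\CVaRnormX{\alpha}$ as a continuous function that is affine on each of the sub-intervals $[\al{j}, \al{j+1}]$ with breakpoints at $\al{j} = j/n$. Thus piecewise-linearity is immediate, and it only remains to control the behaviour of the breakpoint values and of the slopes on consecutive pieces.

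First I would record the breakpoint values. With the absolute values of the components arranged in ascending order $|x_{(1)}| \leq \cdots \leq |x_{(n)}|$, \autoref{eqn:CVaR_Component_Wise_1} gives $\CVaRnormX{\al{j}} = \sum_{i=j+1}^n |x_{(i)}|$, so passing from $\al{j}$ to $\al{j+1}$ subtracts exactly the single non-negative term $|x_{(j+1)}|$. Since every interior piece is the affine interpolation of its two endpoint values (\autoref{eqn:CVaR_Component_Wise_2}) and the interval width is the constant $1/n$, the slope of $\CVaRnormX{\alpha}$ on $[\al{j}, \al{j+1}]$ is
\[
	\frac{\CVaRnormX{\al{j+1}} - \CVaRnormX{\al{j}}}{\al{j+1} - \al{j}} = \frac{-|x_{(j+1)}|}{1/n} = -n |x_{(j+1)}| , \qquad j = 0, \dots, n-2 .
\]
On the final interval $(\tfrac{n-1}{n}, 1)$, formula \autoref{eqn:CVaR_Component_Wise_3} reads $\CVaRnormX{\alpha} = n(1-\alpha)\max_i|x_i| = n(1-\alpha)|x_{(n)}|$, which is affine with slope $-n|x_{(n)}|$ and takes the value $|x_{(n)}|$ at $\alpha = \tfrac{n-1}{n}$, so the function stays continuous there.

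From these slopes both remaining properties follow at once. Every slope $-n|x_{(i)}|$ is $\leq 0$ because $|x_{(i)}| \geq 0$, which yields that $\CVaRnormX{\alpha}$ is non-increasing. Moreover, as $\alpha$ increases the relevant index $i = j+1$ increases, and by the ascending ordering $|x_{(j+1)}|$ is non-decreasing in $j$; hence the slopes $-n|x_{(j+1)}|$ are non-increasing in $j$, and the last slope $-n|x_{(n)}|$ is no larger than the preceding one $-n|x_{(n-1)}|$. A continuous piecewise-linear function whose slopes are non-increasing from left to right is concave, which establishes the concavity claim.

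I do not expect a genuine obstacle: the whole argument is bookkeeping on the definition. The only point that requires a little care is continuity and consistency at the two ``boundary'' breakpoints — at $\al{j}$ where two interpolation pieces meet, and especially at $\alpha = \tfrac{n-1}{n}$, where the interpolation formula \autoref{eqn:CVaR_Component_Wise_2} must be checked to dovetail with the separate formula \autoref{eqn:CVaR_Component_Wise_3} — so that the slopes may legitimately be compared across all pieces. As an alternative route, one could instead invoke the continuous-knapsack characterization \autoref{eqn:CVaR_Dual_Definition_in_LP}: there $\CVaRnormX{\alpha}$ is the optimal value of a maximization LP whose only $\alpha$-dependence sits in the capacity $C = n(1-\alpha)$, and the optimal value of such an LP is a concave, non-decreasing, piecewise-linear function of $C$; composing with the affine decreasing map $\alpha \mapsto n(1-\alpha)$ then delivers all three properties simultaneously.
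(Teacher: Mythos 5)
Your proof is correct. One caveat before comparing: the thesis never proves this proposition itself --- it is imported verbatim from Pavlikov and Uryasev with a page citation and no argument is reproduced --- so your write-up fills a gap rather than paralleling an in-paper proof. On its own merits, your slope bookkeeping is complete: from \autoref{eqn:CVaR_Component_Wise_1} the breakpoint values satisfy $\CVaRnormX{\al{j+1}} - \CVaRnormX{\al{j}} = -\vert x_{(j+1)} \vert$, the interpolation \autoref{eqn:CVaR_Component_Wise_2} makes each piece affine with slope $-n \vert x_{(j+1)} \vert$, the last clause \autoref{eqn:CVaR_Component_Wise_3} is affine with slope $-n \vert x_{(n)} \vert$ and dovetails continuously at $\alpha = \frac{n-1}{n}$ (where it takes the value $\vert x_{(n)} \vert$), and the ascending ordering makes the slope sequence non-positive and non-increasing across \emph{all} pieces, which for a continuous piecewise-linear function yields monotonicity and concavity. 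The only cosmetic point is the right endpoint: \autoref{def:CVaR_Component_Wise} states the final clause for $\frac{n-1}{n} < \alpha < 1$ only, so at $\alpha = 1$ one should say explicitly that the continuous extension (value $0$) is meant; with that convention your comparison covers $[0,1]$.

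Your alternative route is also sound and is in fact closer in spirit to the optimization-based development surrounding the cited source. By \autoref{prop:CVaR_based_on_Definition}, for each fixed $c \geq 0$ the map $\alpha \mapsto n(1-\alpha)c + \sum_{i=1}^{n} \left( \vert x_i \vert - c \right)^+$ is affine in $\alpha$ with slope $-nc \leq 0$, and $\CVaRnormX{\alpha}$ is the pointwise minimum of this family; a pointwise minimum of affine functions is concave, minimizers may be taken in the finite set $\{ \vert x_{(1)} \vert, \dots, \vert x_{(n)} \vert \}$ (as recorded in \autoref{subsec:CVaR_Norms-Scaled-Alternative}), which gives piecewise linearity, and non-increase follows since some minimizer is non-negative. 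Equivalently, the greedy value in \autoref{def:CVaR_based_on_Dual_CVaR_Problem} is visibly concave, non-decreasing and piecewise linear in the capacity $\kappa = n(1-\alpha)$ of \autoref{eqn:CVaR_Dual_Definition_in_LP}, and composing with the decreasing affine map $\alpha \mapsto n(1-\alpha)$ delivers all three properties at once. The trade-off is the expected one: your primary argument buys explicit kink locations and slopes $-n\vert x_{(j+1)} \vert$, while the LP/min-of-affine argument buys brevity and robustness (it would survive generalizations, such as unequal probability weights in the underlying CVaR, where the clean equispaced-breakpoint bookkeeping breaks down).
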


Furthermore, the CVaR norm $C_{\alpha}$ coincides with the D-norm, which is defined below.
\begin{Definition}[{\cite[p. 513]{bertsimas2004_Robust_Optimization_under_general_norms} D-Norm}] \label{def:D_norm}
	For $\xinRn$ and parameter $\kappa \in [1,n]$, the D-norm $\DnormX{\kappa}$ is defined as
	\begin{align*}
		\DnormX{\kappa} &\defeq \max_{S,t} \left( \sum \limits_{i \in S} \vert x_i \vert + (\kappa - \lfloor \kappa \rfloor) \vert x_t \vert \right),
	\end{align*}
	where $N = \{1, \dots, n\} \,, S \subseteq N \,, \vert S \vert \leq \lfloor \kappa \rfloor \,,$ and $t \in S \setminus N$.
\end{Definition}

The D-norm is used in robust optimization as an alternative to the $L_2$ norm for describing an uncertainty set using a norm. The D-norm has advantages such as the guarantee of feasibility independent of uncertainty distributions and a flexibility in trade off between robustness and performance \cite[p. 40]{2014Yang_DRO}. A further discussion of the D-norm (beyond the coincidence with the $C_{\alpha}$ norm) or robust optimization in general is beyond the scope of this thesis. Further discussions on the D-norm are given in \cite{bertsimas2004_Robust_Optimization_under_general_norms} and \cite{2014Yang_DRO}, while robust optimization is discussed in \cite[p. 292ff.]{2007_Cornuejols_OMF} or \cite{1998Ben-Tal_Robust_Convex_Optimization}.\footnote{This is only a selection of available literature on these topics.}

\begin{Proposition}[{\cite[p. 16]{pavlikov2014_CVaR_Norm_and_applications}}] \label{prop:CVaR_Norm_coincides_with_D_norm}
	For $\xinRn$, the CVaR norm $\CVaRnormX{\alpha}$ with parameter $\alpha \in [0, \frac{n-1}{n}]$ coincides with the D-norm $\DnormX{\kappa}$ with parameter $\kappa = n(1 - \alpha)$, i.e. $\CVaRnormX{\alpha} = \DnormX{\kappa}$.
\end{Proposition}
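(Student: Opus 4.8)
The plan is to reduce both norms to the same continuous-knapsack maximisation, exploiting the characterisations already established in \autoref{eqn:CVaR_Dual_Definition_in_LP} and \autoref{def:CVaR_based_on_Dual_CVaR_Problem}. First I would fix $\kappa = n(1-\alpha)$ and note that $\alpha \in [0, \frac{n-1}{n}]$ corresponds exactly to $\kappa \in [1,n]$, so the admissible D-norm parameter range matches. Writing $m = \lfloor \kappa \rfloor$ and $f = \kappa - \lfloor \kappa \rfloor \in [0,1)$, and ordering the magnitudes in descending order as in \autoref{def:CVaR_based_on_Dual_CVaR_Problem}, the goal becomes showing that the D-norm maximum equals $\sum_{i=1}^{m} \vert x_{(i)} \vert + f\,\vert x_{(m+1)} \vert$, which is precisely $\CVaRnormX{\alpha}$.

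The core step is to identify each D-norm feasible pair $(S,t)$ (reading the condition as $t \in N \setminus S$, the evidently intended meaning, since $S \setminus N = \emptyset$) with a feasible point of \autoref{eqn:CVaR_Dual_Definition_in_LP}. To the pair $(S,t)$ I assign $q_i = 1$ for $i \in S$, $q_t = f$, and $q_i = 0$ otherwise; this satisfies $0 \le q_i \le 1$ and $\sum_i q_i = \vert S \vert + f \le m + f = \kappa$, and its objective $\sum_i \vert x_i \vert q_i$ is exactly the D-norm summand. Since every $\vert x_i \vert \ge 0$, the knapsack maximum over $\sum_i q_i \le \kappa$ coincides with the maximum over the equality constraint $\sum_i q_i = \kappa$, so every D-norm value is bounded above by the knapsack optimum, which by \autoref{def:CVaR_based_on_Dual_CVaR_Problem} equals $\CVaRnormX{\alpha}$. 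For the reverse inequality, the greedy optimiser of the knapsack, namely $q_i = 1$ on the $m$ largest magnitudes and $q = f$ on the $(m+1)$-th, corresponds to the feasible D-norm pair $S = \{(1), \dots, (m)\}$, $t = (m+1)$, whose objective attains $\CVaRnormX{\alpha}$. Combining the two inequalities gives $\DnormX{\kappa} = \CVaRnormX{\alpha}$.

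As an alternative I could evaluate the D-norm maximum by a direct exchange argument: for fixed $S$ the inner maximum over $t$ selects the largest remaining magnitude, and because the weight $1$ on $S$ dominates the weight $f < 1$ on $t$, the overall optimum assigns the top $m$ magnitudes to $S$ and the $(m+1)$-th to $t$, reproducing the sorted formula directly. I would keep the knapsack route as the primary argument, since it reuses the strong-duality identity already derived in the text rather than re-deriving a rearrangement inequality.

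The main obstacle is the boundary behaviour rather than the central idea. When $\kappa$ is an integer, and in particular when $\kappa = n$ (that is $\alpha = 0$), the fractional weight $f$ vanishes, the index $\vert x_{(m+1)} \vert$ can fail to exist, and there is no room for a distinct $t$ once $m = n$; these cases must be dispatched separately, observing that the $t$-term is simply absent and both norms collapse to $\sum_{i=1}^{m} \vert x_{(i)} \vert$. I would also state explicitly the $t \in N \setminus S$ reading of \autoref{def:D_norm}, and verify that replacing the knapsack equality constraint by an inequality is harmless precisely because all magnitudes are nonnegative.
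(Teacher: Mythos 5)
Your proposal is correct and takes essentially the same route as the paper, whose entire justification is the remark that the D-norm is an equivalent formulation of the greedy/dual knapsack characterization in \autoref{def:CVaR_based_on_Dual_CVaR_Problem}. You simply make explicit what the paper leaves implicit — the $(S,t) \leftrightarrow \mathbf{q}$ correspondence with \autoref{eqn:CVaR_Dual_Definition_in_LP}, the harmlessness of relaxing the equality constraint to an inequality given nonnegative coefficients, the corrected reading $t \in N \setminus S$ of \autoref{def:D_norm}, and the integer-$\kappa$ boundary cases — and these details are all sound.
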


This is because the D-norm is an equivalent formulation to the CVaR norm given in \autoref{def:CVaR_based_on_Dual_CVaR_Problem}. Note that \autoref{prop:CVaR_Norm_coincides_with_D_norm} does \emph{not} hold for $\frac{n-1}{n} < \alpha \leq 1$, as for $\frac{n-1}{n} < \alpha \leq 1 \Rightarrow \kappa = n (1 - \alpha) < 1 \Rightarrow \kappa \not \in [1,n] $, so that the D-norm is not defined in this case \cite[p. 16]{pavlikov2014_CVaR_Norm_and_applications}.

Comparisons to $L_p$ norms are made more extensively in \autoref{chapter:Comparison_to_other_vector_norms}.

%%%%%%%%%%%%%%%%%%%
%  Section .4
%%%%%%%%%%%%%%%%%%%

\section{Computational Efficiency} \label{sec:CVaR_Norms-Computational_Efficiency}

This section investigates how computationally efficient different algorithms are for calculating $\CVaRnormX[S]{\alpha}$ and $\CVaRnormX{\alpha}$. The definitions of $\CVaRnormX[S]{\alpha}$ and $\CVaRnormX{\alpha}$ in \autoref{def:Scaled_CVaR_Component_Wise} and \autoref{def:CVaR_Component_Wise}, respectively, naturally lead to simple algorithms for computing the norms. The algorithms that were implemented in MATLAB are printed in \autoref{code:Scaled_CVaR_Component_Wise} for $\CVaRnormX[S]{\alpha}$ and \autoref{code:CVaR_Component_Wise} for $\CVaRnormX{\alpha}$. Informally, they can be described as follows:
\begin{enumerate}
	\item Take the absolute values of the entries of $\xinRn$ and order them in ascending order.
	\item If $\al{} > \frac{n-1}{n}$, use \autoref{eqn:Scaled_CVaR_Component_Wise_3} or \autoref{eqn:CVaR_Component_Wise_3} to calculate $C^S_{\alpha}$ or $C_{\alpha}$, respectively.
	\item If $\al{} = \al{j}$, i.e., $\al{} = \frac{j}{n}$ for any $j = 0, 1, \dots, n-1$, use \autoref{eqn:Scaled_CVaR_Component_Wise_1} or \autoref{eqn:CVaR_Component_Wise_1} to calculate $C^S_{\alpha}$ or $C_{\alpha}$, respectively.
	\item Otherwise, find the closest $\al{j}$ and $\al{j+1}$, such that $\al{j} < \alpha < \al{j+1}$, calculate $\mu$ (for $C^S_{\alpha}$) or $\lambda$ (for $C^S_{\alpha}$), and use \autoref{eqn:Scaled_CVaR_Component_Wise_2} or \autoref{eqn:CVaR_Component_Wise_2} to calculate $C^S_{\alpha}$ or $C_{\alpha}$, respectively.
\end{enumerate} ~

To calculate $\CVaRnormX[S]{\alpha}$ and $\CVaRnormX{\alpha}$ using \autoref{prop:Scaled_CVaR_based_on_Definition} or \autoref{prop:CVaR_based_on_Definition}, respectively, the according optimization problem was written in MATLAB CVX (\cite{cvx2},\cite{cvx1}, for the code see \autoref{code:Scaled_CVaR_based_on_Definition} and \autoref{code:CVaR_based_on_Definition}). The algorithm that was used to solve the optimization problem was picked automatically by CVX with no further input by the author. When referring an ``optimization algorithm'' in the remainder of this section, the codes given in \autoref{prop:Scaled_CVaR_based_on_Definition} or \autoref{prop:CVaR_based_on_Definition} are meant.

To compare the computational efficiencies of the different algorithms, random vectors of dimensions $n \in \{2,3,10,10^2,10^3,10^4,10^5\}$ were generated, and each of the algorithms given in \autoref{code:Scaled_CVaR_Component_Wise} - \autoref{code:CVaR_based_on_Definition} was run 10 times to calculate $C^S_{\alpha}$ or $C_{\alpha}$, respectively. The average time taken over the 10 runs is the computation time stated in \autoref{table:CVaR_Norm_Computation_Time_for_different_n}, \autoref{table:CVaR_Norm_Computation_Time_for_different_alpha}, and \autoref{app_table:CVaR_Norm_Computation_Time_for_different_alpha_and_n}. These calculations were performed for values of $\alpha \in \{ 0, 0.1, 0.25, 0.5, 0.7, 0.9\}$

Summaries of the results are given in \autoref{table:CVaR_Norm_Computation_Time_for_different_n} and \autoref{table:CVaR_Norm_Computation_Time_for_different_alpha}; the complete results are displayed in \autoref{app_table:CVaR_Norm_Computation_Time_for_different_alpha_and_n}. \footnote{All calculations are performed on a PC with an Intel Core iS-2400S with 4 cores @ 2.5 GHz and 4 GB of memory.}
\begin{table}[H] \small
	\centering
	\begin{tabular}{| c | r || *{2}{r|}| *{2}{r|}  }
		\hline
		& & \multicolumn{4}{c|}{\textbf{Computation time in ms}} \\
		& & \multicolumn{2}{c||}{\textbf{Component-wise}} & \multicolumn{2}{c|}{\textbf{Optimization}}\\
$\boldsymbol\alpha$	&	$\boldsymbol n$	& \parbox{2.5cm}{\centering $\CVaRnormX[S]{\alpha}$ \\ (\autoref*{def:Scaled_CVaR_Component_Wise})} & \parbox{2.5cm}{\centering $\CVaRnormX{\alpha}$ \\ (\autoref*{def:CVaR_Component_Wise})} & \parbox{2.5cm}{\centering $\CVaRnormX[S]{\alpha}$ \\ (\autoref*{prop:Scaled_CVaR_based_on_Definition})} & \parbox{2.5cm}{\centering $\CVaRnormX{\alpha}$ \\ (\autoref*{prop:CVaR_based_on_Definition})}\\
\hline							
\hline											
\multirow{7}{*}{0.5}	&	2	&	0.13	&	0.08	&	178.59	&	174.96	\\
	&	3	&	0.18	&	0.12	&	180.96	&	179.34	\\
	&	10	&	0.13	&	0.08	&	184.33	&	181.49	\\
	&	100	&	0.15	&	0.10	&	217.66	&	213.11	\\
	&	1,000	&	0.19	&	0.14	&	323.36	&	239.72	\\
	&	10,000	&	1.00	&	0.92	&	571.45	&	551.93	\\
	&	100,000	&	5.64	&	5.00	&	5516.37	&	5128.19	\\
\hline											
	\end{tabular}
	\caption[Computations times of Scaled and Non-Scaled CVaR norms for different $n$.]{Computation times of $\CVaRnormX[S]{\alpha}$ and $\CVaRnormX{\alpha}$ at $\alpha = 0.5$ of a vector $\xinRn$ for different $n$ in milliseconds.}
	\label{table:CVaR_Norm_Computation_Time_for_different_n}
\end{table}

\begin{table}[H] \small
	\centering
	\begin{tabular}{| c | r@{.}l || *{2}{r|}| *{2}{r|}  }
		\hline
		& \multicolumn{2}{c||}{}& \multicolumn{4}{c|}{\textbf{Computation time in ms}} \\
		& \multicolumn{2}{c||}{}& \multicolumn{2}{c||}{\textbf{Component-wise}} & \multicolumn{2}{c|}{\textbf{Optimization}}\\
$\boldsymbol n$	&	\multicolumn{2}{c||}{$\boldsymbol \alpha$} & \parbox{2.5cm}{\centering $\CVaRnormX[S]{\alpha}$ \\ (\autoref*{def:Scaled_CVaR_Component_Wise})} & \parbox{2.5cm}{\centering $\CVaRnormX{\alpha}$ \\ (\autoref*{def:CVaR_Component_Wise})} & \parbox{2.5cm}{\centering $\CVaRnormX[S]{\alpha}$ \\ (\autoref*{prop:Scaled_CVaR_based_on_Definition})} & \parbox{2.5cm}{\centering $\CVaRnormX{\alpha}$ \\ (\autoref*{prop:CVaR_based_on_Definition})}\\
\hline							
\hline											
\multirow{7}{*}{1,000}	&	0&0	&	0.19	&	0.14	&	202.81	&	199.38	\\
&	0&1	&	0.19	&	0.14	&	244.86	&	236.01	\\
&	0&25	&	0.19	&	0.14	&	229.73	&	271.94	\\
&	0&5	&	0.19	&	0.14	&	323.36	&	239.72	\\
&	0&7	&	0.19	&	0.15	&	252.11	&	241.46	\\
&	0&9	&	0.19	&	0.14	&	289.31	&	249.22	\\

\hline											
	\end{tabular}
	\caption[Computations times of Scaled and Non-Scaled CVaR norms for different $\alpha$.]{Computation times of $\CVaRnormX[S]{\alpha}$ and $\CVaRnormX{\alpha}$ at different $\alpha$ of a vector $\xinRn$ for $n = 1000$ in milliseconds.}
	\label{table:CVaR_Norm_Computation_Time_for_different_alpha}
\end{table}

\autoref{table:CVaR_Norm_Computation_Time_for_different_n} indicates that for $n \leq 1,000$ the computing times for $\CVaRnormX[S]{\alpha}$ and $\CVaRnormX{\alpha}$ using the component-wise algorithms do not increase significantly with increasing $n$. For $n \geq 10,000$ there is a notable increase in computing time with increasing $n$, for both algorithms and both norms.

\autoref{table:CVaR_Norm_Computation_Time_for_different_alpha} shows that the value of $\alpha$ does not have any considerable effect on the computing time for the component-wise algorithm, whereas the computing times for the optimization algorithm fluctuate with $\alpha$.

Both tables clearly show that the component-wise algorithms (given in \autoref{code:Scaled_CVaR_Component_Wise} and \autoref{code:CVaR_Component_Wise}) outperform the optimization algorithms by several orders of magnitude. Hence, in the rest of this thesis only the component-wise algorithms will be used when comparing computational efficiencies against other norms. However, the component-wise algorithms cannot be used to solve any optimization problem involving the calculation of a CVaR norm as constraints cannot be included. Hence, the optimization algorithms to calculate $C^S_{\alpha}$ and $C_{\alpha}$ are the only choice when trying to solve optimization problems, e.g. model recovery problems discussed in \autoref{chapter:Model_Recovery_using_Atomic_Norms}.

\clearpage

%%%%%%%%%%%%%%%%%%%
%
%  Chapter 6: Comparisons to L_P vector norms
%
%%%%%%%%%%%%%%%%%%%

\chapter{Comparisons to $L_p$ Vector Norms} \label{chapter:Comparison_to_other_vector_norms}

This chapter explores how the scaled CVaR norm $C^S_{\alpha}$ and CVaR norm $C_{\alpha}$ compare to several $L_p$ norms for different values of $\alpha$ and $p$, as investigated by \cite{Gotoh2015_Two_Pairs_Of_Polyhedral_Norms_Versus_Lp_Norms} and \cite{pavlikov2014_CVaR_Norm_and_applications}. First, in \autoref{sec:Comparison_to_other_vector_norms-Behaviour_of_Scaled_CVaR_Norm} a brief overview of the behaviour of $C^S_{\alpha}$ will be given following the examples of \cite{pavlikov2014_CVaR_Norm_and_applications}. Then, the focus will shift to the $C_{\alpha}$ norm: \autoref{subsec:Comparison_to_other_vector_norms-Relationship_between_alpha_and_p} illustrates how $\alpha$ and $p$ can be chosen so that $C_{\alpha}$ best approximates $L_p$. To conclude this chapter, \autoref{subsec:Comparison_to_other_vector_norms-Behaviour_of_CVaR_Norm} extends the numerical examples for $C_{\alpha}$ given in \cite{pavlikov2014_CVaR_Norm_and_applications} by the findings of \autoref{subsec:Comparison_to_other_vector_norms-Relationship_between_alpha_and_p}.

%%%%%%%%%%%%%%%%%%%
%  Section .1: Behaviour of Scaled CVaR Norm
%%%%%%%%%%%%%%%%%%%

\section{Behaviour of Scaled CVaR Norm $C^S_{\alpha}$} \label{sec:Comparison_to_other_vector_norms-Behaviour_of_Scaled_CVaR_Norm}

To describe the behaviour of the scaled CVaR norm, Pavlikov and Uryasev use two examples \cite[p. 4 ff.]{pavlikov2014_CVaR_Norm_and_applications}. For each comparison, the scaled $L^S_p$ norm is used, which is defined by
\begin{align} \label{eqn:LSp_Norm}
	\LnormX[S]{p} &= \left( \frac{1}{n} \sum \limits_{i=1}^n \vert x_i \vert ^p \right)^{\frac{1}{p}} \,,
\end{align}
where $p \geq 1$. The actual examples used for the comparison are:
\begin{enumerate}
	\item Let $\mathbf{x} = (2,1,7,10,-12)^T$, calculate $\CVaRnormX[S]{\alpha}$ for $\alpha \in [0,1]$ and corresponding $\LnormX[S]{p}$ for $p = \frac{1}{(1 - \alpha)^2}$. This is shown in \autoref{fig:Scaled_CVaR_LSp_Norm_For_different_alpha}.
	\item Compare the unit disks for $C^S_{\alpha}$ and $L^S_{p}$, i.e. the sets $U^S_{\alpha} = \{ \mathbf{x} = (x_1,x_2) \mid \CVaRnormX[S]{\alpha} \leq 1 \}$ and $U^S_p = \{ \mathbf{x} = (x_1,x_2) \mid \LnormX[S]{p} \leq 1 \}$ for $\alpha \in \{ 0, 0.1, 1 - \frac{1}{\sqrt{2}}, 0.4, 1 \}$ and corresponding $p (\alpha) = \frac{1}{(1 - \alpha)^2}$. This comparison is shown in \autoref{fig:Scaled_CVaR_LSp_Norm_Unit_Disks}.
\end{enumerate}

\begin{figure}[H]
	\centering
	\includegraphics[width = 0.9 \textwidth]{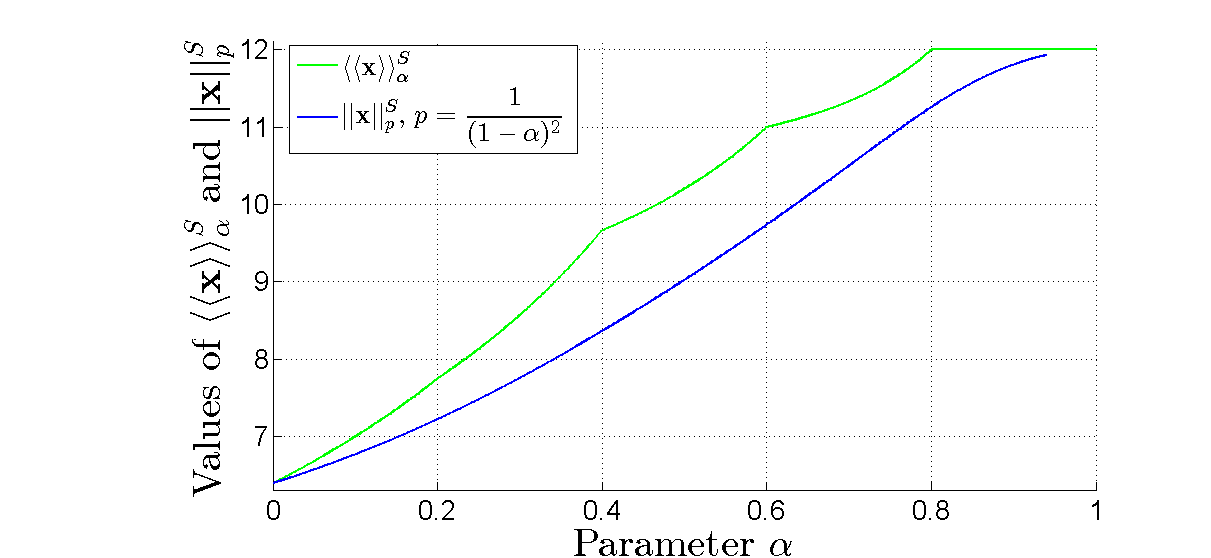}
	\caption{Reproduced from \cite[p. 6]{pavlikov2014_CVaR_Norm_and_applications}, $C^S_{\alpha}$ and $L^S_p$ Norms of $\mathbf{x}$ for different values of $\alpha$ and $p(\alpha)$.}
	\label{fig:Scaled_CVaR_LSp_Norm_For_different_alpha}
\end{figure}

\begin{figure}[H]
	\centering
	\includegraphics[width = 0.6 \textwidth]{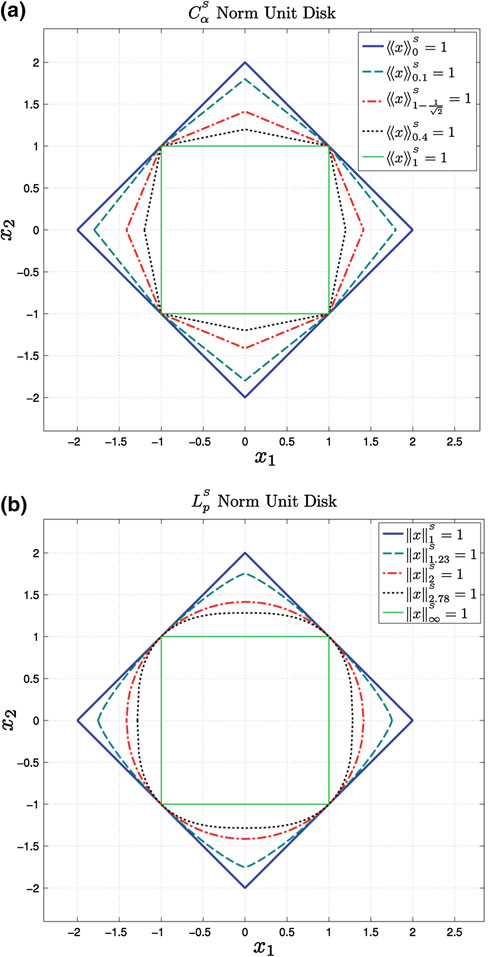}
	\caption{\cite[p. 5]{pavlikov2014_CVaR_Norm_and_applications} Norm unit disks of $C^S_{\alpha}$ and $L^S_p$ for different values of $\alpha$ and $p(\alpha)$.}
	\label{fig:Scaled_CVaR_LSp_Norm_Unit_Disks}
\end{figure}

As can be seen in \autoref{fig:Scaled_CVaR_LSp_Norm_Unit_Disks}, $\CVaRnormX[S]{0} = \LnormX[S]{1}$ and $\CVaRnormX[S]{\alpha} = \LnormX[S]{\infty} \text{~for~} \alpha \in [\frac{n-1}{n},1]$. This relationship follows from \autoref{def:Scaled_CVaR_Component_Wise} and \autoref{eqn:LSp_Norm}.

%%%%%%%%%%%%%%%%%%%
%  Section .2: Relationship between $\alpha$ and $p$ for $C_{\alpha}$ and $L_p$
%%%%%%%%%%%%%%%%%%%

\section{Relationship between $\alpha$ and $p$ for $C_{\alpha}$ and $L_p$} \label{subsec:Comparison_to_other_vector_norms-Relationship_between_alpha_and_p}

In \cite{Gotoh2015_Two_Pairs_Of_Polyhedral_Norms_Versus_Lp_Norms}, Gotoh and Uryasev explored (among other things) the question: ``For what value of $\kappa \in [1, n]$ does the CVaR norm (or its dual \footnote{This thesis will not introduce or explain the \emph{dual CVaR norm}, but focus on the findings of \cite{Gotoh2015_Two_Pairs_Of_Polyhedral_Norms_Versus_Lp_Norms} regarding the CVaR norm (which was defined in \autoref{sec:CVaR_Norms-Non_scaled_CVaR_Norm}).}) give the best approximation of the $L_p$-norm, and in which sense is it the best'' \cite[p. 3]{Gotoh2015_Two_Pairs_Of_Polyhedral_Norms_Versus_Lp_Norms}?\footnote{Here, $k$ refers is the parameter used in \autoref*{def:D_norm} of the D-norm, which is related to $\alpha$ as $\kappa = n (1 - \alpha)$ (see \autoref*{prop:CVaR_Norm_coincides_with_D_norm}).}

Gotoh's and Uryasev's analysis consisted of finding \emph{tight} bounds on the ration $\frac{\CVaRnormX{\alpha}}{\LnormX{p}}$ - a lower bound $L$ and an upper bound $U$, such that $L \leq \frac{\CVaRnormX{\alpha}}{\LnormX{p}} \leq U$.\footnote{The term \emph{tight} means that there is some $\mathbf{x}$ which satisfies the equality.} Then they defined the ratio $U / L$ as a measure of proximity (i.e. the goodness of approximation of $\LnormX{p}$ by $\CVaRnormX{\alpha}$). Finally, they defined a quasi-convex function $f_{n,p}(\kappa) = U / L$ and analysed for with value of $\alpha (p)$ $f_{n,p}(\kappa)$ attains its minimum. This $\alpha^*$ then gives $\CVaRnormX{\alpha^*}$, which is is the best approximation of $\LnormX{p}$.

\begin{Proposition}[{\cite[p. 6]{Gotoh2015_Two_Pairs_Of_Polyhedral_Norms_Versus_Lp_Norms}}] \label{prop:Lower_Upper_Bounds_on_CVaR_Lp_Norm_Ratio}
	For any $p \in (1, \infty)$, $\alpha \in [0, \frac{n-1}{n}]$, and $\xinRn \setminus \{0\}$, it is valid
	\begin{equation} \label{eqn:Lower_Upper_Bounds_on_CVaR_Lp_Norm_Ratio}
		\min \{1, n^{1 - \frac{1}{p}} (1 - \alpha) \} \leq \frac{\CVaRnormX{\alpha}}{\LnormX{p}} \leq \left( \lfloor \kappa \rfloor + (\kappa - \lfloor \kappa \rfloor)^{\frac{p}{p-1}} \right)^{\frac{p-1}{p}} ,
	\end{equation}
	where $\kappa = n (1 - \alpha)$.
\end{Proposition}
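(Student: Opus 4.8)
The plan is to work throughout with the explicit ordered-sum formula for $\CVaRnormX{\alpha}$ from \autoref{def:CVaR_based_on_Dual_CVaR_Problem} together with its dual knapsack form \autoref{eqn:CVaR_Dual_Definition_in_LP}, writing $\kappa = n(1-\alpha)$ and $a_i = \vert x_{(i)} \vert$ with $a_1 \geq a_2 \geq \dots \geq a_n \geq 0$. Since $\alpha \leq \frac{n-1}{n}$ we have $\kappa \geq 1$, so $\lfloor\kappa\rfloor \geq 1$ and both characterizations apply. By positive homogeneity of both norms the ratio is scale-invariant, so each bound can be treated as an extremal problem over $\{a_i \geq 0\}$, and I would attack the two inequalities by entirely separate arguments.

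For the upper bound I would start from $\CVaRnormX{\alpha} = \max_{\mathbf{q}\in Q}\sum_i a_i q_i$, where $Q = \{\mathbf{q} : \sum_i q_i = \kappa,\ 0\leq q_i\leq 1\}$, and apply Hölder's inequality with the conjugate exponent $p^* = \frac{p}{p-1}$: for each feasible $\mathbf{q}$, $\sum_i a_i q_i \leq (\sum_i a_i^p)^{1/p}(\sum_i q_i^{p^*})^{1/p^*} = \LnormX{p}\,(\sum_i q_i^{p^*})^{1/p^*}$, whence $\CVaRnormX{\alpha} \leq \LnormX{p}\,\max_{\mathbf{q}\in Q}(\sum_i q_i^{p^*})^{1/p^*}$. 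Because $p^* > 1$, the map $t \mapsto t^{p^*}$ is convex, so $\sum_i q_i^{p^*}$ is maximised at a vertex of the polytope $Q$; the vertices carry $\lfloor\kappa\rfloor$ coordinates equal to $1$, one coordinate equal to $\kappa - \lfloor\kappa\rfloor$, and the rest $0$, giving the value $\lfloor\kappa\rfloor + (\kappa-\lfloor\kappa\rfloor)^{p^*}$. Raising to the power $1/p^* = \frac{p-1}{p}$ reproduces the claimed bound exactly. Tightness follows from the Hölder equality condition $a_i \propto q_i^{1/(p-1)}$: the vector with $a_i = 1$ for $i\leq\lfloor\kappa\rfloor$, $a_{\lfloor\kappa\rfloor+1} = (\kappa-\lfloor\kappa\rfloor)^{1/(p-1)}$, and $a_i = 0$ otherwise respects the descending order and makes the greedy knapsack optimum coincide with that vertex.

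For the lower bound I would recast the claim as minimising $\CVaRnormX{\alpha}$ subject to $\LnormX{p}=1$, $\mathbf{x}\geq 0$. The heuristic computation is the clean part: on a \emph{flat} vector supported on $k$ coordinates with equal nonzero entries ($a_i = k^{-1/p}$ for $i\leq k$, else $0$) the ratio equals $\phi(k) = \min(k,\kappa)/k^{1/p}$. For $k\leq\kappa$ this is $k^{1-1/p}$, increasing in $k$ (as $p>1$), minimised at $k=1$ with value $1$; for $k\geq\kappa$ it is $\kappa\,k^{-1/p}$, decreasing, minimised at $k=n$ with value $\kappa n^{-1/p} = n^{1-1/p}(1-\alpha)$. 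Taking the smaller candidate gives $\min\{1, n^{1-1/p}(1-\alpha)\}$, with the single-spike ($k=1$) and uniform ($k=n$) vectors as the extremal cases.

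The main obstacle is justifying that this flat family really contains the minimiser. A subtlety worth flagging is that the smooth stationary (KKT) points of the ratio — those satisfying $q_i^\ast \propto a_i^{p-1}$ — are exactly the Hölder-equality vectors and therefore give the \emph{maximum}, not the minimum; the minimiser must instead sit on the nonsmooth locus where coordinates are tied (all equal, giving the uniform vector, or all but one vanishing, giving the spike). Rigorously pinning the minimiser there would need either a majorization/Schur-convexity argument adapted to a fixed $\ell_p$-norm constraint (rather than fixed sum) or a direct smoothing argument that averages two unequal active coordinates. Some care with non-integer $\kappa$ is also required when the support size $k$ straddles $\kappa$, so I would split into the cases $k\leq\lfloor\kappa\rfloor$, $k=\lfloor\kappa\rfloor+1$, and $k>\kappa$ to keep $\phi(k)$ exact; once the flatness reduction is secured, minimising $\phi(k)$ over $k$ is the elementary one-variable computation above.
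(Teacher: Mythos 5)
The thesis itself never proves this proposition --- it defers to Appendix A.1 of Gotoh and Uryasev --- so your argument has to stand on its own rather than be compared line-by-line with an in-paper proof. Your upper-bound half does stand: writing $\CVaRnormX{\alpha} = \max\{\sum_i a_i q_i : \sum_i q_i = \kappa,\ 0 \le q_i \le 1\}$ with $a_i = \vert x_{(i)}\vert$ sorted in descending order, applying H\"older with conjugate exponent $p^* = \frac{p}{p-1}$, and noting that the convex function $\mathbf{q} \mapsto \sum_i q_i^{p^*}$ attains its maximum over the polytope at a vertex ($\lfloor\kappa\rfloor$ coordinates equal to $1$, one equal to $\kappa - \lfloor\kappa\rfloor$, the rest zero) gives exactly $\left( \lfloor \kappa \rfloor + (\kappa - \lfloor \kappa \rfloor)^{\frac{p}{p-1}} \right)^{\frac{p-1}{p}}$; your H\"older-equality vector also verifies tightness, which the statement does not even require. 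This is complete, correct, and in the same dual/H\"older spirit as the cited source.

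The lower bound, however, contains a genuine gap, which you candidly flag yourself: you evaluate the ratio only on the flat family $a_i = k^{-1/p}$ for $i \le k$ and never establish that the minimizer lies in (or is dominated by) that family, so this half identifies the right candidates and the right value but is not a proof. The gap is fillable by two elementary exchange steps rather than full majorization machinery. First, with $m = \lfloor\kappa\rfloor \ge 1$, the numerator $\CVaRnormX{\alpha} = \sum_{i \le m} a_i + (\kappa - m)a_{m+1}$ depends only on the $m+1$ largest entries, so raising every entry with index $i > m+1$ up to $a_{m+1} =: t$ leaves the numerator unchanged while increasing $\LnormX{p}$; a minimizer of the ratio therefore has $a_i = t$ for all $i > m$. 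Second, within the head, for two entries $u \ge v \ge t$ with $u^p + v^p$ held fixed one has $\frac{d}{du}(u+v) = 1 - (u/v)^{p-1} \le 0$, so transferring $p$-mass onto the largest coordinate until the others hit the floor $t$ lowers the numerator at fixed $\LnormX{p}$. This reduces the problem to the two-parameter family $\mathbf{x} = (s, t, \dots, t)$ with $s \ge t \ge 0$ --- note that this, not your flat family, is the true extremal family; your spike and uniform vectors are merely its endpoints. Normalizing $s^p + (n-1)t^p = 1$, the objective $h(t) = s(t) + (\kappa - 1)t$ has $h'(t) = (\kappa - 1) - (n-1)(t/s)^{p-1}$, which is decreasing in $t$, so $h$ is concave and is minimized at an endpoint: $t = 0$ gives the value $1$, and $s = t = n^{-1/p}$ gives $\kappa n^{-1/p} = n^{1 - \frac{1}{p}}(1-\alpha)$, which is precisely the asserted lower bound. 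With that reduction inserted, your proof closes; as submitted, the lower-bound half remains a heuristic.
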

The proof of \autoref{prop:Lower_Upper_Bounds_on_CVaR_Lp_Norm_Ratio} is given in Chapter A.1 of \cite{Gotoh2015_Two_Pairs_Of_Polyhedral_Norms_Versus_Lp_Norms}.

Based on \autoref{eqn:Lower_Upper_Bounds_on_CVaR_Lp_Norm_Ratio}, the ratio $U / L$, where $U = \left( \lfloor \kappa \rfloor + (\kappa - \lfloor \kappa \rfloor)^{\frac{p}{p-1}} \right)^{\frac{p-1}{p}}$ and $L = \min \{1, n^{1 - \frac{1}{p}} (1 - \alpha) \}$ defines a function, which evaluates the proximity of $\CVaRnormX{\alpha}$ to $\LnormX{p}$:
\begin{equation} \label{eqn:CVaR_Lp_Norm_Bounds_Ratio_function}
	f_{n,p} (\kappa) \defeq \frac{\left( \lfloor \kappa \rfloor + (\kappa - \lfloor \kappa \rfloor)^{\frac{p}{p-1}} \right)^{\frac{p-1}{p}}}{\min \{1, n^{1 - \frac{1}{p}} (1 - \alpha) \}}	 \,.
\end{equation}

\begin{Lemma}[{\cite[p. 9]{Gotoh2015_Two_Pairs_Of_Polyhedral_Norms_Versus_Lp_Norms}}] \label{lemma:f_np_kappa_is_continuous}
	The function $f_{n,p} (\kappa)$ is continuous at any $\kappa \in (1,n)$, and differentiable at any non-integer except $\kappa = n^{\frac{1}{p}}$, i.e. $\kappa \not \in \{ 1, \dots, n \} \cup \{ n^{\frac{1}{p}} \}$.
\end{Lemma}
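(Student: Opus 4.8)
The plan is to regard $f_{n,p}$ as a quotient $N(\kappa)/D(\kappa)$ and to analyse numerator and denominator separately, exploiting the substitution $\kappa = n(1-\alpha)$, equivalently $1-\alpha = \kappa/n$, which renders the denominator a function of $\kappa$ alone. Indeed $n^{1-\frac1p}(1-\alpha) = \kappa\, n^{-1/p} = \kappa/n^{1/p}$, so
\[ D(\kappa) = \min\{1,\ \kappa/n^{1/p}\}, \qquad N(\kappa) = \left(\lfloor\kappa\rfloor + (\kappa-\lfloor\kappa\rfloor)^{q}\right)^{1/q}, \]
where I write $q = p/(p-1) > 1$ for the conjugate exponent of $p$.

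First I would treat the denominator. As the minimum of the two continuous functions $\kappa \mapsto 1$ and $\kappa \mapsto \kappa/n^{1/p}$, the map $D$ is continuous on $(1,n)$, and since $\kappa > 1 > 0$ it is strictly positive there, so dividing by it is harmless. The two branches agree precisely at $\kappa = n^{1/p}$, where the increasing branch $\kappa/n^{1/p}$ hands over to the constant branch $1$; the left derivative there is $n^{-1/p}>0$ while the right derivative is $0$, so $D$ has a corner at $\kappa = n^{1/p}$ and is differentiable at every other point of $(1,n)$.

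Next I would handle the numerator $N$. On any open interval $(m,m+1)$ with $m$ an integer, $\lfloor\kappa\rfloor = m$ is constant and $N(\kappa) = (m + (\kappa-m)^q)^{1/q}$ is a composition of smooth maps whose inner base stays positive (as $m\ge 1$ on $(1,n)$), hence $N$ is differentiable there; note that $t\mapsto t^q$ is $C^1$ at $t=0$ with vanishing derivative because $q>1$, so nothing goes wrong at the left endpoint of each interval. The delicate point is the passage across an integer $k$, where I would check the one-sided limits: from the right $\lfloor\kappa\rfloor = k$ and the fractional part $\to 0$, giving $N \to k^{1/q}$; from the left $\lfloor\kappa\rfloor = k-1$ and the fractional part $\to 1$, giving $N \to ((k-1)+1)^{1/q} = k^{1/q}$. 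The two limits coincide, so $N$ is continuous at every integer and hence continuous on all of $(1,n)$.

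Finally I would assemble the conclusion by the quotient rule. Since $D$ is continuous, positive, and $N$ is continuous on $(1,n)$, the ratio $f_{n,p} = N/D$ is continuous on $(1,n)$. For differentiability, $f_{n,p}$ is differentiable at any $\kappa$ at which both $N$ and $D$ are differentiable with $D(\kappa)\ne 0$; by the two paragraphs above this holds at every $\kappa \in (1,n)$ that is neither an integer nor equal to $n^{1/p}$, i.e. for $\kappa \notin \{1,\dots,n\}\cup\{n^{1/p}\}$, which is exactly the claim. The main obstacle is the numerator's interaction with the floor function: one must verify that the unit jump of $\lfloor\kappa\rfloor$ at an integer is cancelled exactly by the term $(\kappa-\lfloor\kappa\rfloor)^q$ descending from $1$ to $0$, which is what secures continuity. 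A short computation of the one-sided derivatives there (left derivative $k^{1/q-1}>0$, right derivative $0$) then explains why those integer points, like the corner of $D$ at $n^{1/p}$, must genuinely be excluded from the differentiability statement.
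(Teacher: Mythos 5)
Your argument is correct and complete. Note, however, that the paper itself contains no proof of this lemma to compare against: it simply cites the source, stating that the proofs of \autoref{lemma:f_np_kappa_is_continuous} and \autoref{prop:f_np_kappa_has_minimum_at} are given in Sections A.3 and A.4 of the Gotoh--Uryasev paper. Your write-up therefore supplies a self-contained verification along the natural lines: the substitution $1-\alpha=\kappa/n$ turning the denominator into $D(\kappa)=\min\{1,\kappa/n^{1/p}\}$ is exactly right, and the two delicate points are handled properly --- the unit jump of $\lfloor\kappa\rfloor$ at an integer $k$ is cancelled by $(\kappa-\lfloor\kappa\rfloor)^{q}$ rising to $1$, so both one-sided limits of the numerator equal $k^{1/q}$, while the one-sided derivative computation (left derivative $k^{1/q-1}>0$, right derivative $0$, using $q>1$) shows the integer points, like the corner of $D$ at $\kappa=n^{1/p}$, genuinely must be excluded, which is slightly more than the lemma asks but explains its sharpness. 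The positivity of $D$ on $(1,n)$ and of the inner base $m+(\kappa-m)^q$ (since $m\geq 1$ there) makes the quotient-rule assembly legitimate, so no gap remains.
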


\begin{Proposition} [{\cite[p. 9]{Gotoh2015_Two_Pairs_Of_Polyhedral_Norms_Versus_Lp_Norms}}] \label{prop:f_np_kappa_has_minimum_at}
	The function $f_{n,p} (\kappa)$ is decreasing for $\kappa \leq n^{\frac{1}{p}}$. The function $f_{n,p} (\kappa)$ is increasing for $\kappa \geq n^{\frac{1}{p}}$. Accordingly, $f_{n,p} (\kappa)$ uniquely attains its minimum value, $\left( \lfloor \kappa \rfloor + (\kappa - \lfloor \kappa \rfloor)^{\frac{p}{p-1}} \right)^{\frac{p-1}{p}}$, at $\kappa = n^{\frac{1}{p}}$.
\end{Proposition}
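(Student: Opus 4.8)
The plan is to reduce $f_{n,p}$ to a ratio of two elementary pieces and analyse them separately on the two regimes created by the minimum in the denominator. Using $\kappa = n(1-\alpha)$ one has $n^{1-1/p}(1-\alpha) = \kappa/n^{1/p}$, so the denominator of \autoref{eqn:CVaR_Lp_Norm_Bounds_Ratio_function} is $\min\{1, \kappa/n^{1/p}\}$, which equals $\kappa/n^{1/p}$ when $\kappa \le n^{1/p}$ and $1$ when $\kappa \ge n^{1/p}$. Writing $q = p/(p-1)$ (so that $(p-1)/p = 1/q$ and $q>1$) and $U(\kappa) = \bigl(\lfloor\kappa\rfloor + (\kappa-\lfloor\kappa\rfloor)^{q}\bigr)^{1/q}$ for the numerator, the claim becomes: on $[n^{1/p}, n]$ we have $f_{n,p} = U$, and on $[1, n^{1/p}]$ we have $f_{n,p} = n^{1/p}\,U(\kappa)/\kappa$. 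It therefore suffices to show that $U$ is strictly increasing and that $U(\kappa)/\kappa$ is strictly decreasing on $[1,n]$.

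For the increasing piece I would work interval by interval. On any $(m, m+1)$ with integer $m$, $\lfloor\kappa\rfloor = m$ is constant, so $U(\kappa)^q = m + (\kappa - m)^q$ has derivative $q(\kappa-m)^{q-1} > 0$; hence $U$ is strictly increasing there. Continuity across integers follows since $U(\kappa)^q \to (m-1)+1 = m$ as $\kappa \uparrow m$ while $U(m)^q = m + 0 = m$, so the one-sided values match. Thus $U$ is continuous and strictly increasing on $[1,n]$, which gives monotonicity of $f_{n,p}$ on $[n^{1/p}, n]$.

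For the decreasing piece I would set $g(\kappa) = U(\kappa)/\kappa$ and again argue on each $(m, m+1)$, writing $\kappa = m+t$ with $t \in (0,1)$, so that
\[
g(\kappa)^q = \frac{m + t^q}{(m+t)^q}.
\]
The core computation is the sign of the $t$-derivative; after differentiating and clearing the positive factor $q(m+t)^{-(q+1)}$, the bracket collapses to $m\bigl(t^{q-1} - 1\bigr)$, which is negative for every $m \ge 1$, $t \in (0,1)$ because $q>1$ forces $t^{q-1}<1$. Hence $g^q$, and so $g$, is strictly decreasing on each such interval; continuity of $g$ at the integers (inherited from $U$) then upgrades this to strict monotonicity on all of $[1,n]$, and in particular on $[1, n^{1/p}]$.

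Gluing the two regimes, $f_{n,p}$ strictly decreases on $[1, n^{1/p}]$ and strictly increases on $[n^{1/p}, n]$; the two formulas agree at the junction, since $n^{1/p}\,U(n^{1/p})/n^{1/p} = U(n^{1/p})$, matching the continuity asserted in \autoref{lemma:f_np_kappa_is_continuous}. This forces a unique global minimum at $\kappa = n^{1/p}$ with value $U(n^{1/p}) = \bigl(\lfloor\kappa\rfloor + (\kappa-\lfloor\kappa\rfloor)^{p/(p-1)}\bigr)^{(p-1)/p}$ evaluated at $\kappa = n^{1/p}$, as stated. I expect the main obstacle to be the derivative computation on $(m,m+1)$: one must carry out the quotient differentiation carefully enough to witness the telescoping to $m(t^{q-1}-1)$, and then confirm that strict interior monotonicity propagates across the integer kinks (where $f_{n,p}$ is only continuous, not differentiable, per the Lemma) rather than assuming global differentiability.
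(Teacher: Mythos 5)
Your proof is correct, but there is nothing in the thesis to compare it against line by line: the paper does not prove this proposition itself, stating only that the proofs of \autoref{lemma:f_np_kappa_is_continuous} and \autoref{prop:f_np_kappa_has_minimum_at} are given in sections A.3 and A.4 of the cited Gotoh--Uryasev paper. Your argument therefore supplies a self-contained proof where the thesis defers to the reference. The reduction is sound: with $\kappa = n(1-\alpha)$ the denominator of \autoref{eqn:CVaR_Lp_Norm_Bounds_Ratio_function} is $\min\{1, \kappa/n^{1/p}\}$, so $f_{n,p} = U$ on $[n^{1/p}, n]$ and $f_{n,p} = n^{1/p}\, U(\kappa)/\kappa$ on $[1, n^{1/p}]$, where $U(\kappa) = \bigl(\lfloor\kappa\rfloor + (\kappa - \lfloor\kappa\rfloor)^{q}\bigr)^{1/q}$ and $q = p/(p-1) > 1$. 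Both monotonicity claims check out. On $(m, m+1)$ one has $\frac{d}{d\kappa} U^q = q(\kappa - m)^{q-1} > 0$, and for $g = U/\kappa$, writing $\kappa = m + t$ with $t \in (0,1)$,
\[
\frac{d}{dt}\, g(\kappa)^q \;=\; \frac{q\,(m+t)^{q-1}\bigl[\, t^{q-1}(m+t) - (m + t^q) \,\bigr]}{(m+t)^{2q}} \;=\; \frac{q\, m\,\bigl(t^{q-1} - 1\bigr)}{(m+t)^{q+1}} \;<\; 0,
\]
which is exactly the telescoping you predicted; note that $m \ge 1$ is essential here (the bracket vanishes for $m = 0$, where $g$ would be constant), and this is guaranteed because the admissible range $\alpha \in [0, \frac{n-1}{n}]$ forces $\kappa \in [1, n]$ -- a point worth making explicit. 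Continuity at the integer kinks (consistent with \autoref{lemma:f_np_kappa_is_continuous}) correctly upgrades the interior strict monotonicity to strict monotonicity on the closed pieces, the two formulas agree at the junction $\kappa = n^{1/p} \in (1, n)$, and strictness on both sides yields the uniqueness of the minimizer with minimum value $U(n^{1/p})$, i.e.\ the numerator evaluated at $\kappa = n^{1/p}$, as the proposition states. The only cosmetic refinements I would suggest are stating the domain restriction $\kappa \in [1,n]$ and the hypothesis $p \in (1,\infty)$ (inherited from \autoref{prop:Lower_Upper_Bounds_on_CVaR_Lp_Norm_Ratio}) up front, since both are silently used.
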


The proofs of \autoref{lemma:f_np_kappa_is_continuous} and \autoref{prop:f_np_kappa_has_minimum_at} are given in sections A.3 and A.4 of \cite{Gotoh2015_Two_Pairs_Of_Polyhedral_Norms_Versus_Lp_Norms}, respectively.

Using \autoref{prop:f_np_kappa_has_minimum_at} and substituting $\kappa = n (1 - \alpha)$ gives the values of $\alpha$ and $p$ for which $\CVaRnormX{\alpha}$ best approximates $\LnormX{p}$ \cite[p. 9]{Gotoh2015_Two_Pairs_Of_Polyhedral_Norms_Versus_Lp_Norms} as
\begin{align}
	\alpha^* &= 1 - n^{\frac{1}{p} - 1} \label{eqn:approximation_alpha_star} \text{~, and}\\
	p^* &= \frac{\ln(n)}{\ln(n (1 - \alpha))} \label{eqn:approximation_p_star} \,.
\end{align}

Gotoh and Uryasev also compared the proximity ratio $U/L = f_{n,p} (\kappa) $ given by \autoref{eqn:CVaR_Lp_Norm_Bounds_Ratio_function} for different combinations of $p$ and $n$, each with optimal $\kappa^* = n(1 - \alpha^*) = n^{\frac{1}{p}}$ (see \autoref{fig:f_np_kappa_ratio}). The ratio $f_{n,p} (\kappa^*)$ becomes largest at $p=2$, which indicates that $L_2$ is the hardest $L_p$ norm to approximate by the CVaR norm \cite[p. 11]{Gotoh2015_Two_Pairs_Of_Polyhedral_Norms_Versus_Lp_Norms}.
\begin{figure}[H]
	\centering
	\includegraphics[width = 0.7 \textwidth]{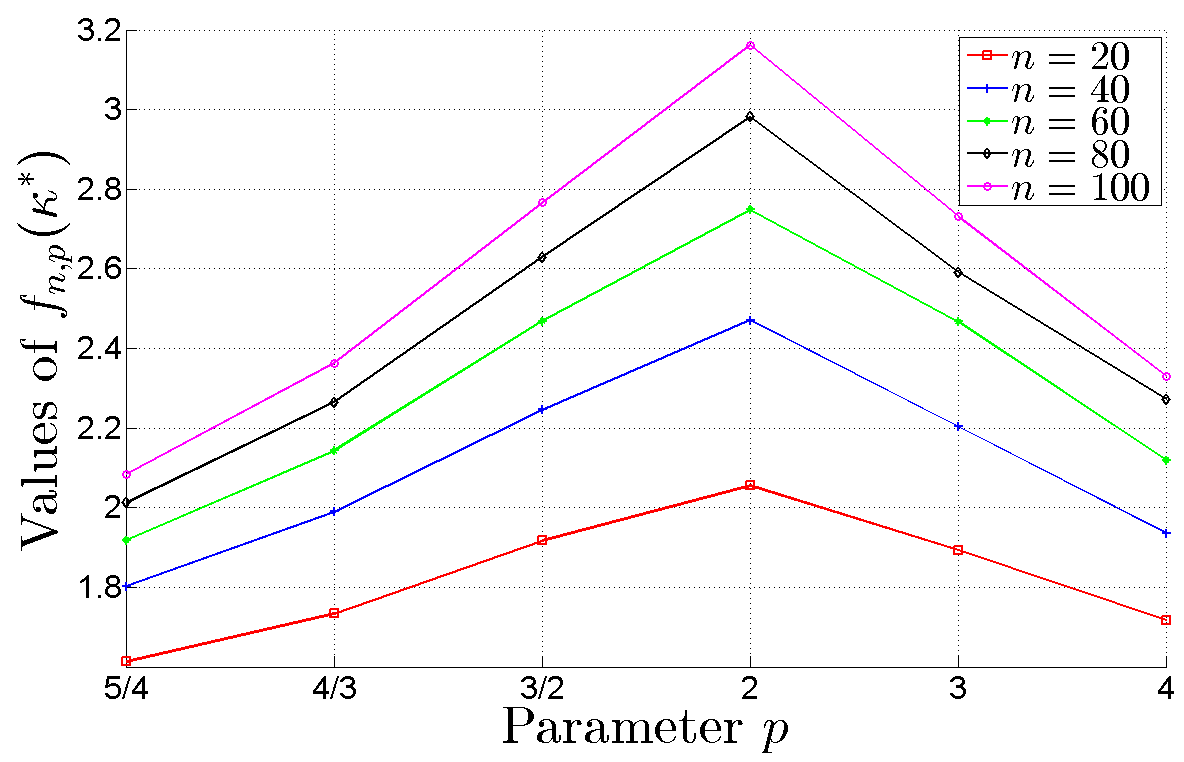}
	\caption{Reproduced from \cite[p. 11]{Gotoh2015_Two_Pairs_Of_Polyhedral_Norms_Versus_Lp_Norms}, $f_{n,p} (\kappa^*)$ for different values of $n$ and $p$, with $\kappa^* = n^{\frac{1}{p}}$.}
	\label{fig:f_np_kappa_ratio}
\end{figure}

%%%%%%%%%%%%%%%%%%%
%  Section .3: Behaviour of CVaR Norm $C_{\alpha}$
%%%%%%%%%%%%%%%%%%%

\section{Behaviour of CVaR Norm $C_{\alpha}$} \label{subsec:Comparison_to_other_vector_norms-Behaviour_of_CVaR_Norm}

To see how $C_{\alpha}$ behaves for different values of $\alpha$, Pavlikov and Uryasev used the same examples as in the previous subsection, but compared $C_{\alpha}$ to standard $L_p$ norms
\begin{align} \label{eqn:Standard_Lp_Norm}
	\LnormX{p} &= \left( \sum \limits_{i=1}^n \vert x_i \vert ^p \right)^{\frac{1}{p}} \,,
\end{align}
where $p \geq 1$. Hence, using the same numerical examples the comparisons are
\begin{enumerate}
	\item Let $\mathbf{x} = (2,1,7,10,-12)^T$, calculate $\CVaRnormX{\alpha}$ for $\alpha \in [0,1]$ and corresponding $\LnormX{p}$ and $\LnormX{p^*}$, with $p = \frac{1}{(1 - \alpha)^2}$ and optimal\footnote{Here, \emph{optimal} means that for $p = p^*$, $\LnormX{p}$ best approximates $\CVaRnormX{\alpha}$} $p^* = \frac{\ln(n)}{\ln(n (1 - \alpha))}$. This is shown in \autoref{fig:CVaR_Lp_Norm_For_different_alpha}.
	\item Compare the unit disks for $C_{\alpha}$ and $L_{p}$, i.e. the sets $U_{\alpha} = \{ \mathbf{x} = (x_1,x_2) \mid \CVaRnormX{\alpha} \leq 1 \}$ and $U_p = \{ \mathbf{x} = (x_1,x_2) \mid \LnormX{p} \leq 1 \}$ for $\alpha \in \{ 0, 0.1, 1 - \frac{1}{\sqrt{2}}, 0.4, 0.5 \}$ and corresponding $p (\alpha) = \frac{1}{(1 - \alpha)^2}$. This comparison is shown in \autoref{fig:CVaR_Lp_Norm_Unit_Disks}.
\end{enumerate}

\begin{figure}[H]
	\centering
	\includegraphics[width = 0.7 \textwidth]{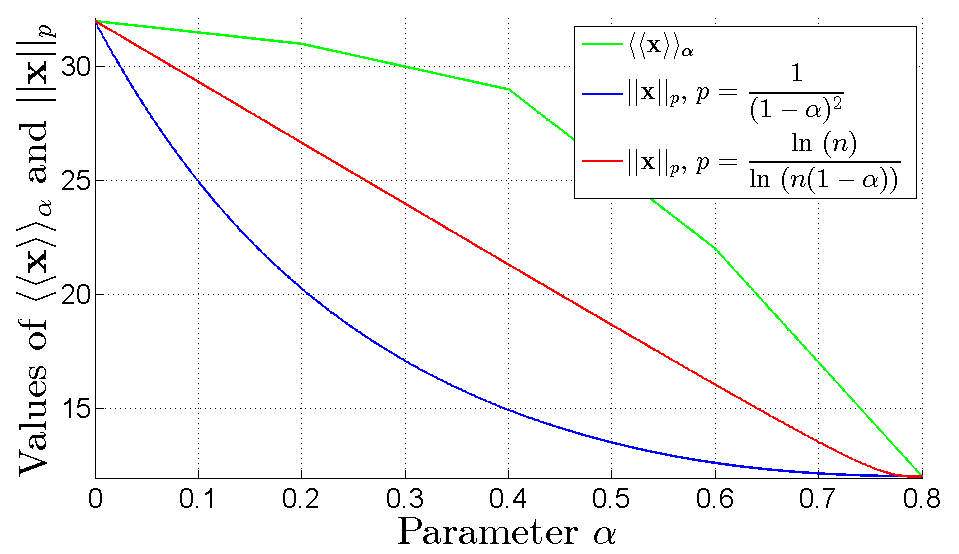}
	\caption{Reproduced from \cite[p. 10]{Gotoh2015_Two_Pairs_Of_Polyhedral_Norms_Versus_Lp_Norms}, $C_{\alpha}$ and $L_p$ Norms of $\mathbf{x}$ for different values of $\alpha$ and $p(\alpha)$.}
	\label{fig:CVaR_Lp_Norm_For_different_alpha}
\end{figure}

\begin{figure}[H]
	\centering
	\includegraphics[width = 0.6 \textwidth]{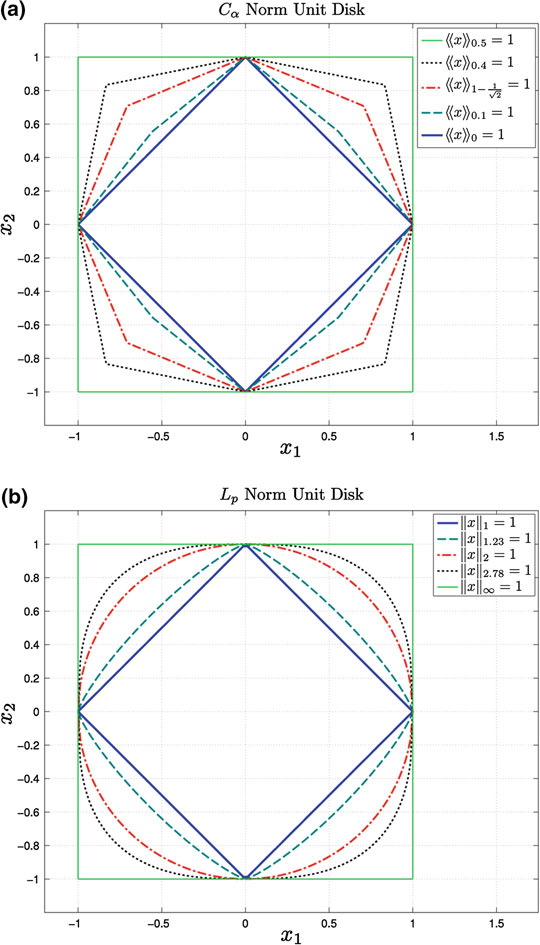}
	\caption{\cite[p. 17]{pavlikov2014_CVaR_Norm_and_applications} Norm unit disks of $C_{\alpha}$ and $L_p$ for different values of $\alpha$ and $p(\alpha)$.}
	\label{fig:CVaR_Lp_Norm_Unit_Disks}
\end{figure}

Again, there is a close relationship between $C_{\alpha}$ and $L_1$ / $L_{\infty}$. As is depicted in \autoref{fig:CVaR_Lp_Norm_Unit_Disks} and as can be shown from \autoref{eqn:CVaR_Component_Wise_1} and \autoref{eqn:Standard_Lp_Norm}, $\CVaRnormX{0} = \LnormX{1}$ and $\CVaRnormX{\frac{n-1}{n}} = \LnormX{\infty}$.

Letting $\mathbf{x} \in \mathbb{R}^2 : \vert x_1 \vert, \vert x_2 \vert \leq 10$ and producing surface plots of $\CVaRnormX{\alpha^*}$ and $\LnormX{p}$ for $p = 2$ and $\alpha^* = \frac{1}{1 - \sqrt{2}}$ gives the plots shown in \autoref{fig:CVaR_Lp_Norm_surface_plots}. Additional surface plots for varying values of $\alpha$ and $p^*$ are displayed in \autoref{app_diagrams:C_alpha_Lp_for_different_alpha_p}.
\begin{figure}[H]
	\centering
	\includegraphics[width = \textwidth]{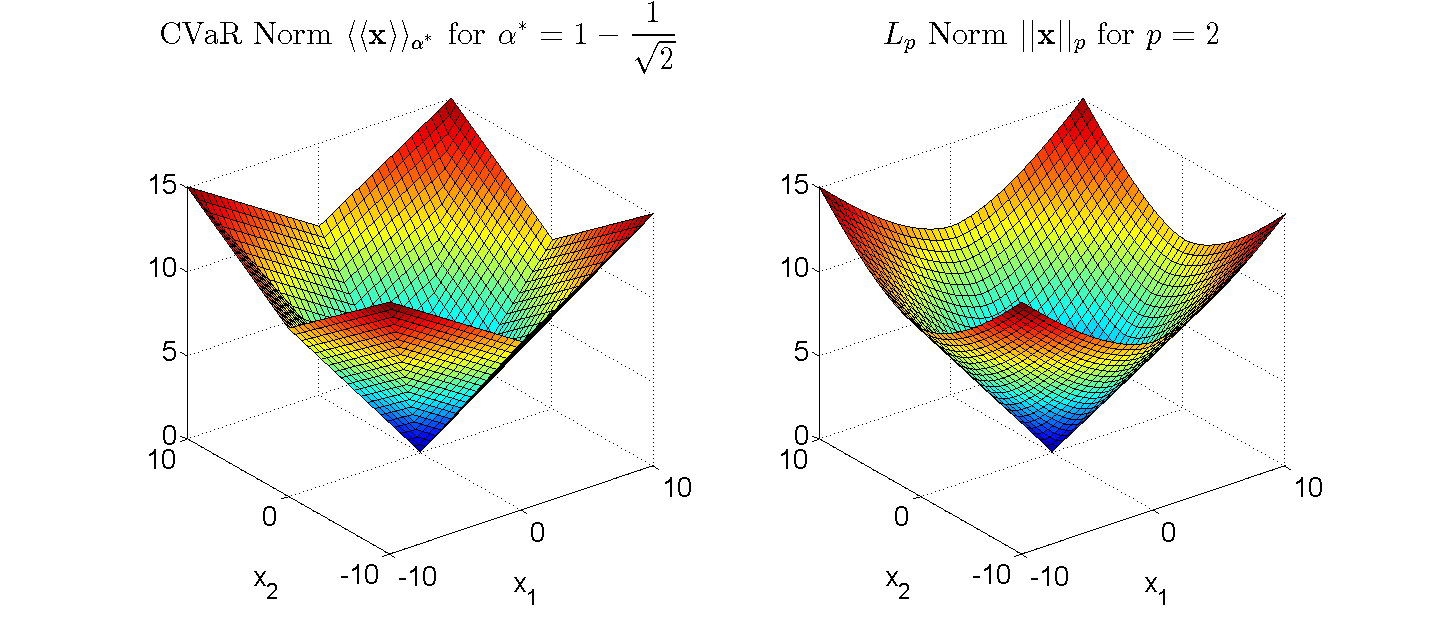}
	\caption{Norm surface plots ($C_{\alpha}$ and $L_p$) of $\mathbf{x}$ for $p = 2$ and $\alpha^* = \frac{1}{1 - \sqrt{2}}$.}
	\label{fig:CVaR_Lp_Norm_surface_plots}
\end{figure}

Comparing the projections of a circle $C = \{ \xinR{3} : x_1^2 + x_2^2 = 1, x_3 = 1 \}$ onto the unit ball $U = \{ \xinR{3} : \mathbf{x}^T \mathbf{x} = 1 \}$ using the $L_2$ norm and $C_{\alpha^*}$ norm, with $\alpha^* = 1 - \frac{1}{\sqrt{3}}$ is shown in \autoref{fig:projection_onto_unit_ball}. Further comparisons for different $\alpha$ are shown in \autoref{app_diagrams:Projection_onto_unit_ball}.
\begin{figure}[H]
	\centering
	\includegraphics[width = \textwidth]{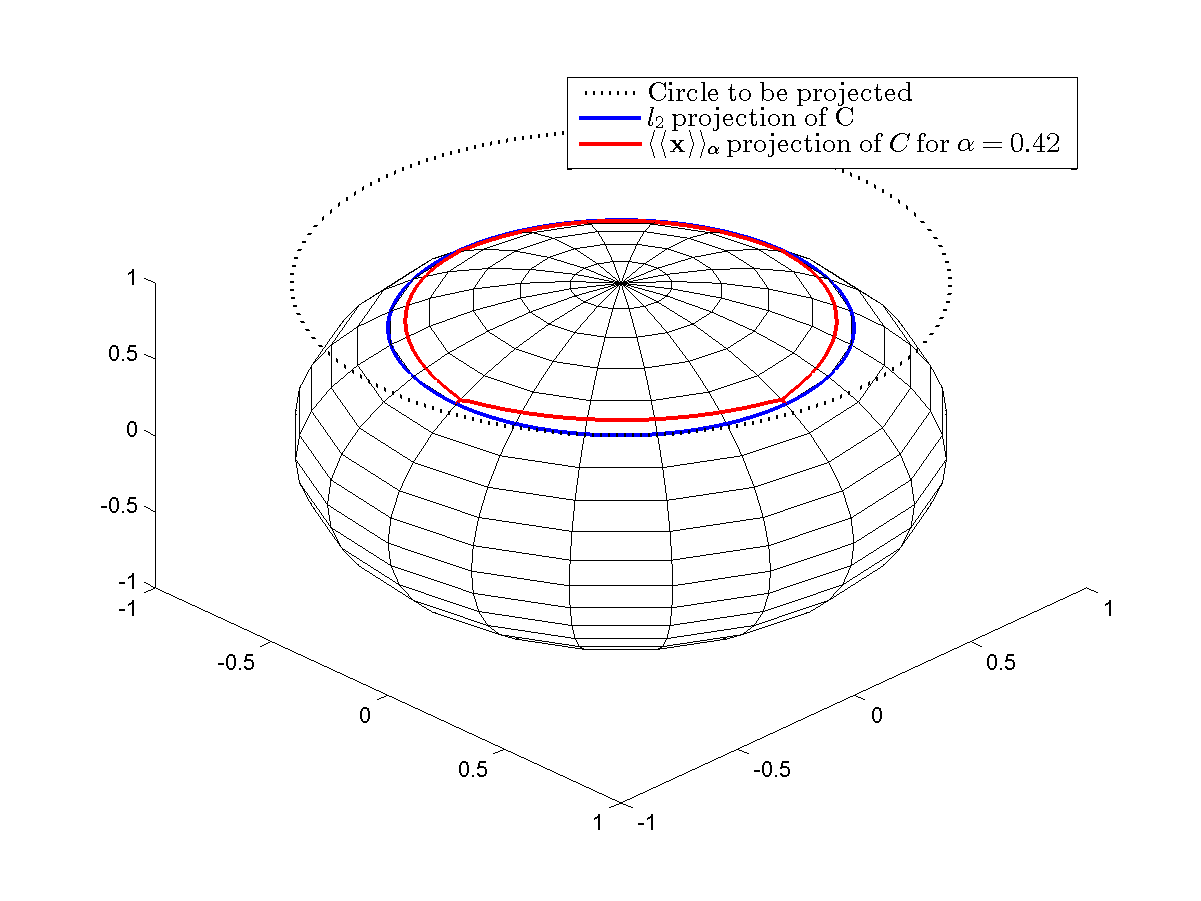}
	\caption[Projection of a circle onto the unit ball using different norms.]{Projection of a circle onto the unit ball in $\xinR{3}$ using $L_2$ and $C_{\alpha^*}$ norm, with $\alpha^* = 1 - \frac{1}{\sqrt{3}}$.}
	\label{fig:projection_onto_unit_ball}
\end{figure}

\clearpage

%%%%%%%%%%%%%%%%%%%
%
%  Chapter 7: Atomic Norms
%
%%%%%%%%%%%%%%%%%%%

\chapter{Model Recovery Using Atomic Norms} \label{chapter:Model_Recovery_using_Atomic_Norms}

Many real world problems require solving an ill-posed inverse problem, in which the number of measurements is smaller than the dimension of the model to be estimated. But if the structure of the model is favourable, the original model can be recovered by the use of atomic norms, to be more precise, by minimizing the atomic norm, i.e. solving the problem \cite[p. 811]{Parillo2010_Convex_Geometry}
\begin{Problem}[prog:exact_recovery_primal]
\left.
\begin{array}{rll}
	\hat{\mathbf{x}} = \arg \min \limits_{\mathbf{x}} & \LnormX{\mathcal{A}} \\
	\text{s.t.} & \mathbf{y} = \mathbf{\Phi} \mathbf{x}
\end{array}
\right\} \,,
\end{Problem}
where $\LnormDot{\mathcal{A}}$ is the atomic norm. The candidate vector $\mathbf{x}^*$ can be formed from a set of atoms $\mathcal{A}$, i.e.  $\mathbf{x}^* = \sum_{i=1}^k c_i \mathbf{a}_i \text{~where~} \mathbf{a}_i \in \mathcal{A}, c_i \geq 0$ and information about a linear mapping $\mathbf{\Phi} : \mathbb{R}^p \rightarrow \mathbb{R}^n$ is available. Also, the measurement $\mathbf{y} = \mathbf{\Phi} \mathbf{x}^*$ is known. The goal is to reconstruct $\mathbf{x}^*$ given $\mathbf{y}$.

The following sections will discuss how atomic norms can be derived from a set of atoms and which conditions need to be satisfied to allow for recovery.

%%%%%%%%%%%%%%%%%%%
%  Section .1
%%%%%%%%%%%%%%%%%%%

\section{Background on Atomic Norms and Convex Geometry} \label{sec:MRuAN-Background}

A model can be considered \emph{simple} if it can be expressed as a non-negative combination of atoms (i.e. basic building blocks of the model). More precisely, let $\xinR{p}$ be formed as \cite[p. 806]{Parillo2010_Convex_Geometry}
\begin{align}
	\mathbf{x} &= \sum \limits_{i=1}^k c_i \mathbf{a}_i \,, \label{eqn:x_as_sum_of_atoms}
\end{align}
for $\mathbf{a}_i \in \mathcal{A}, c_i \geq 0$, where $\mathcal{A}$ is the set of atoms.

The atomic norm of a set of atoms $\mathcal{A}$ is then derived by forming the convex hull of $\mathcal{A}$, i.e $ \text{conv}(\mathcal{A})$. \autoref{fig:Atomic_Norms_Explanation} displays the relation between different sets of atoms and their corresponding atomic norms in $\mathbb{R}^2$.
\begin{figure}[H]
	\centering
	\includegraphics[width = \textwidth]{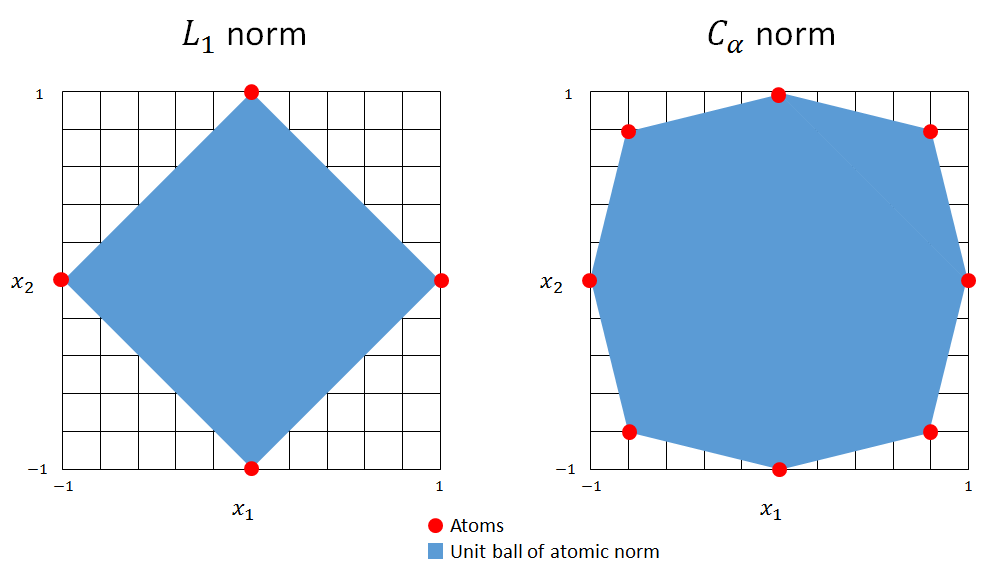}
	\caption{Atoms, their convex hull, and relation to the $L_1$ and $C_{\alpha}$ norms in $\mathbb{R}^2$.}
	\label{fig:Atomic_Norms_Explanation}
\end{figure}

Choosing the atoms as the unit vectors of $\mathbb{R}^2$ and forming the convex hull gives the unit ball of the $L_1$ norm. Hence, for $\mathcal{A}_{L_1} = \{ \pm \mathbf{e}_i \}_{i=1}^2$, the atomic norm is the $L_1$ norm (see left side of \autoref{fig:Atomic_Norms_Explanation}). If we extend then set of atoms to also include the points $\frac{1}{2 (1 - \alpha)} [ \pm 1, \pm 1 ]^T$, for $0 < \alpha < \frac{1}{2}$, i.e. $\mathcal{A}_1 = \{ \pm \mathbf{e}_i \}_{i=1}^2 \cup \frac{1}{2 (1 - \alpha)} [ \pm 1, \pm 1 ]^T \,, 0 < \alpha < \frac{1}{2}$, then the atomic norm of $\mathcal{A}_1$ is the $C_{\alpha}$ norm in $\mathbb{R}^2$, with $0 < \alpha < \frac{1}{2}$ (see right side of \autoref{fig:Atomic_Norms_Explanation} and \autoref{conj:CVaR_Norm_atoms1}). \\

A formal relation between $\text{conv} (\mathcal{A})$ and the atomic norm induced by $\mathcal{A}$ can be derived from different results of convex analysis:

\begin{Definition}[{\cite[p. 128]{2001Hiriart_Fundamentals_of_Convex_Analysis} Gauge of a set}] \label{def:gauge}
	Let $\mathcal{A}$ be a closed convex set containing the origin. The function defined by
	\begin{equation} \label{eqn:gauge}
		\gamma_{\mathcal{A}} (x) \defeq \inf \{ \lambda > 0 : x \in \lambda \emph{~conv}(\mathcal{A}) \}
	\end{equation}
	is called the \emph{gauge} of $\mathcal{A}$. If $\not \exists \lambda : x \in \lambda \emph{~conv}(\mathcal{A})$, then $\gamma_{\mathcal{A}} (x) = + \infty$.
\end{Definition}

\begin{Proposition}[{\cite[p. 10]{Bonsall1991_General_atomic_decomposition_theorem}}] \label{prop:gauge_alternative_definition}
	Assume that the centroid of $\emph{~conv}(\mathcal{A})$ is at the origin, which can be achieved by appropriate recentering. Then the gauge function can be rewritten as
	\begin{equation} \label{eqn:gauge_alternative_definition}
		\gamma_{\mathcal{A}} (x) = \inf \left\{ \sum \limits_{\mathbf{a} \in \mathcal{A}} c_{\mathbf{a}} : \mathbf{x} = \sum \limits_{\mathbf{a} \in \mathcal{A}} c_{\mathbf{a}} \mathbf{a} \,, \text{~~~~} c_{\mathbf{a}} \geq 0 \forall \mathbf{a} \in \mathcal{A} \right\} .
	\end{equation}
\end{Proposition}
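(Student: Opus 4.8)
The plan is to prove the identity by establishing the two inequalities between $\gamma_{\mathcal{A}}(\mathbf{x})$ and the value of the right-hand side, which I denote $\nu_{\mathcal{A}}(\mathbf{x}) \defeq \inf\{\sum_{\mathbf{a}\in\mathcal{A}} c_{\mathbf{a}} : \mathbf{x} = \sum_{\mathbf{a}\in\mathcal{A}} c_{\mathbf{a}}\mathbf{a},\ c_{\mathbf{a}}\ge 0\}$, where in every admissible representation only finitely many $c_{\mathbf{a}}$ are nonzero. The key observation linking the two is elementary: a point lies in $\lambda\,\text{conv}(\mathcal{A})$ exactly when it is $\lambda$ times a finite convex combination of atoms, i.e. a non-negative combination of atoms whose coefficients sum to $\lambda$.

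First I would show $\nu_{\mathcal{A}}(\mathbf{x}) \le \gamma_{\mathcal{A}}(\mathbf{x})$. Fix any $\lambda > 0$ with $\mathbf{x}\in\lambda\,\text{conv}(\mathcal{A})$. By the definition of the convex hull, $\mathbf{x} = \lambda\sum_{i=1}^m \theta_i \mathbf{a}_i$ with $\mathbf{a}_i\in\mathcal{A}$, $\theta_i\ge 0$, and $\sum_i\theta_i = 1$. Setting $c_{\mathbf{a}_i} = \lambda\theta_i$ (and $c_{\mathbf{a}} = 0$ otherwise) produces an admissible representation of $\mathbf{x}$ with $\sum_{\mathbf{a}} c_{\mathbf{a}} = \lambda$, whence $\nu_{\mathcal{A}}(\mathbf{x})\le\lambda$. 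Taking the infimum over all such $\lambda$ gives the inequality.

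Conversely, for $\gamma_{\mathcal{A}}(\mathbf{x})\le\nu_{\mathcal{A}}(\mathbf{x})$ I would start from an arbitrary admissible representation $\mathbf{x}=\sum_{\mathbf{a}}c_{\mathbf{a}}\mathbf{a}$ and set $s\defeq\sum_{\mathbf{a}}c_{\mathbf{a}}$. If $s>0$, then $\mathbf{x}=s\sum_{\mathbf{a}}(c_{\mathbf{a}}/s)\mathbf{a}$ exhibits $\mathbf{x}$ as $s$ times a convex combination of atoms, so $\mathbf{x}\in s\,\text{conv}(\mathcal{A})$ and therefore $\gamma_{\mathcal{A}}(\mathbf{x})\le s$; taking the infimum over representations yields $\gamma_{\mathcal{A}}(\mathbf{x})\le\nu_{\mathcal{A}}(\mathbf{x})$.

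The only point requiring the centroid hypothesis is the degenerate case $s=0$, equivalently $\mathbf{x}=\mathbf{0}$. Recentering so that the centroid of $\text{conv}(\mathcal{A})$ is the origin guarantees $\mathbf{0}\in\text{conv}(\mathcal{A})$, hence $\mathbf{0}\in\lambda\,\text{conv}(\mathcal{A})$ for every $\lambda>0$ and $\gamma_{\mathcal{A}}(\mathbf{0})=0$, while the zero representation gives $\nu_{\mathcal{A}}(\mathbf{0})=0$, so the two agree. I expect the genuine content of the argument to be nothing deeper than this bookkeeping: the substance is the bijection between scaled convex combinations and normalized non-negative combinations, and the main thing to be careful about is that convex-hull membership means a \emph{finite} combination (so the atomic sums are well defined) together with the $\mathbf{x}=\mathbf{0}$ boundary case, which is exactly where $\mathbf{0}\in\text{conv}(\mathcal{A})$ is needed. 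Since I only ever compare infima, I never require either infimum to be attained, so no closedness or compactness argument enters the proof.
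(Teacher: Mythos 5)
Your proof is correct and complete: the two-inequality argument via the normalization bijection (scaled convex combinations of atoms $\leftrightarrow$ non-negative decompositions with coefficient sum $\lambda$), with the degenerate $\mathbf{x}=\mathbf{0}$ case isolated as the only place the centroid hypothesis enters, is precisely the standard proof of this identity. The thesis itself states the proposition by citation without reproducing any proof, so there is no in-paper argument to compare against; your write-up also correctly flags the one genuinely delicate point, namely that membership in $\text{conv}(\mathcal{A})$ means a \emph{finite} convex combination, so no closedness or attainment considerations are needed.
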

Furthermore, if $\mathcal{A}$ is centrally symmetric about the origin (i.e. $\mathbf{a} \in \mathcal{A}$ if and only if $- \mathbf{a} \in \mathcal{A}$), then the gauge $\gamma_{\mathcal{A}}$ is a norm, which is called the \emph{atomic norm} induced by $\mathcal{A}$ \cite[p. 810]{Parillo2010_Convex_Geometry}. In this case, it will be denoted by $\LnormDot{\mathcal{A}}$. The support function of $\mathcal{A}$ is given below.
\begin{Definition}[{\cite[p. 134]{2001Hiriart_Fundamentals_of_Convex_Analysis}, \cite[p. 810]{Parillo2010_Convex_Geometry} Support Function}] \label{def:support_function}
	Let $\mathcal{A}$ be a non-empty set in $\mathbb{R}^n$. The function defined by
	\begin{equation} \label{eqn:support_function}
		\LnormX[*]{\mathcal{A}} \defeq \sup \left\{ \langle \mathbf{x}, \mathbf{a} \rangle : \mathbf{a} \in \mathcal{A} \right\}
	\end{equation}
	is called the \emph{support function} of $\mathcal{A}$.\footnote{$\langle \mathbf{x}, \mathbf{a} \rangle$  denotes the dot-product $\mathbf{x}^T \mathbf{a}$.}
\end{Definition}
If $\LnormDot{\mathcal{A}}$ is a norm, the support function $\LnormDot[*]{\mathcal{A}}$ is the dual norm of the atomic norm. This definition shows that the unit ball of $\LnormDot{\mathcal{A}}$ is equal to conv$(\mathcal{A})$ \cite[p. 810]{Parillo2010_Convex_Geometry}.\\

In addition to the above concepts, some background on cones is also necessary for the following sections:
\begin{Definition}[{\cite[p. 21]{2001Hiriart_Fundamentals_of_Convex_Analysis} Convex Cone}] \label{def:convex_cone}
	The set $K$ is a cone if $\forall t > 0\,, \mathbf{k} \in K \Rightarrow t \mathbf{k} \in K$. Furthermore, the cone is convex if the set $K$ is convex.
\end{Definition}
\begin{Definition}[{\cite[p. 814]{Parillo2010_Convex_Geometry} Polar Cone}] \label{def:polar_cone}
	The polar $K^*$ of a cone $K$ is the cone
	\begin{equation} \label{eqn:polar_cone}
		K^* \defeq \left\{ \xinR{p} \,\,\, : \,\,\, \langle \mathbf{x}, \mathbf{k} \rangle \leq 0 \,\,\, \forall \,\,\, \mathbf{k} \in K \right\} .
	\end{equation}
\end{Definition}
To provide a better understand of cones and polar cones, examples (taken from \cite[p. 35]{2011Acary_Nonsmooth_Modelling}) are shown in \autoref{fig:convex_cones_examples}.
\begin{figure}[H]
	\centering
	\includegraphics[width = 0.3\textwidth]{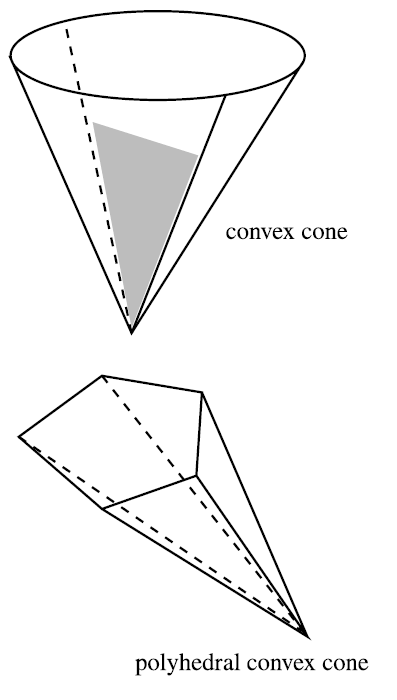}
	\includegraphics[width = 0.6\textwidth]{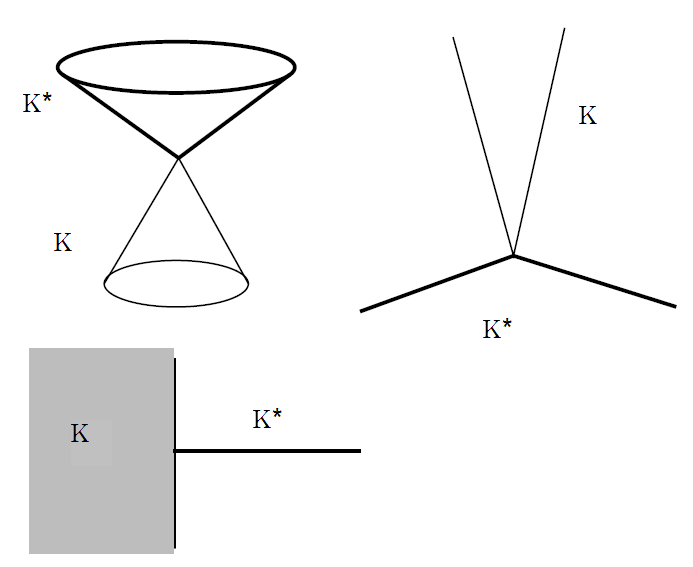}	
	\caption{\cite[p. 35]{2011Acary_Nonsmooth_Modelling} Examples of cones $K$ and polar cones $K^*$.}
	\label{fig:convex_cones_examples}
\end{figure}

\begin{Definition}[{\cite[p. 814]{Parillo2010_Convex_Geometry} Tangent Cone}] \label{def:tangent_cone}
	For some non-zero $\xinR{p}$, the \emph{tangent cone} at $\mathbf{x}$ with respect to the scaled unit ball $\LnormX{\mathcal{A}}$\emph{conv}$(\mathcal{A})$ is
	\begin{equation} \label{eqn:tangent_cone}
		T_{\mathcal{A}} (\mathbf{x}) \defeq \text{cone}\left\{ \mathbf{z} - \mathbf{x} \,\,\, : \,\,\, \vert \vert \mathbf{z} \vert \vert_{\mathcal{A}} \leq \LnormX{\mathcal{A}} \right\} .
	\end{equation}
\end{Definition}

\begin{Definition}[{\cite[p. 814]{Parillo2010_Convex_Geometry} Normal Cone}] \label{def:normal_cone}
	The \emph{normal cone} $N_{\mathcal{A}} (\mathbf{x})$ at $\mathbf{x}$ with respect to the scaled unit ball $\LnormX{\mathcal{A}}$\emph{conv}$(\mathcal{A})$ is the set of all directions that form obtuse angles with every descent direction of the atomic norm $\LnormDot{\mathcal{A}}$ at the point $\mathbf{x}$, i.e.
	\begin{equation} \label{eqn:normal_cone}
		N_{\mathcal{A}} (\mathbf{x}) \defeq \left\{ \mathbf{s} : \langle \mathbf{s}, \mathbf{z} - \mathbf{x} \rangle \leq 0 \,\,\, \forall \,\,\, \mathbf{z} \text{~s.t.~} \vert \vert \mathbf{z} \vert \vert_{\mathcal{A}} \leq \LnormX{\mathcal{A}} \right\} .
	\end{equation}
\end{Definition}
Examples of tangent and normal cones for a general convex set $C$ (again taken from \cite[p. 49]{2011Acary_Nonsmooth_Modelling}) are shown in \autoref{fig:tangent_normal_cones_examples} to provide a better understanding of these concepts.
\begin{figure}[H]
	\centering
	\includegraphics[width = 0.9\textwidth]{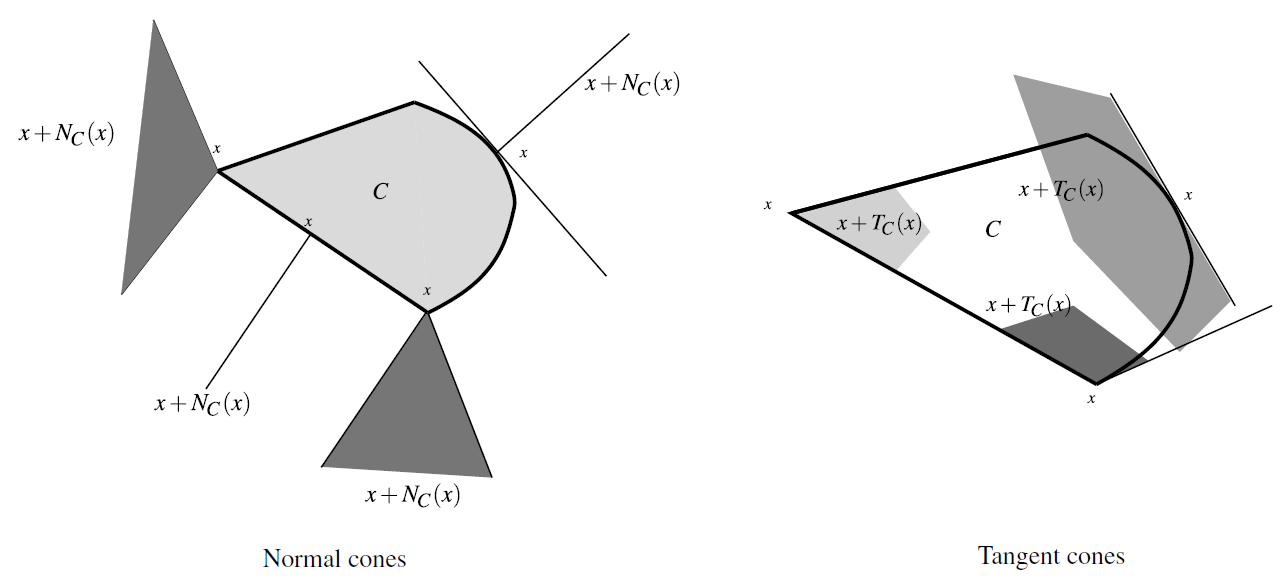}	
	\caption{\cite[p. 49]{2011Acary_Nonsmooth_Modelling} Examples of tangent and normal cones with respect to a set $C$.}
	\label{fig:tangent_normal_cones_examples}
\end{figure}

The tangent cone is equal to the set of descent directions of the atomic norm $\LnormDot{\mathcal{A}}$ at point $\mathbf{x}$, i.e. the set of all directions $\mathbf{d}$ such that the directional derivative is negative \cite[p. 814]{Parillo2010_Convex_Geometry}.

The normal cone is equal to the set of all normals of hyperplanes given by normal vectors $\mathbf{s}$ that support the scaled unit ball $\LnormX{\mathcal{A}}$conv$(\mathcal{A})$ at $\mathbf{x}$. Additionally, the tangent cone $T_{\mathcal{A}} (x)$ and normal cone $N_{\mathcal{A}} (x)$ are polar cones of each other. And finally, the normal cone $N_{\mathcal{A}} (x)$ is the conic hull of the subdifferential of the atomic norm at $\mathbf{x}$ \cite[p. 814]{Parillo2010_Convex_Geometry}.

%%%%%%%%%%%%%%%%%%%
%  Section .2: Recovery Conditions
%%%%%%%%%%%%%%%%%%%

\section{Recovery Conditions} \label{sec:MRuAN-Recovery_Conditions}

This section states the conditions that are necessary to recover a vector $\hat{\mathbf{x}}$ exactly (when the measurements $\mathbf{y} \in \mathbb{R}^n$ are noise free) or robustly (when the measurements are noisy). The concepts presented in \autoref{sec:MRuAN-Background} are used to derive the number of measurements $n$ needed to ensure exact (or robust) recovery. \\

Recall \autoref{prog:exact_recovery_primal}, which states
\begin{equation} 
\begin{array}{rll}
	\hat{\mathbf{x}} = \arg \min \limits_{\mathbf{x}} & \LnormX{\mathcal{A}} \\
	\text{s.t.} & \mathbf{y} = \mathbf{\Phi} \mathbf{x}
\end{array} \,.
\notag
\end{equation}

The dual problem of \ref{prog:exact_recovery_primal} is \cite[p. 811]{Parillo2010_Convex_Geometry}
\begin{Problem}[prog:exact_recovery_dual]
\left.
\begin{array}{rll}
	\max \limits_{\mathbf{z}} & \mathbf{y}^T \mathbf{z} \\
	\text{s.t.} & \vert \vert \mathbf{\Phi}^T \mathbf{z} \vert \vert \leq 1
\end{array}
\right\} \,.
\end{Problem}

Now suppose that the measurements $\mathbf{y}$ are noisy, i.e. $\mathbf{y}$ is formed as  $\mathbf{y} = \mathbf{\Phi} \mathbf{x}^* + \omega$, where $\omega$ is the noise term. If an upper bound on the noise term is known, i.e. $\vert \vert \omega \vert \vert \leq \delta$, the constraint in \autoref{prog:exact_recovery_primal} can be relaxed to give \cite[p. 811]{Parillo2010_Convex_Geometry}
\begin{Problem}[prog:robust_recovery_primal]
\left.
\begin{array}{rll}
	\hat{\mathbf{x}} = \arg \min \limits_{\mathbf{x}} & \LnormX{\mathcal{A}} \\
	\text{s.t.} & \vert \vert \mathbf{y} - \mathbf{\Phi} \mathbf{x} \vert \vert \leq \delta
\end{array}
\right\} \,.
\end{Problem}

In the noise free case, the solution to \autoref{prog:exact_recovery_primal} ($\hat{\mathbf{x}}$) is considered an \emph{exact recovery} so that $\hat{\mathbf{x}} = \mathbf{x}^*$. If the error $\vert \vert \hat{\mathbf{x}} - \mathbf{x}^* \vert \vert$ is small in \autoref{prog:robust_recovery_primal} then the recovery is considered \emph{robust}. The conditions for exact and robust recovery will be given below.

Let $\Ker (\mathbf{\Phi})$ denote the kernel or nullspace of the linear mapping $\mathbf{\Phi}$. Then the exact recovery condition is stated in \autoref{prop:exact_recovery_condition} below.
\begin{Proposition}[{\cite[p. 815]{Parillo2010_Convex_Geometry} Exact Recovery Condition}] \label{prop:exact_recovery_condition}
	$\hat{\mathbf{x}} = \mathbf{x}^*$ is the unique optimal solution of \autoref{prog:exact_recovery_primal} if and only if $\Ker (\mathbf{\Phi}) \cap T_{\mathcal{A}} (\mathbf{x}^*) = \{ 0 \}$.
\end{Proposition}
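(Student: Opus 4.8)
The plan is to exploit that the feasible set of \autoref{prog:exact_recovery_primal} is the affine subspace $\mathbf{x}^* + \Ker(\mathbf{\Phi})$, so that a point $\mathbf{x}$ is feasible precisely when $\mathbf{x} - \mathbf{x}^* \in \Ker(\mathbf{\Phi})$, and to read off $T_{\mathcal{A}}(\mathbf{x}^*)$ from \autoref{def:tangent_cone} as the cone generated by the convex set $G = \{ \mathbf{z} - \mathbf{x}^* : \vert \vert \mathbf{z} \vert \vert_{\mathcal{A}} \leq \vert \vert \mathbf{x}^* \vert \vert_{\mathcal{A}} \}$ of directions pointing into the scaled unit ball. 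Because $G$ is convex, its conic hull is simply $\{ t \mathbf{g} : t \geq 0,\ \mathbf{g} \in G \}$, so that $T_{\mathcal{A}}(\mathbf{x}^*)$ is exactly the set of directions along which the atomic norm does not increase. I will then prove the two implications of \autoref{prop:exact_recovery_condition} separately, each by a short contradiction argument.

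For the ``if'' direction, assume $\Ker(\mathbf{\Phi}) \cap T_{\mathcal{A}}(\mathbf{x}^*) = \{0\}$ and let $\mathbf{x}'$ be any feasible point with $\mathbf{x}' \neq \mathbf{x}^*$. Put $\mathbf{d} = \mathbf{x}' - \mathbf{x}^*$, a nonzero element of $\Ker(\mathbf{\Phi})$. If we had $\vert \vert \mathbf{x}' \vert \vert_{\mathcal{A}} \leq \vert \vert \mathbf{x}^* \vert \vert_{\mathcal{A}}$, then $\mathbf{d} = \mathbf{x}' - \mathbf{x}^*$ would lie in the generating set $G$, hence in $T_{\mathcal{A}}(\mathbf{x}^*)$, forcing $\mathbf{d} \in \Ker(\mathbf{\Phi}) \cap T_{\mathcal{A}}(\mathbf{x}^*) = \{0\}$, which contradicts $\mathbf{d} \neq 0$. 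Therefore $\vert \vert \mathbf{x}' \vert \vert_{\mathcal{A}} > \vert \vert \mathbf{x}^* \vert \vert_{\mathcal{A}}$ for every feasible $\mathbf{x}' \neq \mathbf{x}^*$, which is precisely the statement that $\mathbf{x}^*$ is the unique optimal solution.

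For the ``only if'' direction I argue the contrapositive: suppose $\Ker(\mathbf{\Phi}) \cap T_{\mathcal{A}}(\mathbf{x}^*)$ contains some $\mathbf{d} \neq 0$. Using the conic-hull description above, write $\mathbf{d} = t(\mathbf{z} - \mathbf{x}^*)$ with $t > 0$ and $\vert \vert \mathbf{z} \vert \vert_{\mathcal{A}} \leq \vert \vert \mathbf{x}^* \vert \vert_{\mathcal{A}}$. Then $\mathbf{z} = \mathbf{x}^* + t^{-1} \mathbf{d}$ satisfies $\mathbf{\Phi} \mathbf{z} = \mathbf{\Phi} \mathbf{x}^* = \mathbf{y}$ because $\mathbf{d} \in \Ker(\mathbf{\Phi})$, so $\mathbf{z}$ is feasible; moreover $\mathbf{z} \neq \mathbf{x}^*$ and $\vert \vert \mathbf{z} \vert \vert_{\mathcal{A}} \leq \vert \vert \mathbf{x}^* \vert \vert_{\mathcal{A}}$. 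Hence $\mathbf{x}^*$ is either not optimal or not the unique optimum, establishing the contrapositive and completing the equivalence.

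The step requiring the most care is the identification of $T_{\mathcal{A}}(\mathbf{x}^*)$ used in both directions, namely that its generating cone equals $\{ t(\mathbf{z} - \mathbf{x}^*) : t \geq 0,\ \vert \vert \mathbf{z} \vert \vert_{\mathcal{A}} \leq \vert \vert \mathbf{x}^* \vert \vert_{\mathcal{A}} \}$. This relies on convexity of the norm ball, which guarantees that the conic hull of the convex set $G$ coincides with $\bigcup_{t \geq 0} t G$ without any further combinations, so that membership in $T_{\mathcal{A}}(\mathbf{x}^*)$ is equivalent to the norm not strictly increasing along that direction. If \autoref{def:tangent_cone} is instead read as the topological closure of this cone, the ``only if'' direction needs a short limiting argument to handle directions $\mathbf{d}$ lying only on the boundary, invoking continuity of $\vert \vert \cdot \vert \vert_{\mathcal{A}}$ and the fact that $\Ker(\mathbf{\Phi})$ is a closed subspace; this closure bookkeeping is the only genuine obstacle, whereas the geometric content is entirely captured by the two contradiction arguments above.
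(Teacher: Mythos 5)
Your proof is correct and takes essentially the same route as the argument in \cite[p.~815]{Parillo2010_Convex_Geometry}, to which the thesis defers for this proposition: identify the feasible set with $\mathbf{x}^* + \Ker(\mathbf{\Phi})$, note that since the generating set in \autoref{def:tangent_cone} is convex its conic hull consists of plain nonnegative multiples of its elements, and conclude that unique optimality of $\mathbf{x}^*$ is exactly the absence of a nonzero kernel direction along which $\LnormDot{\mathcal{A}}$ fails to increase. Your closing caveat is unnecessary --- \autoref{def:tangent_cone} involves no closure, and indeed under a closed-cone reading the ``only if'' direction would genuinely fail rather than need a limiting argument (a kernel direction tangent to a strictly convex unit ball lies in the closed cone while $\mathbf{x}^*$ remains the unique optimum), so it is just as well that your main argument never relies on one.
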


Given that the measurements of $\mathbf{y}$ are noisy, it is possible to give a condition for when $\mathbf{x}^*$ can be well approximated.
\begin{Proposition}[{\cite[p. 815]{Parillo2010_Convex_Geometry} Proximity of Robust Recovery}] \label{prop:robust_recovery_condition}
	Suppose that there are $n$ noisy measurements $\mathbf{y} = \mathbf{\Phi} \mathbf{x}^* + \omega$ where $\vert \vert \omega \vert \vert \leq \delta$ and $\mathbf{\Phi} : \mathbb{R}^p \rightarrow \mathbb{R}^n$. Let $\hat{\mathbf{x}}$ denote an optimal solution of \autoref{prog:robust_recovery_primal}. Further suppose that $\vert \vert \mathbf{\Phi} \mathbf{z} \vert \vert \geq \epsilon \vert \vert \mathbf{z} \vert \vert$ holds for all $\mathbf{z} \in T_{\mathcal{A}} (\mathbf{x}^*)$. Then $\vert \vert \hat{\mathbf{x}} - \mathbf{x}^* \vert \vert \leq \frac{2 \delta}{\epsilon}$.
\end{Proposition}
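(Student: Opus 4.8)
The plan is to show that the error vector $\mathbf{h} \defeq \hat{\mathbf{x}} - \mathbf{x}^*$ lies in the tangent cone $T_{\mathcal{A}} (\mathbf{x}^*)$, and then to combine the hypothesised lower bound $\vert \vert \mathbf{\Phi} \mathbf{z} \vert \vert \geq \epsilon \vert \vert \mathbf{z} \vert \vert$ on that cone with an upper bound of $2 \delta$ on $\vert \vert \mathbf{\Phi} \mathbf{h} \vert \vert$ obtained from feasibility. The whole argument is essentially a sandwich estimate $\epsilon \vert \vert \mathbf{h} \vert \vert \leq \vert \vert \mathbf{\Phi} \mathbf{h} \vert \vert \leq 2 \delta$.

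First I would observe that the true signal $\mathbf{x}^*$ is feasible for \autoref{prog:robust_recovery_primal}, since $\vert \vert \mathbf{y} - \mathbf{\Phi} \mathbf{x}^* \vert \vert = \vert \vert \omega \vert \vert \leq \delta$. Because $\hat{\mathbf{x}}$ is an optimal solution, which minimises the atomic norm over the feasible set, optimality yields $\vert \vert \hat{\mathbf{x}} \vert \vert_{\mathcal{A}} \leq \vert \vert \mathbf{x}^* \vert \vert_{\mathcal{A}}$. Appealing to the definition of the tangent cone in \autoref{def:tangent_cone}, namely the cone generated by the set $\{ \mathbf{z} - \mathbf{x}^* : \vert \vert \mathbf{z} \vert \vert_{\mathcal{A}} \leq \vert \vert \mathbf{x}^* \vert \vert_{\mathcal{A}} \}$, and taking $\mathbf{z} = \hat{\mathbf{x}}$, I conclude that $\mathbf{h} = \hat{\mathbf{x}} - \mathbf{x}^* \in T_{\mathcal{A}} (\mathbf{x}^*)$.

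Next I would bound $\vert \vert \mathbf{\Phi} \mathbf{h} \vert \vert$ from above. Writing $\mathbf{\Phi} \mathbf{h} = ( \mathbf{\Phi} \hat{\mathbf{x}} - \mathbf{y} ) + ( \mathbf{y} - \mathbf{\Phi} \mathbf{x}^* )$ and using $\mathbf{y} - \mathbf{\Phi} \mathbf{x}^* = \omega$, the triangle inequality gives $\vert \vert \mathbf{\Phi} \mathbf{h} \vert \vert \leq \vert \vert \mathbf{\Phi} \hat{\mathbf{x}} - \mathbf{y} \vert \vert + \vert \vert \omega \vert \vert \leq \delta + \delta = 2 \delta$, where the first summand is at most $\delta$ by feasibility of $\hat{\mathbf{x}}$ in \autoref{prog:robust_recovery_primal} and the second is at most $\delta$ by assumption. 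Applying the restricted lower bound to $\mathbf{h} \in T_{\mathcal{A}} (\mathbf{x}^*)$ then produces $\epsilon \vert \vert \mathbf{h} \vert \vert \leq \vert \vert \mathbf{\Phi} \mathbf{h} \vert \vert \leq 2 \delta$, so that $\vert \vert \hat{\mathbf{x}} - \mathbf{x}^* \vert \vert \leq \tfrac{2 \delta}{\epsilon}$, as claimed.

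The only genuinely delicate point is the first step: one must be careful that membership in the generating set $\{ \mathbf{z} - \mathbf{x}^* \}$ really places $\mathbf{h}$ in $T_{\mathcal{A}} (\mathbf{x}^*)$ (it does, with conic coefficient $1$, though the tangent cone is usually taken as the closure of this conic hull), and that the restricted-eigenvalue hypothesis $\vert \vert \mathbf{\Phi} \mathbf{z} \vert \vert \geq \epsilon \vert \vert \mathbf{z} \vert \vert$ is assumed for \emph{all} $\mathbf{z}$ in the cone, including such generators, so that it is legitimate to invoke it for $\mathbf{h}$. Everything else reduces to feasibility bookkeeping and the triangle inequality, so I expect no further obstacle.
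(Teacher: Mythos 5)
Your proof is correct and takes exactly the approach of the reference proof in \cite[p.~815]{Parillo2010_Convex_Geometry}, to which the thesis defers for this proposition: feasibility of $\mathbf{x}^*$ forces $\vert \vert \hat{\mathbf{x}} \vert \vert_{\mathcal{A}} \leq \vert \vert \mathbf{x}^* \vert \vert_{\mathcal{A}}$, placing $\mathbf{h} = \hat{\mathbf{x}} - \mathbf{x}^*$ in $T_{\mathcal{A}}(\mathbf{x}^*)$ by \autoref{def:tangent_cone}, after which the sandwich $\epsilon \vert \vert \mathbf{h} \vert \vert \leq \vert \vert \mathbf{\Phi} \mathbf{h} \vert \vert \leq 2\delta$ via the triangle inequality on $(\mathbf{\Phi}\hat{\mathbf{x}} - \mathbf{y}) + (\mathbf{y} - \mathbf{\Phi}\mathbf{x}^*)$ yields the claim. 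Your cautionary remarks about cone membership and the scope of the restricted bound are well taken but pose no actual obstacle, so nothing further is needed.
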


The proofs of \autoref{prop:exact_recovery_condition} and \autoref{prop:robust_recovery_condition} are given in \cite[p. 815]{Parillo2010_Convex_Geometry}. Hence the smaller the tangent cone at $\mathbf{x}^*$ with respect to conv$(\mathcal{A})$, the easier it is to satisfy the empty intersection condition of \autoref{prop:exact_recovery_condition} and to recover $\hat{\mathbf{x}}$ \cite[p. 816]{Parillo2010_Convex_Geometry}.

By \autoref{prop:exact_recovery_condition}, $\Ker (\mathbf{\Phi})$ must miss $T_{\mathcal{A}} (\mathbf{x}^*)$ for an exact recovery. Gordon (\cite{1998Gordon_Milmans_Inequality}) derived an expression for the probability that a uniformly distributed subspace of fixed dimension misses a cone and his findings form the basis of the analysis of Chandrasekaran et. al (\cite{Parillo2010_Convex_Geometry}). An important part in the analysis is the Gaussian width of a set.
\begin{Definition}[{\cite[p. 817]{Parillo2010_Convex_Geometry} Gaussian Width}] \label{def:gaussian_width}
	The \emph{Gaussian width} of a set $S \in \mathbb{R}^p$ is defined as
	\begin{equation} \label{eqn:gaussian_width}
		w(S) \defeq \E_{\mathbf{g}} \left[ \sup \limits_{\mathbf{z} \in S} \mathbf{g}^T \mathbf{z} \right] \,,
	\end{equation}
	where $\mathbf{g} \sim N(\mathbf{0},\mathbf{I})$ is a vector of independent zero-mean unit-variance Gaussians.
\end{Definition}
Gordon defined the likelihood that a random subspace misses a cone $K$ purely in terms of the dimension of the subspace and the Gaussian width $w(K \cap \mathbb{S}^{p-1})$, where $\mathbb{S}^{p-1} \subset \mathbb{R}^p$ is the unit sphere \cite[p. 817]{Parillo2010_Convex_Geometry}. To introduce the following results, the expected length of a $k$-dimensional Gaussian random vector (denoted $\lambda_k$) is needed. By integration and induction, it can be shown that $\lambda_k$ is tightly bounded as $\frac{k}{\sqrt{k+1}} \leq \lambda_k \leq \sqrt{k}$. With this notation, a bound on these quantities can be given. 
\begin{Theorem}[{\cite[p. 86]{1998Gordon_Milmans_Inequality}}] \label{theorem:recovery_theorem}
	Let $\Omega$ be a closed subset of $\mathbb{S}^{p-1}$ and let $\mathbf{\Phi} : \mathbb{R}^p \rightarrow \mathbb{R}^n$ be a random map with i.i.d. zero-mean Gaussian entries having variance one. Then
	\begin{equation} \label{eqn:recovery_theorem}
		\E \left[ \min \limits_{\mathbf{z} \in \Omega} \vert \vert \mathbf{\Phi} \mathbf{z} \vert \vert_2 \right] \geq \lambda_k - w(\Omega) \,.
	\end{equation}
\end{Theorem}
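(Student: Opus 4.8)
The plan is to deduce the bound from Gordon's Gaussian min--max comparison inequality by pitting the random map against an auxiliary Gaussian process that is explicitly computable. First I would rewrite the quantity of interest as a min--max over a product index set. Since $\vert\vert \mathbf{\Phi}\mathbf{z}\vert\vert_2 = \max_{\mathbf{u}\in\mathbb{S}^{n-1}}\mathbf{u}^T\mathbf{\Phi}\mathbf{z}$, we have $\min_{\mathbf{z}\in\Omega}\vert\vert\mathbf{\Phi}\mathbf{z}\vert\vert_2 = \min_{\mathbf{z}\in\Omega}\max_{\mathbf{u}\in\mathbb{S}^{n-1}} A_{\mathbf{z},\mathbf{u}}$, where $A_{\mathbf{z},\mathbf{u}} \defeq \mathbf{u}^T\mathbf{\Phi}\mathbf{z} = \sum_{i,j}\Phi_{ij}u_i z_j$ is a centered Gaussian process indexed by $(\mathbf{z},\mathbf{u})\in\Omega\times\mathbb{S}^{n-1}$. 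Against it I would compare the process $B_{\mathbf{z},\mathbf{u}}\defeq \mathbf{g}^T\mathbf{z}+\mathbf{h}^T\mathbf{u}$, with $\mathbf{g}\sim N(\mathbf{0},\mathbf{I}_p)$ and $\mathbf{h}\sim N(\mathbf{0},\mathbf{I}_n)$ independent.

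The two key covariance computations are $\E[A_{\mathbf{z},\mathbf{u}}A_{\mathbf{z}',\mathbf{u}'}] = \langle\mathbf{z},\mathbf{z}'\rangle\langle\mathbf{u},\mathbf{u}'\rangle$ and $\E[B_{\mathbf{z},\mathbf{u}}B_{\mathbf{z}',\mathbf{u}'}] = \langle\mathbf{z},\mathbf{z}'\rangle+\langle\mathbf{u},\mathbf{u}'\rangle$. From these I would verify the increment conditions of Gordon's inequality. For fixed $\mathbf{z}$ the two increments coincide, $\E[(A_{\mathbf{z},\mathbf{u}}-A_{\mathbf{z},\mathbf{u}'})^2] = 2(1-\langle\mathbf{u},\mathbf{u}'\rangle) = \E[(B_{\mathbf{z},\mathbf{u}}-B_{\mathbf{z},\mathbf{u}'})^2]$, using $\vert\vert\mathbf{z}\vert\vert_2=\vert\vert\mathbf{u}\vert\vert_2=1$. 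For $\mathbf{z}\neq\mathbf{z}'$ a short calculation gives $\E[(A_{\mathbf{z},\mathbf{u}}-A_{\mathbf{z}',\mathbf{u}'})^2]-\E[(B_{\mathbf{z},\mathbf{u}}-B_{\mathbf{z}',\mathbf{u}'})^2] = -(1-\langle\mathbf{z},\mathbf{z}'\rangle)(1-\langle\mathbf{u},\mathbf{u}'\rangle)\le 0$, since both factors are nonnegative by Cauchy--Schwarz. These are exactly the hypotheses under which Gordon's comparison theorem yields $\E[\min_{\mathbf{z}}\max_{\mathbf{u}}A_{\mathbf{z},\mathbf{u}}]\ge\E[\min_{\mathbf{z}}\max_{\mathbf{u}}B_{\mathbf{z},\mathbf{u}}]$.

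It then remains to evaluate the comparison side explicitly. Since $\max_{\mathbf{u}\in\mathbb{S}^{n-1}}(\mathbf{g}^T\mathbf{z}+\mathbf{h}^T\mathbf{u}) = \mathbf{g}^T\mathbf{z}+\vert\vert\mathbf{h}\vert\vert_2$, minimising over $\mathbf{z}$ and taking expectations (using independence of $\mathbf{g}$ and $\mathbf{h}$) gives $\E[\min_{\mathbf{z}\in\Omega}\max_{\mathbf{u}}B_{\mathbf{z},\mathbf{u}}] = \E[\vert\vert\mathbf{h}\vert\vert_2]+\E[\min_{\mathbf{z}\in\Omega}\mathbf{g}^T\mathbf{z}]$. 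The first term is $\lambda_k$, the expected length of the $k$-dimensional Gaussian vector (here $k=n$). For the second, the symmetry $-\mathbf{g}\sim\mathbf{g}$ gives $\E[\min_{\mathbf{z}\in\Omega}\mathbf{g}^T\mathbf{z}] = -\E[\sup_{\mathbf{z}\in\Omega}\mathbf{g}^T\mathbf{z}] = -w(\Omega)$ by \autoref{def:gaussian_width}. Combining the last three identities yields $\E[\min_{\mathbf{z}\in\Omega}\vert\vert\mathbf{\Phi}\mathbf{z}\vert\vert_2]\ge\lambda_k-w(\Omega)$, as claimed.

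The delicate points, which I would treat with care rather than routine computation, are two. First, Gordon's theorem is stated for finite index families, whereas $\Omega$ and $\mathbb{S}^{n-1}$ are infinite; I would handle this by applying the inequality to finite nets and passing to the limit, using compactness of $\Omega$ (a closed subset of the sphere) together with almost-sure continuity of the sample paths of $A$ and $B$ to justify the interchange of limit and expectation. Second, and more importantly, one must get the direction of the comparison and the bookkeeping of the increment conditions exactly right: the two processes have \emph{unequal} variances ($\E A_{\mathbf{z},\mathbf{u}}^2=1$ versus $\E B_{\mathbf{z},\mathbf{u}}^2=2$), so I would invoke the squared-increment form of Gordon's inequality, for which equality of variances is not required, and observe that it is the outer increment inequality for distinct $\mathbf{z}$ that carries the content while the inner one holds with equality. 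This is where a sign slip would silently invert the bound, so the factorisation $-(1-\langle\mathbf{z},\mathbf{z}'\rangle)(1-\langle\mathbf{u},\mathbf{u}'\rangle)\le 0$ is the genuine crux of the argument.
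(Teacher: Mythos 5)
Your proof is correct, but there is nothing in the thesis to compare it with: the paper states this result as an imported theorem, citing Gordon \cite{1998Gordon_Milmans_Inequality}, and gives no proof of it (only the downstream recovery statements are proved, in \cite{Parillo2010_Convex_Geometry}). What you have reconstructed is, in substance, Gordon's original escape-through-the-mesh argument: the bilinear process $A_{\mathbf{z},\mathbf{u}}=\mathbf{u}^T\mathbf{\Phi}\mathbf{z}$ over $\Omega\times\mathbb{S}^{n-1}$, the comparison process $B_{\mathbf{z},\mathbf{u}}=\mathbf{g}^T\mathbf{z}+\mathbf{h}^T\mathbf{u}$, the Gaussian min--max comparison, and the exact evaluation $\E\left[\min_{\mathbf{z}\in\Omega}\max_{\mathbf{u}}B_{\mathbf{z},\mathbf{u}}\right]=\E\left[\Vert\mathbf{h}\Vert_2\right]-w(\Omega)=\lambda_n-w(\Omega)$. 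Your covariance identities are right, and you correctly isolate the crux --- equality of the same-$\mathbf{z}$ increments and the sign of the cross-$\mathbf{z}$ increment difference --- though you drop a factor of $2$ in the latter: since $\E\left[(A_{\mathbf{z},\mathbf{u}}-A_{\mathbf{z}',\mathbf{u}'})^2\right]=2-2\langle\mathbf{z},\mathbf{z}'\rangle\langle\mathbf{u},\mathbf{u}'\rangle$ and $\E\left[(B_{\mathbf{z},\mathbf{u}}-B_{\mathbf{z}',\mathbf{u}'})^2\right]=4-2\langle\mathbf{z},\mathbf{z}'\rangle-2\langle\mathbf{u},\mathbf{u}'\rangle$, the difference is $-2\left(1-\langle\mathbf{z},\mathbf{z}'\rangle\right)\left(1-\langle\mathbf{u},\mathbf{u}'\rangle\right)$ rather than $-\left(1-\langle\mathbf{z},\mathbf{z}'\rangle\right)\left(1-\langle\mathbf{u},\mathbf{u}'\rangle\right)$; the sign, and hence the conclusion, is unaffected. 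Your caution about the unequal variances ($\E A_{\mathbf{z},\mathbf{u}}^2=1$ versus $\E B_{\mathbf{z},\mathbf{u}}^2=2$) is well placed and correctly resolved: the increment form of Gordon's inequality suffices, and alternatively one can restore equal variances by replacing $A_{\mathbf{z},\mathbf{u}}$ with $A_{\mathbf{z},\mathbf{u}}+\gamma$ for a single independent $\gamma\sim N(0,1)$, which changes neither the increments nor $\E[\min\max]$, after which the classical covariance-form hypotheses hold verbatim. Two final reconciliations with the statement as printed: the $\lambda_k$ there is a slip for $\lambda_n$ --- the vector $\mathbf{h}$ lives in $\mathbb{R}^n$, and \autoref{corollary:recovery_measurements_n} indeed uses $\lambda_n$ --- so your reading $k=n$ is the intended one; and your limiting argument from finite nets to the compact sets $\Omega$ and $\mathbb{S}^{n-1}$ is the standard and adequate way to bridge Gordon's finite-index statement to the theorem as quoted.
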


\autoref{theorem:recovery_theorem} then leads to the required number of measurements to give an exact or robust recovery with a given probability. Specifically, if the measurement map $\mathbf{\Phi} : \mathbb{R}^p \rightarrow \mathbb{R}^n$ consists of i.i.d. zero-mean Gaussian entries having variance $1/n$, then the required number of measurements is given in \autoref{corollary:recovery_measurements_n}, the proof of which is given in \cite[p. 818f.]{Parillo2010_Convex_Geometry}.
\begin{Corollary}[{\cite[p. 818]{Parillo2010_Convex_Geometry}}] \label{corollary:recovery_measurements_n}
	Let $\mathbf{\Phi} : \mathbb{R}^p \rightarrow \mathbb{R}^n$ be a random map with i.i.d. zero-mean Gaussian entries having variance $1/n$. Further let $\Omega =  T_{\mathcal{A}} (\mathbf{x}^*) \cap \mathbb{S}^{p-1}$ denote the spherical part of the tangent cone $T_{\mathcal{A}} (\mathbf{x}^*)$.
	\begin{enumerate}
		\item Suppose that there are measurements $\mathbf{y} = \mathbf{\Phi} \mathbf{x}^*$ to solve \autoref{prog:exact_recovery_primal}. Then $\mathbf{x}^*$ is the unique optimum of \autoref{prog:exact_recovery_primal} with probability at least $1 - \exp \left( - \frac{1}{2} \left[ \lambda_n - w(\Omega) \right]^2 \right)$ provided 
			\begin{equation} \label{eqn:recovery_measurements_n_exact}
				n \geq w(\Omega)^2 + 1 \,.
			\end{equation}
		\item Suppose that there are noisy measurements $\mathbf{y} = \mathbf{\Phi} \mathbf{x}^* + \omega$, with the noise bounded as $\vert \vert \omega \vert \vert \leq \delta$ to solve \autoref{prog:robust_recovery_primal}. Letting $\hat{\mathbf{x}}$ denote the optimal solution of \autoref{prog:robust_recovery_primal}, then $\vert \vert \mathbf{x}^* - \hat{\mathbf{x}} \vert \vert \leq \frac{2 \delta}{\epsilon}$ with probability at least $1 - \exp \left( - \frac{1}{2} \left[ \lambda_n - w(\Omega) - \sqrt{n} \epsilon \right]^2 \right)$ provided 
			\begin{equation} \label{eqn:recovery_measurements_n_robust}
				n \geq \frac{w(\Omega)^2 + 3/2}{(1 - \epsilon)^2} \,.
			\end{equation}		
	\end{enumerate}	
\end{Corollary}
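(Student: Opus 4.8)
The plan is to reduce both parts of the corollary to a single probabilistic statement about the quantity $m(\mathbf{\Phi})\defeq\min_{\mathbf{z}\in\Omega}\|\mathbf{\Phi}\mathbf{z}\|_2$, the smallest gain of $\mathbf{\Phi}$ on the spherical tangent set $\Omega = T_{\mathcal{A}}(\mathbf{x}^*)\cap\mathbb{S}^{p-1}$, and then to combine Gordon's expectation bound (\autoref{theorem:recovery_theorem}) with Gaussian concentration of measure. Throughout I would pass to the rescaled matrix $\tilde{\mathbf{\Phi}}=\sqrt{n}\,\mathbf{\Phi}$, whose entries are i.i.d. variance-one Gaussians, so that $m(\mathbf{\Phi})=\tfrac{1}{\sqrt{n}}\,m(\tilde{\mathbf{\Phi}})$ and \autoref{theorem:recovery_theorem} applies to give $\E[m(\tilde{\mathbf{\Phi}})]\geq\lambda_n-w(\Omega)$.

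For part 1, I would first invoke \autoref{prop:exact_recovery_condition}: exactness holds precisely when $\Ker(\mathbf{\Phi})\cap T_{\mathcal{A}}(\mathbf{x}^*)=\{0\}$. Since $T_{\mathcal{A}}(\mathbf{x}^*)$ is a cone, this fails exactly when some unit vector of $\Omega$ lies in the kernel, so the condition is equivalent to $m(\mathbf{\Phi})>0$, hence to $m(\tilde{\mathbf{\Phi}})>0$. The second ingredient, which is the one piece not already stated in the excerpt, is concentration: for each fixed $\mathbf{z}$ the map $\tilde{\mathbf{\Phi}}\mapsto\|\tilde{\mathbf{\Phi}}\mathbf{z}\|_2$ is $1$-Lipschitz in the Frobenius norm, and an infimum of $1$-Lipschitz functions is again $1$-Lipschitz, so $m(\tilde{\mathbf{\Phi}})$ is a $1$-Lipschitz function of the Gaussian entries. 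The Gaussian concentration inequality then yields $P\!\left(m(\tilde{\mathbf{\Phi}})\leq\E[m(\tilde{\mathbf{\Phi}})]-t\right)\leq\exp(-t^2/2)$. Taking $t=\lambda_n-w(\Omega)$ and using the expectation bound collapses the failure event $\{m(\tilde{\mathbf{\Phi}})\leq 0\}$ into probability at most $\exp(-\tfrac12[\lambda_n-w(\Omega)]^2)$, which is exactly the claimed bound.

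The catch in the previous step is that the concentration bound only controls the failure probability when $t\geq 0$, i.e. when $\lambda_n\geq w(\Omega)$, and this is where the hypothesis $n\geq w(\Omega)^2+1$ enters. That hypothesis gives $w(\Omega)\leq\sqrt{n-1}$, and the stated bound $\lambda_n\geq n/\sqrt{n+1}$ satisfies $n/\sqrt{n+1}\geq\sqrt{n-1}$ (square both sides: $n^2\geq n^2-1$), so indeed $\lambda_n\geq w(\Omega)$ and part 1 follows.

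Part 2 runs on the same machinery but targets \autoref{prop:robust_recovery_condition}, whose hypothesis $\|\mathbf{\Phi}\mathbf{z}\|\geq\epsilon\|\mathbf{z}\|$ on $T_{\mathcal{A}}(\mathbf{x}^*)$ is, by homogeneity, equivalent to $m(\mathbf{\Phi})\geq\epsilon$, i.e. $m(\tilde{\mathbf{\Phi}})\geq\sqrt{n}\,\epsilon$. Applying the concentration inequality with $t=\lambda_n-w(\Omega)-\sqrt{n}\,\epsilon$ bounds the probability that this fails by $\exp(-\tfrac12[\lambda_n-w(\Omega)-\sqrt{n}\,\epsilon]^2)$, and \autoref{prop:robust_recovery_condition} then delivers $\|\hat{\mathbf{x}}-\mathbf{x}^*\|\leq 2\delta/\epsilon$ on the complementary event. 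I expect the main obstacle to be, again, verifying $t\geq 0$, and showing that the calibrated constant $3/2$ in $n\geq(w(\Omega)^2+3/2)/(1-\epsilon)^2$ is exactly what makes this work: that hypothesis gives $w(\Omega)\leq\sqrt{n(1-\epsilon)^2-3/2}$, and a short estimate using $\sqrt{1-x}\leq 1-x/2$ together with $\lambda_n\geq n/\sqrt{n+1}\geq\sqrt{n}-\tfrac{1}{2\sqrt{n}}$ reduces the required inequality $w(\Omega)+\sqrt{n}\,\epsilon\leq\lambda_n$ to $\tfrac{3}{4(1-\epsilon)}\geq\tfrac12$, which holds for every $\epsilon\in(0,1)$. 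The genuinely non-routine content is thus the Lipschitz/concentration argument feeding off Gordon's theorem; the threshold verifications are elementary but must be tracked carefully to recover the precise exponents and constants in the statement.
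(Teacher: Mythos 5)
Your proposal is correct and follows essentially the same route as the paper's source: the thesis does not reproduce a proof of \autoref{corollary:recovery_measurements_n} but defers to \cite[p. 818f.]{Parillo2010_Convex_Geometry}, and the proof there runs exactly as you describe — reduce both parts to lower-bounding $m(\mathbf{\Phi}) = \min_{\mathbf{z} \in \Omega} \vert \vert \mathbf{\Phi} \mathbf{z} \vert \vert_2$ via \autoref{prop:exact_recovery_condition} and \autoref{prop:robust_recovery_condition}, apply \autoref{theorem:recovery_theorem} after rescaling to variance-one entries, and feed Gordon's expectation bound into Gaussian concentration for the $1$-Lipschitz minimum-gain functional, with the hypotheses $n \geq w(\Omega)^2 + 1$ and $n \geq (w(\Omega)^2 + 3/2)/(1-\epsilon)^2$ serving precisely to make the deviation parameters $\lambda_n - w(\Omega)$ and $\lambda_n - w(\Omega) - \sqrt{n}\,\epsilon$ nonnegative via $\lambda_n \geq n/\sqrt{n+1}$. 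Your threshold verifications, including recovering the constant $3/2$ through $\sqrt{1-x} \leq 1 - x/2$ and the reduction to $\tfrac{3}{4(1-\epsilon)} \geq \tfrac{1}{2}$, check out.
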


Hence, to apply \autoref{corollary:recovery_measurements_n} for finding $n$ (the number of measurements needed to ensure recovery), one must calculate the Gaussian width of $\Omega =  T_{\mathcal{A}} (\mathbf{x}^*) \cap \mathbb{S}^{p-1}$. However, Gaussian widths are not easy to compute \cite[p. 819]{Parillo2010_Convex_Geometry}. Chandrasekaran et. al stated various well-known properties and derived new properties of Gaussian widths that can be used to calculate bounds on Gaussian widths in a variety of cases \cite[p. 819ff.]{Parillo2010_Convex_Geometry}. The most important of these properties within the scope of this dissertation are reproduced in the next section.

%%%%%%%%%%%%%%%%%%%
%  Section .3: Properties of Gaussian Widths
%%%%%%%%%%%%%%%%%%%

\section{Properties of Gaussian Widths} \label{sec:MRuAN-Gaussian_Widths}

This section states properties of Gaussian widths that might be useful\footnote{As a proof on the bounds of the Gaussian width of $T_{\mathcal{A}} (\mathbf{x}^*) \cap \mathbb{S}^{p-1}$ could not be proven within the scope of this dissertation, the author can only make assumptions on which properties might be useful in a proof.} for calculating the Gaussian width of $T_{\mathcal{A}} (\mathbf{x}^*) \cap \mathbb{S}^{p-1}$, where $\mathcal{A}$ are the atoms of the CVaR Norm.\footnote{For a more extensive list of properties see \cite[p. 819ff.]{Parillo2010_Convex_Geometry}.}

\begin{Proposition}[{\cite[p. 821]{Parillo2010_Convex_Geometry}}] \label{prop:GW_property1}
	Let $K$ be any non-empty convex cone in $\mathbb{R}^p$ and let $\mathbf{g} \sim N(\mathbf{0}, \mathbf{I})$ be a random Gaussian vector. Then
	\begin{equation} \label{eqn:GW_property1}
		w(K \cap \mathbb{S}^{p-1}) \leq \E_{\mathbf{g}} \left[ \text{\emph{dist}}(\mathbf{g},K^*) \right] \,,
	\end{equation}
	where \emph{dist} denotes the Euclidean distance between a point and a set.
\end{Proposition}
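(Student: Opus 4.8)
The plan is to bound the supremum inside the Gaussian width \emph{pointwise} in $\mathbf{g}$, and only then take the expectation. That is, for each fixed realization of $\mathbf{g}$ I would prove the deterministic inequality
\begin{equation*}
	\sup \limits_{\mathbf{z} \in K \cap \mathbb{S}^{p-1}} \mathbf{g}^T \mathbf{z} \leq \text{dist}(\mathbf{g}, K^*),
\end{equation*}
and then conclude by monotonicity and linearity of expectation, since $w(K \cap \mathbb{S}^{p-1}) = \E_{\mathbf{g}}\left[ \sup_{\mathbf{z} \in K \cap \mathbb{S}^{p-1}} \mathbf{g}^T \mathbf{z} \right]$ by \autoref{def:gaussian_width}. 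Before starting I would note that we may assume $K$ is closed: replacing $K$ by its closure changes neither the supremum of the linear functional $\mathbf{z} \mapsto \mathbf{g}^T \mathbf{z}$ over $K \cap \mathbb{S}^{p-1}$ (a linear functional attains the same supremum on a set and on its closure) nor the polar cone $K^*$. With $K$ closed, the Euclidean projections onto $K$ and $K^*$ are well defined.

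The key tool is the Moreau decomposition for a closed convex cone and its polar: every $\mathbf{g} \in \mathbb{R}^p$ can be written uniquely as $\mathbf{g} = \Pi_K(\mathbf{g}) + \Pi_{K^*}(\mathbf{g})$, where $\Pi_K(\mathbf{g}) \in K$, $\Pi_{K^*}(\mathbf{g}) \in K^*$, and the two components are orthogonal, $\langle \Pi_K(\mathbf{g}), \Pi_{K^*}(\mathbf{g}) \rangle = 0$. Two consequences drive the proof. First, since $K^*$ contains $\Pi_{K^*}(\mathbf{g})$ and the residual $\mathbf{g} - \Pi_{K^*}(\mathbf{g}) = \Pi_K(\mathbf{g})$ is exactly the projection error, we get $\text{dist}(\mathbf{g}, K^*) = \| \mathbf{g} - \Pi_{K^*}(\mathbf{g}) \| = \| \Pi_K(\mathbf{g}) \|$. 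Second, for any $\mathbf{z} \in K \cap \mathbb{S}^{p-1}$,
\begin{align*}
	\mathbf{g}^T \mathbf{z} &= \Pi_K(\mathbf{g})^T \mathbf{z} + \Pi_{K^*}(\mathbf{g})^T \mathbf{z} \\
	&\leq \Pi_K(\mathbf{g})^T \mathbf{z} \\
	&\leq \| \Pi_K(\mathbf{g}) \| \, \| \mathbf{z} \| = \| \Pi_K(\mathbf{g}) \|,
\end{align*}
where the first inequality uses that $\mathbf{z} \in K$ and $\Pi_{K^*}(\mathbf{g}) \in K^*$, so $\Pi_{K^*}(\mathbf{g})^T \mathbf{z} \leq 0$ by the definition of the polar cone (\autoref{def:polar_cone}), and the second is Cauchy--Schwarz together with $\| \mathbf{z} \| = 1$. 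Combining the two consequences yields the pointwise bound, and taking $\E_{\mathbf{g}}$ of both sides gives \autoref{eqn:GW_property1}.

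The main obstacle is not the algebra above, which is short, but rather invoking the Moreau decomposition cleanly. I would either cite it as a standard fact of convex analysis or, to keep the chapter self-contained, derive the orthogonality $\langle \Pi_K(\mathbf{g}), \Pi_{K^*}(\mathbf{g}) \rangle = 0$ from the variational characterization of the projection onto a cone (the obtuse-angle optimality condition, which says $\langle \mathbf{g} - \Pi_K(\mathbf{g}), \mathbf{k} - \Pi_K(\mathbf{g}) \rangle \leq 0$ for all $\mathbf{k} \in K$, specialized at $\mathbf{k} = 0$ and $\mathbf{k} = 2\Pi_K(\mathbf{g})$). The only subtlety to flag explicitly is the polarity sign convention: because the paper defines $K^*$ with the inequality $\langle \mathbf{x}, \mathbf{k} \rangle \leq 0$ in \autoref{def:polar_cone}, the inner-product term $\Pi_{K^*}(\mathbf{g})^T \mathbf{z}$ is indeed $\leq 0$ and the decomposition components are genuinely orthogonal, so no sign adjustment is needed.
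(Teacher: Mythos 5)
Your argument is correct and complete. One point of context: the thesis itself gives no proof of \autoref{prop:GW_property1} --- it reproduces the statement from \cite[p.~821]{Parillo2010_Convex_Geometry}, whose justification is the conic-duality identity $\text{dist}(\mathbf{g},K^*)=\sup\{\mathbf{g}^T\mathbf{z} : \mathbf{z}\in K,\ \Vert\mathbf{z}\Vert_2\le 1\}$ (from which the bound follows since $K\cap\mathbb{S}^{p-1}$ sits inside $K$ intersected with the unit ball); your Moreau-decomposition computation, namely $\text{dist}(\mathbf{g},K^*)=\Vert\Pi_K(\mathbf{g})\Vert$ together with $\mathbf{g}^T\mathbf{z}\le \Pi_K(\mathbf{g})^T\mathbf{z}\le\Vert\Pi_K(\mathbf{g})\Vert$ for unit $\mathbf{z}\in K$, is precisely the standard proof of that identity, so you are taking essentially the same route as the cited source, only written out in full. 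Your closure reduction is also sound: for a cone one has $\operatorname{cl}(K)\cap\mathbb{S}^{p-1}=\operatorname{cl}\bigl(K\cap\mathbb{S}^{p-1}\bigr)$ (normalize a convergent sequence from $K$), and even without this, replacing $K$ by its closure can only increase the left-hand supremum while leaving $K^*$ unchanged, so proving the inequality for $\operatorname{cl}(K)$ suffices.
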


Since \autoref{corollary:recovery_measurements_n} requires $w(\Omega)^2$, Jensen's inequality is often useful to apply \autoref{prop:GW_property1} \cite[p. 822]{Parillo2010_Convex_Geometry}. Jensen's inequality states that if $\E [\xi]$ exists for a random variable $\xi$ and if $f(x)$ is a convex function, then \cite[p. 88]{2013Borovkov_Probability_Theory}
\begin{equation}
	f \left( \E [\xi] \right) \leq \E \left[ f(\xi) \right] \,. \notag
\end{equation}
Because $\mathbf{g}$ is a random vector, $\text{dist}(\mathbf{g},K^*)$ is a random variable. Also, $f(x) = x^2$ is a convex function. Hence, \cite[p. 822]{Parillo2010_Convex_Geometry}
\begin{equation} \label{eqn:jensens_inequality}
	\E_{\mathbf{g}} \left[ \text{dist}(\mathbf{g},K^*) \right]^2 \leq \E_{\mathbf{g}} \left[ \text{dist}(\mathbf{g},K^*)^2 \right]\,.
\end{equation}

By combining \autoref{eqn:GW_property1} and \autoref{eqn:jensens_inequality}, Chandrasekaran et. al derived the lemma below.
\begin{Lemma}[{\cite[p. 822]{Parillo2010_Convex_Geometry}}] \label{lemma:GW_property2}
	Let $K$ be any non-empty convex cone in $\mathbb{R}^p$. Then
	\begin{equation} \label{eqn:GW_property2}
		w(K \cap \mathbb{S}^{p-1})^2 + w(K^* \cap \mathbb{S}^{p-1})^2 \leq p \,.
	\end{equation}
\end{Lemma}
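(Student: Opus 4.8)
The plan is to combine \autoref{prop:GW_property1} with Jensen's inequality exactly as foreshadowed in \autoref{eqn:jensens_inequality}, and then to close the argument using the orthogonal (Moreau) decomposition of $\mathbb{R}^p$ induced by a closed convex cone and its polar.

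First I would apply \autoref{prop:GW_property1} twice. Applied to $K$ it gives $w(K \cap \mathbb{S}^{p-1}) \le \E_{\mathbf{g}}[\text{dist}(\mathbf{g},K^*)]$; applied to the cone $K^*$, whose polar is $(K^*)^* = K$ because $K$ is a closed convex cone, it gives $w(K^* \cap \mathbb{S}^{p-1}) \le \E_{\mathbf{g}}[\text{dist}(\mathbf{g},K)]$. Squaring both inequalities and invoking Jensen's inequality in the form of \autoref{eqn:jensens_inequality} yields
\begin{align*}
	w(K \cap \mathbb{S}^{p-1})^2 + w(K^* \cap \mathbb{S}^{p-1})^2 \leq \E_{\mathbf{g}}\left[ \text{dist}(\mathbf{g},K^*)^2 \right] + \E_{\mathbf{g}}\left[ \text{dist}(\mathbf{g},K)^2 \right] \,.
\end{align*}

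The heart of the proof is to show that the sum of squared distances inside the expectation collapses to $\| \mathbf{g} \|_2^2$. Here I would use the fact that for a closed convex cone $K$ every vector $\mathbf{g}$ admits the orthogonal decomposition $\mathbf{g} = \Pi_K(\mathbf{g}) + \Pi_{K^*}(\mathbf{g})$, where $\Pi_K$ and $\Pi_{K^*}$ are the Euclidean projections onto $K$ and $K^*$ respectively, and the two components are mutually orthogonal. This identity gives $\text{dist}(\mathbf{g},K^*) = \| \Pi_K(\mathbf{g}) \|_2$ and $\text{dist}(\mathbf{g},K) = \| \Pi_{K^*}(\mathbf{g}) \|_2$, so that the Pythagorean theorem yields
\begin{align*}
	\text{dist}(\mathbf{g},K^*)^2 + \text{dist}(\mathbf{g},K)^2 = \| \Pi_K(\mathbf{g}) \|_2^2 + \| \Pi_{K^*}(\mathbf{g}) \|_2^2 = \| \mathbf{g} \|_2^2 \,.
\end{align*}

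Finally I would take expectations: since $\mathbf{g} \sim N(\mathbf{0},\mathbf{I})$ in $\mathbb{R}^p$, we have $\E_{\mathbf{g}}[ \| \mathbf{g} \|_2^2 ] = \sum_{i=1}^p \E[g_i^2] = p$, which produces the claimed bound $w(K \cap \mathbb{S}^{p-1})^2 + w(K^* \cap \mathbb{S}^{p-1})^2 \leq p$. The main obstacle is the orthogonal decomposition step, which rests on Moreau's theorem and genuinely requires $K$ to be \emph{closed}, so that the projection onto $K$ is well defined and single-valued and so that $(K^*)^* = K$. For a general non-empty convex cone I would first pass to its closure $\overline{K}$, observing that $K$ and $\overline{K}$ have the same polar cone and that the Gaussian width of the spherical part is unchanged under taking closures, so that no generality is lost.
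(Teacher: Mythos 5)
Your proof is correct and follows exactly the route the paper indicates for this lemma: you combine \autoref{prop:GW_property1} (applied to both $K$ and $K^*$, using $(K^*)^*=K$) with Jensen's inequality as in \autoref{eqn:jensens_inequality}, and close with Moreau's orthogonal decomposition, which gives $\text{dist}(\mathbf{g},K)^2 + \text{dist}(\mathbf{g},K^*)^2 = \Vert \mathbf{g} \Vert_2^2$ and hence the bound via $\E_{\mathbf{g}}\left[ \Vert \mathbf{g} \Vert_2^2 \right] = p$ --- precisely the argument of Chandrasekaran et al.\ that the paper cites without reproducing. Your additional care with closedness (passing to $\overline{K}$, which changes neither the polar cone nor the Gaussian width of the spherical part) correctly supplies the one detail the paper's sketch leaves implicit.
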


%%%%%%%%%%%%%%%%%%%
%
%  Chapter 8: Model Recovery Using the CVaR Norm
%
%%%%%%%%%%%%%%%%%%%

\chapter{Model Recovery Using the CVaR Norm} \label{chapter:Recovery_using_CVaR}

To use the CVaR norm for model recovery in the framework presented by Chandrasekaran et. al, some fundamental properties of the CVaR norm need to be derived. To recover $\hat{\mathbf{x}}$, the set of atoms $\mathcal{A}$ of the CVaR norm needs to be determined and a bound on the Gaussian width of the intersection of $T_{\mathcal{A}} (\hat{\mathbf{x}})$ with the unit sphere $\mathbb{S}^{p-1}$ needs to be established. The bound on the Gaussian width is needed to determine how many measurements $n$ are required to ensure recovery with a high probability.

To the best knowledge of the author, no research with this particular focus has been published. Hence, all results in this chapter are original. Unfortunately, due to limited scope of this thesis, only partial results are available. This being said, the following thoughts can be the basis for further research in this area.

%%%%%%%%%%%%%%%%%%%
%  Section .1: Atomic CVaR Norm
%%%%%%%%%%%%%%%%%%%

\section{Atomic CVaR Norm} \label{section:RuCVaR-Atomic_CVaR_Norm}

In this section, the atoms of the CVaR norm $C_{\alpha}$ for $\al{p-2} < \alpha < \al{p-1}$ will be conjectured (the set of atoms will be called $\mathcal{A}_{p-1}$, see \autoref{subsec:RuCVaR-ACN-Atom_formulation}). It will be proposed and proven that $\mathcal{A}_{p-1}$ is a subset of the extreme points of the unit ball of $C_{\alpha}$ for $\al{p-2} < \alpha < \al{p-1}$, but due to the limited time of this thesis it cannot be proven that $\mathcal{A}_{p-1}$ is the exhaustive set of extreme points. It will also be shown in \autoref{subsec:RuCVaR-ACN-Similarity} that a subset of the extreme points of the unit ball of $C_{\alpha}$ for $\al{0} < \alpha < \al{1}$ (called $\mathcal{A}_1$) is similar to $\mathcal{A}_{p-1}$. But since some of the points of $\mathcal{A}_1$ are different,  the unit ball of $C_{\alpha}$ for $\al{0} < \alpha < \al{1}$ looks different (the respective unit balls of $C_{\alpha}$ in $\mathbb{R}^3$ are shown in \autoref{fig:Calpha_unit_balls_R3}). Finally, an experiment will be performed to numerically determine the extreme point of the unit ball of $C_{\alpha}$ for $\al{p-2} < \alpha < \al{p-1}$ in $\mathbb{R}^4$ and shown that the set of these extreme points is equal to $\mathcal{A}_{p-1}$.

%%%%%%%%%%%%%
%%%%      Subsection x.x.1
\subsection{Formulation of the Atoms of the CVaR Norm} \label{subsec:RuCVaR-ACN-Atom_formulation}

The atoms of the CVaR norm for $C_{\alpha}$ for $\al{p-2} < \alpha < \al{p-1}$ are conjectured below.

\begin{Conjecture} \label{conj:CVaR_Norm_atoms1}
	Suppose that $\xinR{p}$ and $\al{p-2} < \alpha < \al{p-1}$ ,i.e., $\frac{p-2}{p} < \alpha < \frac{p-1}{p}$, and let the set of atoms $\mathcal{A}_{p-1}$ be such that
		\begin{align*}
			\mathcal{A}_{p-1} \defeq \{ \pm \mathbf{e}_i \}_{i=1}^p \cup \left\{ \frac{1}{p(1 - \alpha)} \mathbf{b} \right\} \,,
		\end{align*}
	where $\mathbf{e}_i$ is the unit vector with 1 as the $i$th component and 0 zeros elsewhere and $\{ \mathbf{b} \}$ is the set of all vectors in $\mathbb{R}^p$ that have either +1 or -1 as their components. Then the atomic norm induced by $\mathcal{A}_{p-1}$ is equivalent to the CVaR norm $\CVaRnormX{\alpha}$ for $\frac{p-2}{p} < \alpha < \frac{p-1}{p}$.
\end{Conjecture}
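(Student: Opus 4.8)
The plan is to prove the conjecture by showing that the CVaR norm $\CVaRnormX{\alpha}$ and the atomic norm $\LnormX{\mathcal{A}_{p-1}}$ induced by $\mathcal{A}_{p-1}$ have the \emph{same dual norm}; since a norm on $\mathbb{R}^p$ is recovered from its dual (the bidual theorem), this is equivalent to the two norms being equal, and it has the advantage of never requiring an explicit enumeration of the extreme points of the unit ball. Throughout I write $\kappa \defeq p(1-\alpha)$, so that the hypothesis $\frac{p-2}{p} < \alpha < \frac{p-1}{p}$ is exactly $\kappa \in (1,2)$.

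First I would compute the dual of the atomic norm. The set $\mathcal{A}_{p-1}$ is centrally symmetric (both $\{\pm\mathbf{e}_i\}$ and the family of $\pm 1$-vectors $\{\mathbf{b}\}$ are closed under negation), so by the discussion following \autoref{prop:gauge_alternative_definition} the gauge $\LnormX{\mathcal{A}_{p-1}}$ is a genuine norm whose dual is the support function of \autoref{def:support_function}. Evaluating that support function at a point $\mathbf{z}$ is elementary: $\sup_{\mathbf{a}\in\mathcal{A}_{p-1}} \langle \mathbf{z},\mathbf{a}\rangle = \max\{\, \max_i |z_i|,\ \tfrac{1}{\kappa}\max_{\mathbf{b}}\langle \mathbf{z},\mathbf{b}\rangle \,\} = \max\{\, \|\mathbf{z}\|_\infty,\ \tfrac{1}{\kappa}\|\mathbf{z}\|_1 \,\}$, where the second term comes from the choice $b_i = \operatorname{sign}(z_i)$. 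Denote this quantity by $M(\mathbf{z})$.

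The substantive step is to show that the dual of the CVaR norm, $\varphi(\mathbf{z}) \defeq \sup\{\langle\mathbf{z},\mathbf{x}\rangle : \CVaRnormX{\alpha}\le 1\}$, also equals $M(\mathbf{z})$. Here I would use the knapsack representation \autoref{eqn:CVaR_Dual_Definition_in_LP}, which is valid by strong duality, namely $\CVaRnormX{\alpha} = \max\{\, \sum_i |x_i| q_i : \mathbf{q}\in Q \,\}$ with $Q \defeq \{\mathbf{q} : \sum_i q_i = \kappa,\ 0\le q_i\le 1\}$. For the lower bound $\varphi(\mathbf{z})\ge M(\mathbf{z})$ I would test the two feasible points $\mathbf{e}_{i^\star}$ (with $z_{i^\star}$ of maximal modulus) and $\tfrac{1}{\kappa}\operatorname{sign}(\mathbf{z})$, which both have CVaR norm $1$ and yield inner products $\|\mathbf{z}\|_\infty$ and $\tfrac1\kappa\|\mathbf{z}\|_1$ respectively. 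For the upper bound I would establish the H\"older-type inequality $\langle \mathbf{z},\mathbf{x}\rangle \le M(\mathbf{z})\,\CVaRnormX{\alpha}$ for every $\mathbf{x}$: after reducing to $\mathbf{z},\mathbf{x}\ge 0$ (replacing entries by their absolute values only increases the left side and leaves the right side unchanged), it suffices to exhibit $\mathbf{q}^\star\in Q$ with $q_i^\star \ge z_i/M(\mathbf{z})$ for all $i$, since then $z_i \le M(\mathbf{z})\,q_i^\star$ forces $\sum_i z_i x_i \le M(\mathbf{z})\sum_i q_i^\star x_i \le M(\mathbf{z})\,\CVaRnormX{\alpha}$. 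Such a $\mathbf{q}^\star$ exists because each interval $[z_i/M(\mathbf{z}),1]$ is nonempty (as $M(\mathbf{z})\ge\|\mathbf{z}\|_\infty$) and the lower endpoints sum to $\|\mathbf{z}\|_1/M(\mathbf{z})\le\kappa\le p$, so a coordinate-wise feasible $\mathbf{q}^\star$ with total mass exactly $\kappa$ can be selected by an intermediate-value argument.

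Combining the two bounds gives $\varphi(\mathbf{z}) = M(\mathbf{z}) = \LnormX[*]{\mathcal{A}_{p-1}}$ for all $\mathbf{z}$, whence $\CVaRnormX{\alpha} = \LnormX{\mathcal{A}_{p-1}}$, which is the claim. I expect the main obstacle to be the upper-bound inequality, and specifically arguing cleanly that a feasible $\mathbf{q}^\star$ exists and that the reduction to the nonnegative orthant is genuinely lossless; this is exactly the place where the naive ``identify all extreme points and take their convex hull'' strategy stalls. One can indeed verify that every element of $\mathcal{A}_{p-1}$ is an exposed, and hence extreme, point of the unit ball by using $M(\mathbf{z})$ as a supporting functional, but proving \emph{exhaustiveness} of that list directly is awkward, whereas the dual-norm computation above settles the whole equivalence in one stroke.
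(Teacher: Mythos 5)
Your argument is correct, and it does more than the paper: \autoref{conj:CVaR_Norm_atoms1} is deliberately left \emph{unproven} there. The paper establishes only \autoref{prop:CVaR_Norm_atoms1}, namely that each point of $\mathcal{A}_{p-1}$ has $C_{\alpha}$-norm exactly $1$ (via the explicit expression $\CVaRnormX{\alpha} = \vert x_{(p)} \vert + [p(1-\alpha)-1] \vert x_{(p-1)} \vert$), and then argues exhaustiveness only empirically, by projecting $5{,}000$ random hyperplanes onto the unit ball in $\mathbb{R}^4$; your dual-norm computation closes precisely that gap. I checked the steps: with $\kappa = p(1-\alpha) \in (1,2)$ the support function of $\mathcal{A}_{p-1}$ is indeed $M(\mathbf{z}) = \max\{ \Vert \mathbf{z} \Vert_{\infty}, \Vert \mathbf{z} \Vert_{1}/\kappa \}$; your two test points have CVaR norm $1$ (for $\pm\mathbf{e}_i$ this uses $\kappa \geq 1$); and your $\mathbf{q}^{\star}$ exists because each interval $[z_i/M(\mathbf{z}), 1]$ is nonempty, the lower endpoints sum to $\Vert \mathbf{z} \Vert_1 / M(\mathbf{z}) \leq \kappa$, the upper endpoints sum to $p \geq \kappa$, and linear interpolation between the two endpoint vectors attains total mass exactly $\kappa$ while staying coordinate-wise feasible. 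Two observations are worth recording. First, your upper bound admits a shortcut: since the objective coefficients $\vert x_i \vert$ in \autoref{eqn:CVaR_Dual_Definition_in_LP} are nonnegative and $\kappa \leq p$, the equality constraint may be relaxed to $\sum_i q_i \leq \kappa$ without changing the optimum, which exhibits $\CVaRnormX{\alpha}$ as the support function of $B = \{ \mathbf{w} : \Vert \mathbf{w} \Vert_{\infty} \leq 1, \ \Vert \mathbf{w} \Vert_{1} \leq \kappa \}$; the dual norm of a support function is the gauge of the underlying set, and the gauge of $B$ is visibly $M$, so both halves of your substantive step collapse into one line. Second, nothing in your proof uses $\kappa < 2$ --- only $1 \leq \kappa \leq p$ --- so you have in fact settled the identification for every $\alpha \in [0, \frac{p-1}{p}]$, in particular for the set $\mathcal{A}_1$ of \autoref{subsec:RuCVaR-ACN-Similarity}, which the paper's footnote flags as equally unproven; the visibly different unit balls in \autoref{fig:Calpha_unit_balls_R3} then reflect different facial structure at the two scalings, not a different atom list. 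Minor repairs only: take $\operatorname{sign}(z_{i^{\star}}) \mathbf{e}_{i^{\star}}$ rather than $\mathbf{e}_{i^{\star}}$ in the lower bound, fix an arbitrary sign for $b_i$ at coordinates where $z_i = 0$, and dispose of $\mathbf{z} = \mathbf{0}$ separately before dividing by $M(\mathbf{z})$.
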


\begin{Proposition} \label{prop:CVaR_Norm_atoms1}
	The set $\mathcal{A}_{p-1}$ defined in \autoref{conj:CVaR_Norm_atoms1} is a subset of extreme points of the unit ball of $C_{\alpha}$ for $\al{p-2} < \alpha < \al{p-1}$ ,i.e., $\frac{p-2}{p} < \alpha < \frac{p-1}{p}$.
\end{Proposition}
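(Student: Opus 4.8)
The plan is to reduce the statement to an explicit, tractable formula for the norm and then to certify each listed point as an \emph{exposed} point of the unit ball (a point where some linear functional attains its maximum uniquely), which is automatically an extreme point. Since $\frac{p-2}{p}<\alpha<\frac{p-1}{p}$, setting $\kappa = p(1-\alpha)$ gives $\kappa\in(1,2)$, so $\lfloor\kappa\rfloor=1$ and \autoref{def:CVaR_based_on_Dual_CVaR_Problem} yields, with the absolute values of the components sorted in descending order,
\begin{equation}
	\CVaRnormX{\alpha} = \vert x_{(1)}\vert + (\kappa-1)\,\vert x_{(2)}\vert \,.
\end{equation}
First I would check that every element of $\mathcal{A}_{p-1}$ lies on the unit sphere: for $\pm\mathbf{e}_i$ one has $\vert x_{(1)}\vert = 1$ and $\vert x_{(2)}\vert = 0$, so the norm is $1$; for $\frac{1}{\kappa}\mathbf{b}$ all absolute values equal $\tfrac{1}{\kappa}$, so the norm is $\tfrac{1}{\kappa} + (\kappa-1)\tfrac{1}{\kappa} = 1$.

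The general device I would use is the elementary fact that if a linear functional $\mathbf{s}$ satisfies $\langle \mathbf{s},\mathbf{v}\rangle > \langle \mathbf{s},\mathbf{x}\rangle$ for every $\mathbf{x}$ in the unit ball with $\mathbf{x}\neq\mathbf{v}$, then $\mathbf{v}$ cannot be a nontrivial convex combination of distinct points of the ball, hence is an extreme point. So for each candidate atom I would produce such a uniquely maximizing functional. For the unit vectors this is immediate: take $\mathbf{s}=\mathbf{e}_i$. For any $\mathbf{x}$ in the unit ball, $x_i \le \vert x_{(1)}\vert \le \CVaRnormX{\alpha}\le 1$; equality $x_i=1$ forces $\vert x_{(1)}\vert = 1 = x_i$ and, since $\kappa>1$, also $(\kappa-1)\vert x_{(2)}\vert = 0$, i.e. $\vert x_{(2)}\vert = 0$, so all other components vanish and $\mathbf{x}=\mathbf{e}_i$. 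The functional $-\mathbf{e}_i$ handles $-\mathbf{e}_i$ by the same argument.

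For a sign vector $\mathbf{b}=(\epsilon_1,\dots,\epsilon_p)$ with $\epsilon_i\in\{\pm1\}$ I would take $\mathbf{s}=\mathbf{b}$. Since $\langle\mathbf{b},\mathbf{x}\rangle \le \lVert\mathbf{x}\rVert_1$, the problem reduces to the key lemma that $\max\{\lVert\mathbf{x}\rVert_1 : \CVaRnormX{\alpha}\le 1\} = p/\kappa$, attained only when every $\vert x_i\vert = \tfrac{1}{\kappa}$. I would prove this by sorting $a_1\ge a_2\ge\dots\ge a_p\ge 0$, observing that to maximize $\sum_i a_i$ under the boundary constraint $a_1 + (\kappa-1)a_2 = 1$ one should push the untouched tail entries up to their cap $a_3=\dots=a_p=a_2$; then $\lVert\mathbf{x}\rVert_1 = a_1 + (p-1)a_2 = 1 + (p-\kappa)a_2$, which is increasing in $a_2$ because $p-\kappa>0$ (as $\kappa<2\le p$), while $a_2$ is capped at $\tfrac{1}{\kappa}$ by $a_1\ge a_2$ together with the constraint. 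The maximum $p/\kappa$ is therefore reached exactly at $a_1=\dots=a_p=\tfrac{1}{\kappa}$, and the equality $\langle\mathbf{b},\mathbf{x}\rangle = \lVert\mathbf{x}\rVert_1$ additionally forces $x_i=\epsilon_i\vert x_i\vert$, pinning $\mathbf{x}=\tfrac{1}{\kappa}\mathbf{b}$ uniquely among all sign vectors of magnitude $\tfrac1\kappa$.

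The main obstacle is precisely this last lemma and its uniqueness clause: verifying that nothing is lost in the tail entries and that the optimizing sign pattern is forced requires a careful ordering/LP argument rather than a one-line bound. Once it is in hand, each element of $\mathcal{A}_{p-1}$ is exposed, hence an extreme point of the unit ball, which establishes \autoref{prop:CVaR_Norm_atoms1}. I would emphasize that this argument shows only that $\mathcal{A}_{p-1}$ is \emph{contained in} the set of extreme points; proving it is exhaustive (the content of \autoref{conj:CVaR_Norm_atoms1}) would require the converse direction and is not attempted here.
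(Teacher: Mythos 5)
Your proposal is correct, and it takes a genuinely different --- and strictly stronger --- route than the paper's own proof. The paper proves \autoref{prop:CVaR_Norm_atoms1} by deriving the explicit two-term expression $\CVaRnormX{\alpha} = \vert x_{(p)} \vert + \left[ p(1-\alpha) - 1 \right] \vert x_{(p-1)} \vert$ (the same formula you obtain from the dual knapsack characterization with $\kappa = p(1-\alpha) \in (1,2)$, modulo the ordering convention) and then simply checking that each point of $\mathcal{A}_{p-1}$ has norm exactly $1$; it stops there. That verification establishes only that the atoms lie on the unit sphere, which is necessary but not sufficient for extremality --- indeed the paper's own computation in \autoref{subsec:RuCVaR-ACN-Similarity} exhibits a point $\mathbf{y}$ with $\llangle \mathbf{y} \rrangle_{\alpha_2} = 1$ that is the midpoint of two other boundary points, hence lies on an edge and is not extreme, so boundary membership alone cannot carry the proposition. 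Your additional step --- producing, for each candidate, a linear functional uniquely maximized there --- is exactly what closes this gap, certifying each atom as exposed and hence extreme. Both of your exposing arguments check out: for $\pm\mathbf{e}_i$, the chain $x_i \leq \max_j \vert x_j \vert \leq \CVaRnormX{\alpha} \leq 1$ together with $\kappa - 1 > 0$ forces $\mathbf{x} = \pm\mathbf{e}_i$ at equality; and your sorted-LP lemma, $\max \{ \LnormX{1} : \CVaRnormX{\alpha} \leq 1 \} = p/\kappa$ with equality forcing $\vert x_i \vert = 1/\kappa$ for all $i$, is sound --- the objective $1 + (p - \kappa) a_2$ is strictly increasing in $a_2$ because $\kappa < 2 \leq p$, the cap $a_2 \leq 1/\kappa$ follows from $a_1 \geq a_2$ and activity of the norm constraint, and matching signs in $\langle \mathbf{b}, \mathbf{x} \rangle \leq \LnormX{1}$ then pins down $\tfrac{1}{\kappa}\mathbf{b}$ uniquely. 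The price of your route is the uniqueness bookkeeping in that lemma; what it buys is a proof of the proposition as actually stated, rather than of the weaker sphere-membership claim the paper verifies. Your closing remark is also correctly scoped: like the paper, you prove containment in the set of extreme points, leaving exhaustiveness (\autoref{conj:CVaR_Norm_atoms1}) open.
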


\begin{proof}
	To prove \autoref{prop:CVaR_Norm_atoms1}, it needs to be shown that the points $\mathcal{A}_{p-1}$ lie on the unit ball of $\CVaRnormX{\alpha}$ for $\frac{p-2}{p} < \alpha < \frac{p-1}{p}$. To show this, an explicit expression for $\CVaRnormX{\alpha}$ will be derived first. By \autoref{eqn:CVaR_Component_Wise_2} and \autoref{eqn:CVaR_Component_Wise_1},
	\begin{align}
		\CVaRnormX{\alpha} =& \lambda \CVaRnormX{\alpha_{p-2}} + (1 - \lambda) \CVaRnormX{\alpha_{p-1}} \notag \\
		=& \lambda \sum \limits_{i = p-1}^{p} \vert x_{(i)} \vert + (1 - \lambda) \vert x_{(p)} \vert \notag \\
		=&  \vert x_{(p)} \vert + \left[ p (1 - \alpha) - 1 \right] \vert x_{(p-1)} \vert \,, \label{eqn:atomic_cvar_norm_explicit}
	\end{align}
	where $\vert x_{(p)} \vert$ is the largest of the absolute values of the components of $\mathbf{x}$ and $\vert x_{(p-1)} \vert$ is the second largest.
	
	Now, there are two types of vectors in $\mathcal{A}$, the unit vectors $\pm \mathbf{e}_i$ and the scaled $\mathbf{b}$ vectors. For both these types of vectors
	\begin{align*}
		\llangle \pm \mathbf{e}_i \rrangle_{\alpha} =& 1 + [p(1-\alpha) - 1] \times 0 = 1 \text{~, and}\\
		\left\llangle \frac{1}{p(1 - \alpha)} \mathbf{b} \right\rrangle_{\alpha} =& \frac{1}{p(1 - \alpha)} \left( 1 + [p(1-\alpha) - 1] \times 1 \right) = 1 \,.
	\end{align*}
	Hence all points in $\mathcal{A}_{p-1}$ lie on the unit ball of $C_{\alpha}$ for $\frac{p-2}{p} < \alpha < \frac{p-1}{p}$.
\end{proof}

%%%%%%%%%%%%%
%%%%      Subsection x.x.2
\subsection{Similarity of Atoms for Two Different $\alpha$} \label{subsec:RuCVaR-ACN-Similarity}
 
Let the set of points $\mathcal{A}_{1} = \{ \pm \mathbf{e}_i \}_{i=1}^p \cup \left\{ \frac{1}{p(1 - \alpha)} \mathbf{b} \right\}$, with $0 < \alpha < \frac{1}{p}$. Then the points in $\mathcal{A}_1$ lie on the unit ball of $C_{\alpha}$ for $0 < \alpha < \frac{1}{p}$\footnote{Just as for $\mathcal{A}_{p-1}$, this is a conjecture that has yet to be proven.} and there is a close connection between $\mathcal{A}_1$ and $\mathcal{A}_{p-1}$. To show this, consider the explicit expression for $\CVaRnormX{\alpha}$, for $0 < \alpha < \frac{1}{p}$, which is $\CVaRnormX{\alpha} = \sum_{i=1}^p \vert x_{(i)} \vert - p \alpha \vert x_{(1)} \vert$. Then
\begin{align*}
	\llangle \pm \mathbf{e}_i \rrangle_{\alpha} =& 1 - p \alpha \times 0 = 1 \text{~, and}\\
	\left\llangle \frac{1}{p(1 - \alpha)} \mathbf{b} \right\rrangle_{\alpha} =& \frac{p}{p(1 - \alpha)} - \frac{p \alpha}{p(1 - \alpha)} = 1 \,.
\end{align*}

Hence, both sets contain the unit vectors $\pm \mathbf{e}_i$ and the scaled binary vectors $\frac{1}{p(1 - \alpha)} \mathbf{b}$. However, the scaling factor is different for the sets whenever $p>2$, as for $\mathcal{A}_{p-1}$, $\frac{p-2}{p} < \alpha < \frac{p-1}{p}$, and for $\mathcal{A}_{1}$, $0 < \alpha < \frac{1}{p}$. To show that the unit balls look different for these two $\alpha$, consider $\mathbf{x}_1 = \frac{1}{p(1 - \alpha)} [1, 1, \dots, 1]^T$ and $\mathbf{x}_2 = \frac{1}{p(1 - \alpha)} [1, 1, \dots, -1, \dots, 1]^T$, i.e., $\mathbf{x}_1 \in \mathbb{R}^p$ consists of all ones and $\mathbf{x}_2 \in \mathbb{R}^p$ consists of all ones except a $-1$ as the $i$th component, both scaled by $\frac{1}{p(1 - \alpha)}$. Then the vectors $\mathbf{y} = \frac{1}{2} \mathbf{x}_1 +  \frac{1}{2} \mathbf{x}_2 = \frac{1}{p(1 - \alpha)} [1, 1, \dots, 0, \dots, 1]^T$, $\mathbf{x}_1$, and $\mathbf{x}_2$, together with $0 < \alpha_1 < \frac{1}{p}$ and $\frac{p-2}{p} < \alpha_2 < \frac{p-1}{p}$ have the norms
\begin{align*}
	\llangle \mathbf{x}_1 \rrangle_{\alpha} =& 1 \,, &\text{~for~} \alpha = \alpha_1, \alpha_2 \,, \\
	\llangle \mathbf{x}_2 \rrangle_{\alpha} =& 1 \,, &\text{~for~} \alpha = \alpha_1, \alpha_2 \,, \\
	\llangle \mathbf{y} \rrangle_{\alpha_1} =& \frac{p-1}{p(1 - \alpha_1)} < 1 \,, &\text{~and~} \\
	\llangle \mathbf{y} \rrangle_{\alpha_2} =& 1 \,.
\end{align*}
Hence the point $\mathbf{y}$ lies on an edge of the unit ball of $C_{\alpha}$ for $\frac{p-2}{p} < \alpha < \frac{p-1}{p}$, but lies inside the unit ball of $C_{\alpha}$ for $0 < \alpha < \frac{1}{p}$. This can also be seen from \autoref{fig:Calpha_unit_balls_R3}.

\begin{figure}[H]
	\centering
	\includegraphics[width = 0.6\textwidth]{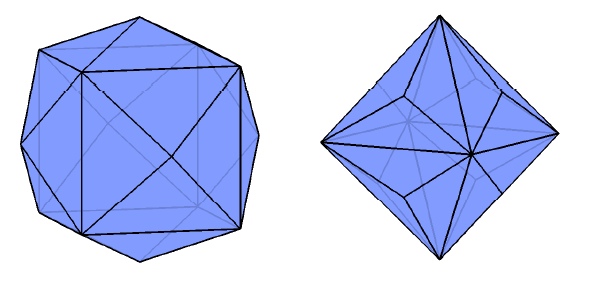}	
	\caption[{\cite[p. 13]{Gotoh2015_Two_Pairs_Of_Polyhedral_Norms_Versus_Lp_Norms} Unit balls of $C_{\alpha}$ in $\mathbb{R}^3$}.]{\cite[p. 13]{Gotoh2015_Two_Pairs_Of_Polyhedral_Norms_Versus_Lp_Norms} Unit balls of $C_{\alpha}$ in $\mathbb{R}^3$ for $\frac{1}{3} < \alpha < \frac{2}{3}$ (left) and $0 < \alpha < \frac{1}{3}$ (right).}
	\label{fig:Calpha_unit_balls_R3}
\end{figure}

%%%%%%%%%%%%%
%%%%      Subsection x.x.3
\subsection{Numerically Determining $\mathcal{A}_{p-1}$ in $\mathbb{R}^4$} \label{subsec:RuCVaR-ACN-Numerical Experiment}

In this subsection, the atoms of $C_{\alpha}$ for $\al{p-2} < \alpha < \al{p-1}$ in $\mathbb{R}^4$ are determined in numerical experiments to provide more evidence that \autoref{conj:CVaR_Norm_atoms1} is true. To do this, 5,000 random hyperplanes in $\mathbb{R}^4$ are projected onto the unit ball of the CVaR norm. If the conjecture is true, all hyperplanes should be projected onto one of the points in $\mathcal{A}_{p-1}$.\footnote{The probability that a random hyperplane is projected onto an edge or surface of the unit ball is equal to zero.} Only if there are projections onto other points, \autoref{conj:CVaR_Norm_atoms1} is can be deemed false \cite{Richtarik2015_Discussion}.

To perform this experiment, a random hyperplane is generated by a zero-mean, unit variance Gaussian vector, i.e., the hyperplane satisfies $\mathbf{g}^T \mathbf{x} = 5$, where $\mathbf{g} \in \mathbb{R}^4 \sim N(\mathbf{0},\mathbf{I})$ and $\xinR{4}$.\footnote{The constant 5 is chosen arbitrarily.} The projection of the hyperplane onto the unit ball is given by 
\begin{align*}
	\mathbf{x}^U =& \frac{\arg \min \limits_{\mathbf{x}} \CVaRnormX{\alpha}}{\min \limits_{\mathbf{x}} \CVaRnormX{\alpha}} \,,
\end{align*}
with $\alpha = \frac{5}{8}$ and the constraint $\mathbf{g}^T \mathbf{x} = 5$.

Over the 5,000 trials, the hyperplane was projected onto a unit vector 5.86 \% of the time and onto a scaled binary vector 94.14 \% of the time, while no hyperplane was projected onto another point. The complete results of this experiment are shown in \autoref{app_table:projections ratio}.

This experiments provides evidence that \autoref{conj:CVaR_Norm_atoms1} is true, even though it could not be proven within the scope of this thesis. Repeating this experiment in higher dimensions or over more trials should yield the same results.

%%%%%%%%%%%%%%%%%%%
%  Section .2: Gaussian Width CVaR Norm
%%%%%%%%%%%%%%%%%%%

\section[Gaussian Width of a Tangent Cone with Respect to the $C_{\alpha}$ Norm]{Gaussian Width of a Tangent Cone with Respect to the Scaled Unit Ball of the $C_{\alpha}$ Norm} \label{section:RuCVaR-Gaussian_Width}

To find a bound on the measurements $n$ needed to recover $\hat{\mathbf{x}}$ using \autoref{prog:exact_recovery_primal} (for exact recovery) or \autoref{prog:robust_recovery_primal} (for robust recovery) with the CVaR norm, an expression for the tangent cone or the normal cone of a vector $\mathbf{x}^*$ with respect to $\mathcal{A}_{p-1}$ needs to be found. The derivation of expressions for these cones is beyond the scope of this thesis and could be an area for further research. Here, only an outline of the bounds will be given, if expressions for $T_{\mathcal{A}_{p-1}} (\mathbf{x}^*)$ or $N_{\mathcal{A}_{p-1}} (\mathbf{x}^*)$ are available. These bounds are derived using the properties described in \autoref{sec:MRuAN-Gaussian_Widths}.

\autoref{corollary:recovery_measurements_n} states that to guarantee recovery with high probability, the number of measurements $n$ needs to satisfy
\begin{align*}
	n \geq& \,\,w \left( T_{\mathcal{A}_{p-1}} (\mathbf{x}^*) \cap \mathbb{S}^{p-1} \right)^2 + 1 &\text{~in the exact case, or} \\
	n \geq& \,\,\frac{w \left( T_{\mathcal{A}_{p-1}} (\mathbf{x}^*) \cap \mathbb{S}^{p-1} \right)^2 + 3/2}{(1 - \epsilon)^2} &\text{~in the robust case.}
\end{align*}

Since the Gaussian width is difficult to calculate directly, the Euclidean distance between a cone and the point given by a random Gaussian vector could be used to provide a bound for $w \left( T_{\mathcal{A}_{p-1}} (\mathbf{x}^*) \cap \mathbb{S}^{p-1} \right)^2$. Using \autoref{eqn:GW_property1} and \autoref{eqn:jensens_inequality} gives 
\begin{align}
	w \left( T_{\mathcal{A}_{p-1}} (\mathbf{x}^*) \cap \mathbb{S}^{p-1} \right)^2 \leq& \,\, \E_{\mathbf{g}} \left[ \text{dist} \left( \mathbf{g}, N_{\mathcal{A}_{p-1}} (\mathbf{x}^*) \right) \right]^2 \notag \\
	\leq& \,\, \E_{\mathbf{g}} \left[ \text{dist} \left( \mathbf{g}, N_{\mathcal{A}_{p-1}} (\mathbf{x}^*) \right)^2 \right] \label{eqn:bound_width_tangent_CVaR_cone}
\end{align}

If an expression for $N_{\mathcal{A}_{p-1}} (\mathbf{x}^*)$ is available, \autoref{eqn:bound_width_tangent_CVaR_cone} could be used to determine the minimum number of measurements $n$ needed to recover $\hat{\mathbf{x}}$ as 
\begin{align*}
	n \geq& \,\, \E_{\mathbf{g}} \left[ \text{dist} \left( \mathbf{g}, N_{\mathcal{A}_{p-1}} (\mathbf{x}^*) \right)^2 \right] + 1 &\text{~in the exact case, or} \\
	n \geq& \,\,\frac{\E_{\mathbf{g}} \left[ \text{dist} \left( \mathbf{g}, N_{\mathcal{A}_{p-1}} (\mathbf{x}^*) \right)^2 \right] + 3/2}{(1 - \epsilon)^2} &\text{~in the robust case,}
\end{align*}
when the square of the Euclidean distance ($\text{dist} \left( \mathbf{g}, N_{\mathcal{A}_{p-1}} (\mathbf{x}^*) \right)^2$) can be calculated or bounded.

However, depending on the actual expressions of the tangent and normal cones, other properties of Gaussian widths (e.g. those stated in \cite[p. 819ff.]{Parillo2010_Convex_Geometry}) could be more useful to derive bounds on $n$.

%%%%%%%%%%%%%%%%%%%
%  Section .3: Numerical Experiments
%%%%%%%%%%%%%%%%%%%

\section{Numerical Recovery Experiments using the $C_{\alpha}$ Norm} \label{section:RuCVaR-Experiments}

This section explores the recovery probabilities of a vector given $n$ random measurements and using CVaR norm minimization. Since \autoref{section:RuCVaR-Gaussian_Width} could not provide a bound on the required number of measurements to ensure recovery, this section investigates under which circumstances recovery might be likely. However, the results are not promising.\\

For the following investigation, the goal was to recover two vectors in $\mathbb{R}^{100}$. The first vector $\mathbf{x}_1$ consists of 1 atom (either a unit vector or a scaled binary vector). The second vector $\mathbf{x}_2$ consists of 3 atoms, one positive unit vector, one negative unit vector, and one scaled binary vector. In both cases, the recovery probability was estimated by minimizing the CVaR norm of a candidate $\mathbf{x}^*$, with $n \leq 100$ random measurements (so that $\mathbf{\Phi} \in \mathbb{R}^{n \times 100}$ is a random map with i.i.d. zero mean Gaussian entries having variance $1/n$) and $\alpha = 0.985$ (so that $\frac{100-2}{100} < \alpha < \frac{100-1}{100}$).

For each $n$, \autoref{prog:exact_recovery_primal} was solved 50 times, each time with a new random map $\mathbf{\Phi}$. The probability of exact recovery (over the 50 random trials) was drawn versus the number of measurements $n$. This is shown in \autoref{fig:probability_of_recovery_CVaR}.
\begin{figure}[H]
	\centering
	\includegraphics[width = \textwidth]{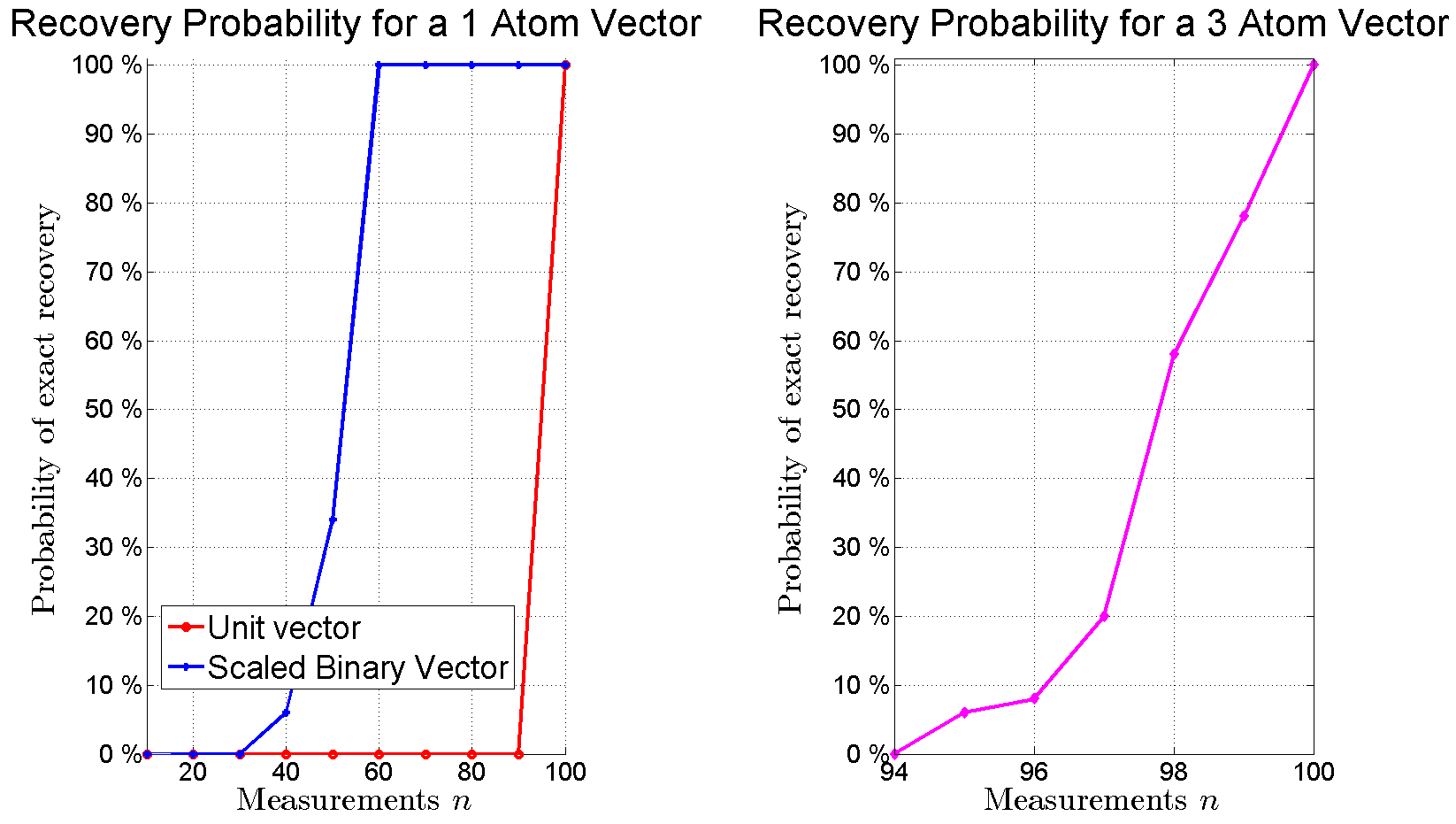}	
	\caption[Probability of exact recovery for a vector $\mathbf{x} \in \mathbb{R}^{100}$ using the CVaR norm as the atomic norm with $n$ measurements.]{Probability of exact recovery for a vector $\mathbf{x} \in \mathbb{R}^{100}$ using the CVaR norm as the atomic norm with $n$ measurements. Left: Recovery probability for $\mathbf{x}_1$ consisting of 1 atom (either a unit vector or a scaled binary vector). Right: Recovery probability for $\mathbf{x}_2$ consisting of 3 atoms.}
	\label{fig:probability_of_recovery_CVaR}
\end{figure}

\autoref{fig:probability_of_recovery_CVaR} shows that if $\mathbf{x}_1$ consists of a unit vector, at least 90 measurements are necessary to ensure recovery, while if $\mathbf{x}_1$ consists of a scaled binary vector, recovery could be ensured with 50-60 measurements. The second vector $\mathbf{x}_2$ could never be recovered for $n < 95$ and even for $n=99$, the recovery probability was just below 80 \%. Hence, it seems that if a vector $\mathbf{x}^*$ which is to be recovered consists of both types of atoms (i.e. unit vectors and scaled binary vectors), exact recovery cannot be guaranteed with high probability when $n < p$. This means that to recover $\mathbf{x}^*$, one would need as many observations as the dimension of the system. The reason for these unfavourable characteristics might be the tangent cone of $\mathbf{x}^*$ with respect to $\mathcal{A}_{p-1}$.\footnote{This assumption can only be confirmed if an expression for $T_{\mathcal{A}_{p-1}} (\mathbf{x}^*)$ can be derived.} \\

If $\mathbf{x}^*$ consists only of one type of atom, i.e., either of unit vectors or scaled binary vectors, the model recovery using the CVaR norm could be compared against the model recovery using the $L_1$ norm or $L_{\infty}$ norm, respectively. Depending on the type of atoms, the $C_{\alpha}$ norm shows two different characteristics when compared to the respective $L_p$ norm. When $\mathbf{x}^*$ is a k-sparse vector\footnote{A k-sparse vector is a vector where k components are not equal to zero.} the norm of choice for model recovery is the $L_1$ norm. By Proposition 3.10 of \cite[p. 823]{Parillo2010_Convex_Geometry}, to recover a k-sparse vector $\mathbf{x}^* \in \mathbb{R}^{100}$ using the $L_1$ norm, $2 \times k \times \ln \left( \frac{100}{k} \right) + \frac{5}{4} \times k + 1$  random Gaussian measurements suffice to recover $\mathbf{x}$ with high probability. Hence, for a 1-sparse vector approximately 12 measurements suffice, while for a 3-sparse vector approximately 26 measurements suffice. At the same time, more than 90 measurements are necessary to recover the same 1-sparse or 3-sparse vector $\mathbf{x}^*$ and same $\mathbf{\Phi}$ to ensure comparability (see \autoref{fig:probability_of_recovery_L1}).
\begin{figure}[H]
	\centering
	\includegraphics[width = 0.9\textwidth]{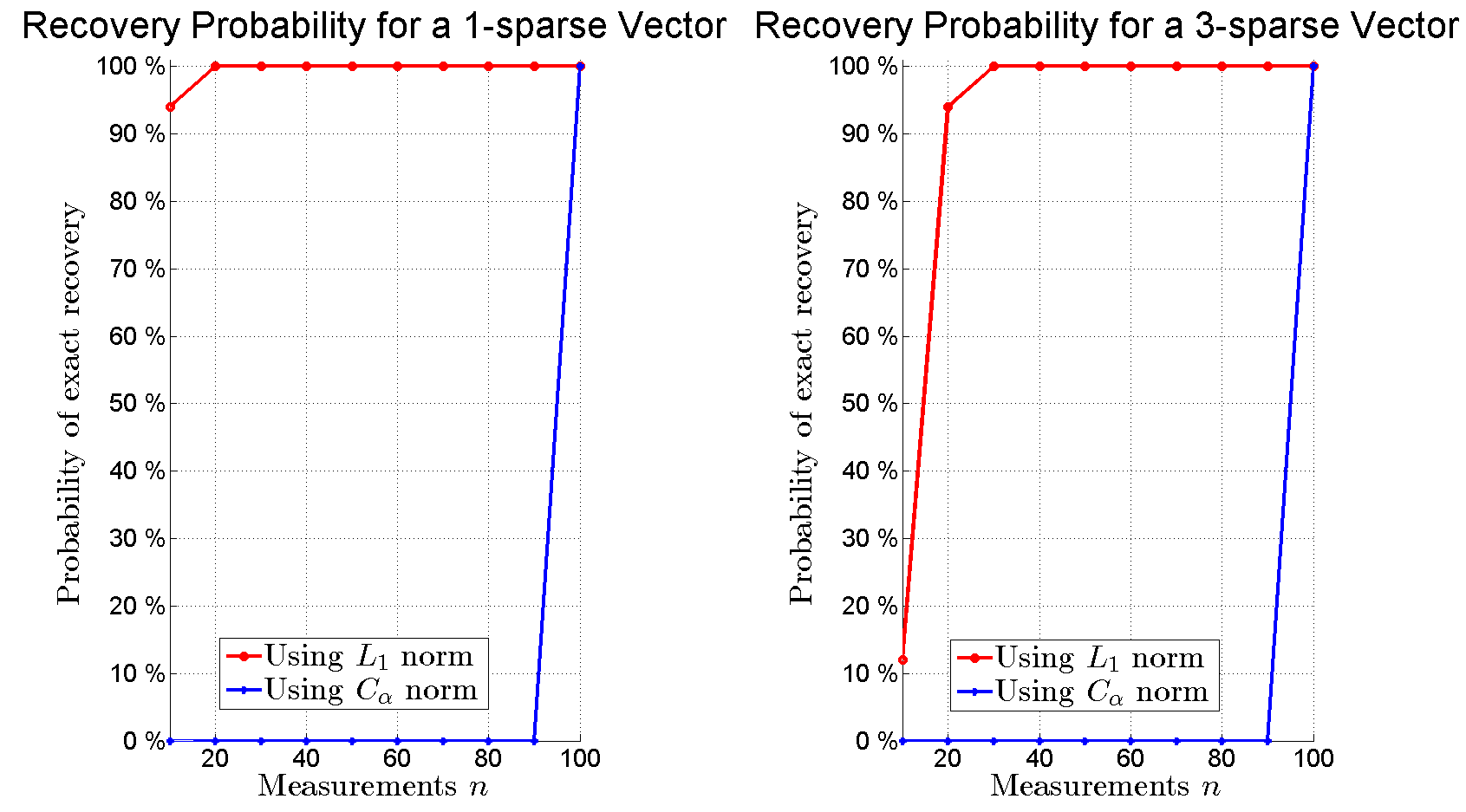}	
	\caption[Probability of exact recovery for a k-sparse vector $\mathbf{x} \in \mathbb{R}^{100}$ using either the $L_1$ norm or $C_{\alpha}$ norm as the atomic norm with $n$ measurements.]{Probability of exact recovery for a k-sparse vector $\mathbf{x} \in \mathbb{R}^{100}$ using the $L_1$ norm or $C_{\alpha}$ norm as the atomic norm with $n$ measurements. Left: Recovery probability for a 1-sparse vector. Right: Recovery probability for 3-sparse vector.}
	\label{fig:probability_of_recovery_L1}
\end{figure}

When $\mathbf{x}^*$ is the sum of $k$ scaled binary vectors the norm of choice for model recovery is the $L_{\infty}$ norm. When trying to recover a vector $\mathbf{x}^*$, that is either 1 scaled binary vector or the sum of 3 scaled binary vectors, the $C_{\alpha}$ norm is as good as the $L_{\infty}$ norm, and sometimes the $C_{\alpha}$ norm is even slightly better. Drawing the probability of exact recovery with the same $\mathbf{x}^*$ to be recovered and the same random measurement maps $\mathbf{\Phi}$ for $40 \leq n \leq 80$ shows that in certain cases the recovery probability of $\mathbf{x}^*$ was higher when using the $C_{\alpha}$ norm (see \autoref{fig:probability_of_recovery_Linfty}).

\begin{figure}[H]
	\centering
	\includegraphics[width = 0.9\textwidth]{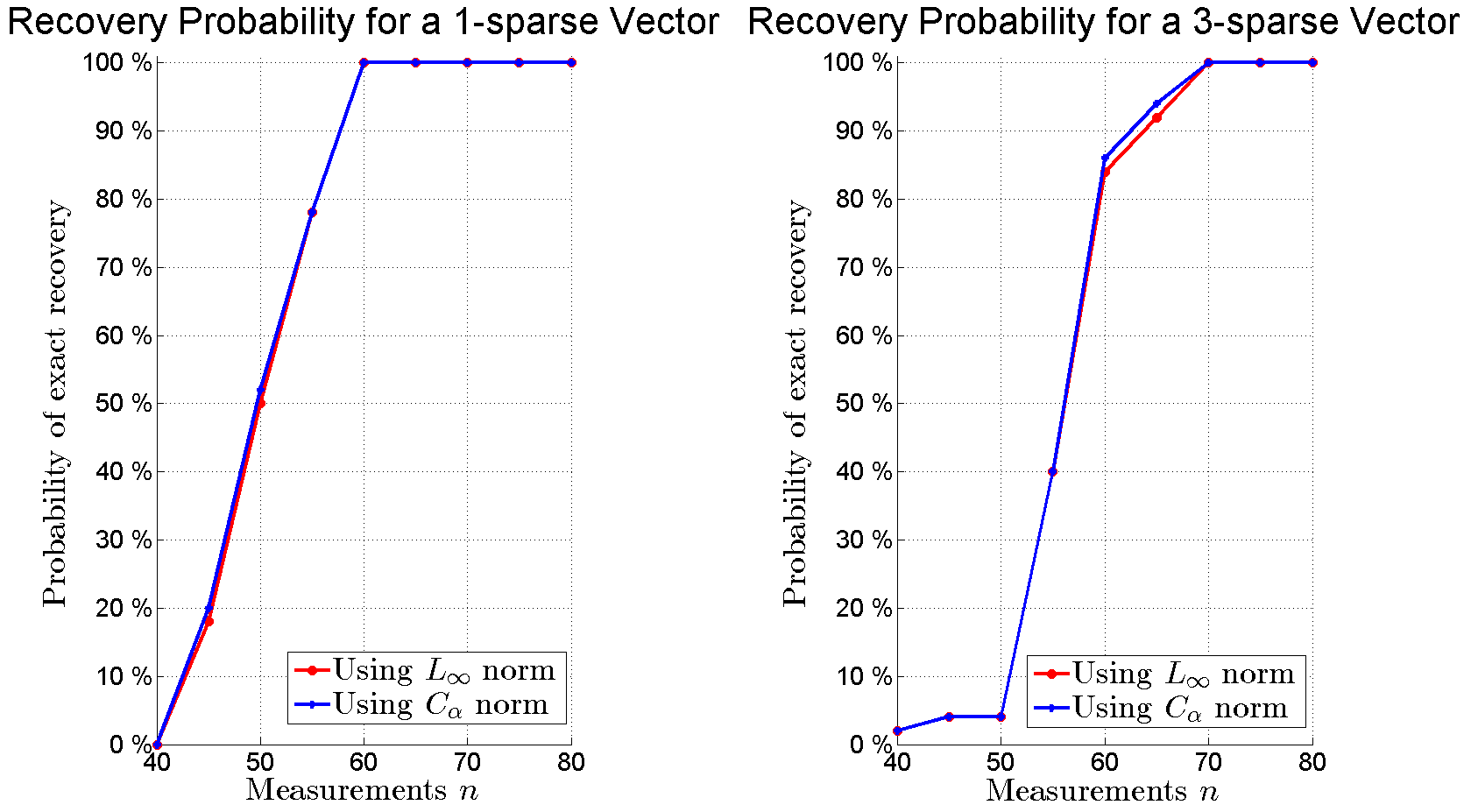}
	\caption[Probability of exact recovery for a vector $\mathbf{x} \in \mathbb{R}^{100}$ using either the $L_{\infty}$ norm or $C_{\alpha}$ norm as the atomic norm with $n$ measurements.]{Probability of exact recovery for a vector $\mathbf{x} \in \mathbb{R}^{100}$ that is the sum of $k$ scaled binary vectors using the $L_{\infty}$ norm or $C_{\alpha}$ norm as the atomic norm with $n$ measurements. Left: Recovery probability for $\mathbf{x}$ as 1 scaled binary vector. Right: Recovery probability for $\mathbf{x}$ as the sum of 3 scaled binary vectors.}
	\label{fig:probability_of_recovery_Linfty}
\end{figure}

%%%%%%%%%%%%%%%%%%%
%  Section .4: Concluding remarks regarding model recovery using the CVaR norm
%%%%%%%%%%%%%%%%%%%

\section{Concluding Remarks on Model Recovery Using the CVaR Norm} \label{section:RuCVaR-Conclusion}

Despite the incomplete proofs, this chapter could show some interesting properties of the CVaR norm regarding model recovery. It seems that the CVaR norm is not suitable to define an own type of signal to be recovered (i.e. a signal which consists of the atoms $\mathcal{A}_{p-1}$), but the CVaR norm could be an improvement over the $L_{\infty}$ norm for model recovery.

Since the unit balls of $C_{\alpha}$ differed for different choices of $\alpha$, it was suggested to take $C_{\alpha}$ with $\frac{p-2}{p} < \alpha < \frac{p-1}{p}$ as the atomic norm for recovering a vector $\mathbf{x}^* \in \mathbb{R}^p$. Then the set of atoms $\mathcal{A}_{p-1}$ (see \autoref{conj:CVaR_Norm_atoms1}) can be interpreted as the union of two sets of atoms of better known norms, namely the atoms of the $L_1$ norm and the atoms of the $L_{\infty}$ norm, scaled by $\frac{1}{p (1 - \alpha)}$.\footnote{The proof \autoref{conj:CVaR_Norm_atoms1} still needs to be completed.} The parameter $\alpha$ was chosen in the range $\left( \frac{p-2}{p}, \frac{p-1}{p} \right)$ for these investigations, however, when choosing $0 < \alpha < \frac{1}{p}$, the results might be different. This could be an area for further research.

Unfortunately, a bound on the number of random measurements $n$ could not be established, as it was not possible to derive expressions for the tangent or normal cones with respect to $\mathcal{A}_{p-1}$ in the scope of this thesis. As a remedy, numerical experiments were performed to gain insight into exact recovery probabilities using the CVaR norm.

The numerical experiments in \autoref{section:RuCVaR-Experiments} suggest that it is not possible to recover an arbitrary $\mathbf{x}^*$ with a high probability when $n < p$, i.e. when the number of observations is smaller than the dimension of the model. Hence, it would not make sense to use the CVaR norm for the recovery of a signal consisting of the atoms of $\mathcal{A}_{p-1}$.\footnote{A real world occurrence of this type of signal (or model) could not be identified during this thesis.} It was also shown that the CVaR norm is not suitable to recover a k-sparse vector. However, the CVaR norm showed a slight improvement over the $L_{\infty}$ norm in the experiments, when trying to recover signals $\mathbf{x}^*$ that are formed as the sum of $k$ scaled binary vectors. The reason for this is probably that the tangent cone with respect to $\mathcal{A}_{p-1}$ at $\mathbf{x}^*$ is smaller than the tangent cone with respect to the atoms of the $L_{\infty}$ norm. This would need to be confirmed in further research, as it was not possible to derive an expression for $T_{\mathcal{A}_{p-1}} (\mathbf{x}^*)$ in the scope of this thesis. Also, the practical implications of this need to be considered, as the gains of a smaller tangent cone might be offset by the greater effort to calculate the CVaR norm compared to the $L_{\infty}$ norm.

Again, it should be stressed that the numerical experiments were done by choosing $\alpha$ as $\frac{p-2}{p} < \alpha < \frac{p-1}{p}$. Choosing a different $\alpha$ gives a different unit ball and therefore different characteristics for the model recovery problem. This could all be evaluated in further research.

\clearpage

%%%%%%%%%%%%%%%%%%%
%
%  Chapter 9: Conclusion
%
%%%%%%%%%%%%%%%%%%%
\chapter{Conclusion}

This thesis covered a wide range of theory on CVaR, both as a risk measure and a vector norm. It was shown how the CVaR is defined for a univariate loss distribution and how this definition can be extended to define the CVaR of a portfolio of assets, i.e. for multivariate loss distributions. The CVaR concept was then abstracted to define a new family of vector norms in $\mathbb{R}^n$, which were then analysed in detail. In the last part of the thesis, model recovery problems were introduced and it was shown how the new CVaR norm could be used in the context of model recovery problems. \\

\autoref{chapter:CVaR_as_risk_measure} started by introducing Value-at-Risk, and showed how the Conditional Value-at-Risk can be derived from VaR in the case of a continuous random variable. Then, the notion of a coherent risk measure was introduced and it was explained why VaR fails to be coherent, whereas CVaR is. After this intuitive introduction, CVaR was properly defined and analysed in \autoref{sec:CVaR_RM-Closer_Analysis_of_CVaR}. CVaR can be calculated as the expectation of the generalized $\alpha$ tail distribution. Alternatively, CVaR can be calculated as a weighted average of VaR and $\text{CVaR}^+$ by the Convex Combination Formula (see \autoref{eqn:CVaR_Convex_Combination_Formula}). Another possibility to calculate CVaR is to use Acerbi's Integral Formula (presented in \autoref{sec:CVaR_RM-Acerbi}), for which a novel proof for continuous loss distributions was given in \autoref{sec:CVaR_RM-Acerbi_new_proof}.

\autoref{chapter:CVaR_Portfolio_Optimization} then extended the ideas developed in \autoref{chapter:CVaR_as_risk_measure} to multivariate loss distributions which arise in portfolio selection. To introduce portfolio optimization problems, \autoref{sec:CVaR_PO_Markowitz} presented the first model that was developed to minimize portfolio risk, i.e. the Markowitz Model (see \autoref{problem:MVO_optimization}). It was also shown that it is always favourable to diversify a portfolio in order to reduce risk. The optimal risk/return combinations that can be achieved in a portfolio were drawn to explain the efficient frontier. Motivated by some shortcomings of the Markowitz Model, the Rockafellar and Uryasev Model was presented in \autoref{sec:CVaR_PO_Min_CVaR} to demonstrate how a portfolio can be optimized with regards to minimizing the portfolio's tail risk. The model and associated linear optimization programme that has been developed in \cite{Rockafellar2000_Optimization_of_CVaR} was analysed in detail, before establishing a connection between the Markowitz Model and the Rockafellar and Uryasev Model. \autoref{sec:CVaR_PO_Examples} concluded the chapter by providing two numerical examples. The first example showed that in certain cases, Mean-Variance and CVaR optimization indeed give the same optimal portfolio, while the second example showed that for skewed loss distributions CVaR optimization is preferable over Mean-Variance optimization.

For situations in which a portfolio has already been formed, but for which the investor wishes to hedge risks, a procedure was presented in \autoref{chapter:Portfolio_Hedging}. Since the example was a trader's portfolio consisting of stock options, the financial background on options was presented in \autoref{sec:Portfolio_Hedging-Background_on_Options}, while \autoref{sec:Portfolio_Hedging-Background_on_Risk_Management} showed how a risk managers can estimate the daily asset volatilities to properly manage the risk on a daily basis. The trader's portfolio was described in \autoref{sec:Portfolio_Hedging-Forming_Strangle} and the hedging procedure was outlined in detail in \autoref{sec:Portfolio_Hedging-Hedging_Strangle}. The original contribution of \autoref{sec:Portfolio_Hedging-Hedging_Strangle} was the explicit formulation of the linear programme to minimize the CVaR of the portfolio. \\
~\\

Next, the focus shifted away from financial applications of CVaR. The fairly new concept of CVaR norms was introduced in \autoref{chapter:CVaR_Norms}. The first one, the Scaled CVaR norm, was presented in \autoref{sec:CVaR_Norms-Scaled}, with its definition and alternative characterization given by Pavlikov and Uryasev in \cite{pavlikov2014_CVaR_Norm_and_applications}. A novel contribution was an alternative proof for the equivalence of the two characterizations. Next, the Non-Scaled CVaR norm (or simply CVaR norm) was presented in \autoref{sec:CVaR_Norms-Non_scaled_CVaR_Norm}, by showing how it can be derived from the Scaled CVaR norm. Also, it was shown how the CVaR norm can be interpreted as the optimal value of the knapsack problem. To provide a better understanding of these new norms, \autoref{sec:CVaR_Norms-Properties} stated some of the quite different properties that the two CVaR norms have. A new property of the Scaled CVaR norm, i.e. piecewise convexity, was proposed and proven, which was again an original contribution of this thesis. Finally, the computational efficiencies of the different characterizations of the CVaR norms were investigated in \autoref{sec:CVaR_Norms-Computational_Efficiency}. This comparison of computing times was another original contribution.

After introducing the Scaled CVaR norm and CVaR norm, comparisons to the more familiar family of $L_p$ norms were drawn in \autoref{chapter:Comparison_to_other_vector_norms}. The main goal of this chapter was to show how $C_{\alpha}^S$ and $C_{\alpha}$ behave in comparison to $L_p^S$ and $L_p$ for different combinations of $\alpha$ and $p$. Also, in \autoref{subsec:Comparison_to_other_vector_norms-Relationship_between_alpha_and_p} it was analysed how to choose $\alpha$ in relation to $p$ so that the $C_{\alpha}$ most closely approximates the $L_p$ norm.

A possible application of the CVaR norm was investigated for model recovery problems. The theoretical background for model recovery problems was presented in \autoref{chapter:Model_Recovery_using_Atomic_Norms}. The aim of these problems is to recover models or signals of dimension $p$ with $n < p$ random measurements. Atomic norms and important concepts from convex geometry, such as tangent and normal cones, were introduced in \autoref{sec:MRuAN-Background}. The recovery conditions (which are based on atomic norms and convex geometry) were presented in \autoref{sec:MRuAN-Recovery_Conditions}. For these conditions, the Gaussian width of a set plays a crucial role, but it is generally difficult to determine the Gaussian width of arbitrary sets. Therefore, \autoref{sec:MRuAN-Gaussian_Widths} presented selected properties of Gaussian widths, which might be useful in calculating bounds on Gaussian widths relating to the CVaR norm.

The final chapter, \autoref{chapter:Recovery_using_CVaR}, contained completely original work. The goal of this chapter was to show how the CVaR norm could be used for model recovery problems. Due to the limited scope of this thesis, only partial results could be presented so that this chapter might form a basis for further research in this area. \autoref{section:RuCVaR-Atomic_CVaR_Norm} gave a conjecture on the set of atoms relating to the CVaR norm for $\frac{p-2}{p} < \alpha < \frac{p-1}{p}$ (\autoref{conj:CVaR_Norm_atoms1}), which was partially proven. A comparison of unit balls of the $C_{\alpha}$ norm for $\frac{p-2}{p} < \alpha < \frac{p-1}{p}$ and $0 < \alpha < \frac{1}{p}$ was given, and a numerical experiment was performed in $\mathbb{R}^4$ to provide evidence for \autoref{conj:CVaR_Norm_atoms1}. The final section, \autoref{section:RuCVaR-Experiments}, then performs numerical experiments to show the recovery rate for different $\mathbf{x}^*$ using the CVaR norm as the atomic norm. From these experiments, it appears that the CVaR norm is not suitable to recover an own type of signal, as recovery could not be guaranteed with high probability for $n < p$. For other types of $\mathbf{x}^*$ (i.e. $k$-sparse vectors and vectors that are the sum of $k$ binary vectors), model recovery using the CVaR norm was compared to using the $L_1$ norm and $L_{\infty}$ norm, respectively. While the CVaR norm performed considerably worse than the $L_1$ norm for recovering $k$-sparse vectors, the CVaR norm was marginally better than the $L_{\infty}$ norm for recovering vectors that are the sum of $k$ binary vectors. As these experiments were carried out with a particular choice of $\alpha$, different $\alpha$ might yield different results, as the unit balls of the CVaR are quite different depending on $\alpha$. Hence, it might be promising to conduct further research in this area.

\clearpage

%the entries are in the file literature.bib
\addcontentsline{toc}{chapter}{Bibliography}
\bibliography{literature}
\clearpage

\appendix
\addcontentsline{toc}{chapter}{Appendices}
\renewcommand{\thepage}{\Roman{page}}
\setcounter{page}{1}

\clearpage

%%%%%%%%%%%%%%%%%%%
%
%  Appendix: Matlab Code
%
%%%%%%%%%%%%%%%%%%%

\chapter{Matlab Code}\label{app:Matlab_Code}

% Scaled CVaR Component Wise
%%%%%%%%%%%%%%%%%
\section{List of Matlab Code Developed During this Dissertation}
\label{code:List_of_codes}
\begin{table}[H] \scriptsize
	\centering
	\begin{tabular}{| C{0.3cm} | L{4.5cm} | L{6.8cm} | L{2.5cm} |}
		\hline
		\textbf{\#} & \textbf{Filename} & \textbf{Purpose of Code} & \textbf{Used for} \\
		\hline
		1 & CVaR\_Norm\_Component.m & Calculate the CVaR norm of $\xinRn$ at a given $\alpha$ using \autoref{def:CVaR_Component_Wise} (see \autoref{code:CVaR_Component_Wise}) & CVaR norm calculations \\
		\hdashline
		2 & CVaR\_Norm\_Optimization.m & Calculate the CVaR norm of $\xinRn$ at a given $\alpha$ using \autoref{prop:CVaR_based_on_Definition} (see \autoref{code:CVaR_based_on_Definition}) & CVaR norm calculations \\
		\hdashline
		3 & Scaled\_CVaR\_Norm \_Component.m & Calculate the Scaled CVaR norm of $\xinRn$ at a given $\alpha$ using \autoref{def:Scaled_CVaR_Component_Wise} (see \autoref{code:Scaled_CVaR_Component_Wise}) & Scaled CVaR norm calculations \\
		\hdashline
		4 & Scaled\_CVaR\_Norm \_Optimization.m & Calculate the Scaled CVaR norm of $\xinRn$ at a given $\alpha$ using \autoref{prop:Scaled_CVaR_based_on_Definition} (see \autoref{code:Scaled_CVaR_based_on_Definition}) & Scaled CVaR norm calculations \\
		\hdashline
		5 & Experiment01\_CVaR\_Norms \_Computing\_Times.m & Compare computing times of codes 1-4 & \autoref{table:CVaR_Norm_Computation_Time_for_different_n}, \autoref{table:CVaR_Norm_Computation_Time_for_different_alpha}, \autoref{app_table:CVaR_Norm_Computation_Time_for_different_alpha_and_n} \\
		\hdashline
		6 & Experiment03\_CVaR\_Norm\_on\_2D\_grid.m & Draw surface plots of $C_{\alpha}$ and $L_p$ of $\xinR{2}$ for different $\alpha$ and $p$ & \autoref{fig:CVaR_Lp_Norm_surface_plots}, \autoref{app_diagrams:C_alpha_Lp_for_different_alpha_p} \\
		\hdashline
		7 & Experiment05\_CVaR\_Lp\_Norm\_as \_functions\_of\_alpha\_p.m & Calculate $C_{\alpha}^S$, $C_{\alpha}$ and corresponding $L_p$, $L_p^S$ for $\alpha \in [0,1]$ & \autoref{fig:Scaled_CVaR_LSp_Norm_For_different_alpha}, \autoref{fig:CVaR_Lp_Norm_For_different_alpha} \\
		\hdashline
		8 & Experiment06\_Projecting\_Points \_onto\_unit\_ball.m & Project a circle in $\mathbb{R}^3$ onto the unit ball $x_1^2 + x_2^2 = 1, x_3 =1$ using $L_2$ norm and $C_{\alpha}$ norm minimization for different $\alpha$ & \autoref{fig:projection_onto_unit_ball}, \autoref{app_diagrams:Projection_onto_unit_ball}\\
		\hdashline
		9 & Experiment07\_UL\_ratio\_for\_Lp \_approximation\_by\_CVaR\_norm.m & Calculate and draw proximity ratio of $C_{\alpha}$ and $L_p$ for different $p$ & \autoref{fig:f_np_kappa_ratio} \\
		\hdashline
		10 & Experiment10\_MVO\_CVaR \_Optimization\_Normal\_Dist.m & Compute Mean-Variance and CVaR optimal portfolios for normally distributed losses & \autoref{table:MVO_CVaR_Optimal_Portfolios_different_R}\\
		\hdashline
		11 & Experiment11\_MVO\_CVaR \_Optimization\_Skewed\_Dist.m &  Compute Mean-Variance and CVaR optimal portfolios for skewed loss distributions, draw histogram of simulated portfolio losses, give risk metrics of optimal portfolios & \autoref{table:Performance_of_Optimal_Portfolios_scenario2}, \autoref{table:MVO_CVaR_Optimal_Portfolios_scenario2}, \autoref{app_diagrams:Loss_distributions_portfolios_scenario2}\\
		\hdashline
		12 & Experiment12\_Hedging.m & Perform Hedging procedure described in \autoref{sec:Portfolio_Hedging-Hedging_Strangle}, draw option payoff profiles before / after hedging, draw loss distribution before / after hedging, give risk metrics of portfolio before / after hedge & \autoref{fig:strangle_googl_yhoo_payoff_profile_before}, \autoref{fig:strangle_googl_yhoo_payoff_profile_after}, \autoref{fig:Trader_Loss_Distribution_before}, \autoref{fig:Trader_Loss_Distribution_after}, \autoref{table:Risk_Metrics_unhedged_hedged}, \autoref{app_table:trader_positions_after_yhoo}, \autoref{app_table:trader_positions_after_googl} \\
		\hdashline
		13 & Experiment13\_VaR \_CVaR\_pdf\_cdf.m & Draw pdf and cdf of a normal random variable to explain VaR and CVaR & \autoref{fig:VaR_CVaR_Explanation} \\
		\hdashline
		14 & Experiment14\_MVO \_Efficient\_Frontier.m & Calculate Mean-Variance optimal portfolio for different required expected returns $R$ and draw efficient frontier & \autoref{fig:Efficient_Frontier} \\
		\hdashline
		15 & Experiment15\_Find \_CVaR\_Graphically.m & Draw $\phi_{\alpha}(c)$ (\autoref{eqn:phi_alpha})for different $c$ & \autoref{fig:phic_vs_c} \\
		\hdashline
		16 & Experiment16a\_Scaled \_CVaR\_own\_examples.m & Draw unit balls of $C_{\alpha}^S$ for different values of $\alpha$ & \autoref{fig:CSalpha_unit_balls_own_example} \\
		\hdashline
		17 & Experiment16b\_CVaR \_own\_examples.m & Draw unit balls of $C_{\alpha}$ for different values of $\alpha$ & \autoref{fig:Calpha_unit_balls_own_example} \\
		\hdashline		
		18 & Experiment17\_Show\_Piecewise \_Convexity\_CSalpha.m & Draw $C_{\alpha}^S$ of 4 different $\mathbf{x}$ versus $\alpha$ & \autoref{fig:piecewise_convexity_example} \\
		\hline
	\end{tabular}
\end{table}
Continued on next page...
\clearpage
... continued from previous page
\begin{table}[H] \scriptsize
	\centering
	\begin{tabular}{| C{0.3cm} | L{4.5cm} | L{6.8cm} | L{2.5cm} |}
		\hline
		\textbf{\#} & \textbf{Filename} & \textbf{Purpose of Code} & \textbf{Used for} \\
		\hline
		19 & Experiment20\_CVaR \_Model\_Recovery.m & Test model recovery of different $\mathbf{x}^*$ using CVaR norm & \autoref{fig:probability_of_recovery_CVaR} \\
		\hdashline
		20 & Experiment20a\_L1 \_Model\_Recovery.m & Compare recovery probability of different $\mathbf{x}^*$ using CVaR norm versus $L_1$ norm & \autoref{fig:probability_of_recovery_L1} \\
		\hdashline
		21 & Experiment20b\_Linfty \_Model\_Recovery.m & Compare recovery probability of different $\mathbf{x}^*$ using CVaR norm versus $L_{\infty}$ norm & \autoref{fig:probability_of_recovery_Linfty} \\
		\hdashline
		22 & Experiment21\_CVaR\_Atoms\_R4.m & Project random hyperplanes onto unit ball of $C_{0.625}$ in $\mathbb{R}^4$ & \autoref{app_table:projections ratio} \\
		\hline
	\end{tabular}
\end{table}

% Scaled CVaR Component Wise
%%%%%%%%%%%%%%%%%
\section{Scaled CVaR Calculation based on \autoref*{def:Scaled_CVaR_Component_Wise}}
\label{code:Scaled_CVaR_Component_Wise}
\lstinputlisting{Scaled_CVaR_Norm_Component.m}

% Scaled CVaR Optimization
%%%%%%%%%%%%%%%%%
\section{Scaled CVaR Calculation based on \autoref*{prop:Scaled_CVaR_based_on_Definition}}
\label{code:Scaled_CVaR_based_on_Definition}
\lstinputlisting{Scaled_CVaR_Norm_Optimization.m}

% CVaR Component Wise
%%%%%%%%%%%%%%%%%
\section{CVaR Calculation based on \autoref*{def:CVaR_Component_Wise}}
\label{code:CVaR_Component_Wise}
\lstinputlisting{CVaR_Norm_Component.m}

% CVaR Optimization
%%%%%%%%%%%%%%%%%
\section{CVaR Calculation based on \autoref*{prop:CVaR_based_on_Definition}}
\label{code:CVaR_based_on_Definition}
\lstinputlisting{CVaR_Norm_Optimization.m}

\clearpage

%%%%%%%%%%%%%%%%%%%
%
%  Appendix: Extended Tables
%
%%%%%%%%%%%%%%%%%%%

\chapter{Extended Tables}\label{app:tables}

% Option Prices Yahoo
%%%%%%%%%%%%%%%%%
\section[Option Prices on NASDAQ:YHOO]{Option Prices on NASDAQ:YHOO on 22 July 2015, 9:00 a.m. New York Time}
\label{app_table:option_prices_yhoo}
\begin{table}[H] \scriptsize
	\centering
	\begin{tabular}{ *{2}{| l | l | r| r |}}
		\hline
\textbf{Underlying}	&	\textbf{Option}	&	\textbf{Strike}	&	\textbf{Price}	&	\textbf{Underlying}	&	\textbf{Option}	&	\textbf{Strike}	&	\textbf{Price}	\\
\hline
Yahoo	&	Call	&	31.5	&	7.050	&	Yahoo	&	Put	&	31.5	&	0.170	\\
Yahoo	&	Call	&	34.0	&	4.625	&	Yahoo	&	Put	&	34.0	&	0.020	\\
Yahoo	&	Call	&	35.0	&	3.650	&	Yahoo	&	Put	&	35.0	&	0.025	\\
Yahoo	&	Call	&	35.5	&	3.125	&	Yahoo	&	Put	&	35.5	&	0.030	\\
Yahoo	&	Call	&	36.0	&	2.520	&	Yahoo	&	Put	&	36.0	&	0.040	\\
Yahoo	&	Call	&	36.5	&	2.305	&	Yahoo	&	Put	&	36.5	&	0.045	\\
Yahoo	&	Call	&	37.0	&	1.790	&	Yahoo	&	Put	&	37.0	&	0.060	\\
Yahoo	&	Call	&	37.5	&	1.330	&	Yahoo	&	Put	&	37.5	&	0.080	\\
Yahoo	&	Call	&	38.0	&	0.905	&	Yahoo	&	Put	&	38.0	&	0.130	\\
Yahoo	&	Call	&	38.5	&	0.575	&	Yahoo	&	Put	&	38.5	&	0.285	\\
Yahoo	&	Call	&	39.0	&	0.305	&	Yahoo	&	Put	&	39.0	&	0.480	\\
Yahoo	&	Call	&	39.5	&	0.155	&	Yahoo	&	Put	&	39.5	&	0.880	\\
Yahoo	&	Call	&	40.0	&	0.085	&	Yahoo	&	Put	&	40.0	&	1.260	\\
Yahoo	&	Call	&	40.5	&	0.060	&	Yahoo	&	Put	&	40.5	&	1.740	\\
Yahoo	&	Call	&	41.0	&	0.040	&	Yahoo	&	Put	&	41.0	&	2.195	\\
Yahoo	&	Call	&	41.5	&	0.025	&	Yahoo	&	Put	&	41.5	&	2.715	\\
Yahoo	&	Call	&	42.0	&	0.030	&	Yahoo	&	Put	&	42.0	&	3.225	\\
Yahoo	&	Call	&	42.5	&	0.035	&	Yahoo	&	Put	&	42.5	&	3.725	\\
Yahoo	&	Call	&	43.0	&	0.015	&	Yahoo	&	Put	&	43.0	&	4.225	\\
Yahoo	&	Call	&	43.5	&	0.065	&	Yahoo	&	Put	&	43.5	&	4.650	\\
Yahoo	&	Call	&	44.0	&	0.025	&	Yahoo	&	Put	&	44.0	&	5.275	\\
Yahoo	&	Call	&	44.5	&	0.170	&	Yahoo	&	Put	&	44.5	&	5.675	\\
Yahoo	&	Call	&	45.0	&	0.015	&	Yahoo	&	Put	&	45.0	&	6.150	\\
Yahoo	&	Call	&	46.5	&	0.010	&	Yahoo	&	Put	&	46.5	&	7.700	\\
Yahoo	&	Call	&	49.5	&	0.010	&	Yahoo	&	Put	&	49.5	&	10.500	\\
Yahoo	&	Call	&	50.0	&	0.010	&	Yahoo	&	Put	&	50.0	&	11.025	\\
Yahoo	&	Call	&	50.5	&	0.010	&	Yahoo	&	Put	&	50.5	&	11.525	\\
		\hline
	\end{tabular}
\end{table}

% Option Prices Google
%%%%%%%%%%%%%%%%%
\section[Option Prices on NASDAQ:GOOGL]{Option Prices on NASDAQ:GOOGL on 22 July 2015, 9:00 a.m. New York Time}
\label{app_table:option_prices_googl}
\begin{table}[H] \scriptsize
	\centering
	\begin{tabular}{ *{2}{| l | l | r| r |}}
		\hline
\textbf{Underlying}	&	\textbf{Option}	&	\textbf{Strike}	&	\textbf{Price}	&	\textbf{Underlying}	&	\textbf{Option}	&	\textbf{Strike}	&	\textbf{Price}	\\
\hline
Google	&	Call	&	510.0	&	194.200	&	Google	&	Put	&	510.0	&	0.030	\\
Google	&	Call	&	535.0	&	169.400	&	Google	&	Put	&	535.0	&	0.155	\\
Google	&	Call	&	545.0	&	158.950	&	Google	&	Put	&	545.0	&	0.180	\\
Google	&	Call	&	550.0	&	153.950	&	Google	&	Put	&	550.0	&	0.030	\\
Google	&	Call	&	560.0	&	144.200	&	Google	&	Put	&	560.0	&	0.055	\\
Google	&	Call	&	565.0	&	139.200	&	Google	&	Put	&	565.0	&	0.130	\\
Google	&	Call	&	570.0	&	134.200	&	Google	&	Put	&	570.0	&	0.130	\\
Google	&	Call	&	580.0	&	124.200	&	Google	&	Put	&	580.0	&	0.205	\\
Google	&	Call	&	590.0	&	114.200	&	Google	&	Put	&	590.0	&	0.180	\\
Google	&	Call	&	597.5	&	106.500	&	Google	&	Put	&	597.5	&	0.155	\\
Google	&	Call	&	600.0	&	104.250	&	Google	&	Put	&	600.0	&	0.030	\\
Google	&	Call	&	615.0	&	88.950	&	Google	&	Put	&	615.0	&	0.155	\\
Google	&	Call	&	620.0	&	83.950	&	Google	&	Put	&	620.0	&	0.155	\\
Google	&	Call	&	630.0	&	74.000	&	Google	&	Put	&	630.0	&	0.155	\\
Google	&	Call	&	650.0	&	54.350	&	Google	&	Put	&	650.0	&	0.150	\\
Google	&	Call	&	652.5	&	52.100	&	Google	&	Put	&	652.5	&	0.275	\\
Google	&	Call	&	655.0	&	49.500	&	Google	&	Put	&	655.0	&	0.275	\\
Google	&	Call	&	657.5	&	46.850	&	Google	&	Put	&	657.5	&	0.275	\\
Google	&	Call	&	660.0	&	44.550	&	Google	&	Put	&	660.0	&	0.300	\\
Google	&	Call	&	665.0	&	39.550	&	Google	&	Put	&	665.0	&	0.425	\\
Google	&	Call	&	667.5	&	36.900	&	Google	&	Put	&	667.5	&	0.525	\\
Google	&	Call	&	670.0	&	34.650	&	Google	&	Put	&	670.0	&	0.600	\\
Google	&	Call	&	675.0	&	29.950	&	Google	&	Put	&	675.0	&	0.800	\\
Google	&	Call	&	677.5	&	27.600	&	Google	&	Put	&	677.5	&	0.950	\\
Google	&	Call	&	680.0	&	25.400	&	Google	&	Put	&	680.0	&	1.150	\\
Google	&	Call	&	682.5	&	23.150	&	Google	&	Put	&	682.5	&	1.375	\\
Google	&	Call	&	685.0	&	20.900	&	Google	&	Put	&	685.0	&	1.700	\\
Google	&	Call	&	687.5	&	18.650	&	Google	&	Put	&	687.5	&	2.075	\\
Google	&	Call	&	690.0	&	16.800	&	Google	&	Put	&	690.0	&	2.600	\\
Google	&	Call	&	692.5	&	14.750	&	Google	&	Put	&	692.5	&	3.175	\\
Google	&	Call	&	695.0	&	12.850	&	Google	&	Put	&	695.0	&	3.850	\\
Google	&	Call	&	697.5	&	11.350	&	Google	&	Put	&	697.5	&	4.700	\\
Google	&	Call	&	700.0	&	9.900	&	Google	&	Put	&	700.0	&	5.600	\\
Google	&	Call	&	702.5	&	8.450	&	Google	&	Put	&	702.5	&	6.750	\\
Google	&	Call	&	705.0	&	7.250	&	Google	&	Put	&	705.0	&	8.100	\\
Google	&	Call	&	710.0	&	5.050	&	Google	&	Put	&	710.0	&	10.950	\\
Google	&	Call	&	712.5	&	4.250	&	Google	&	Put	&	712.5	&	12.550	\\
Google	&	Call	&	715.0	&	3.450	&	Google	&	Put	&	715.0	&	14.250	\\
Google	&	Call	&	717.5	&	2.875	&	Google	&	Put	&	717.5	&	16.100	\\
Google	&	Call	&	720.0	&	2.425	&	Google	&	Put	&	720.0	&	18.200	\\
Google	&	Call	&	725.0	&	1.675	&	Google	&	Put	&	725.0	&	22.550	\\
Google	&	Call	&	730.0	&	1.175	&	Google	&	Put	&	730.0	&	27.200	\\
Google	&	Call	&	735.0	&	0.775	&	Google	&	Put	&	735.0	&	31.350	\\
		\hline
	\end{tabular}
\end{table}

% Trader's positions
%%%%%%%%%%%%%%%%%
\section[Trader's positions before hedging]{Trader's positions on 22 July 2015, 9:00 a.m. New York Time before hedging}
\label{app_table:trader_positions_before}
\begin{table}[H] \scriptsize
	\centering
	\begin{tabular}{ | l | l | r| r | r | }
		\hline
\textbf{Underlying}	&	\textbf{Option}	&	\textbf{Strike}	&	\textbf{Position}	&	\textbf{Cost of Position (USD)}	\\
\hline
Yahoo	&	Call	&	31.5	&	 35 	&	 24,675 	\\
Yahoo	&	Call	&	34.0	&	 40 	&	 18,500 	\\
Yahoo	&	Call	&	35.0	&	 25 	&	 9,125 	\\
Yahoo	&	Call	&	35.5	&	 30 	&	 9,375 	\\
Yahoo	&	Call	&	36.0	&	 45 	&	 11,340 	\\
Yahoo	&	Call	&	37.0	&	 35 	&	 6,265 	\\
Yahoo	&	Call	&	38.0	&	 40 	&	 3,620 	\\
Yahoo	&	Call	&	38.5	&	 50 	&	 2,875 	\\
Yahoo	&	Call	&	39.0	&	-50 	&	-1,525 	\\
Yahoo	&	Call	&	40.0	&	 10 	&	 85 	\\
Yahoo	&	Call	&	40.5	&	-10 	&	-60 	\\
Yahoo	&	Call	&	41.5	&	 50 	&	 125 	\\
Yahoo	&	Call	&	42.0	&	-1,100 	&	-3,300 	\\
Yahoo	&	Call	&	42.5	&	-50 	&	-175 	\\
Yahoo	&	Call	&	43.0	&	-40 	&	-60 	\\
Yahoo	&	Call	&	43.5	&	-40 	&	-260 	\\
Yahoo	&	Call	&	44.5	&	-35 	&	-595 	\\
Yahoo	&	Call	&	45.0	&	-45 	&	-68 	\\
Yahoo	&	Put	&	31.5	&	-10 	&	-170 	\\
Yahoo	&	Put	&	37.5	&	-1,050 	&	-8,400 	\\
Yahoo	&	Put	&	38.0	&	 6 	&	 78 	\\
Yahoo	&	Put	&	39.0	&	 50 	&	 2,400 	\\
Yahoo	&	Put	&	39.5	&	 49 	&	 4,312 	\\
Yahoo	&	Put	&	40.0	&	 50 	&	 6,300 	\\
Yahoo	&	Put	&	41.5	&	 50 	&	 13,575 	\\
Yahoo	&	Put	&	42.0	&	-50 	&	-16,125 	\\
Yahoo	&	Put	&	42.5	&	-50 	&	-18,625 	\\
Yahoo	&	Put	&	43.0	&	-50 	&	-21,125 	\\
Yahoo	&	Put	&	45.0	&	 50 	&	 30,750 	\\
Yahoo	&	Put	&	49.5	&	 50 	&	 52,500 	\\
Yahoo	&	Put	&	50.0	&	 50 	&	 55,125 	\\
Yahoo	&	Put	&	50.5	&	 50 	&	 57,625 	\\
Google	&	Call	&	730.0	&	-100 	&	-11,750 	\\
Google	&	Put	&	665.0	&	-100 	&	-4,250 	\\
\hline
\hline
\multicolumn{4}{| l |}{\textbf{Total}}	& \textbf{222,163} \\
		\hline
	\end{tabular}
\end{table}

\section[Trader's positions in Yahoo Options after hedging]{Trader's positions in Yahoo Options on 22 July 2015, 9:00 a.m. New York Time after hedging}
\label{app_table:trader_positions_after_yhoo}
\begin{table}[H] \scriptsize
	\centering
	\begin{tabular}{ | l | l | r| r | r | r | r |}
		\hline
\textbf{Underlying}	&	\textbf{Strike}	&	\parbox{1.2cm}{\centering ~ \\ \textbf{Call} \\ \textbf{Position} \\ ~}	& \parbox{2cm}{\centering \textbf{Cost of Call} \\ \textbf{Position (USD)}} & \parbox{1.2cm}{\centering ~ \\ \textbf{Put} \\ \textbf{Position} \\ ~}	&	\parbox{2cm}{\centering \textbf{Cost of Put} \\ \textbf{Position (USD)}}& \parbox{2cm}{\centering \textbf{Net Cost of} \\ \textbf{Position (USD)}} \\
\hline
Yahoo	&	31.5	&	85	&	 59,925 	&	-60	&	-1,020 	&	 58,905 	\\
Yahoo	&	34	&	90	&	 41,625 	&	-50	&	-100 	&	 41,525 	\\
Yahoo	&	35	&	75	&	 27,375 	&	-50	&	-125 	&	 27,250 	\\
Yahoo	&	35.5	&	80	&	 25,000 	&	-50	&	-150 	&	 24,850 	\\
Yahoo	&	36	&	95	&	 23,940 	&	-50	&	-200 	&	 23,740 	\\
Yahoo	&	36.5	&	-50	&	-11,525 	&	-50	&	-225 	&	-11,750 	\\
Yahoo	&	37	&	85	&	 15,215 	&	-50	&	-300 	&	 14,915 	\\
Yahoo	&	37.5	&	50	&	 6,650 	&	-1100	&	-8,800 	&	-2,150 	\\
Yahoo	&	38	&	90	&	 8,145 	&	-44	&	-572 	&	 7,573 	\\
Yahoo	&	38.5	&	49	&	 2,818 	&	-50	&	-1,425 	&	 1,393 	\\
Yahoo	&	39	&	-100	&	-3,050 	&	100	&	 4,800 	&	 1,750 	\\
Yahoo	&	39.5	&	50	&	 775 	&	49	&	 4,312 	&	 5,087 	\\
Yahoo	&	40	&	60	&	 510 	&	100	&	 12,600 	&	 13,110 	\\
Yahoo	&	40.5	&	40	&	 240 	&	50	&	 8,700 	&	 8,940 	\\
Yahoo	&	41	&	50	&	 200 	&	50	&	 10,975 	&	 11,175 	\\
Yahoo	&	41.5	&	100	&	 250 	&	100	&	 27,150 	&	 27,400 	\\
Yahoo	&	42	&	-1150	&	-3,450 	&	-100	&	-32,250 	&	-35,700 	\\
Yahoo	&	42.5	&	-100	&	-350 	&	-100	&	-37,250 	&	-37,600 	\\
Yahoo	&	43	&	-90	&	-135 	&	-100	&	-42,250 	&	-42,385 	\\
Yahoo	&	43.5	&	-90	&	-585 	&	50	&	 23,250 	&	 22,665 	\\
Yahoo	&	44	&	-50	&	-125 	&	-50	&	-26,375 	&	-26,500 	\\
Yahoo	&	44.5	&	-85	&	-1,445 	&	50	&	 28,375 	&	 26,930 	\\
Yahoo	&	45	&	-95	&	-143 	&	100	&	 61,500 	&	 61,358 	\\
Yahoo	&	46.5	&	-50	&	-50 	&	50	&	 38,500 	&	 38,450 	\\
Yahoo	&	49.5	&	-50	&	-50 	&	100	&	 105,000 	&	 104,950 	\\
Yahoo	&	50	&	-50	&	-50 	&	100	&	 110,250 	&	 110,200 	\\
Yahoo	&	50.5	&	-50	&	-50 	&	100	&	 115,250 	&	 115,200 	\\
\hline
\hline
\multicolumn{6}{| l |}{\textbf{Total}}	& \textbf{591,280} \\
		\hline
	\end{tabular}
\end{table}

\section[Trader's positions in Google Options after hedging]{Trader's positions in Google Options on 22 July 2015, 9:00 a.m. New York Time after hedging}
\label{app_table:trader_positions_after_googl}
\begin{table}[H] \scriptsize
	\centering
	\begin{tabular}{ | l | l | r| r | r | r | r |}
		\hline
\textbf{Underlying}	&	\textbf{Strike}	&	\parbox{1.2cm}{\centering ~ \\ \textbf{Call} \\ \textbf{Position} \\ ~}	& \parbox{2cm}{\centering \textbf{Cost of Call} \\ \textbf{Position (USD)}} & \parbox{1.2cm}{\centering ~ \\ \textbf{Put} \\ \textbf{Position} \\ ~}	&	\parbox{2cm}{\centering \textbf{Cost of Put} \\ \textbf{Position (USD)}}& \parbox{2cm}{\centering \textbf{Net Cost of} \\ \textbf{Position (USD)}} \\
\hline
Google	&	510	&	-5	&	-97,100 	&	-5	&	-15 	&	-97,115 	\\
Google	&	535	&	-5	&	-84,700 	&	-5	&	-78 	&	-84,778 	\\
Google	&	545	&	5	&	 79,475 	&	-5	&	-90 	&	 79,385 	\\
Google	&	550	&	5	&	 76,975 	&	-5	&	-15 	&	 76,960 	\\
Google	&	560	&	-5	&	-72,100 	&	-5	&	-28 	&	-72,128 	\\
Google	&	565	&	-5	&	-69,600 	&	-5	&	-65 	&	-69,665 	\\
Google	&	570	&	-5	&	-67,100 	&	-5	&	-65 	&	-67,165 	\\
Google	&	580	&	-5	&	-62,100 	&	-5	&	-103 	&	-62,203 	\\
Google	&	590	&	-5	&	-57,100 	&	-5	&	-90 	&	-57,190 	\\
Google	&	597.5	&	5	&	 53,250 	&	-5	&	-78 	&	 53,173 	\\
Google	&	600	&	-5	&	-52,125 	&	-5	&	-15 	&	-52,140 	\\
Google	&	615	&	5	&	 44,475 	&	-5	&	-78 	&	 44,398 	\\
Google	&	620	&	5	&	 41,975 	&	-5	&	-78 	&	 41,898 	\\
Google	&	630	&	5	&	 37,000 	&	-5	&	-78 	&	 36,923 	\\
Google	&	650	&	-5	&	-27,175 	&	5	&	 75 	&	-27,100 	\\
Google	&	652.5	&	-5	&	-26,050 	&	-5	&	-138 	&	-26,188 	\\
Google	&	655	&	-5	&	-24,750 	&	-5	&	-138 	&	-24,888 	\\
Google	&	657.5	&	5	&	 23,425 	&	5	&	 138 	&	 23,563 	\\
Google	&	660	&	1	&	 4,455 	&	5	&	 150 	&	 4,605 	\\
Google	&	665	&	5	&	 19,775 	&	-95	&	-4,038 	&	 15,738 	\\
Google	&	667.5	&	5	&	 18,450 	&	5	&	 263 	&	 18,713 	\\
Google	&	670	&	5	&	 17,325 	&	5	&	 300 	&	 17,625 	\\
Google	&	675	&	5	&	 14,975 	&	5	&	 400 	&	 15,375 	\\
Google	&	677.5	&	5	&	 13,800 	&	5	&	 475 	&	 14,275 	\\
Google	&	680	&	5	&	 12,700 	&	5	&	 575 	&	 13,275 	\\
Google	&	682.5	&	4	&	 9,260 	&	5	&	 688 	&	 9,948 	\\
Google	&	685	&	-5	&	-10,450 	&	5	&	 850 	&	-9,600 	\\
Google	&	687.5	&	5	&	 9,325 	&	-5	&	-1,038 	&	 8,288 	\\
Google	&	690	&	-5	&	-8,400 	&	5	&	 1,300 	&	-7,100 	\\
Google	&	692.5	&	5	&	 7,375 	&	-5	&	-1,588 	&	 5,788 	\\
Google	&	695	&	5	&	 6,425 	&	-5	&	-1,925 	&	 4,500 	\\
Google	&	697.5	&	5	&	 5,675 	&	-5	&	-2,350 	&	 3,325 	\\
Google	&	700	&	-5	&	-4,950 	&	5	&	 2,800 	&	-2,150 	\\
Google	&	702.5	&	-5	&	-4,225 	&	5	&	 3,375 	&	-850 	\\
Google	&	705	&	4	&	 2,900 	&	-4	&	-3,240 	&	-340 	\\
Google	&	710	&	5	&	 2,525 	&	-5	&	-5,475 	&	-2,950 	\\
Google	&	712.5	&	-5	&	-2,125 	&	5	&	 6,275 	&	 4,150 	\\
Google	&	715	&	-5	&	-1,725 	&	5	&	 7,125 	&	 5,400 	\\
Google	&	717.5	&	-5	&	-1,438 	&	5	&	 8,050 	&	 6,613 	\\
Google	&	720	&	-3	&	-728 	&	5	&	 9,100 	&	 8,373 	\\
Google	&	725	&	5	&	 838 	&	5	&	 11,275 	&	 12,113 	\\
Google	&	730	&	-105	&	-12,338 	&	-5	&	-13,600 	&	-25,938 	\\
Google	&	735	&	-5	&	-388 	&	5	&	 15,675 	&	 15,288 	\\
\hline
\hline
\multicolumn{6}{| l |}{\textbf{Total}}	& \textbf{-149,800} \\
		\hline
	\end{tabular}
\end{table}

% Computing Times CVaR Norms
%%%%%%%%%%%%%%%%%
\section{Computation times of Scaled and (non-scaled) CVaR Norm in ms}
\label{app_table:CVaR_Norm_Computation_Time_for_different_alpha_and_n}
\begin{table}[H] \scriptsize
	\centering
	\begin{tabular}{| c | r || *{2}{r|}| *{2}{r|}  }
		\hline
		& & \multicolumn{4}{c|}{\textbf{Computation time in ms}} \\
		& & \multicolumn{2}{c||}{\textbf{Component-wise}} & \multicolumn{2}{c|}{\textbf{Optimization}}\\
$\boldsymbol\alpha$	&	$\boldsymbol n$	& \parbox{2.5cm}{\centering $\CVaRnormX[S]{\alpha}$ \\ (\autoref*{def:Scaled_CVaR_Component_Wise})} & \parbox{2.5cm}{\centering $\CVaRnormX{\alpha}$ \\ (\autoref*{def:CVaR_Component_Wise})} & \parbox{2.5cm}{\centering $\CVaRnormX[S]{\alpha}$ \\ (\autoref*{prop:Scaled_CVaR_based_on_Definition})} & \parbox{2.5cm}{\centering $\CVaRnormX{\alpha}$ \\ (\autoref*{prop:CVaR_based_on_Definition})}\\
\hline							
\hline				
\multirow{7}{*}{0}	&	2	&	0.62	&	0.50	&	220.44	&	197.76	\\
	&	3	&	0.11	&	0.03	&	211.03	&	179.15	\\
	&	10	&	0.11	&	0.03	&	181.00	&	173.62	\\
	&	100	&	0.12	&	0.03	&	196.84	&	194.83	\\
	&	1000	&	0.21	&	0.04	&	202.81	&	199.38	\\
	&	10000	&	1.05	&	0.05	&	455.95	&	435.50	\\
	&	100000	&	4.94	&	0.27	&	3766.36	&	3497.11	\\
\hline											
\multirow{7}{*}{0.1}	&	2	&	0.18	&	0.12	&	216.77	&	188.06	\\
	&	3	&	0.19	&	0.12	&	189.60	&	182.71	\\
	&	10	&	0.12	&	0.08	&	199.62	&	186.78	\\
	&	100	&	0.14	&	0.10	&	229.93	&	226.96	\\
	&	1000	&	0.19	&	0.14	&	244.86	&	236.01	\\
	&	10000	&	1.00	&	0.94	&	625.06	&	599.35	\\
	&	100000	&	5.25	&	5.03	&	6175.45	&	5843.76	\\
\hline											
\multirow{7}{*}{0.25}	&	2	&	0.20	&	0.12	&	181.25	&	175.68	\\
	&	3	&	0.18	&	0.12	&	181.29	&	184.35	\\
	&	10	&	0.19	&	0.13	&	265.65	&	242.76	\\
	&	100	&	0.14	&	0.10	&	214.34	&	217.70	\\
	&	1000	&	0.19	&	0.14	&	229.73	&	271.94	\\
	&	10000	&	1.06	&	0.98	&	600.00	&	584.77	\\
	&	100000	&	5.61	&	5.02	&	5772.24	&	5277.92	\\
\hline											
\multirow{7}{*}{0.5}	&	2	&	0.13	&	0.08	&	178.59	&	174.96	\\
	&	3	&	0.18	&	0.12	&	180.96	&	179.34	\\
	&	10	&	0.13	&	0.08	&	184.33	&	181.49	\\
	&	100	&	0.15	&	0.10	&	217.66	&	213.11	\\
	&	1000	&	0.19	&	0.14	&	323.36	&	239.72	\\
	&	10000	&	1.00	&	0.92	&	571.45	&	551.93	\\
	&	100000	&	5.64	&	5.00	&	5516.37	&	5128.19	\\
\hline											
\multirow{7}{*}{0.7}	&	2	&	0.05	&	0.04	&	179.90	&	176.37	\\
	&	3	&	0.05	&	0.03	&	184.08	&	188.00	\\
	&	10	&	0.13	&	0.08	&	187.39	&	189.06	\\
	&	100	&	0.14	&	0.10	&	250.00	&	267.46	\\
	&	1000	&	0.19	&	0.15	&	252.11	&	241.46	\\
	&	10000	&	0.97	&	0.92	&	624.84	&	612.85	\\
	&	100000	&	5.57	&	5.06	&	6201.42	&	5965.34	\\
\hline											
\multirow{7}{*}{0.9}	&	2	&	0.05	&	0.04	&	177.20	&	178.02	\\
	&	3	&	0.05	&	0.04	&	182.70	&	183.60	\\
	&	10	&	0.12	&	0.08	&	177.95	&	180.54	\\
	&	100	&	0.14	&	0.10	&	231.81	&	231.11	\\
	&	1000	&	0.19	&	0.14	&	289.31	&	249.22	\\
	&	10000	&	0.98	&	0.91	&	749.50	&	713.43	\\
	&	100000	&	5.26	&	5.02	&	8122.91	&	7767.68	\\
		\hline
	\end{tabular}
\end{table}
\clearpage

% Projection of R^4 hyperplane onto CVaR norm unit ball
%%%%%%%%%%%%%%%%%
\section{Ratio of Projections of Random Hyperplanes onto $C_{\alpha}$ Unit Ball in $\mathbb{R}^4$ over 5,000 Trials}
\label{app_table:projections ratio}
\begin{table}[H] 
	\centering
	\begin{tabular}{ | >{$}l<{$} | r |}
		\hline
\textbf{Projected onto}	&	\textbf{Ratio} \\
\hline
\mathbf{x}=[1,0,0,0]^T	&	0.62	\%	\\
\mathbf{x}=[0,1,0,0]^T	&	0.88	\%	\\
\mathbf{x}=[0,0,1,0]^T	&	0.70	\%	\\
\mathbf{x}=[0,0,0,1]^T	&	0.72	\%	\\
\mathbf{x}=[-1,0,0,0]^T	&	0.66	\%	\\
\mathbf{x}=[0,-1,0,0]^T	&	0.80	\%	\\
\mathbf{x}=[0,0,-1,0]^T	&	0.86	\%	\\
\mathbf{x}=[0,0,0,-1]^T	&	0.62	\%	\\
\mathbf{x}=(2/3) \times [1,1,1,1]^T	&	5.64	\%	\\
\mathbf{x}=(2/3) \times [1,1,1,-1]^T	&	6.14	\%	\\
\mathbf{x}=(2/3) \times [1,1,-1,1]^T	&	6.24	\%	\\
\mathbf{x}=(2/3) \times [1,1,-1,-1]^T	&	5.84	\%	\\
\mathbf{x}=(2/3) \times [1,-1,1,1]^T	&	5.76	\%	\\
\mathbf{x}=(2/3) \times [1,-1,1,-1]^T	&	6.08	\%	\\
\mathbf{x}=(2/3) \times [1,-1,-1,1]^T	&	5.44	\%	\\
\mathbf{x}=(2/3) \times [1,-1,-1,-1]^T	&	5.04	\%	\\
\mathbf{x}=(2/3) \times [-1,1,1,1]^T	&	5.42	\%	\\
\mathbf{x}=(2/3) \times [-1,1,1,-1]^T	&	6.16	\%	\\
\mathbf{x}=(2/3) \times [-1,1,-1,1]^T	&	6.16	\%	\\
\mathbf{x}=(2/3) \times [-1,1,-1,-1]^T	&	5.86	\%	\\
\mathbf{x}=(2/3) \times [-1,-1,1,1]^T	&	6.22	\%	\\
\mathbf{x}=(2/3) \times [-1,-1,1,-1]^T	&	6.00	\%	\\
\mathbf{x}=(2/3) \times [-1,-1,-1,1]^T	&	6.28	\%	\\
\mathbf{x}=(2/3) \times [-1,-1,-1,-1]^T	&	5.86	\%	\\
\text{other~} \mathbf{x}	&	0.00	\%	\\
\hline
	\end{tabular}
\end{table}

%%%%%%%%%%%%%%%%%%%
%
%  Appendix: Extended Diagrams
%
%%%%%%%%%%%%%%%%%%%

\chapter{Extended Diagrams}\label{app:diagrams}

\section[Monte Carlo simulated loss distributions of single assets]{Monte Carlo simulated loss distributions of single assets (Scenario 2 of \autoref*{sec:CVaR_PO_Examples})}
\label{app_diagrams:Loss_distributions_single_assets_scenario2}
\begin{figure}[H]
	\centering
	\includegraphics[width = \textwidth]{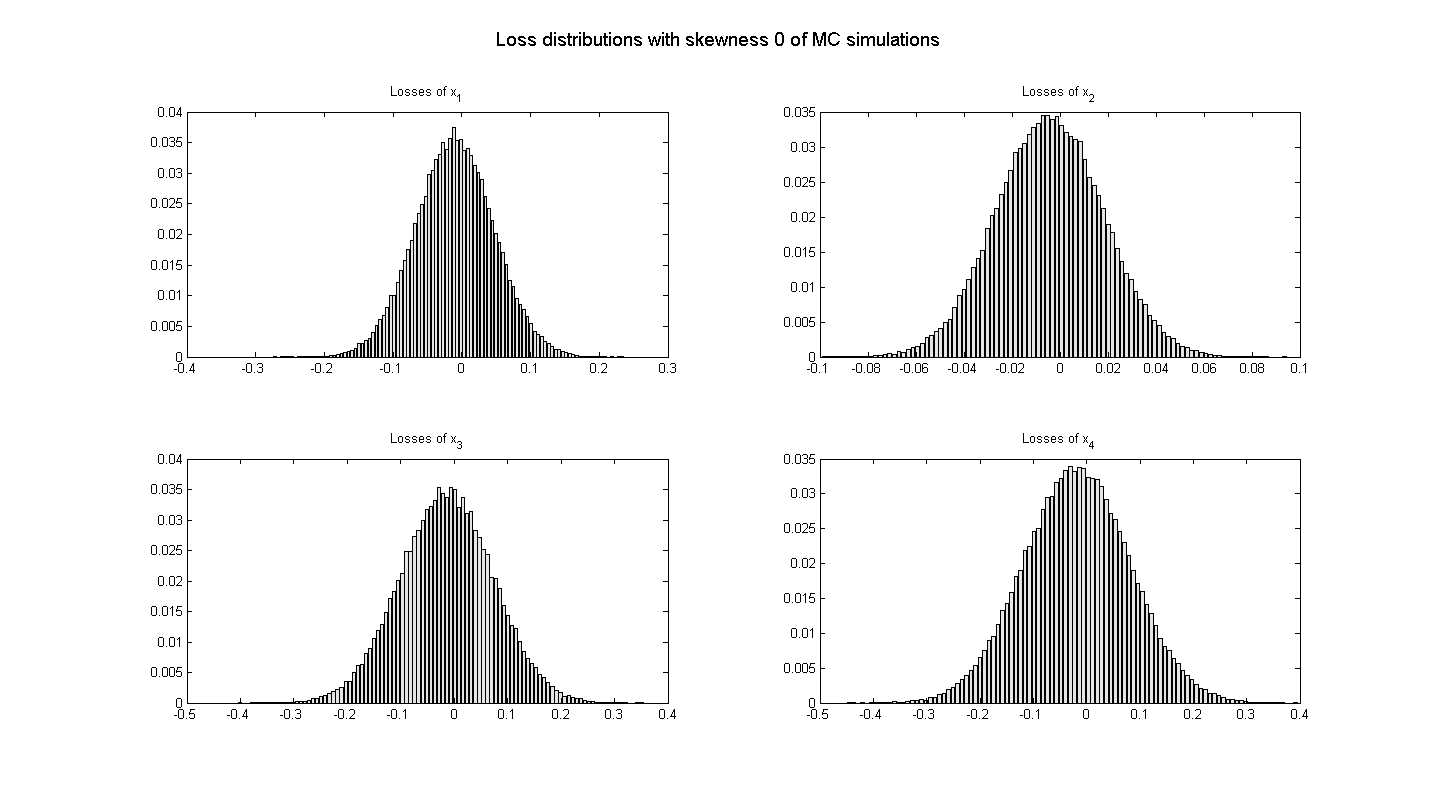}
\end{figure}
\begin{figure}[H]
	\centering
	\includegraphics[width = \textwidth]{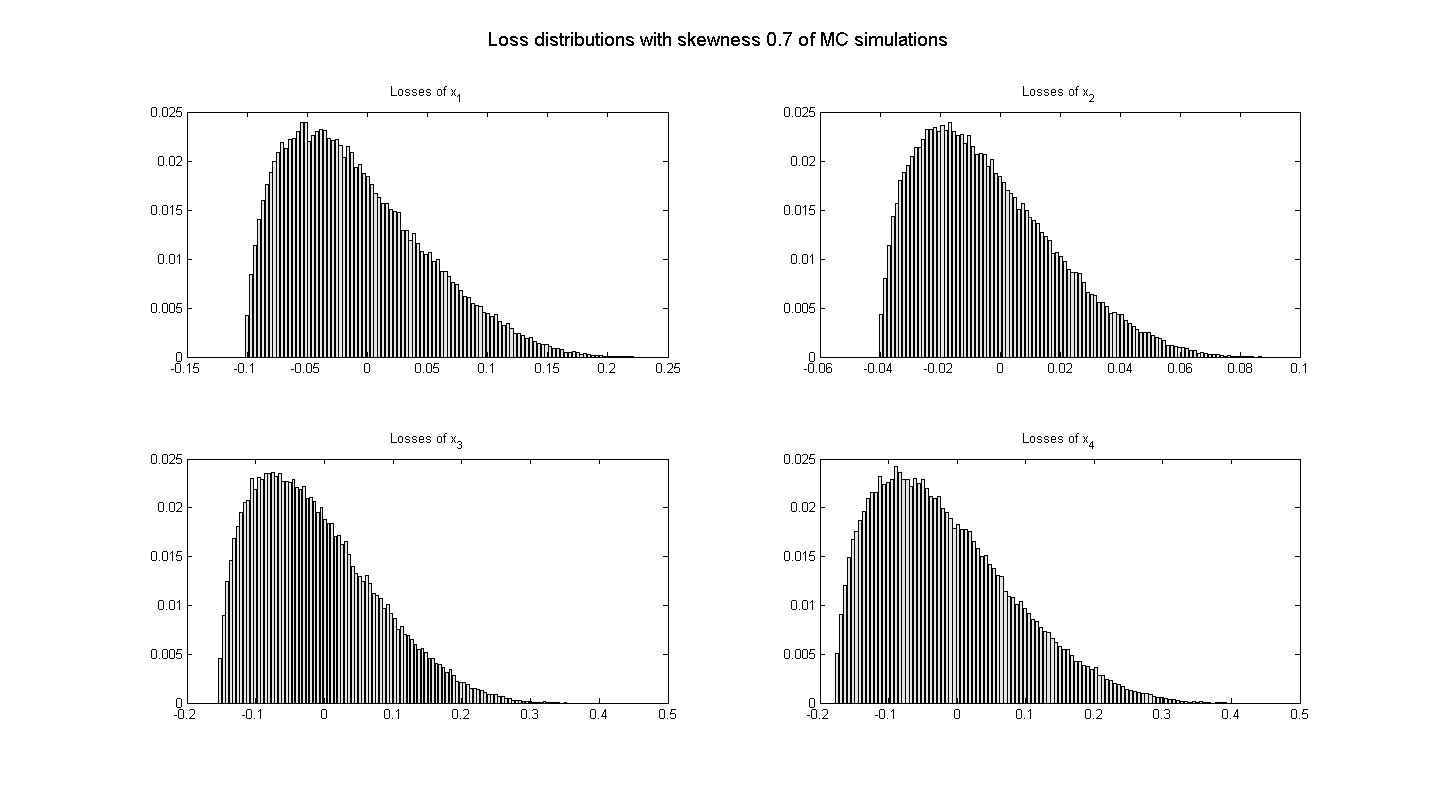}
\end{figure}

\section[Monte Carlo simulated loss distributions of optimal portfolios]{Monte Carlo simulated loss distributions of optimal portfolios (Scenario 2 of \autoref*{sec:CVaR_PO_Examples})}
\label{app_diagrams:Loss_distributions_portfolios_scenario2}
\begin{figure}[H]
	\centering
	\includegraphics[width = \textwidth]{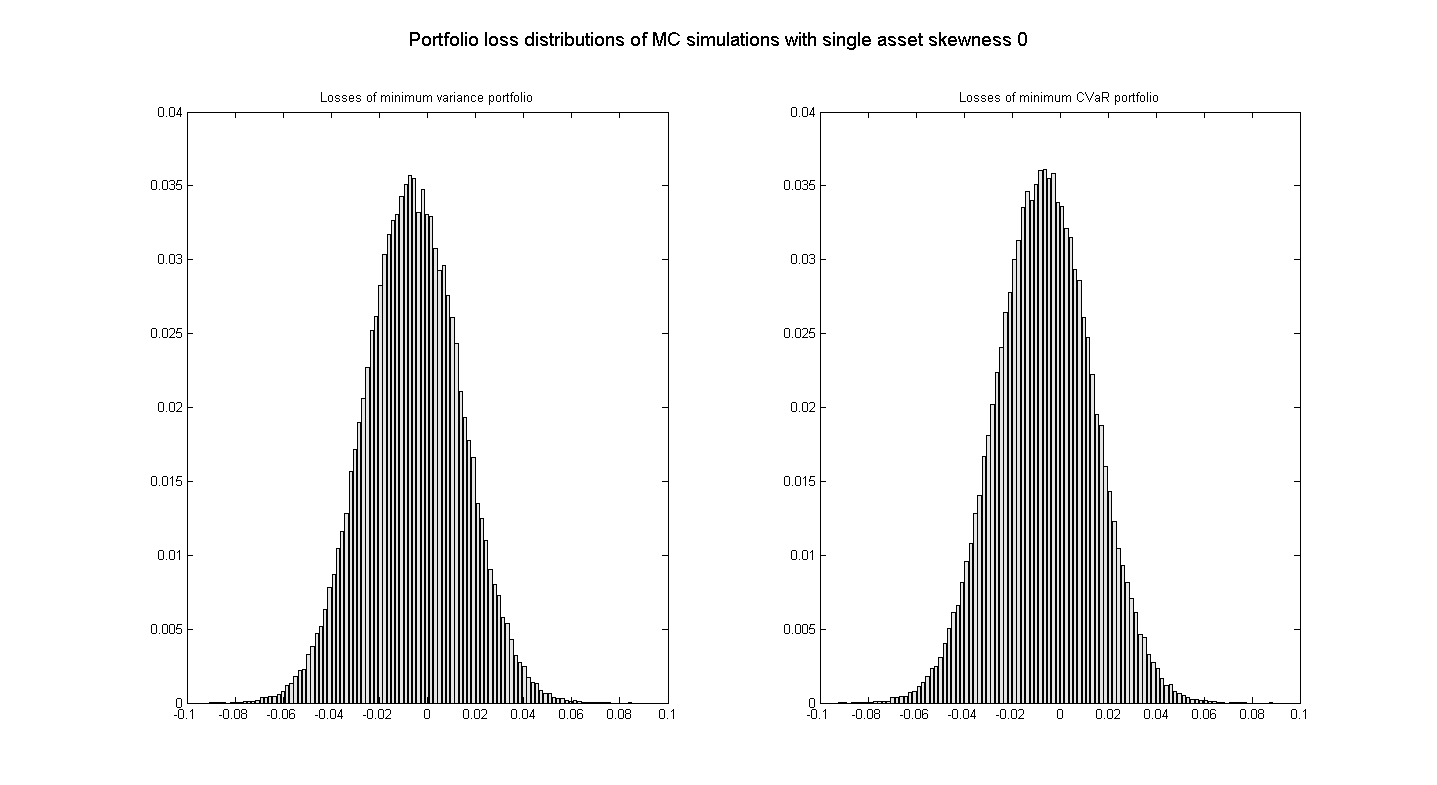}
\end{figure}
\begin{figure}[H]
	\centering
	\includegraphics[width = \textwidth]{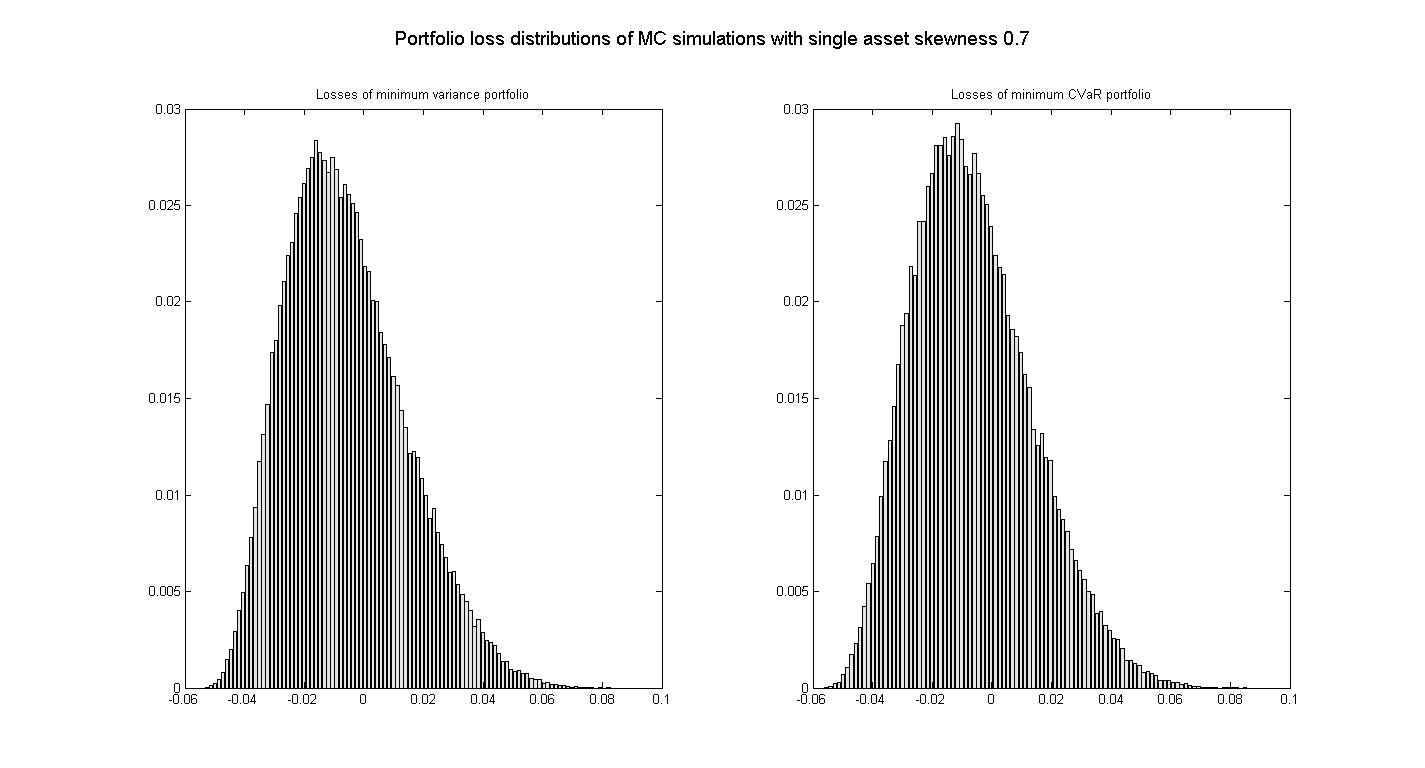}
\end{figure}

\section{$C_{\alpha}$ and $L_{p^*}$ norm surface plots of $ \mathbf{x} \in \mathbb{R}^n$ for different $\alpha$ and $p^*$}
\label{app_diagrams:C_alpha_Lp_for_different_alpha_p}
\begin{figure}[H]
	\centering
	\includegraphics[width = \textwidth]{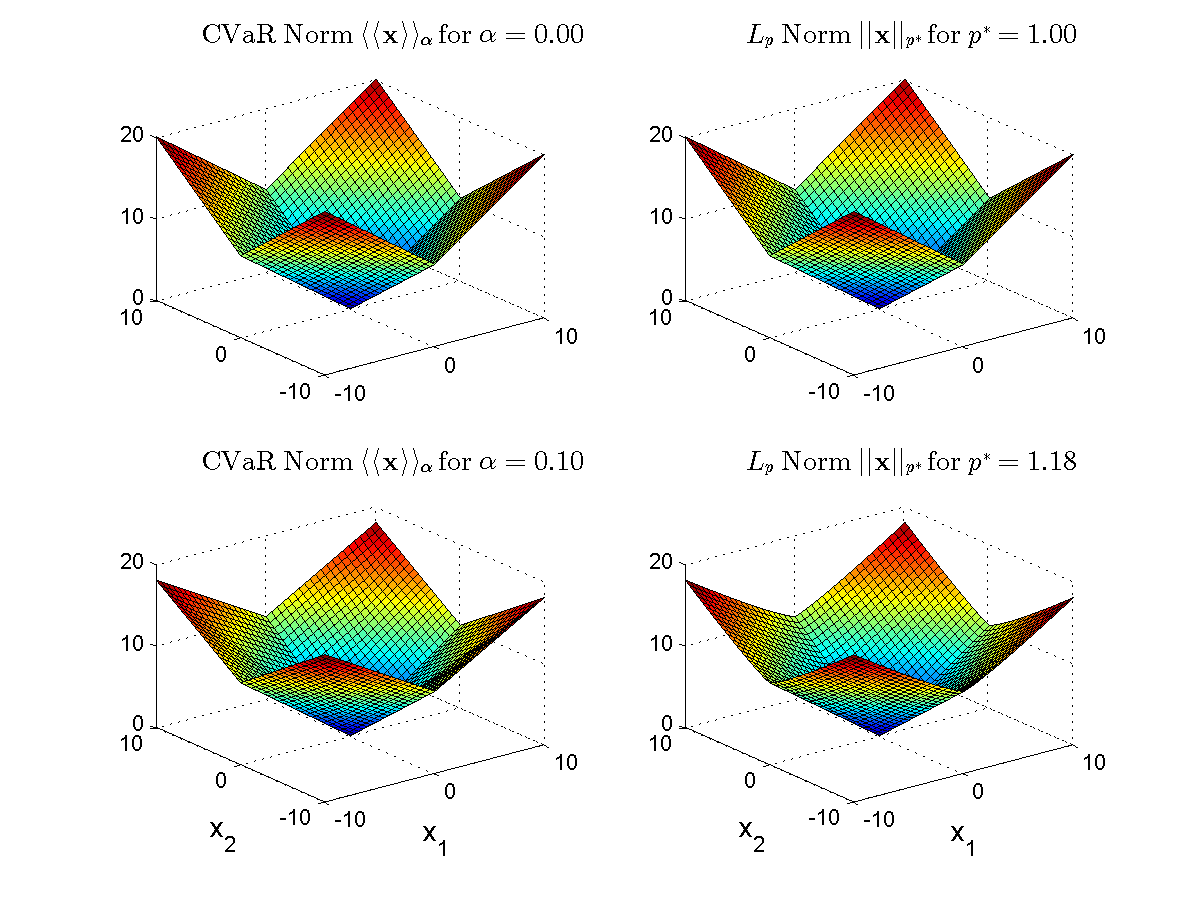}
\end{figure}
\begin{figure}[H]
	\centering
	\includegraphics[width = \textwidth]{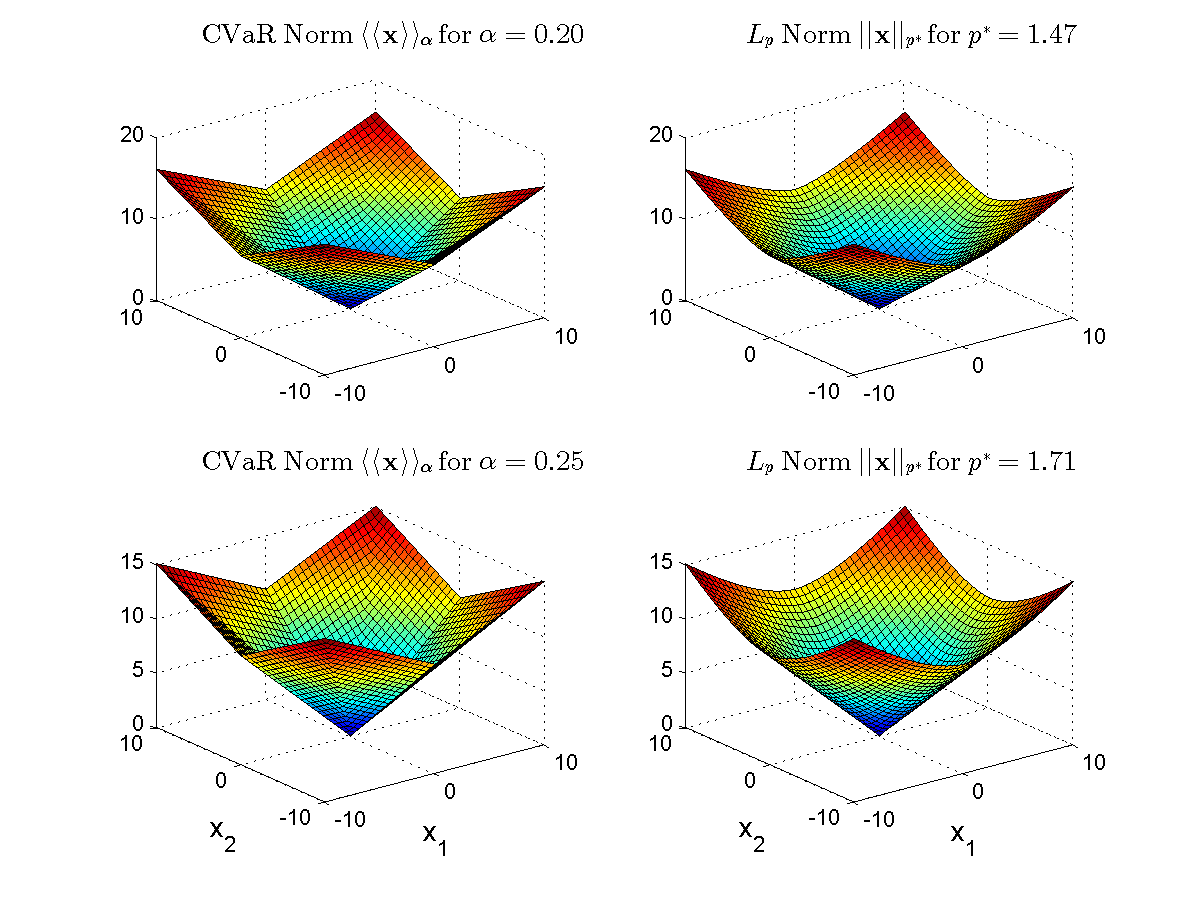}
\end{figure}
\begin{figure}[H]
	\centering
	\includegraphics[width = \textwidth]{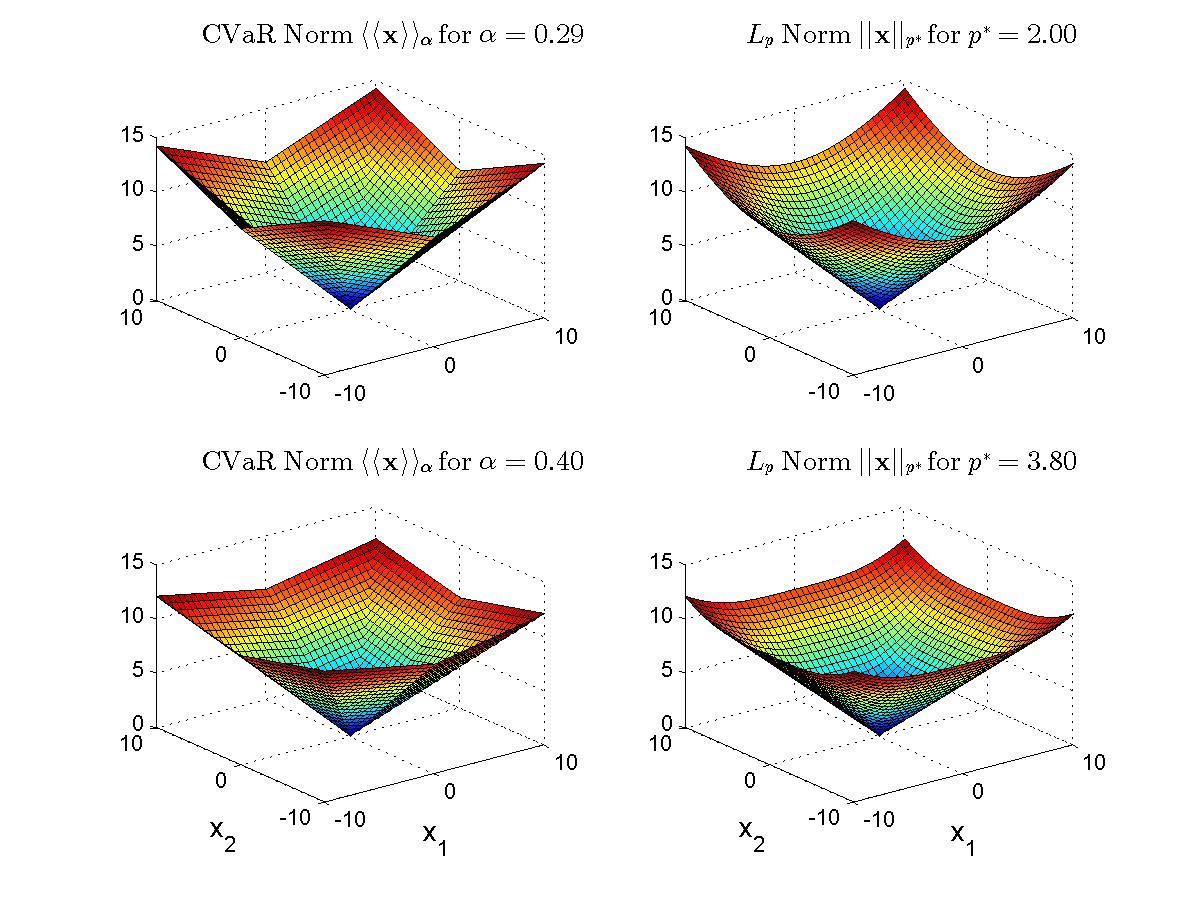}
\end{figure}
\begin{figure}[H]
	\centering
	\includegraphics[width = \textwidth]{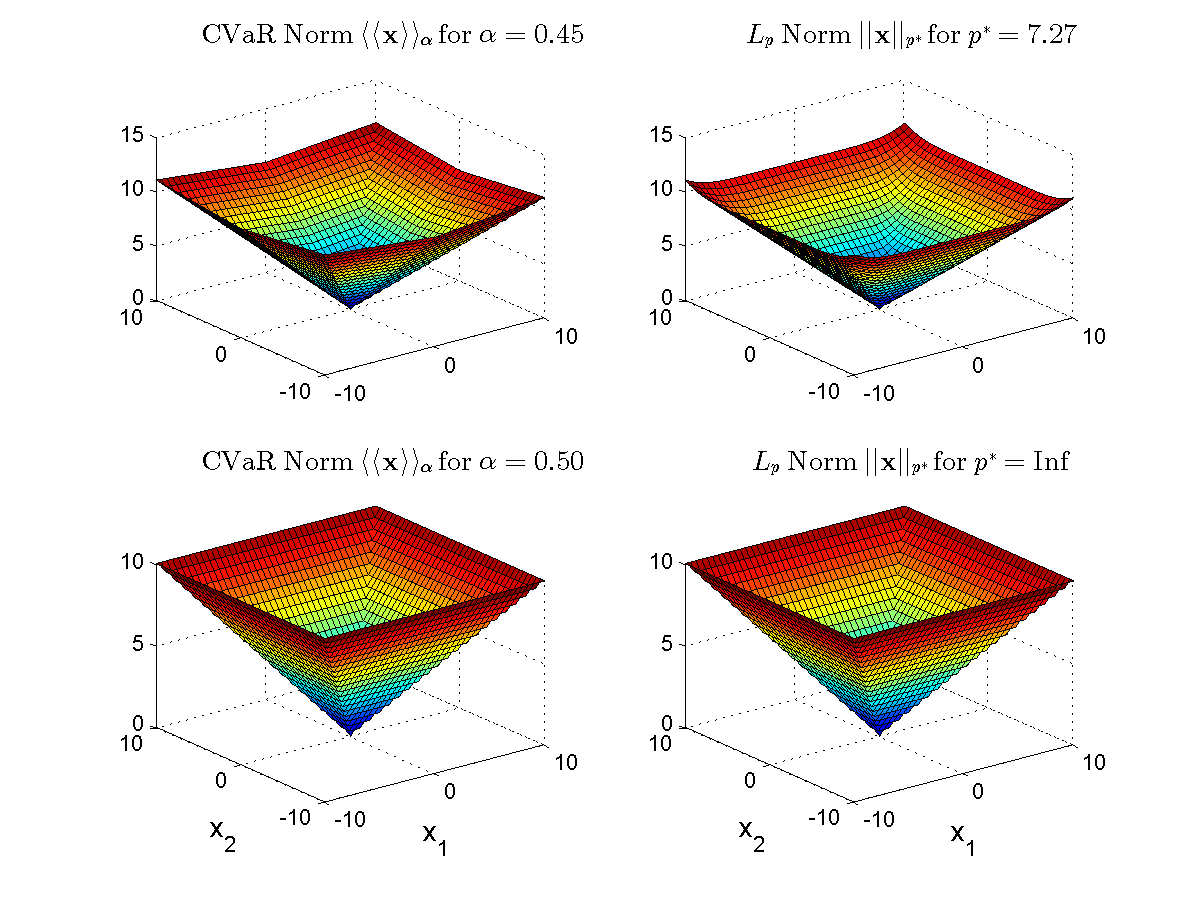}
\end{figure}

\section{Projection of a circle onto the unit ball in $\mathbb{R}^3$ using $L_2$ and $C_{\alpha}$ norms}
\label{app_diagrams:Projection_onto_unit_ball}
\begin{figure}[H]
	\centering
	\includegraphics[width = 0.9\textwidth]{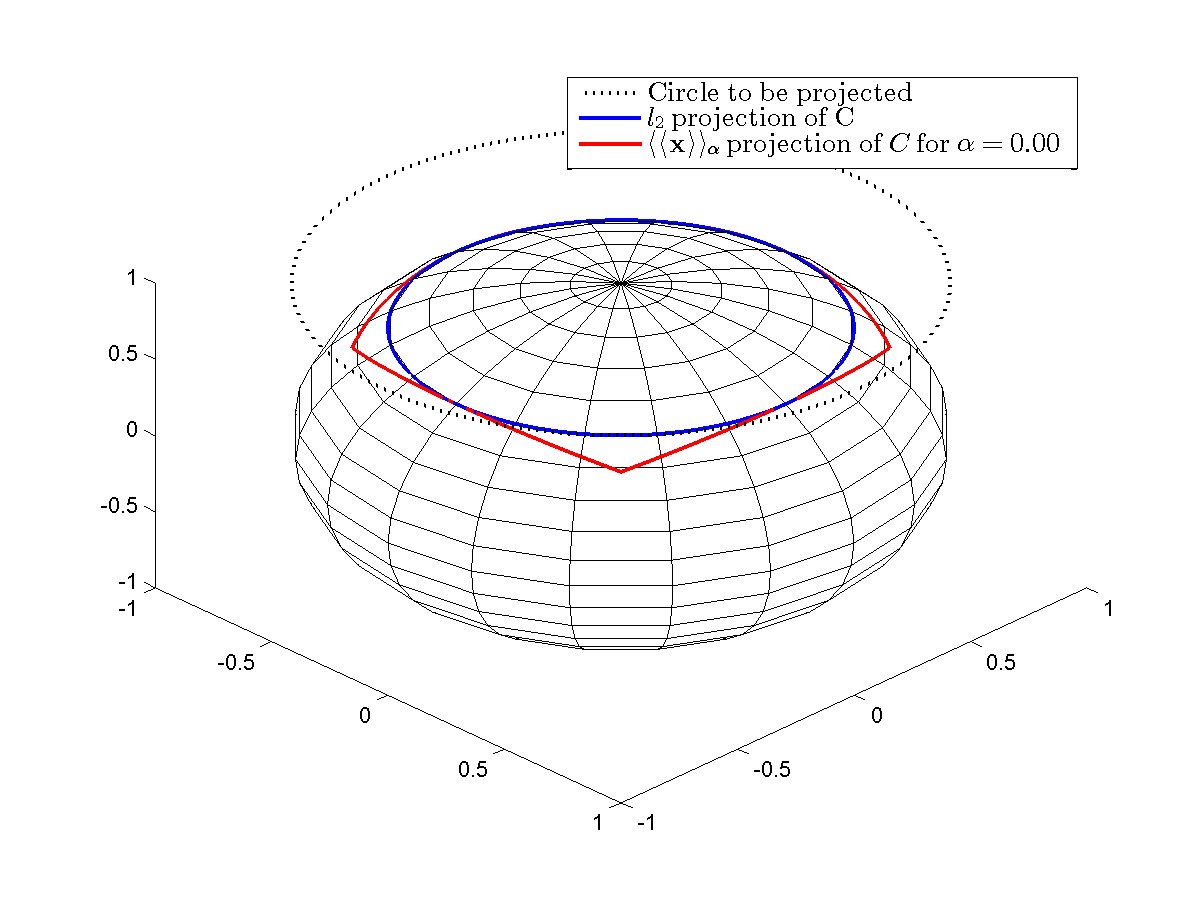}
\end{figure}	
\begin{figure}[H]
	\centering
	\includegraphics[width = \textwidth]{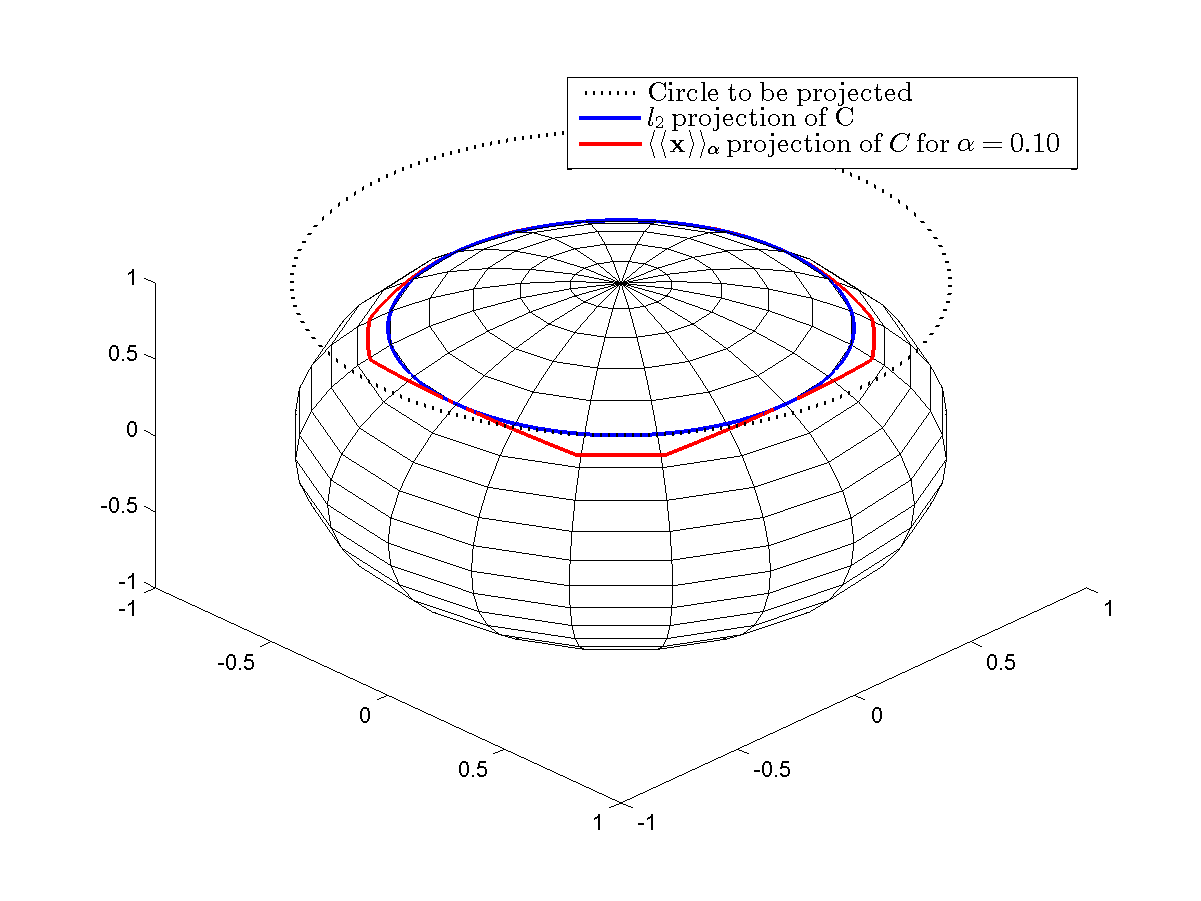}
\end{figure}	
\begin{figure}[H]
	\centering
	\includegraphics[width = \textwidth]{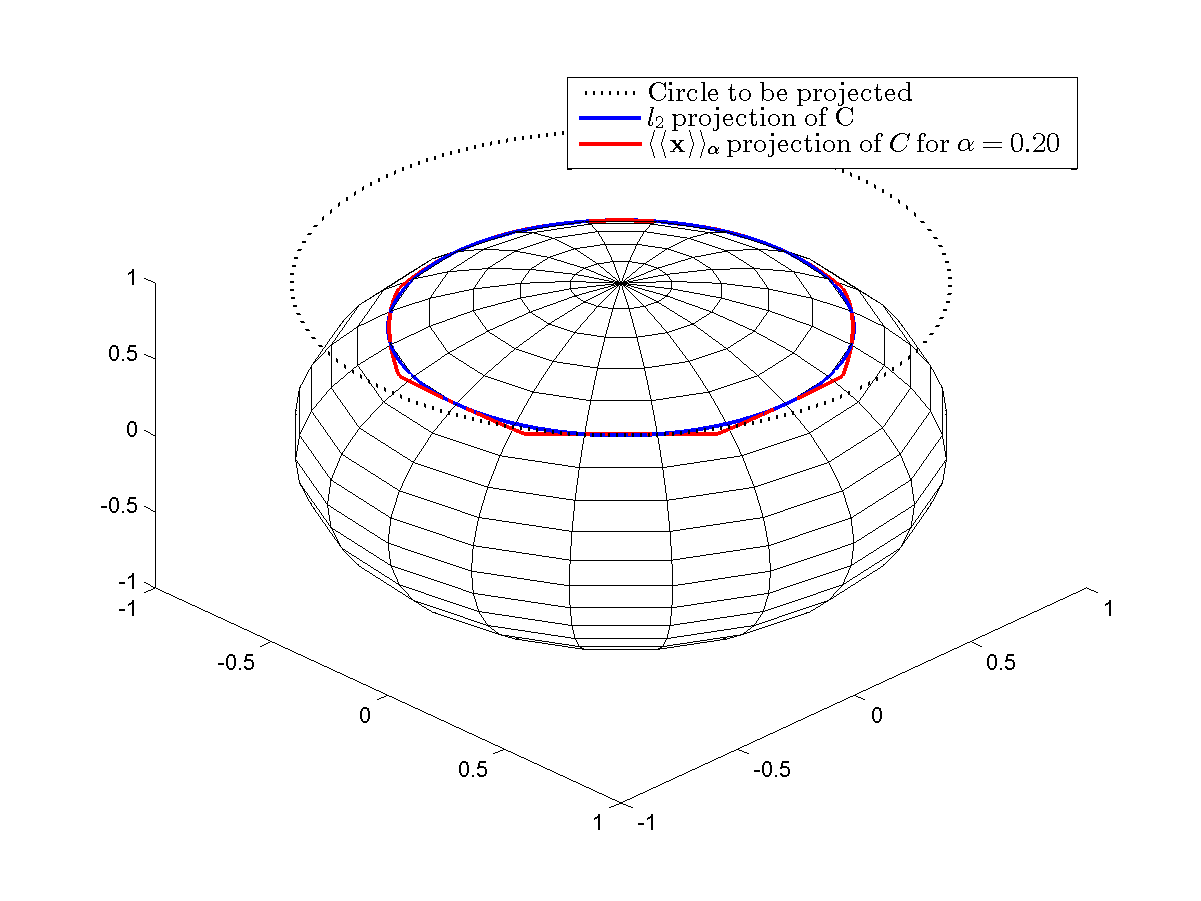}
\end{figure}	
\begin{figure}[H]
	\centering
	\includegraphics[width = \textwidth]{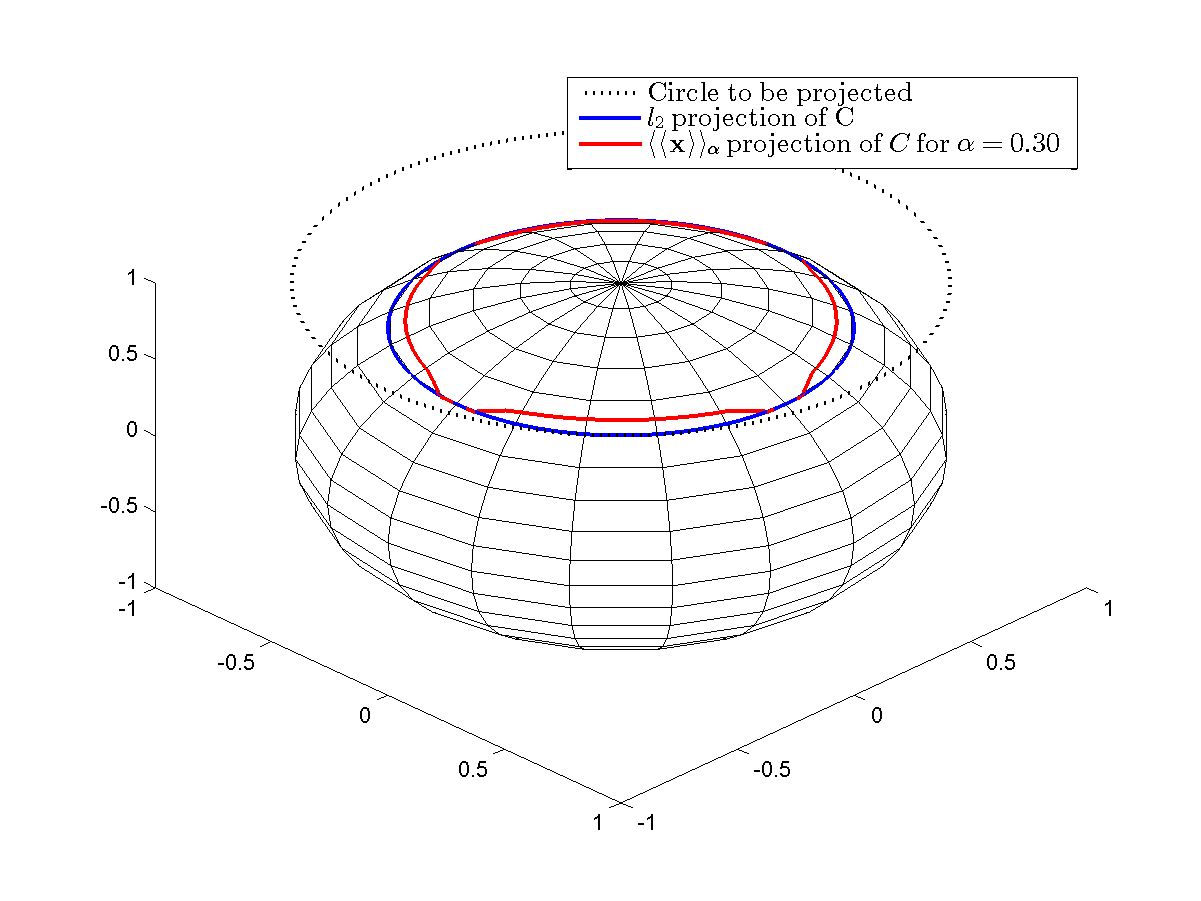}
\end{figure}	
\begin{figure}[H]
	\centering	
	\includegraphics[width = \textwidth]{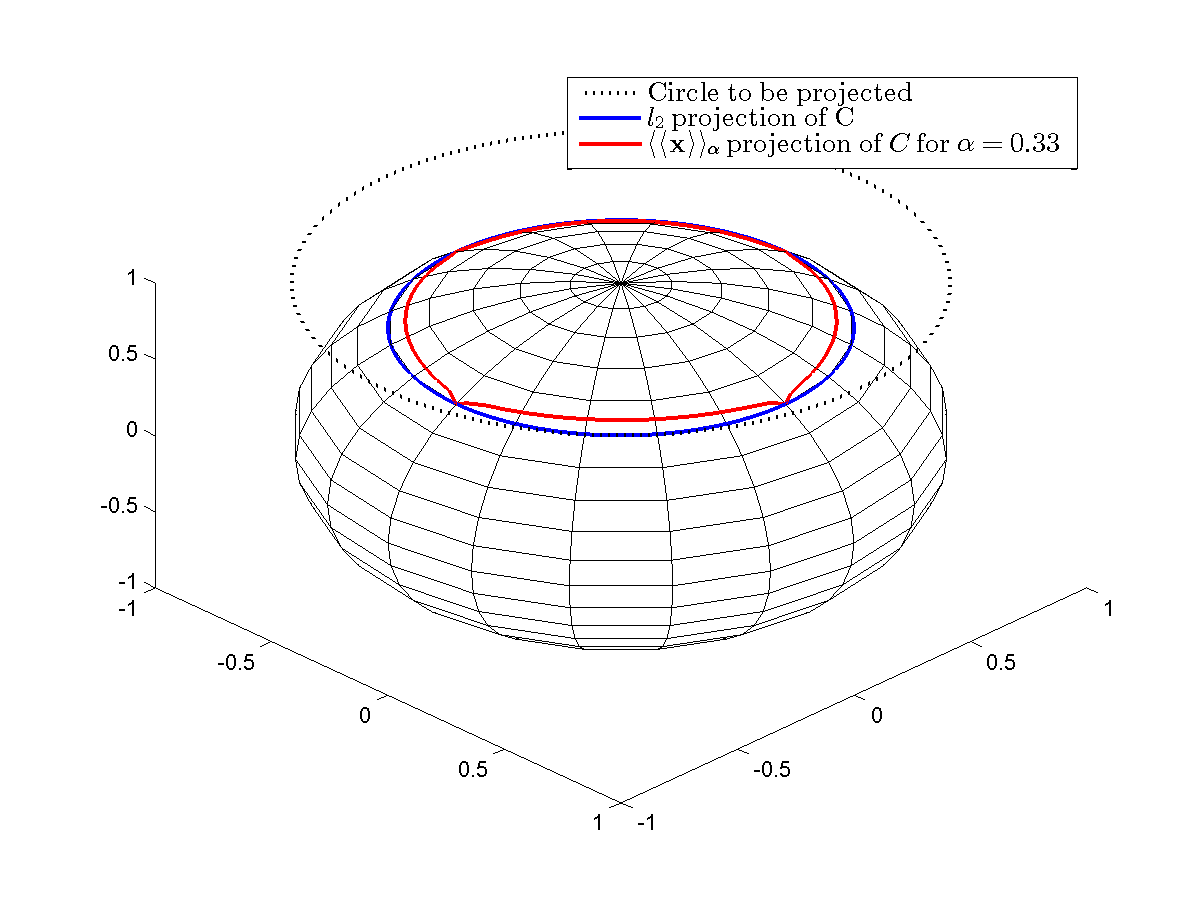}
\end{figure}
\begin{figure}[H]
	\centering	
	\includegraphics[width = \textwidth]{Experiment06_Projecting_Points_onto_unit_ball-6}	
\end{figure}
\begin{figure}[H]
	\centering	
	\includegraphics[width = \textwidth]{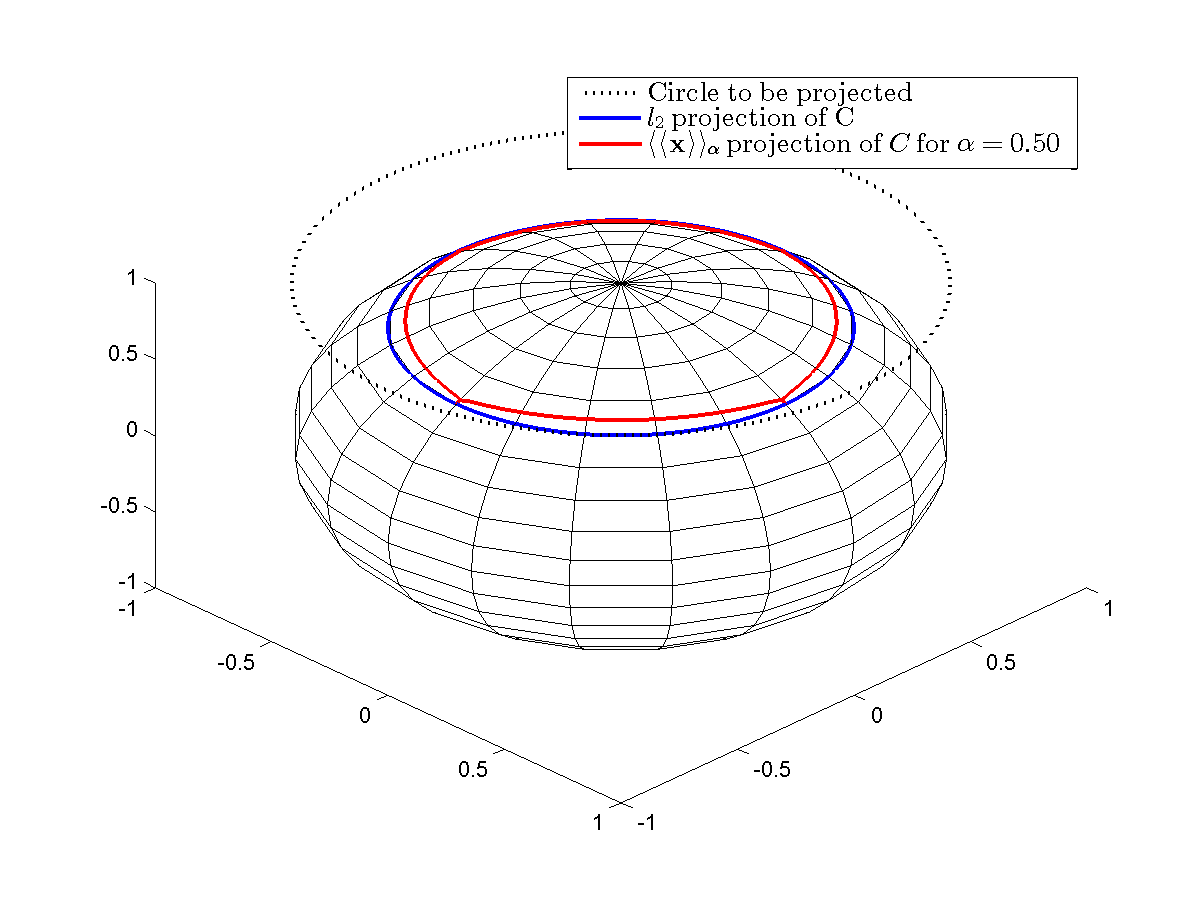}
\end{figure}
\begin{figure}[H]
	\centering
	\includegraphics[width = \textwidth]{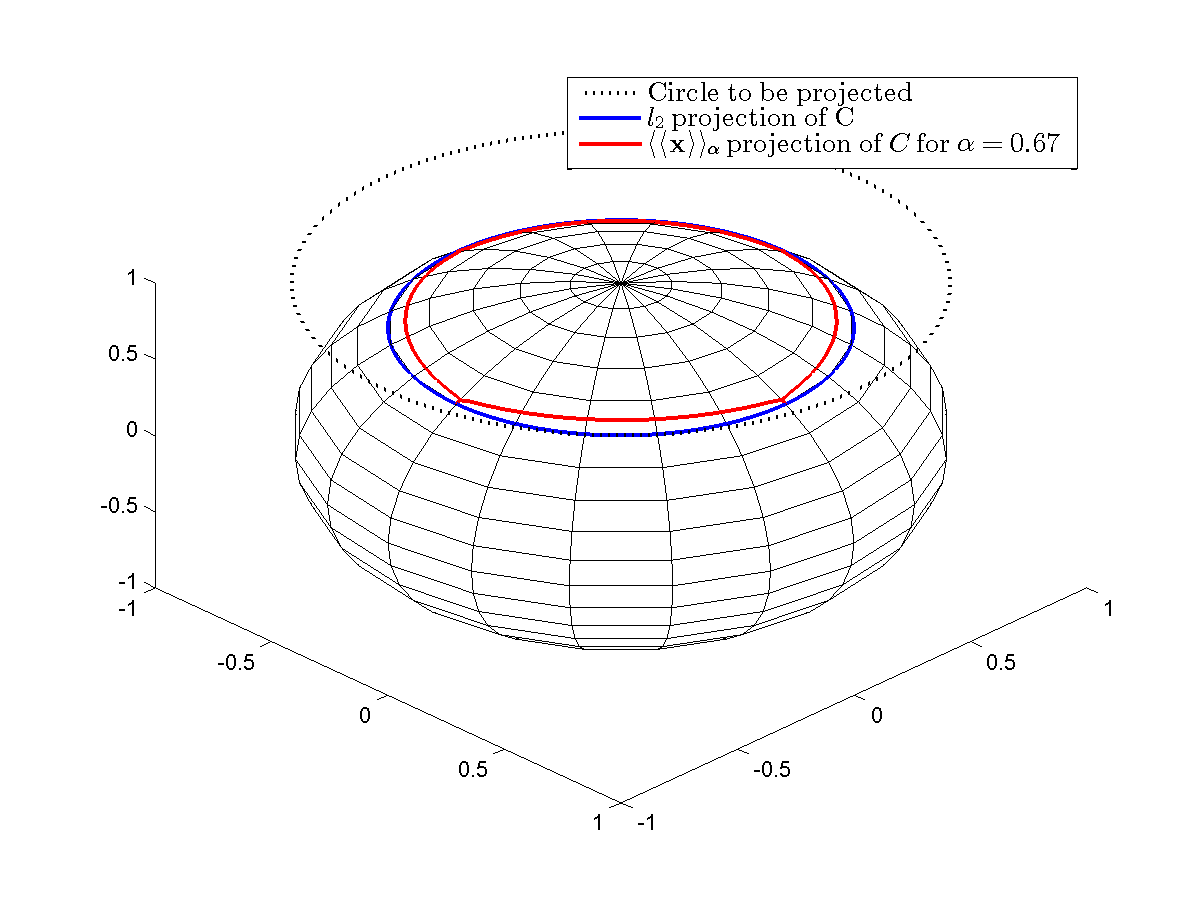}	
\end{figure}

\end{document}